\colorlet{linkcolor}{blue!50!black}\hypersetup{bookmarksnumbered=false,bookmarksopen=false,bookmarksopenlevel=1,breaklinks=true,pdfborder={0 0 0},colorlinks=true,anchorcolor=linkcolor,citecolor=linkcolor,filecolor=linkcolor,linkcolor=linkcolor,menucolor=linkcolor,runcolor=linkcolor,urlcolor=linkcolor}
\newtheorem{theorem}{Theorem}
\newtheorem{definition}{Definition}
\newtheorem{lemma}{Lemma}
\newtheorem{proposition}{Proposition}
\begin{document}

\title{Asymptotic Reversibility of Thermal Operations for Interacting Quantum Spin Systems via Generalized Quantum Stein's Lemma}

\author{Takahiro Sagawa}
\affiliation{Department of Applied Physics, The University of Tokyo,
Tokyo 113-8656, Japan}

\author{Philippe Faist}
\affiliation{Institute for Quantum Information and Matter,
California Institute of Technology, Pasadena, CA 91125, USA}
\affiliation{Institute for Theoretical Physics, ETH Zurich, 8093 Switzerland}
\affiliation{Dahlem Center for Complex Quantum Systems, Freie Universität Berlin, 14195 Berlin, Germany}

\author{Kohtaro Kato}
\affiliation{Institute for Quantum Information and Matter,
California Institute of Technology, Pasadena, CA 91125, USA}

\author{Keiji~Matsumoto}
\affiliation{National Institute of Informatics, 
Tokyo 101-8430, Japan}

\author{Hiroshi Nagaoka}
\affiliation{The University of Electro-Communications, 
Tokyo, 182-8585, Japan}

\author{Fernando G. S. L. Brand\~ao}
\affiliation{Institute for Quantum Information and Matter,
California Institute of Technology, Pasadena, CA 91125, USA}

\date{\today}

\begin{abstract}
    For quantum spin systems in any spatial dimension with a local, translation-invariant Hamiltonian, we prove that asymptotic state convertibility from a quantum state to another one by a thermodynamically feasible class of quantum dynamics, called thermal operations, is completely characterized by the Kullback-Leibler (KL) divergence rate, if the state is translation-invariant and spatially ergodic.
    Our proof consists of two parts and is phrased in terms of a branch of the quantum information theory called the resource theory.
    First, we prove that any states, for which the min and max R\'enyi divergences collapse approximately to a single value, can be approximately reversibly converted into one another by thermal operations with the aid of a small source of quantum coherence.
    Second, we prove that these divergences collapse asymptotically to the KL divergence rate for any translation-invariant ergodic state.  We show this via a generalization of the quantum Stein's lemma for quantum hypothesis testing beyond independent and identically distributed (i.i.d.) situations. 
    Our result implies that the KL divergence rate serves as a thermodynamic potential that provides a complete characterization of thermodynamic convertibility of ergodic states of quantum many-body systems in the thermodynamic limit, including out-of-equilibrium and fully quantum situations.
\end{abstract}

\maketitle

\tableofcontents

\section{Introduction}

Reversibility and irreversibility of dynamics in classical and quantum physics, especially in thermodynamics, is characterized thanks to the concept of entropy.
It is a salient feature of macroscopic equilibrium thermodynamics that entropy does not only have the non-decreasing property but also provides a complete characterization of convertibility between thermal equilibrium states~\cite{BookCallen_Thermo}, which is represented by the second law of thermodynamics.
Lieb and Yngvason constructed an axiomatic formulation of this phenomenology, and within their mathematical framework, rigorously proved that entropy provides a necessary and sufficient condition for state conversion, and furthermore, that such an entropy function is essentially unique~\cite{Lieb1999_secondlaw}.

The connection between microscopic information entropy and thermodynamic entropy has been extensively studied both in terms of statistical mechanics~\cite{Sagawa2012LQC_secondlawlike,Parrondo2015NPhys_thermodynamics}
and the thermodynamic resource theory~\cite{Goold2016JPA_review,Chitambar2019RMP_resource,SagawaBook}.
In the latter formalism which we adopt in this article, so-called one-shot entropy measures have provided tools to quantify resource costs of physical operations in quantum information settings including quantum thermodynamics~\cite{Goold2016JPA_review,Chitambar2019RMP_resource,Brandao2013_resource,Aberg2013_worklike,Horodecki2013_ThermoMaj,Brandao2015PNAS_secondlaws,Gour2018NatComm_entropic,Faist2015NatComm,Faist2018PRX_workcost,Weilenmann2016PRL_axiomatic,PhDMirjam2017,Weilenmann2018_smooth,SagawaBook}. 

Our understanding of the macroscopic behavior of the entropy has been sharpened by fundamental theorems proving asymptotic
equipartition properties (AEP). Rougly speaking, an AEP states that
in the long sequence limit of a stochastic process, some relevant quantities
concentrate to definite values.
For instance, the Shannon-McMillan theorem states that an ergodic process
satisfies an AEP with the Shannon entropy
rate~\cite{Shannon1948BSTJ,BookCoverThomas2006_InfTheory}.  This has been generalized to a
stronger form known as the Shannon-McMillan-Breiman theorem as well as to a
relative version for an ergodic process with respect to a Markov
process~\cite{Algoet1988AoP_sandwich}.  A quantum version of the Shannon-McMillan theorem
proves a similar AEP for quantum ergodic processes with the von Neumann entropy
rate~\cite{Bjelakovic2004IM_lattice,Bjelakovic2005QIP_compression,Ogata2013LMP_shannonmcmillan}.

Closely related to AEP theorems is Stein's lemma, which relates the asymptotic error rate of hypothesis
testing  for distinguishing two quantum states to the KL divergence rate.
Classically, Stein's lemma is a straightforward consequence of the relative AEP.
However, its quantum counterpart is more
involved~\cite{Hiai1991CMP_proper,Ogawa2000IEEETIT_Stein,Bjelakovic2004CMP_ergodic,Brandao2010CMP_Stein,Bjelakovic2003arXiv_revisted}.
Hiai and Petz~\cite{Hiai1991CMP_proper} first addressed the quantum Stein's lemma and
provided a partial proof for a completely ergodic quantum state with respect to
an i.i.d.\@ state.  The proof of the quantum Stein's lemma was completed for the
case where both states are i.i.d.\@ by Ogawa and Nagaoka~\cite{Ogawa2000IEEETIT_Stein}, by
proving the strong converse of the Hiai-Petz theorem for that case.  A more
general form of the quantum Stein's lemma for an ergodic
state with respect to an i.i.d.\@ state was proved in
Ref.~\cite{Bjelakovic2004CMP_ergodic}, which is regarded as a quantum analog of the
relative AEP.

In this work, we go beyond the non-interacting or i.i.d.\@ regime, and investigate an entropy function that provides a thermodynamic characterization of physically relevant, interacting many-body quantum systems.
We consider quantum spin systems on the lattice $\mathbb Z^d$ with an arbitrary number $d$ of spatial dimensions. 
Under certain general conditions, we rigorously prove that the necessary and sufficient condition for asymptotic state conversion from one ergodic state to another state by thermodynamically feasible quantum dynamics, called thermal operations~\cite{Brandao2013_resource}, is characterized by  the Kullback-Leibler (KL) divergence rate of the state relative to the Gibbs state.
The KL divergence rate is shown to determine the work cost for state transformations, and thus plays a role of the proper thermodynamic potential.
Our central assumptions are that (i) the quantum state is  translation-invariant and spatially ergodic and (ii) the Hamiltonian is translation-invariant and local.
Physically, the assumption (i) implies that a quantum state does not exhibit any macroscopic fluctuations if one looks at translation-invariant observables~\cite{BookCoverThomas2006_InfTheory,BookBratteliRobinson_OpAlgQStatMech1,BookBratteliRobinson_OpAlgQStatMech2,BookIsrael_ConvexityTheoryLatticeGases,BookRuelle_StatMechRigorous}, and the assumption (ii) guarantees the sound thermodynamic limit of the Gibbs state.  
Importantly, a spatially ergodic state~--- in contrast to a temporarily ergodic state~--- is not necessarily a thermal equilibrium state, and thus our result is applicable to out-of-equilibrium  situations.

To achieve an operationally robust notion of a thermodynamic potential, we
resort to the resource theory of thermal operations.  The resource theory of
thermal operations is an established model for thermodynamics in the quantum
regime~\cite{Brandao2013_resource,Brandao2015PNAS_secondlaws,Goold2016JPA_review,BookBinder2018_ThermoQuantumRegime}.  This approach allows us to study the
thermodynamic behavior of arbitrary quantum states in a way that inherently
accounts for the fluctuations in the work requirement of state transformations.
This model for thermodynamics is tightly related to measures of information
introduced in quantum information theory based on the quantum R\'enyi
divergences~\cite{Renyi1960_MeasOfEntrAndInf}.  Two quantities in particular,
the R\'enyi-$0$ divergence (or min-divergence) and R\'enyi-$\infty$ divergence
(or max-divergence), play a special role in determining the work requirement of
state transformations~\cite{Lieb2013_entropy_noneq,Weilenmann2016PRL_axiomatic}.  For instance, the work
that can be extracted from any state that is block-diagonal in the energy
eigenspaces is given by the R\'enyi-0 divergence.  For our main result, we
consider the asymptotic version of these quantities for large system sizes,
which corresponds to the thermodynamic limit.
The asymptotic min and max R\'enyi divergences are also called the upper and lower spectral divergence rates in the theory of information
spectrum, and we will use both terms interchangeably in this paper~\cite{BookHan_InfSpecMethods,Han2000IEEETIT_hypothesis,Nagaoka2007IEEETIT_hypothesis,Datta2009IEEE_InfSpec,Datta2009IEEE_minmax,Bowen2006ISIT_beyondiid,Bowen2006arXiv_arbitrary,Schoenmakers2007ISIT_Renyi}.

\

\paragraph{Main result}
Our main result is that ergodic states can be reversibly interconverted into one
another in the resource theory of thermal operations in the thermodynamic limit.
Roughly speaking, if the Hamiltonian is local and translation invariant, then there exists a thermodynamic potential $F(\rho)$ that is defined for all
translation invariant and ergodic states $\rho$ on a lattice of $d$ spatial dimensions with the following property:
For any two translation invariant and ergodic states $\rho,\rho'$, there exists a (generalized) thermal operation
that can carry out the transformation $\rho\to\rho'$ by investing work at a rate
of $F(\rho') - F(\rho)$ per subsystem and that uses a negligible amount of
coherence per subsystem.  
Furthermore, $F(\rho)$ is given by the KL divergence rate between $\rho$ and the Gibbs state $\sigma$ of the Hamiltonian, divided by the temperature of the heat bath.

Our main result is proved in the following two steps.  They are discussed in \cref{sec:thermo-reversibility} and \cref{sec:main-result-for-ergodic-local-Gibbs}, where the main theorems are \cref{thm:asympt-equipartition-implies-asympt-TO-reversibility} and \cref{quantum_theorem_main2t}, respectively.  Both of them can be of independent interest.

First,  we prove that  any state for
which the min and max  R\'enyi divergences coincide approximately~\cite{PhDRenner2005_SQKD,BookTomamichel2016_Finite} can approximately be
converted reversibly to and from the Gibbs state by thermal operations, using a small source of
quantum coherence~\cite{Lostaglio2015PRX_coherence}.  
In this case, the resource theory becomes reversible, i.e., the work required for
a state transformation is equal to the negative work required for the reverse
transformation.  In consequence, if these divergences coincide to a single  value in the asymptotic limit, then it defines  a thermodynamic potential that completely characterizes the possible state transformations in the fully quantum regime. 
This is a result that applies broadly to the resource theory of
thermal operations in general settings, even for states that are non-classical, i.e., that are not block-diagonal in the energy basis.  This intermediate result, which is independent of the assumptions~(i) and~(ii), can be of independent interest.

Second, we prove that the min and max R\'enyi divergences indeed collapse to the KL divergence rate under the assumptions~(i) and~(ii).
To this end, we prove a generalization of the quantum Stein's lemma to the setting with (i) and (ii).
The main idea of our proof, inspired by Refs.~\cite{Bjelakovic2004CMP_ergodic,Bjelakovic2003arXiv_revisted}, is to construct typical projectors that are adapted to the assumptions (i) and (ii).  Our formulation uses semidefinite programming to simplify some parts of the proof.

\

\paragraph{Structure of the paper}
In \cref{sec:preliminaries}, we introduce preliminary definitions and notation, including the relevant divergences and entropy measures. 
In \cref{sec:thermo-reversibility}, we introduce our thermodynamic framework of thermal operations, giving a rigorous meaning to the work cost of a transformation from one state to another, and prove our first main theorem on asymptotic thermal operations
(\cref{thm:asympt-equipartition-implies-asympt-TO-reversibility}).
In \cref{sec:main-result-for-ergodic-local-Gibbs}, we rigorously formulate ergodicity,  and  prove our second main theorem on the generalized quantum Stein's lemma 
(\cref{quantum_theorem_main2t}).  
We conclude with remarks and an outlook in \cref{sec:discussion}.  In the appendices, we remark on some technical lemmas, Gibbs-preserving maps,  a more rigorous approach to ergodicity formulated using $C^\ast$-algebras, an alternative proof of our second main theorem for the one dimensional case, and purely classical implications of our results.

\section{Preliminaries}
\label{sec:preliminaries}

Consider a Hilbert space $\mathscr{H}$ of finite dimension $D$, and let ${\mathcal{S}}(\mathscr{H})$ be
the set of density operators (quantum states) on $\mathscr{H}$, satisfying
$\hat\rho\geqslant0$ and $\operatorname{tr}[ \hat\rho] = 1$ for $\hat\rho\in {\mathcal{S}}(\mathscr{H})$.
We also define the set of \emph{subnormalized states}, which we denote by
${\mathcal{S}_\leq}(\mathcal{H})$, and which is the set of all operators $\hat\rho\geqslant0$
that satisfy $\operatorname{tr}[\hat\rho]\leqslant1$.  For two Hilbert spaces $\mathscr{H}_A$ and $\mathscr{H}_B$
representing systems $A$ and $B$, we write $A\simeq B$ when the Hilbert spaces
are isomorphic; by convention, the identity mapping $A\to B$ maps the canonical
basis of $A$ onto the canonical basis of $B$.

The set of quantum states carries a natural metric given by the \emph{trace
  distance}~\cite{BookNielsenChuang2000}, defined as
$D(\hat\rho,\hat\rho') = (1/2) \lVert {\hat\rho-\hat\rho'}\rVert _{1}$ for any
$\hat\rho,\hat\rho'\in{\mathcal{S}}(\mathscr{H})$, where $\lVert {\cdot}\rVert _{1}$ is the
Schatten 1-norm.  This metric can be extended to subnormalized states
$\hat\rho,\hat\rho'\in{\mathcal{S}_\leq}(\mathscr{H})$ as the \emph{generalized trace distance}~\cite{Tomamichel2010IEEE_Duality,PhDTomamichel2012}, defined
as
\begin{equation}
D(\hat\rho,\hat\rho') = \frac{1}{2} \lVert {\hat\rho-\hat\rho'}\rVert _{1} + \frac{1}{2} \lvert {\operatorname{tr}(\hat\rho) - \operatorname{tr}(\hat\rho')}\rvert .
\end{equation}
We also define the \emph{fidelity}~\cite{BookNielsenChuang2000} as
$F(\hat X,\hat Y) = \lVert {\hat X^{1/2}\hat Y^{1/2}}\rVert _{1}$ for any
$\hat X, \hat Y\geqslant0$.

\subsection{Entropy and divergence}

Thermodynamic properties of microscopic quantum systems can be described using
entropy measures that generalize the usual Shannon or von Neumann entropy to the
so-called ``one-shot''
regime~\cite{PhDRenner2005_SQKD,PhDTomamichel2012,BookTomamichel2016_Finite}.
More specifically, in the presence of thermodynamic reservoirs, we need to
consider a family of relative entropies, or \emph{divergences}.
For $\hat\rho\in{\mathcal{S}_\leq}(\mathscr{H})$ and $\hat\sigma\geqslant0$, the KL divergence
(R\'enyi-$1$ divergence) is defined as:
\begin{align}
  {S}_{1}(\hat\rho\,\Vert\,\hat\sigma) = \operatorname{tr}[\hat\rho\ln \hat\rho- \hat\rho\ln \hat\sigma]\ .
\end{align}
Throughout this paper, we assume that the first argument of the divergences
considered (here $\hat\rho$) lies within the support of the second argument
(here $\hat\sigma$).  This assumption is physically justified when
$\hat\sigma$ is a Gibbs state, which necessarily has full rank.
The min divergence (R\'enyi-$0$ divergence), or the min relative entropy, is defined as
\begin{align}
  {S}_{0}(\hat\rho\,\Vert\,\hat\sigma) = -\ln  \operatorname{tr}[ \hat P_{\rho} \hat\sigma] \ ,
\end{align} 
where $\hat P_\rho$ is the projection onto the support of $\hat\rho$.  We also
define an alternative measure of the min divergence (R\'enyi-$1/2$ divergence)
as 
\begin{align}
  {S}_{1/2}(\hat\rho\,\Vert\,\hat\sigma) =
  -  \ln \, \bigl\lVert { \hat\rho^{1/2}\hat\sigma^{1/2} }\bigr\rVert _{1} ^2\ , 
\end{align}
Finally, the max divergence (R\'enyi-$\infty$
divergence), or the max relative entropy, is defined as
 \begin{align}
   {S}_{\infty}(\hat\rho\,\Vert\,\hat\sigma) = 
   \ln\, \bigl\lVert { \hat\sigma^{-1/2} \, \hat\rho\, \hat\sigma^{-1/2} }\bigr\rVert _{\infty} =
   \ln \min_{\hat\rho\leqslant\lambda \hat\sigma} \lambda \ ,
\end{align}
where $\lVert {\cdot}\rVert _{\infty}$ is the operator norm.

These quantities are special cases of the R\'enyi-$\alpha$ divergences.  Here,
we avoid technicalities and issues in the general definitions of the quantum
R\'enyi divergences caused by the noncommutativity of the
arguments~\cite{Hiai2011RMathPh_fdivergences,Wilde2014CMP_converse,BookTomamichel2016_Finite}, by focusing on
the quantities above which are sufficient for our purposes.
These divergences satisfy
\begin{align}
  \label{eq:divergences-are-ordered-in-alpha}
  -\ln\operatorname{tr}(\hat\sigma) \leqslant{S}_{0}(\hat\rho\,\Vert\,\hat\sigma)
  \leqslant{S}_{1/2}(\hat\rho\,\Vert\,\hat\sigma)
  \leqslant{S}_{1}(\hat\rho\,\Vert\,\hat\sigma) \leqslant{S}_{\infty}(\hat\rho\,\Vert\,\hat\sigma)\ .
\end{align}

From these divergences we can define corresponding entropy measures as the
divergence with respect to the identity operator $\hat{I}$: For
$\alpha=0,1/2,1,\infty$ we define
\begin{align}
  {S}_{\alpha}({\hat\rho}) := -{S}_{\alpha}(\hat\rho\,\Vert\,\hat{I})\ .
\end{align}
We note the following explicit forms of the von Neumann entropy (R\'enyi-$1$
entropy) ${S}_{1}(\hat\rho)$, the max entropy (R\'enyi-$0$
entropy) ${S}_{0}^{}({\hat\rho})$, and the min entropy (R\'enyi-$\infty$ entropy) ${S}_{\infty}^{}({\hat\rho})$,
\begin{align}
  {S}_{1}(\hat\rho) &= - \operatorname{tr}[\hat\rho\ln\hat\rho]\ ;
  &
    {S}_{0}^{}({\hat\rho}) &= \ln\operatorname{rank}(\hat\rho)\ ;
  &
    {S}_{\infty}^{}({\hat\rho}) &= -\ln\,\lVert {\hat\rho}\rVert _{\infty}\ .
\end{align}
The entropies are ordered as
\begin{equation}
0 \leqslant S_\infty (\hat\rho) \leqslant S_1 (\hat\rho) \leqslant S_0 (\hat\rho) \leqslant\ln(D)\ .
\end{equation}

These divergences satisfy the data processing inequality, i.e., they are
monotonous under the action of a completely-positive (CP) and trace-preserving (TP) map
$E$:
\begin{equation}
  \label{eq:Dalpha-dpi}
  {S}_{\alpha}(\hat\rho\,\Vert\,\hat\sigma)
  \geqslant{S}_{\alpha}(E( \hat\rho)\,\Vert\,E( \hat\sigma))\ .
\end{equation}
For $\alpha = 0,1$, see for example  Lemma~7 of Ref.~\cite{Datta2009IEEE_minmax}.
The case of $\alpha = 1$ is equivalent to the strong subadditivity of the von
Neumann entropy~\cite{BookNielsenChuang2000,Lieb1973JMP_SSA}.
Consequently, the entropies do not decrease under the action of a CPTP map $E$ that is unital, i.e.,
$E(\hat{I}) = \hat{I}$,
\begin{equation}
  {S}_{\alpha}({\hat\rho}) \leqslant{S}_{\alpha}({E(\hat\rho)})\ .
\end{equation}

A useful property of these divergences is a monotonicity property for the
semidefinite ordering of the second argument: If $\sigma\leqslant\sigma'$, then for
each $\alpha=0,1/2,1,\infty$,
\begin{align}
  \label{eq:Dalpha-semidef-ordering-second-argument}
  {S}_{\alpha}(\hat\rho\,\Vert\,\hat\sigma') \leqslant{S}_{\alpha}(\hat\rho\,\Vert\,\hat\sigma)\ .
\end{align}

The divergences obey a scaling property in the second argument.  For
$\alpha=0,1/2,1,\infty$, we have for any $a>0$,
\begin{align}
  \label{eq:Dalpha-scaling}
  {S}_{\alpha}(\hat\rho\,\Vert\,a\hat\sigma)
  = {S}_{\alpha}(\hat\rho\,\Vert\,\hat\sigma) - \ln(a)\ .
\end{align}

Under tensor product states, the divergences become additive.  For
$\alpha=0,1/2,1,\infty$, we have for any
$\hat\rho\in{\mathcal{S}_\leq}(\mathscr{H}),\hat\rho'\in{\mathcal{S}_\leq}(\mathscr{H}')$,
$\hat\sigma\geqslant0, \hat\sigma'\geqslant0$,
\begin{align}
  \label{eq:Dalpha-additive-under-tensor-products}
  {S}_{\alpha}(\hat\rho\otimes\hat\rho'\,\Vert\,\hat\sigma\otimes\hat\sigma')
  = 
  {S}_{\alpha}(\hat\rho\,\Vert\,\hat\sigma) + {S}_{\alpha}(\hat\rho'\,\Vert\,\hat\sigma')\ .
\end{align}

To ensure that the operational quantities represented by these entropies and
divergences do not significantly depend on events that only appear with
vanishingly small probability, we ``smoothe'' these entropies and divergences
over a ball of states that are close to the original
state~\cite{PhDRenner2005_SQKD,Datta2009IEEE_minmax}.  First, we define the
$\varepsilon$-ball of states around a subnormalized state $\hat\rho\in{\mathcal{S}_\leq}(\mathscr{H})$
as
\begin{align}
  B^\varepsilon(\hat\rho) := \{
  \hat\tau\in{\mathcal{S}_\leq}(\mathscr{H})  : \  D(\hat\tau, \hat\rho) \leqslant\varepsilon\}\ .
\end{align}

\begin{definition}[Smooth divergences~\protect\cite{Datta2009IEEE_minmax}]
\label{def:smooth-divergences}
  The smooth divergences are defined as follows,
  \begin{subequations}
    \begin{align}
      {S}_{\infty}^{\varepsilon}(\hat\rho\,\Vert\,\hat\sigma)
      &:= \min_{\hat\tau\in B^\varepsilon(\hat\rho) } {S}_{\infty}(\hat\tau\,\Vert\,\hat\sigma)\ ;
        \label{eq:def-smooth-Dmax}
      \\
      {S}_{0}^{\varepsilon}(\hat\rho\,\Vert\,\hat\sigma)
      &:= \max_{\hat\tau\in B^\varepsilon(\hat\rho) } {S}_{0}(\hat\tau\,\Vert\,\hat\sigma)\ ;
        \label{eq:def-smooth-Dminz}
      \\
      {S}_{1/2}^{\varepsilon}(\hat\rho\,\Vert\,\hat\sigma)
      &:= \max_{\hat\tau\in B^\varepsilon(\hat\rho) } {S}_{1/2}(\hat\tau\,\Vert\,\hat\sigma)\ .
        \label{eq:def-smooth-Dminf}
    \end{align}
  \end{subequations}
  The smooth entropies are defined correspondingly as
  \begin{align}
    {S}_{0}^{\varepsilon}({\hat\rho}) &:= -{S}_{0}^{\varepsilon}(\hat\rho\,\Vert\,\hat{I})\ ;
    &
    {S}_{\infty}^{\varepsilon}({\hat\rho}) &:= -{S}_{\infty}^{\varepsilon}(\hat\rho\,\Vert\,\hat{I})\ .
  \end{align}
\end{definition}

We introduce a further convenient divergence (relative entropy) that is based on
hypothesis testing~\cite{Tomamichel2013_hierarchy,Dupuis2013_DH,Faist2018PRX_workcost}.  This
divergence allows to interpolate between the min- and max-divergences in a
different fashion than the R\'enyi entropies, along with a simple formulation
and a collection of useful properties.  For a subnormalized state $\hat\rho$
and $\hat\sigma\geqslant0$, we define for any $0<\eta\leqslant\operatorname{tr}(\hat\rho)$,
\begin{align}
  \label{def_SH}
  {S}_{\mathrm{H}}^{\eta}(\hat\rho\,\Vert\,\hat\sigma)
  := - \ln \left( \eta^{-1}\min_{0 \leqslant\hat Q \leqslant\hat{I}, \ \operatorname{tr}[\hat\rho\hat Q] \geqslant\eta }
    \operatorname{tr}[\hat\sigma\hat Q] \right)\ .
\end{align}
The hypothesis testing divergence owes its name to the fact that if
$\hat\rho,\hat\sigma$ are two quantum states,
$\eta\exp(-{S}_{\mathrm{H}}^{\eta}(\hat\rho\,\Vert\,\hat\sigma))$ represents the probability of
mistakenly reporting $\hat\rho$ in a hypothesis test between the two states, if
we carry out a strategy that mistakenly reports $\hat\sigma$ with probability
at most $1-\eta$.

The hypothesis testing divergence satisfies the data processing
inequality~\cite{Wang2012PRL_oneshot}: For any subnormalized state $\hat\rho$, for any
$\hat\sigma\geqslant 0$, for any CP and trace-nonincreasing map
$E$,  and for any $0<\eta\leqslant\operatorname{tr}(E(\hat\rho))$, 
the hypothesis testing divergence is monotonic,
\begin{align}
  {S}_{\mathrm{H}}^{\eta}(\hat\rho\,\Vert\,\hat\sigma) \geqslant{S}_{\mathrm{H}}^{\eta}(E(\hat\rho)\,\Vert\,E(\hat\sigma))\ .
  \label{eq:DHyp-dpi}
\end{align}
The hypothesis testing entropy also obeys a scaling property in the second
argument: For any subnormalized state $\hat\rho$, for any
$\hat\sigma\geqslant 0$, and for any $0<\eta\leqslant\operatorname{tr}(\hat\rho)$,
\begin{align}
  {S}_{\mathrm{H}}^{\eta}(\hat\rho\,\Vert\,a\hat\sigma) = {S}_{\mathrm{H}}^{\eta}(\hat\rho\,\Vert\,\hat\sigma) - \ln(a)\ ,
  \label{eq:DHyp-scaling}
\end{align}
as can be directly seen from~\eqref{def_SH}.  Also, for any
$\hat\sigma,\hat\sigma'\geqslant0$ for which $\hat\sigma\leqslant\hat\sigma'$, the
hypothesis testing entropy satisfies
\begin{align}
  {S}_{\mathrm{H}}^{\eta}(\hat\rho\,\Vert\,\hat\sigma') \leqslant{S}_{\mathrm{H}}^{\eta}(\hat\rho\,\Vert\,\hat\sigma)\ ,
  \label{eq:DHyp-semidef-ordering-second-argument}
\end{align}
for any subnormalized state $\hat\rho$ and for any $0<\eta\leqslant\operatorname{tr}(\hat\rho)$.
Furthermore, if $D(\hat\rho',\hat\rho)\leqslant\varepsilon$, then
$\hat\rho'\geqslant\hat\rho-\hat \Delta$ for some $\hat \Delta\geqslant0$ with
$\operatorname{tr}(\hat \Delta)\leqslant\varepsilon$ and hence for any
$0<\eta\leqslant\eta+\varepsilon\leqslant\operatorname{tr}(\hat\rho)$,
\begin{align}
  {S}_{\mathrm{H}}^{\eta+\varepsilon}(\hat\rho\,\Vert\,\hat\sigma)
  \leqslant{S}_{\mathrm{H}}^{\eta}(\hat\rho'\,\Vert\,\hat\sigma)
  + \ln\Bigl(\frac{\eta+\varepsilon}{\eta}\Bigr)\ .
  \label{eq:DHyp-perturbation-state}
\end{align}

A useful property of the hypothesis testing divergence is that it interpolates
between the min and max divergences, which are approximately recovered in the
regimes $\eta \simeq 0$ and $\eta \simeq 1$,
respectively~\cite{Dupuis2013_DH}:
\begin{proposition}
  \label{Faist_prop}
  Let $\hat\rho$ be a (normalized) quantum state and let $\hat\sigma\geqslant0$.  For any
  $0<\varepsilon<1/2$,
  \begin{subequations}
    \begin{gather}
      {S}_{\mathrm{H}}^{{1-\varepsilon^2 / 6}}(\hat\rho\,\Vert\,\hat\sigma)
      - \ln \left(\frac{1-\varepsilon^2 / 6}{\varepsilon^2 / 6}\right)
      \leqslant{S}_{0}^{\varepsilon}(\hat\rho\,\Vert\,\hat\sigma)
      \leqslant{S}_{\mathrm{H}}^{{1-\varepsilon}}(\hat\rho\,\Vert\,\hat\sigma) - \ln (1-\varepsilon)\ ;
      \\
      {S}_{\mathrm{H}}^{2\varepsilon}(\hat\rho\,\Vert\,\hat\sigma) - \ln(2)
      \leqslant{S}_{\infty}^{\varepsilon}(\hat\rho\,\Vert\,\hat\sigma)
      \leqslant{S}_{\mathrm{H}}^{{\varepsilon^2 / 2}}(\hat\rho\,\Vert\,\hat\sigma)\ .
    \end{gather}
  \end{subequations}
\end{proposition}
\begin{proof}
  The proof of~\cite[Lemma~40]{Faist2018PRX_workcost} carries through even for the slightly
  different smoothing of ${S}_{0}$ and ${S}_{\infty}$, except for the upper
  bound on ${S}_{\infty}$.  There, we may
  apply~\cite[Proposition~4.1]{Dupuis2013_DH} directly.
\end{proof}

Finally, we note a pair of inequalities which establishes the approximate
equivalence of the two kinds of
min-divergences~\cite{Tomamichel2011TIT_LeftoverHashing,Dupuis2013_DH,BookTomamichel2016_Finite}.
\begin{proposition}
  \label{prop:equiv-Dzero-Dhalf}
  Let $\hat\rho$ be a normalized state and let $\hat\sigma\geqslant0$.  For any $\varepsilon>0$,
  \begin{align}
    {S}_{1/2}^{2\varepsilon}(\hat\rho\,\Vert\,\hat\sigma)
    \geqslant
    {S}_{0}^{2\varepsilon}(\hat\rho\,\Vert\,\hat\sigma)
    \geqslant {S}_{1/2}^{\varepsilon}(\hat\rho\,\Vert\,\hat\sigma) - 6\ln\left(\frac3{\varepsilon}\right)\ .
  \end{align}
\end{proposition}
\begin{proof}
  The first inequality follows because
  of~\eqref{eq:divergences-are-ordered-in-alpha}.  For the second inequality,
  let $\hat\rho'\in B^\varepsilon(\hat\rho)$ such that
  ${S}_{1/2}^{\varepsilon}(\hat\rho\,\Vert\,\hat\sigma)={S}_{1/2}(\hat\rho'\,\Vert\,\hat\sigma)$.
  Then from~\cite[Proposition~4.2]{Dupuis2013_DH}, we have
  ${S}_{1/2}(\hat\rho'\,\Vert\,\hat\sigma) \leqslant{S}_{\mathrm{H}}^{1-\varepsilon'}(\hat\rho'\,\Vert\,\hat\sigma) - \ln(\varepsilon'^2)$ for any
  $\varepsilon'>0$; choosing $\varepsilon'=\varepsilon^2/6$ and using \cref{Faist_prop},
  we find
  ${S}_{1/2}(\hat\rho'\,\Vert\,\hat\sigma) \leqslant{S}_{0}^{\varepsilon}(\hat\rho'\,\Vert\,\hat\sigma) + \ln[(1-\varepsilon')/\varepsilon'] -
  \ln(\varepsilon'^2)$.  The claim follows by noting that
  ${S}_{0}^{\varepsilon}(\hat\rho'\,\Vert\,\hat\sigma) \leqslant{S}_{0}^{2\varepsilon}(\hat\rho\,\Vert\,\hat\sigma)$ along with
  $(1-\varepsilon')/(\varepsilon'^3) \leqslant(6\varepsilon^{-2})^{3} \leqslant(3\varepsilon^{-1})^{6}$.
\end{proof}

\subsection{Asymptotic spectral divergence rates}

In statistical mechanics one is often interested in the thermodynamic limit,
where the behavior of the system as it becomes arbitrarily large often no longer
depends on microscopic details.
The action of taking the thermodynamic limit is formalized by considering a
sequence of states $\widehat{P}:= \{ \hat\rho_n \}_{n\in\mathbb{N}}$, where
$\hat\rho_n$ is a quantum state on $\mathscr{H}^{\otimes n}$.

The von Neumann entropy rate is defined as
\begin{align}
  \label{entropy_rate_q}
  {S}_{1}(\widehat{P}) := \lim_{n \to \infty} \frac{1}{n} {S}_{1}(\hat\rho_n)\ ,
\end{align}
and the KL divergence rate with respect to the sequence of
positive operators $\widehat{\Sigma}:= \{ \hat\sigma_n \}_{n \in \mathbb N}$ is
defined as
\begin{align}
  \label{divergence_rate_q}
  {S}_{1}(\widehat{P}\,\Vert\,\widehat{\Sigma})
  := \lim_{n \to \infty} \frac{1}{n} {S}_{1}(\hat\rho_n\,\Vert\,\hat\sigma_n)\ .
\end{align}
We note that these limits do not necessarily exist in general.

We now introduce the spectral divergence rates, which are natural extensions of
the min and max divergences to the thermodynamic limit.

\begin{definition}[Spectral divergence rates]
  \label{def_spectral}
  Let $\widehat{P}= \{ \hat\rho_n \}$ be a sequence of states and let
  $\widehat{\Sigma}= \{ \sigma_n \}$ be a sequence of positive operators.  We define
  the upper spectral divergence rate,
  \begin{align}
    {\overline{S}}(\widehat{P}\,\Vert\,\widehat{\Sigma})
    := \lim_{\varepsilon\to +0} \limsup_{n \to \infty}
    \frac{1}{n} {S}_{\infty}^{\varepsilon}(\hat\rho_n\,\Vert\,\hat\sigma_n)\ ,
  \end{align}
  and the lower spectral divergence rate,
  \begin{align}
    {\underline{S}}(\widehat{P}\,\Vert\,\widehat{\Sigma})
    := \lim_{\varepsilon\to +0}\liminf_{n \to \infty}
    \frac{1}{n} {S}_{0}^{\varepsilon}(\hat\rho_n\,\Vert\,\hat\sigma_n)\ .
  \end{align}
\end{definition}

These quantities have been introduced in Ref.~\cite{Nagaoka2007IEEETIT_hypothesis} in an equivalent but different  expression:
\begin{subequations}
  \begin{align}
    {\overline{S}}(\widehat{P}\,\Vert\,\widehat{\Sigma})
    &= \inf \left\{ a \ : \ \limsup_{n \to \infty} 
      \operatorname{tr}\bigl[ \operatorname{Proj}\left\{ \hat\rho_n - e^{na} \hat\sigma_n \geqslant0 \right\} \hat\rho_n \bigr]
      = 0 \right\}\ ,
      \label{spectral_Nagaoka1}
    \\
    {\underline{S}}(\widehat{P}\,\Vert\,\widehat{\Sigma})
    &= \sup \left\{ a \ : \ \liminf_{n \to \infty}
      \operatorname{tr}\bigl[ \operatorname{Proj}\left\{ \hat\rho_n - e^{na} \hat\sigma_n \geqslant0 \right\} \hat\rho_n \bigr]
      = 1 \right\}\ ,
      \label{spectral_Nagaoka2}
  \end{align}
\end{subequations}
where $\operatorname{Proj}\bigl\{ \hat X \geqslant0\bigr\}$ represents the projector onto the
eigenspaces of $\hat X$ corresponding to nonnegative eigenvalues.  The equivalence
of these two definitions has been proved in Theorems~2 and~3 of
Ref.~\cite{Datta2009IEEE_minmax}.  We note that
\begin{align}
  {\underline{S}}(\widehat{P}\,\Vert\,\widehat{\Sigma}) \leqslant{\overline{S}}(\widehat{P}\,\Vert\,\widehat{\Sigma})\ .
  \label{eq:min-max-asympt-divergence-rates-ordered}
\end{align}

As a special case, we introduce the lower and the upper spectral entropy rates,
which are respectively given by
\begin{align}
  \underline{S}({\widehat{P}}) &:= -{\overline{S}}(\widehat{P}\,\Vert\,\widehat{\mathrm{ID}})\ ;
  &
  \overline{S}({\widehat{P}}) &:= -{\underline{S}}(\widehat{P}\,\Vert\,\widehat{\mathrm{ID}})\ ,
\end{align}
where $\widehat{\mathrm{ID}} := \{ \hat{I}^{\otimes n} \}_{n \in \mathbb N}$ is
the sequence consisting of identity operators on $\mathscr{H}^{\otimes n}$.

We can also define the hypothesis testing divergence rate
\begin{align}
  {S}_{\mathrm{H}}^{\eta}(\widehat{P}\,\Vert\,\widehat{\Sigma})
  := \lim_{n \to \infty} \frac{1}{n} {S}_{\mathrm{H}}^{\eta}(\hat\rho_n\,\Vert\,\hat\sigma_n)\ ,
\end{align}
noting that the limit does not necessarily exist.  From \cref{Faist_prop}, in
general, ${S}_{\mathrm{H}}^{\varepsilon}(\hat\rho\,\Vert\,\hat\sigma)$ and
${S}_{\mathrm{H}}^{1-\varepsilon}(\hat\rho\,\Vert\,\hat\sigma)$ respectively give the same lower and
upper spectral divergence rates as those given by
${S}_{\infty}^{\varepsilon}(\hat\rho\,\Vert\,\hat\sigma)$ and
${S}_{0}^{\varepsilon}(\hat\rho\,\Vert\,\hat\sigma)$:
\begin{subequations}
  \label{hypothesis_spectral_eq}
  \begin{align}
    \lim_{\varepsilon\to +0}\limsup_{n \to \infty}
    \frac{1}{n} {S}_{\mathrm{H}}^{\varepsilon}(\hat\rho_n\,\Vert\,\hat\sigma_n)
    &= {\overline{S}}(\widehat{P}\,\Vert\,\widehat{\Sigma})\ ,\\
    \lim_{\varepsilon\to +0}\liminf_{n \to \infty}
    \frac{1}{n} {S}_{\mathrm{H}}^{1-\varepsilon}(\hat\rho_n\,\Vert\,\hat\sigma_n)
    &= {\underline{S}}(\widehat{P}\,\Vert\,\widehat{\Sigma})\ .
  \end{align}
\end{subequations}

\section{Asymptotic state convertibility by thermal operations}
\label{sec:thermo-reversibility}

In this section, we formulate thermal operations and prove our first main theorem on asymptotic state convertibility (\cref{thm:asympt-equipartition-implies-asympt-TO-reversibility}).
Importantly, in the microscopic regime, state transformations are not
reversible in general, not even approximately. 
For general states
$\hat\rho,\hat\rho'$, it might happen that
$\hat\rho$ can be approximately converted to  $ \hat\rho'$ with work extraction $w$, but that an approximate
transformation from $\hat\rho'$ to $\hat\rho$ requires much more work than
$w$~\cite{Horodecki2013_ThermoMaj}.

Then we can ask the question, under which conditions is reversibility restored?
This is an important question, because reversibility implies that the optimal
work cost derives from a potential, which in turn means that macroscopic
thermodynamic behavior is restored.
Here, we consider in fact a marginally stronger property.  Under which
conditions is a state reversibly convertible to the thermal state?  Clearly, any
two states that have this property can reversibly be converted into one another.
This slightly stronger statement ensures that the thermodynamic potential is
well defined for the thermal state itself, a desirable feature that allows the
thermal state to take on the role of a ``reference state.''

\subsection{Thermodynamic operations}
\label{sec:thermodynamic-operations}

We now introduce our thermodynamic framework.  The simple model we introduce
captures the relevant features of thermodynamics at the microscopic scale, while
providing a simple, abstract, and general formalism for analyzing the resource
cost of transforming one quantum state into another~\cite{Goold2016JPA_review}.

The goal is the following. Given a system $S$, and two states
$\hat\rho_S,\hat\rho'_S$, we would like to quantify the resources required in
order to convert $\hat\rho_S$ to $\hat\rho'_S$ in some reasonable thermodynamic
model. 
The resource theory of thermal operations is an established model that is particularly useful in such a context.
It specifies the set of transformations that can be carried out for free, without the involvement of external resources such as thermodynamic work.  In the model of thermal operations, one is allowed to carry out for free any unitary on the system and a heat bath at fixed background temperature, as long as the unitary commutes with the overall noninteracting Hamiltonian of the system and the bath.
Here we introduce a slightly generalized notion of thermal operations, where different input and output systems are allowed.

\begin{definition}[(Generalized) thermal Operation]
  \label{def:thermal-operation}
  Consider systems $S,S'$ with corresponding Hamiltonians $\hat{H}_S, \hat{H}'_{S'}$.
  Then a CP and trace-nonincreasing map $\Phi^{[\mathrm{TO}]}_{S\to S'}(\cdot)$
  is a \emph{thermal operation} at inverse temperature $\beta > 0$ if it can be
  written as
  \begin{align}
    \label{eq:def-TO}
    \Phi^{[\mathrm{TO}]}_{S\to S'}(\cdot)
    = \operatorname{tr}_{B} \left[ \hat V_{SB\to S'B}
    \, \left( (\cdot)\otimes
    \frac{{e}^{-\beta \hat{H}_{B}}}{\operatorname{tr}\bigl(e^{-\beta \hat{H}_B}\bigr)}
    \right)
    \, \hat V_{SB\leftarrow S'B}^\dagger \right]\ ,
  \end{align}
  for some ancilla system $B$ of finite dimension with some
  corresponding Hamiltonian $\hat{H}_B$, and for some partial isometry
  $\hat V_{SB\to S'B}$ such that
  $\hat V_{SB\to S'B} \, ({\hat{H}_S+\hat{H}_B})=({\hat{H}'_{S'} + \hat{H}_B}) \,
  \hat V_{SB\to{}S'B}$.

  If there exists a thermal operation that maps $\hat\rho_S$ to
  $\hat\rho'_{S'}$,
  we write $(\hat\rho_S,\hat{H}_S) \xrightarrow[\mathrm{TO}]{} (\hat\rho'_{S'},\hat{H}'_S)$.  We may
  omit the Hamiltonians if they are clear from context.

  Furthermore, a process that is achieved in the limit of processes of the
  form~\eqref{eq:def-TO} with arbitrarily large but finite bath systems, is also
  called a thermal operation.
\end{definition}

The last condition is required to enable processes that decrease the rank of the
input state, for instance, a process consisting of Landauer erasure of a single
bit compensated by a suitable energy shift~\cite{Horodecki2013_ThermoMaj}.

An operator $\hat V$ is a partial isometry if it is an isometry on its support,
or equivalently if $\hat V^\dagger \hat V$ and $\hat V \hat V^\dagger$ are
projectors.
We allow $\hat V$ in the definition above to be a partial isometry instead of a
unitary as considered in
Refs.~\cite{Brandao2013_resource,Horodecki2013_ThermoMaj,Brandao2015PNAS_secondlaws} because they are more convenient when considering
input and output systems of different dimension.  Physically, this corresponds
to specifying only a part of the process happening on an input subspace.
Importantly, any partial isometry that conserves energy can be dilated to a full
unitary that conserves energy on a larger
system~\cite{Faist2021CMP_impl}, as illustrated in Fig.~\ref{fig}.  We prove a corresponding general statement  as
\cref{lemma:dilation-energy-conserving-partial-isometry} in
\cref{appx:thermodynamic-lemmas}. 

\begin{figure}[h]
 \begin{center}
  \includegraphics[width=8cm]{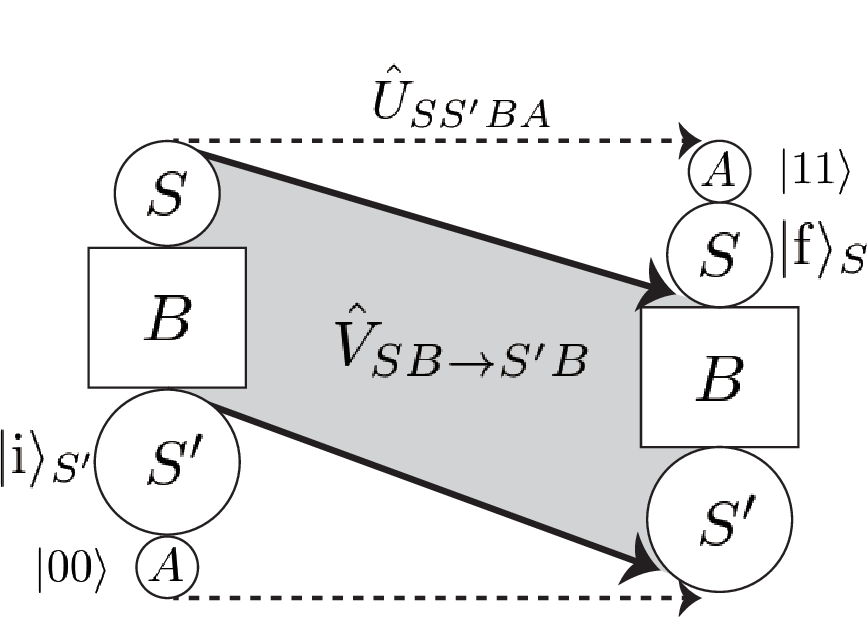}
 \end{center}
 \caption{A schematic of a generalized thermal operation (Definition~\ref{def:thermal-operation}). The partial isometry $\hat V_{SB \to S'B}$ can be embedded into the energy-conserving unitary $\hat U_{SS'BA}$ of the composite system $SS'B$
 along with a 2-qubit auxiliary system $A$.
 The initial state of $S'A$ is chosen as a pure state $| {\rm i} \rangle_{S'}| 00 \rangle$ and the final state of $SA$ is projected onto $| {\rm f} \rangle_S | 11 \rangle$.
 Here, $|{\rm i}\rangle_{S'}$ and $| {\rm f} \rangle_S$ can be arbitrary energy eingenstates of $S'$ and $S$, and the Hamiltonian of $A$ is chosen as a function of these energies to ensure the global energy conservation.
 }
\label{fig}
\end{figure}

There are no known general conditions under which state transformations are
possible with thermal operations in the quantum regime.  For semiclassical states, i.e.\@ states that
are block-diagonal in energy, such conditions are provided in the form of
\emph{thermomajorization}, a generalization of matrix
majorization~\cite{Horodecki2013_ThermoMaj}. 

Now we introduce an alternative model known as \emph{Gibbs-preserving maps}.
This model has a simple technical formulation which makes it more convenient to
prove some properties.  Because any thermal operation is in particular a
Gibbs-preserving map, all properties obeyed by Gibbs-preserving maps are
inherited by thermal operations.
As for thermal operations, it is technically more convenient to consider
trace-nonincreasing maps; furthermore we allow these maps to be
Gibbs-sub-preserving in the sense of the following definition.

\begin{definition}[Gibbs-sub-preserving map]
  Consider systems $S,S'$ with corresponding Hamiltonians $\hat{H}_S, \hat{H}'_{S'}$.
  Then a 
CP and trace-nonincreasing
map
  $\Phi^{[\mathrm{GPM}]}_{S\to S'}(\cdot)$ is said to be a
  \emph{Gibbs-sub-preserving map}  for some fixed
  inverse temperature $\beta$ if
  \begin{align}
    \label{eq:def-GPM}
    \Phi^{[\mathrm{GPM}]}_{S\to S'}\bigl(e^{-\beta \hat{H}_S}\bigr)
    \leqslant e^{-\beta\hat{H}'_{S'}}\ .
  \end{align}
  When there exists a Gibbs-sub-preserving map that maps $\hat\rho_S$ to
  $\hat\rho'_S$, we write
  $(\hat\rho_S;\hat{H}_S) \xrightarrow[\mathrm{GPM}]{} (\hat\rho'_{S'};\hat{H}'_{S'})$.  We may omit
  the Hamiltonians if they are clear from context.
\end{definition}

We note that any Gibbs-sub-preserving map can be dilated into a fully
trace-preserving map on a larger system which furthermore has the thermal state
as a fixed point~\cite[Proposition~2]{Faist2018PRX_workcost}.

\begin{lemma}
  \label{lemma:TO-implies-GPM} 
  Any thermal operation is also a Gibbs-sub-preserving map.
\end{lemma}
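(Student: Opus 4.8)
The plan is to verify the defining inequality~\eqref{eq:def-GPM} directly, by evaluating a thermal operation on the unnormalized Gibbs state $e^{-\beta\Ham_S}$. First I would substitute $\rhostate_S = e^{-\beta\Ham_S}$ into the definition~\eqref{eq:def-TO}. Since $\Ham_S$ and $\Ham_B$ act on distinct tensor factors and therefore commute, $e^{-\beta\Ham_S}\otimes e^{-\beta\Ham_B} = e^{-\beta(\Ham_S+\Ham_B)}$, so that
\begin{align}
  \Phi^{[\mathrm{TO}]}_{S\to S'}\bigl(e^{-\beta\Ham_S}\bigr)
  = \frac{1}{\tr(e^{-\beta\Ham_B})}\,
    \tr_B\bigl[\, \hat V_{SB\to S'B}\; e^{-\beta(\Ham_S+\Ham_B)}\; \hat V_{SB\leftarrow S'B}^\dagger \,\bigr]\ .
\end{align}

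The key step is to exploit the energy-conservation condition $\hat V_{SB\to S'B}\,(\Ham_S+\Ham_B) = (\Ham'_{S'}+\Ham_B)\,\hat V_{SB\to S'B}$. By iterating this intertwining relation on powers and passing to the Taylor series of the exponential, I obtain $\hat V_{SB\to S'B}\, e^{-\beta(\Ham_S+\Ham_B)} = e^{-\beta(\Ham'_{S'}+\Ham_B)}\,\hat V_{SB\to S'B}$. Writing $\hat\Pi := \hat V_{SB\to S'B}\hat V_{SB\leftarrow S'B}^\dagger$, which is a projector because $\hat V_{SB\to S'B}$ is a partial isometry, substitution then gives $\hat V_{SB\to S'B}\, e^{-\beta(\Ham_S+\Ham_B)}\, \hat V_{SB\leftarrow S'B}^\dagger = e^{-\beta(\Ham'_{S'}+\Ham_B)}\,\hat\Pi$. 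Taking the adjoint of the intertwining relation shows that $\hat\Pi$ commutes with $e^{-\beta(\Ham'_{S'}+\Ham_B)}$; since $\hat\Pi\leq\Ident$ and the two operators commute, their difference factorizes as $e^{-\beta(\Ham'_{S'}+\Ham_B)}(\Ident-\hat\Pi)\geq 0$, yielding the operator inequality $e^{-\beta(\Ham'_{S'}+\Ham_B)}\,\hat\Pi \leq e^{-\beta(\Ham'_{S'}+\Ham_B)}$.

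Finally I would apply the partial trace, which is a positive map and hence preserves the inequality, together with $e^{-\beta(\Ham'_{S'}+\Ham_B)} = e^{-\beta\Ham'_{S'}}\otimes e^{-\beta\Ham_B}$, so that $\tr_B[e^{-\beta(\Ham'_{S'}+\Ham_B)}] = \tr(e^{-\beta\Ham_B})\,e^{-\beta\Ham'_{S'}}$. The normalization then cancels, leaving $\Phi^{[\mathrm{TO}]}_{S\to S'}(e^{-\beta\Ham_S})\leq e^{-\beta\Ham'_{S'}}$, which is precisely~\eqref{eq:def-GPM}. The clause in \cref{def:thermal-operation} permitting thermal operations obtained as limits of maps with arbitrarily large finite baths poses no difficulty, since the positive-semidefinite cone is closed and the inequality survives the limit. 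The main obstacle I anticipate is the careful bookkeeping in the intertwining step: because $\hat V_{SB\to S'B}$ is only a partial isometry rather than a unitary, one must verify that it is $\hat V_{SB\to S'B}\hat V_{SB\leftarrow S'B}^\dagger$ (the projector onto the range, on the output side) that appears, and that commutation with the output Hamiltonian—rather than the input one—is what the argument requires.
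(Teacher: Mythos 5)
Your proposal is correct and follows essentially the same route as the paper: substitute $e^{-\beta\Ham_S}$ into the definition of a thermal operation, push the intertwining relation $\hat V(\Ham_S+\Ham_B)=(\Ham'_{S'}+\Ham_B)\hat V$ through the exponential, and use the fact that the range projector $\hat V\hat V^\dagger$ commutes with $\Ham'_{S'}+\Ham_B$ (the paper's \cref{prop:energypres-partial-isom-V-commutes-with-H}) to bound the result by $e^{-\beta(\Ham'_{S'}+\Ham_B)}$ before tracing out $B$. Your write-up is if anything slightly more careful than the paper's, which compresses the partial-isometry bookkeeping into the shorthand $\hat V e^{-\beta(\Ham_S+\Ham_B)}\hat V^\dagger = e^{-\beta\,\hat V(\Ham_S+\Ham_B)\hat V^\dagger}$, and you additionally address the closure of the inequality under the limit of large finite baths, which the paper leaves implicit.
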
 
\begin{proof}
  A thermal operation $\Phi^{[\mathrm{TO}]}_{S\to S'}$ can be written in the
  form~\eqref{eq:def-TO}.  We abbreviate $\hat V_{SB\to S'B}$ as $\hat V$.  Then
  with $Z_B = \operatorname{tr}(e^{-\beta\hat{H}_B})$, we have
  \begin{align}
  \Phi^{[\mathrm{TO}]}_{S\to S'}(e^{-\beta\hat{H}_S}) = Z_B^{-1}\,\operatorname{tr}_B\bigl[\hat V e^{-\beta(\hat{H}_S + \hat{H}_B)} \hat V^\dagger \bigr] = Z_B^{-1} \operatorname{tr}_B\bigl[ e^{-\beta \, \hat V(\hat{H}_S + \hat{H}_B) \hat V^\dagger} \bigr] \leqslant e^{-\beta\hat{H}'_{S'}},
  \end{align}
  where we have invoked \cref{prop:energypres-partial-isom-V-commutes-with-H} to
  see that $\hat V^\dagger \hat V$ commutes with $\hat{H}_S+\hat{H}_B$ (for the second
  equality) and that $\hat V \hat V^\dagger$ commutes with $\hat{H}'_{S'}+\hat{H}_B$
  (for the final inequality).
\end{proof}

While any thermal operation is a Gibbs-sub-preserving map as shown in \cref{lemma:TO-implies-GPM}, the converse is not true~\cite{Faist2015NJP_Gibbs}.
A notable difference between thermal operations and Gibbs-preserving maps is the
way the two models handle coherent superpositions of energy states. 
Thermal operations cannot create any coherent superpositions of energy levels
because they commute with time evolution.  However, there exist Gibbs-preserving
maps that can generate coherent superpositions of energy
levels~\cite{Faist2015NJP_Gibbs}.

The divergences defined above play an important role in our thermodynamic
framework as they are monotones under thermodynamic transformations.
In the following, we exploit the scaling property~\eqref{eq:Dalpha-scaling} of the
divergences to write the expression
${S}_{\alpha}(\hat\rho_S\,\Vert\,e^{-\beta\hat{H}_S}/Z_S) - \ln(Z_S) =
{S}_{\alpha}(\hat\rho_S\,\Vert\,e^{-\beta\hat{H}_S})$ more compactly by absorbing the
system free energy into the divergence term.

\begin{proposition}[Monotonicity of divergences~\cite{Horodecki2013_ThermoMaj,Brandao2015PNAS_secondlaws,Faist2018PRX_workcost,BookTomamichel2016_Finite}]
  \label{prop:relative_monotone}
  Consider systems $S,S'$ with corresponding Hamiltonians $\hat{H}_S,\hat{H}'_{S'}$.
  If $\hat\rho_S,\hat\rho_{S'}'$ are (normalized) quantum states that satisfy
  $\hat\rho_S \xrightarrow[*]{} \hat\rho_{S'}'$, where $*$ stands for either TO or GPM,
  then
  \begin{align}
    {S}_{\alpha}(\hat\rho_S\,\Vert\,e^{-\beta\hat{H}_S}) &\geqslant{S}_{\alpha}(\hat\rho_{S'}'\,\Vert\,e^{-\beta\hat{H}'_{S'}})\ ;
    &
    &\text{and}
    &
    {S}_{\mathrm{H}}^{\eta}(\hat\rho_S\,\Vert\,e^{-\beta\hat{H}_S}) &\geqslant{S}_{\mathrm{H}}^{\eta}(\hat\rho_{S'}'\,\Vert\,e^{-\beta\hat{H}'_{S'}})\ ,
  \end{align}
  where $\alpha$ may be any of $0$, $1/2$, $1$, or $\infty$ and where
  $0<\eta\leqslant1$.
\end{proposition}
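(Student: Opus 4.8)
The plan is to reduce the thermal-operations case to the Gibbs-sub-preserving case and then run a single three-line argument that treats all the divergences uniformly. First I would apply \cref{lemma:TO-implies-GPM}: since every thermal operation is a Gibbs-sub-preserving map, and since the case $*=\mathrm{GPM}$ must be handled anyway, it suffices to treat a CP, trace-nonincreasing map $\Phi:=\Phi^{[\mathrm{GPM}]}_{S\to S'}$ with $\Phi(\rhostate_S)=\rhostate'_{S'}$ and, by definition of a Gibbs-sub-preserving map, $\Phi(e^{-\beta\Ham_S})\leq e^{-\beta\Ham'_{S'}}$.

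The heart of the proof is then the following chain, valid with $D$ denoting any one of $\Dalpha[\alpha]{}{}$ for $\alpha\in\{0,1/2,1,\infty\}$ or $\DHyp[\eta]{}{}$ for $0<\eta\leq1$:
\begin{align*}
  D\bigl(\rhostate_S \,\big\|\, e^{-\beta\Ham_S}\bigr)
  \;\geq\; D\bigl(\Phi(\rhostate_S) \,\big\|\, \Phi(e^{-\beta\Ham_S})\bigr)
  \;=\; D\bigl(\rhostate'_{S'} \,\big\|\, \Phi(e^{-\beta\Ham_S})\bigr)
  \;\geq\; D\bigl(\rhostate'_{S'} \,\big\|\, e^{-\beta\Ham'_{S'}}\bigr)\ .
\end{align*}
The first inequality is the data-processing inequality for $\Phi$ applied to the pair of arguments $(\rhostate_S, e^{-\beta\Ham_S})$; the middle equality is just $\Phi(\rhostate_S)=\rhostate'_{S'}$; and the last inequality is monotonicity in the second argument under the semidefinite ordering, \eqref{eq:Dalpha-semidef-ordering-second-argument} (resp.\ \eqref{eq:DHyp-semidef-ordering-second-argument} for $\DHyp$), using $\Phi(e^{-\beta\Ham_S})\leq e^{-\beta\Ham'_{S'}}$.

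The step needing care is the data-processing inequality, because $\Phi$ is only trace-nonincreasing, whereas \eqref{eq:Dalpha-dpi} is stated for trace-preserving maps and the reference only covers $\alpha=0,1$. For $\DHyp$ there is nothing to do: \eqref{eq:DHyp-dpi} already grants data processing for CP trace-nonincreasing maps, so the chain closes immediately. For the R\'enyi divergences I would remove the trace-nonincreasing restriction in the standard way --- either by completing $\Phi$ to a CPTP map that appends an orthogonal ``failure'' flag and invoking \eqref{eq:Dalpha-dpi} on the resulting block-diagonal output, or by using the fact (noted above) that a Gibbs-sub-preserving map dilates to a trace-preserving map with the thermal state as a fixed point. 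The two extreme exponents can moreover be handled by elementary monotonicity facts that never require trace preservation: $\Dminf{}{}$ decreases because the fidelity $F(\hat X,\hat Y)$ is non-decreasing under CP trace-nonincreasing maps, and $\Dmax{}{}$ decreases directly from its characterization $\Dmax{\rhostate}{\sigmastate}=\ln\min\{\lambda:\rhostate\leq\lambda\sigmastate\}$, since $\rhostate_S\leq\lambda\, e^{-\beta\Ham_S}$ implies $\Phi(\rhostate_S)\leq\lambda\,\Phi(e^{-\beta\Ham_S})\leq\lambda\,e^{-\beta\Ham'_{S'}}$ by positivity of $\Phi$, so the optimal $\lambda$ can only shrink. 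Thus the same short chain disposes of all four R\'enyi cases and the hypothesis-testing case at once; the only genuine obstacle is the bookkeeping that legitimizes data processing for trace-nonincreasing maps, which is why I would lean on the explicitly stated dilation of Gibbs-sub-preserving maps rather than reprove it.
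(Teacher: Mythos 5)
Your proof is correct, and its skeleton---reduce the TO case to GPM via \cref{lemma:TO-implies-GPM}, then chain the data-processing inequality with monotonicity in the second argument, \eqref{eq:Dalpha-semidef-ordering-second-argument} resp.\ \eqref{eq:DHyp-semidef-ordering-second-argument}---is exactly the paper's; the $\DHyp[\eta]{}{}$ case is handled identically. Where you genuinely diverge is in how you legitimize data processing for the merely trace-nonincreasing map in the R\'enyi cases. The paper invokes the specific dilation of a Gibbs-sub-preserving map from \cite[Proposition~2]{Faist2018PRX_workcost}: a trace-\emph{preserving} map $\mathcal{K}$ on $S\otimes S'\otimes Q$ that exactly preserves $e^{-\beta(\Ham_S+\Ham'_{S'}+\Ham_Q)}$ and recovers $\Phi$ by sandwiching with energy eigenstates, with the bookkeeping closed via additivity under tensor products \eqref{eq:Dalpha-additive-under-tensor-products}. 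Your primary route---complete $\Phi$ to a CPTP map by appending an orthogonal failure flag, observe that the flag receives zero weight on $\rhostate_S$ because $\tr\Phi(\rhostate_S)=\tr\rhostate'_{S'}=1$, and read the divergence off the surviving block---is more elementary and avoids the thermodynamic dilation entirely, at the cost of not reusing machinery the paper needs elsewhere anyway. Your direct argument for $\Dmax{}{}$ from the characterization $\ln\min\{\lambda:\rhostate\leq\lambda\sigmastate\}$ is also clean, since it only needs positivity of $\Phi$. One caveat: the blanket claim that the fidelity is non-decreasing under CP trace-nonincreasing maps is false in general (e.g.\ $\Phi=p\,\mathrm{id}$ with $0<p<1$ gives $F(pX,pY)=pF(X,Y)$); it holds here only because the trace of the first argument is preserved, which is precisely what the flag completion makes explicit---so the $\alpha=1/2$ shortcut should either carry that hypothesis or simply be absorbed into the flag argument.
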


The proof of \cref{prop:relative_monotone} is essentially an application of the
data processing inequality~\eqref{eq:Dalpha-dpi}.  The full proof requires a
dilation of the trace-nonincreasing map into a trace-preserving one, and it is
presented in \cref{appx:thermodynamic-lemmas}.

Now that we have specified the free operations, we need to specify how we can
provide resources for thermodynamic operations that are not free, or how we can
extract such resources from states.

\emph{Thermodynamic work} can be provided with the help of an external work
storage system, often called a ``battery.''  This can be any system which starts
in a definite energy level and finishes in a different energy level; the
difference in energy is then the amount of work furnished or extracted.  In
fact, a large collection of different battery models are
equivalent~\cite{Brandao2015PNAS_secondlaws,Faist2018PRX_workcost}.

Thermal operations necessarily commute with the free time evolution, as can be seen
from~\eqref{eq:def-TO}.
This means that it is impossible to create any state that has a coherent
superposition of energy levels, even with an arbitrary amount of work, without
access to another resource that provides
coherence~\cite{Lostaglio2015PRX_coherence}.  Coherence is thus a valuable
resource that should be accounted
for~\cite{Aberg2014PRL_catalytic,Lostaglio2015PRX_coherence,Korzekwa2016NJP_extraction,Winter2016PRL_coherence,Marvian2020NC_distillation}.  Here, we adopt a rudimentary, \emph{ad hoc}
model.  We suppose that we have access to an additional system $C$ initialized
into a pure state of our choosing.  Crucially, we assume that the range of
energy values that can be stored into the system $C$ is bounded by some
parameter $\eta$, i.e., $\lVert {\hat{H}_C}\rVert _{\infty}\leqslant \eta$ where $\hat{H}_C$ is the
Hamiltonian of $C$.  The system $C$ must be restored to a state that is close to
a pure state.  The bound on the norm of the Hamiltonian forbids any embezzlement
of work of more than of the order of $\eta$~\cite{Brandao2015PNAS_secondlaws}.
The requirement that the final state on
$C$ is close to a pure state is necessary because there is no constraint on the
dimensionality of $C$; with a suitable highly degenerate system, starting from a
pure state and finishing in the maximally mixed state would allow to extract an
arbitrary amount of work that is not controlled by $\eta$.

This crude model for accounting for coherence suffices for our purposes, as the
protocols we construct only require an ancilla system $C$ with a parameter
$\eta$ that is negligibly small compared to the overall work cost of the
transformation.  Note that this scheme differs from
catalysis~\cite{Brandao2015PNAS_secondlaws,NellyNg2015NJP_limits,Lostaglio2015PRL_absence} as we do not require the final
state to be related in any way to the initial state.

\begin{definition}[Work/coherence-assisted process]
\label{def:workcoherence}
  Consider systems $S,S'$ with corresponding Hamiltonians $\hat{H}_S,\hat{H}'_{S'}$ and let $*$ stand for TO or GPM.  We say that a CP and trace-nonincreasing map $\Phi_{S\to S'}$ is a
  \emph{$(w,\eta)$-work/coherence-assisted $*$ operation},
  if there
  exist systems $W,C,W',C'$ with respective Hamiltonians $\hat{H}_W, \hat{H}_C,\hat{H}_{W'},\hat{H}_{C'}$ satisfying
  $\lVert {\hat{H}_C}\rVert _{\infty}\leqslant\eta$, $\lVert {\hat{H}_{C'}}\rVert _{\infty}\leqslant\eta$,
  and if there exist two energy eigenstates
  $\lvert {E}\rangle _W,\lvert {E'}\rangle _{W'}$ of $\hat{H}_W,\hat{H}_{W'}$ respectively whose energies
  $E$ and $E'$ satisfy $E - E' = w$, 
  and if there exist two pure states $\lvert {\zeta}\rangle _C, \lvert {\zeta'}\rangle _{C'}$, 
  and if there exists a $*$ operation
  $\tilde\Phi^{[*]}_{SCW\to S'C'W'}$, such that
  \begin{align}
    \Phi_{S\to S'}(\hat \rho_S) = \operatorname{tr}_{C'W'}\mathopen{}\left[
    \lvert {E'}\rangle\hspace*{-0.25ex}\langle{E'}\rvert _{W'} \otimes \lvert {\zeta'}\rangle\hspace*{-0.25ex}\langle{\zeta'}\rvert _{C'} \;
    \tilde\Phi^{[*]}_{SCW\to S'C'W'}
    \Bigl( \hat \rho_S \otimes \lvert {E}\rangle\hspace*{-0.25ex}\langle{E}\rvert _W\otimes \lvert {\zeta}\rangle\hspace*{-0.25ex}\langle{\zeta}\rvert _C \Bigr)
    \right]\ .
    \label{eq:workcoherence-process}
  \end{align}
  Here, we allow infinite-dimensional Hilbert spaces for $C$ and $C'$ for
    technical reasons related to how to construct $\lvert {\zeta}\rangle _C$ states.
\end{definition}

A $(w,\eta)$-work/coherence-assisted thermal operation is thus simply a free process that
is assisted by ancillas that provide an amount of work $w$ and an ``amount of
coherence'' that is at most $\eta$.  If $w$ is negative, then this measures the
amount of work that is extracted by the process.

\begin{definition}[Approximate thermodynamic process using work and coherence]
\label{def:workcoherence-trans}
  Consider systems $S,S'$ with Hamiltonians $\hat{H}_S,\hat{H}'_{S'}$ and let $*$ stand for TO or GPM.  We say that the state $\hat\rho_S$ is
  \emph{$(w,\eta,\varepsilon)$-transformable into $\hat\rho'_{S'}$ by a $*$ process}, which we denote by
  $(\hat\rho_S;\hat{H}_S) \xrightarrow[\mathrm{*}]{w,\eta,\varepsilon} (\hat\rho_S';\hat{H}'_{S'})$, if
  there exists a $(w,\eta)$-work/coherence-assisted $*$ process $\Phi_{S \to S'}$  such that $D(\Phi_{S \to S'}(\hat\rho_S), \hat\rho'_S) \leqslant \varepsilon$.  We may
  omit the Hamiltonians if they are clear from context.
\end{definition}

The hypothesis testing divergence is a relatively good (quasi) monotone under
assisted thermodynamic operations: It can only decrease, except for correction
terms that depend on $w,\eta,\varepsilon$.  Because the proof is not particularly
insightful, we defer it to \cref{appx:thermodynamic-lemmas}.

\begin{proposition}[Quasi-monotonicity of the hypothesis testing divergence
  under resource-assisted transformations]
  \label{prop:quasi-monotonicity-dhyp-assisted-thermal-transformation}
  Consider systems $S,S'$ with respective Hamiltonians $\hat{H}_S, \hat{H}'_{S'}$.
  For a quantum state $\hat\rho_S$ and a subnormalized state $\hat\rho'_{S'}$,
  suppose $\hat\rho_S\xrightarrow[\mathrm{*}]{w,\,\eta,\,\varepsilon} \hat\rho'_{S'}$, where $*$
  stands for TO or GPM.  Then for any
  $0<\xi\leqslant\xi+\varepsilon\leqslant\operatorname{tr}(\hat\rho'_{S'})$,
  \begin{align}
    {S}_{\mathrm{H}}^{\xi}(\hat\rho_S\,\Vert\,e^{-\beta\hat{H}_S}) + \beta(w + 2\eta)
    + \ln\left(\frac{\xi+\varepsilon}{\xi}\right)
    \geqslant{S}_{\mathrm{H}}^{\xi+\varepsilon}(\hat\rho'_{S'}\,\Vert\,e^{-\beta\hat{H}'_{S'}})\ .
  \end{align}
\end{proposition}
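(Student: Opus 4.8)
The plan is to read off an explicit realization of the assisted transformation, decompose it into three completely-positive trace-nonincreasing maps, and push the hypothesis-testing divergence through them with the data-processing inequality~\eqref{eq:DHyp-dpi}, tracking the Gibbs operator in the second slot at each step by means of the scaling~\eqref{eq:DHyp-scaling} and semidefinite-ordering~\eqref{eq:DHyp-semidef-ordering-second-argument} properties; the $\epsilon$-closeness is absorbed at the very end with the perturbation bound~\eqref{eq:DHyp-perturbation-state}. Concretely, by \cref{def:workcoherence-trans} there is a $(w,\eta)$-work/coherence-assisted $*$ process $\Phi_{S\to S'}$ of the form~\eqref{eq:workcoherence-process} with $D(\Phi_{S\to S'}(\rhostate_S),\rhostate'_{S'})\leq\epsilon$; set $\taustate_{S'}:=\Phi_{S\to S'}(\rhostate_S)$ and write $\Phi_{S\to S'}=\mathcal M\circ\tilde\Phi^{[*]}\circ\mathcal A$, where $\mathcal A(\cdot)=(\cdot)\otimes\proj{E}_W\otimes\proj{\zeta}_C$ appends the battery and coherence ancillas, $\tilde\Phi^{[*]}$ is the underlying free operation on $SCW\to S'C'W'$, and $\mathcal M(\cdot)=\tr_{C'W'}[(\proj{E'}_{W'}\otimes\proj{\zeta'}_{C'})(\cdot)]$ reads out the output battery and coherence levels.

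On the input side, the point is that $\ket{E}_W$ is an energy eigenstate, so $e^{-\beta\Ham_W}\geq e^{-\beta E}\proj{E}_W$, while $\opnorm{\Ham_C}\leq\eta$ gives $e^{-\beta\Ham_C}\geq e^{-\beta\eta}\Ident_C\geq e^{-\beta\eta}\proj{\zeta}_C$. Writing $\Ham_{SCW}=\Ham_S+\Ham_W+\Ham_C$, this yields the semidefinite bound
\begin{align}
  \mathcal A\bigl(e^{-\beta\Ham_S}\bigr)
  = e^{-\beta\Ham_S}\otimes\proj{E}_W\otimes\proj{\zeta}_C
  \leq e^{\beta(E+\eta)}\,e^{-\beta\Ham_{SCW}}\ .
\end{align}
Applying~\eqref{eq:DHyp-dpi} to $\mathcal A$ and then~\eqref{eq:DHyp-semidef-ordering-second-argument} with this bound followed by~\eqref{eq:DHyp-scaling} gives $\DHyp[\xi]{\rhostate_S}{e^{-\beta\Ham_S}}\geq\DHyp[\xi]{\mathcal A(\rhostate_S)}{e^{-\beta\Ham_{SCW}}}-\beta(E+\eta)$. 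The middle map is harmless: as a thermal or Gibbs-sub-preserving map, $\tilde\Phi^{[*]}$ satisfies $\tilde\Phi^{[*]}(e^{-\beta\Ham_{SCW}})\leq e^{-\beta\Ham'_{S'C'W'}}$ by~\eqref{eq:def-GPM} and \cref{lemma:TO-implies-GPM} (with $\Ham'_{S'C'W'}=\Ham'_{S'}+\Ham_{W'}+\Ham_{C'}$), so combining~\eqref{eq:DHyp-dpi} for $\tilde\Phi^{[*]}$ with~\eqref{eq:DHyp-semidef-ordering-second-argument} leaves the genuine output Gibbs operator as the reference. On the output side, $\mathcal M$ sends that Gibbs operator to $e^{-\beta E'}\langle\zeta'|e^{-\beta\Ham_{C'}}|\zeta'\rangle\,e^{-\beta\Ham'_{S'}}\geq e^{-\beta(E'+\eta)}e^{-\beta\Ham'_{S'}}$, again using $\opnorm{\Ham_{C'}}\leq\eta$, so~\eqref{eq:DHyp-dpi} for $\mathcal M$ together with~\eqref{eq:DHyp-scaling} gives $\DHyp[\xi]{\mathcal A(\rhostate_S)}{e^{-\beta\Ham_{SCW}}}\geq\DHyp[\xi]{\taustate_{S'}}{e^{-\beta\Ham'_{S'}}}-\beta(E'+\eta)$.

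Chaining the three inequalities and substituting $w=E-E'$ collapses the battery energies and the two coherence corrections into a single term, $\DHyp[\xi]{\taustate_{S'}}{e^{-\beta\Ham'_{S'}}}\leq\DHyp[\xi]{\rhostate_S}{e^{-\beta\Ham_S}}+\beta(w+2\eta)$. Finally, since $D(\taustate_{S'},\rhostate'_{S'})\leq\epsilon$ and $0<\xi\leq1-\epsilon$, the perturbation bound~\eqref{eq:DHyp-perturbation-state} (with $\rhostate'_{S'}$ in the role of $\rhostate$ and $\taustate_{S'}$ in the role of $\rhostate'$) gives $\DHyp[\xi+\epsilon]{\rhostate'_{S'}}{e^{-\beta\Ham'_{S'}}}\leq\DHyp[\xi]{\taustate_{S'}}{e^{-\beta\Ham'_{S'}}}+\ln[(\xi+\epsilon)/\xi]$; inserting the previous line yields exactly the asserted inequality.

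The step I expect to require the most care is orienting the ancilla-appending map: a naive application of data processing to $\mathcal A$ controls the dressed divergence from the wrong side, so the argument genuinely hinges on first over-estimating the product operator $e^{-\beta\Ham_S}\otimes\proj{E}_W\otimes\proj{\zeta}_C$ by the scalar multiple $e^{\beta(E+\eta)}e^{-\beta\Ham_{SCW}}$ of the true local Gibbs operator (which is what licenses the use of Gibbs-sub-preservation for $\tilde\Phi^{[*]}$) and, symmetrically, under-estimating the projected output Gibbs operator by $e^{-\beta(E'+\eta)}e^{-\beta\Ham'_{S'}}$. Pinning down these two semidefinite comparisons—and checking that their scaling factors combine to precisely $\beta(w+2\eta)$—is the crux; the remainder is bookkeeping with properties of $\DHyp$ already established above.
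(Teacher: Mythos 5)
Your overall architecture is sound, and on the input side you actually take a slightly different route from the paper: you apply the data-processing inequality to the CPTP appending map $\mathcal A$ and then trade $\mathcal A(e^{-\beta\Ham_S})=e^{-\beta\Ham_S}\otimes\proj{E}_W\otimes\proj{\zeta}_C$ for the over-estimate $e^{\beta(E+\eta)}e^{-\beta(\Ham_S+\Ham_W+\Ham_C)}$ via \eqref{eq:DHyp-semidef-ordering-second-argument}, whereas the paper constructs an explicit primal candidate $\bra{E,\zeta}_{WC}\hat Q_{SCW}\ket{E,\zeta}_{WC}$ for the hypothesis-testing optimization. Your version of that step is correct and arguably cleaner; the middle (Gibbs-sub-preservation) step and the final application of \eqref{eq:DHyp-perturbation-state} also match the paper.

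The output side, however, contains a genuine sign error that prevents the chain from closing. Each time you swap the second argument of the divergence while continuing a ``$\geq$'' chain, property \eqref{eq:DHyp-semidef-ordering-second-argument} requires you to swap it for a \emph{larger} operator. After the DPI for $\mathcal M$ you are holding $\DHyp[\xi]{\taustate_{S'}}{\mathcal M(e^{-\beta(\Ham'_{S'}+\Ham_{C'}+\Ham_{W'})})}$, and to lower-bound this by a divergence with reference $e^{-\beta\Ham'_{S'}}$ you need the \emph{upper} bound
$\mathcal M(e^{-\beta(\Ham'_{S'}+\Ham_{C'}+\Ham_{W'})}) = e^{-\beta E'}\,\dmatrixel{\zeta'}{e^{-\beta\Ham_{C'}}}\,e^{-\beta\Ham'_{S'}} \leqslant e^{\beta(\eta-E')}\,e^{-\beta\Ham'_{S'}}$
(from $\dmatrixel{\zeta'}{e^{-\beta\Ham_{C'}}}\leqslant e^{\beta\opnorm{\Ham_{C'}}}\leqslant e^{\beta\eta}$), not the lower bound $\geqslant e^{-\beta(E'+\eta)}e^{-\beta\Ham'_{S'}}$ that you invoke: the lower bound combined with \eqref{eq:DHyp-semidef-ordering-second-argument} yields an inequality pointing the wrong way for your chain. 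With the corrected bound and the scaling property, the output step reads $\DHyp[\xi]{\mathcal A(\rhostate_S)}{e^{-\beta\Ham_{SCW}}}\geqslant\DHyp[\xi]{\taustate_{S'}}{e^{-\beta\Ham'_{S'}}}+\beta(E'-\eta)$, and the battery energies then combine with the input side as $(E+\eta)-(E'-\eta)=w+2\eta$, exactly as required. As written, your intermediate inequality carries $-\beta(E'+\eta)$, so the two corrections add to $\beta(E+E'+2\eta)$; since $E+E'\neq E-E'=w$ unless $E'=0$, the substitution $w=E-E'$ does not collapse the terms and the asserted conclusion does not follow from your displayed steps. The fix is local, but the orientation of this one semidefinite comparison is precisely the crux you yourself flagged, and it is oriented the wrong way.
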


Finally, we define asymptotic transformations.  These are transformations in the
thermodynamic limit for which we are interested in the work cost rate, and which
use only a sublinear amount of coherence.

\begin{definition}[Asymptotic thermodynamic process]
\label{def:asymptotic_transformation}
  Consider two sequences of states $\widehat{P}= \{ \hat\rho_n \}$ and
  $\widehat{P}' = \{ \hat\rho'_n \}$ and two sequences of Hamiltonians
  $\widehat{\mathcal{H}}= \{ \hat{H}_n \}$, $\widehat{\mathcal{H}}' = \{ \hat{H}'_n \}$.   
  Let $*$ stand for TO or GPM.
  We say that $\widehat{P}$ can be asymptotically transformed into $\widehat{P}'$
  by an asymptotic $*$ process at a work rate $w$, which we denote by
  $(\widehat{P},\widehat{\mathcal{H}}) \xrightarrow[*]{w} (\widehat{P}',\widehat{\mathcal{H}}')$, if there exists sequences
  $w_n, \eta_n, \varepsilon_n$ such that
  $\hat\rho_n \xrightarrow[*]{w_n,\,\eta_n,\,\varepsilon_n} \hat\rho_n'$ for all $n$ and
  such that
  \begin{align}
    \lim_{n\to\infty} \frac{w_n}{n} &= w\ ;
    & &&
      \lim_{n\to\infty} \frac{\eta_n}{n} &= 0\ ; &&\text{and}
    & &&
      \lim_{n\to\infty} \varepsilon_n &= 0\ .
  \end{align}
\end{definition}

The spectral rates are monotones under asymptotic transformations:
\begin{proposition}[Monotonicity of spectral rates~\cite{Bowen2006ISIT_beyondiid}]
\label{prop:monotonicity_spectral}
  Consider two sequences of states $\widehat{P}= \{ \hat\rho_n \}$ and
  $\widehat{P}' = \{ \hat\rho'_n \}$ and two sequences of Hamiltonians
  $\widehat{\mathcal{H}}= \{ \hat{H}_n \}$, $\widehat{\mathcal{H}}' = \{ \hat{H}'_n \}$.   Define the sequences of Gibbs weight operators
  $\widehat{\Sigma}= \{ e^{-\beta \hat{H}_n } \}$ and
  $\widehat{\Sigma}' = \{ e^{-\beta\hat{H}'_n} \}$.  Let $w\in\mathbb{R}$ be such that
  $\widehat{P}\xrightarrow[*]{w} \widehat{P}'$ where $*$ may stand for either TO or GPM.  Then
  \begin{align}
    {\underline{S}}(\widehat{P}\,\Vert\,\widehat{\Sigma}) + \beta w &\geqslant{\underline{S}}(\widehat{P}'\,\Vert\,\widehat{\Sigma}')\ ;
    &&\text{and}
    &
      {\overline{S}}(\widehat{P}\,\Vert\,\widehat{\Sigma}) + \beta w &\geqslant{\overline{S}}(\widehat{P}'\,\Vert\,\widehat{\Sigma}')\ .
  \end{align}
\end{proposition}

\begin{proof}
  This follows by applying
  \cref{prop:quasi-monotonicity-dhyp-assisted-thermal-transformation} and taking
  the asymptotic limit using the expressions~\eqref{hypothesis_spectral_eq} of
  the asymptotic divergences.
\end{proof}

The monotonicity of the spectral rates implies that if a transformation is
reversible at a given work cost rate, then that rate is necessarily optimal:
\begin{proposition}
\label{prop:optimal}
  Consider two sequences of states $\widehat{P}= \{ \hat\rho_n \}$ and
  $\widehat{P}' = \{ \hat\rho'_n \}$ and two sequences of Gibbs weight operators
  $\widehat{\Sigma}= \{ e^{-\beta \hat{H}_n } \}$ and
  $\widehat{\Sigma}' = \{ e^{-\beta\hat{H}'_n} \}$.   Then if
  $w\in\mathbb{R}$ is such that $\widehat{P}\xrightarrow[*]{w}\widehat{P}'$ and
  $\widehat{P}'\xrightarrow[*]{-w}\widehat{P}$, then for all $w'<w$,
  $\widehat{P}\cancel{\xrightarrow[*]{w'\,}} \widehat{P}'$.
\end{proposition}

This is an expression of  the second law of thermodynamics, or Kelvin's principle, which states that one cannot extract a positive amount of work from a single heat bath by a cyclic protocol.

\subsection{State convertibility by thermal operations}

We now describe our main theorem for state convertibility by thermal operations.  We first derive a sufficient condition for state conversion which is applicable to non-asymptotic cases.  We then take the asymptotic limit and obtain a necessary and sufficient condition for asymptotic state conversion.  The proofs of these theorems will be provided in the next subsection because of their technical nature.

First, we provide a new sufficient criterion for when a general non-semiclassical state
can be approximately reversibly converted to the thermal state using
thermal operations.  Because thermal operations cannot create superpositions of energy eigenstates, arbitrary state transformations generally require a source of coherence.
Here, we show that for any state whose
min- and max-divergences are close, only a small source of coherence is
needed to carry out a transformation to  Gibbs state.

\begin{theorem}
  \label{prop:approx-equipartition-implies-reversibility-under-TO}
  Let $\hat\rho$ be any quantum state on a system with Hamiltonian $\hat{H}$, and
  denote by $\Delta({\hat{H}})$ the spectral range of $\hat{H}$, i.e., the
  difference between the maximum and minimum eigenvalue of $\hat{H}$.  Let
  $\hat\gamma''=1$ be the trivial thermal state on a trivial system with Hilbert
  space $\mathbb{C}$ with Hamiltonian $\hat{H}''=0$.  Let
  $0\leqslant\varepsilon<1/100$.
  Suppose that there exists $S\in\mathbb{R}$ and $\Delta>0$ such that
  \begin{align}
    {S}_{\infty}^{\varepsilon}(\hat\rho\,\Vert\,e^{-\beta \hat{H}}) \leqslant S + \Delta\ ;
    \qquad\text{and}\qquad
    {S}_{0}^{\varepsilon}(\hat\rho\,\Vert\,e^{-\beta \hat{H}}) \geqslant S - \Delta\ .
  \end{align}
  Let $\delta>0$, $q \geqslant2$, and $m=\lceil\Delta({\hat{H}})/\delta\rceil$.    Then we
  have
  \begin{gather}
    \hat\rho\ \xrightarrow[\mathrm{TO}]{
      w = \beta^{-1}(-S+\Delta) + \delta + \beta^{-1}\ln(2m^2(36/\varepsilon)^{3})\,,\ \ 
      \eta = 3q^2\delta\,,\ \ 
      \bar{\varepsilon}= 11\sqrt{\varepsilon}+2/q
    } \ \hat\gamma''\ ,
  \end{gather}
  and
  \begin{gather}
  \hat\gamma''\  \xrightarrow[\mathrm{TO}]{ {\substack{
      w' = \beta^{-1}(S+\Delta)+\delta+\beta^{-1}\ln(2qm^3) +
      16q(\Delta+\beta\delta+\ln(2m))^2/(\beta^2\delta)\ ,\\
      \eta' = 32q^3(\Delta+\beta\delta+\ln(2m))^2/(\beta^2\delta)\ ,\\
      \bar{\varepsilon}' = 10\sqrt{\varepsilon}+7/(2q) + m^2{e}^{-(\Delta+\beta\delta+\ln(m))}
    }}
    } \ \hat\rho\ .
  \end{gather}
\end{theorem}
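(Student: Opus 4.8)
The plan is to give an explicit constructive protocol in each direction, reducing the general (coherent) problem to a semiclassical one handled by thermomajorization~\cite{Horodecki2013_ThermoMaj}, together with a coherence-manipulation step implemented with the bounded-energy reservoir $C$. Throughout I would convert divergence bounds into work values using \cref{Faist_prop} and \cref{prop:equiv-Dzero-Dhalf}, which tie the extractable and the required work to the smooth min- and max-divergences respectively, up to the logarithmic smoothing corrections already visible in the statement (the $(36/\epsilon)^3$ factors). The first step is a coarse-graining: replace the Gibbs weight $e^{-\beta\Ham}$ by its cell-binned version, where the spectral range is cut into $m=\lceil\specrange{\Ham}/\delta\rceil$ cells of width $\delta$, so that $e^{-\beta\delta}e^{-\beta\Ham}\leqslant e^{-\beta\Ham_\delta}\leqslant e^{\beta\delta}e^{-\beta\Ham}$. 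By \eqref{eq:Dalpha-semidef-ordering-second-argument} and \eqref{eq:Dalpha-scaling} this shifts every divergence by at most $\beta\delta$, which (after dividing by $\beta$) is the source of the additive $\delta$ terms in $w$ and $w'$; the same cell width $\delta$ simultaneously fixes the energy resolution of the coherence gadget, and coherences within one cell cost essentially nothing since the energy is constant there to within $\delta$.

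For the forward direction $\rhostate\to\thstate''$ (work extraction), I would pick a smoothing witness $\taustate\in B^\epsilon(\rhostate)$ attaining $\Dminz[\epsilon]{\rhostate}{e^{-\beta\Ham}}\geqslant S-\Delta$, and realize the projection onto the support of $\taustate$ by a coherence-assisted, approximately energy-conserving operation; the reservoir $C$ has energy range bounded by $\eta\sim q^2\delta$ and its finiteness contributes an error $\sim 1/q$, while the gentle-measurement lemma controls the smoothing error at $\sim\sqrt\epsilon$, together accounting for the $11\sqrt\epsilon+2/q$ in $\bar\epsilon$. Dephasing the outcome across the $m$ cells is free in the coarse-grained picture and drops the min-divergence by at most $\ln m$; thermomajorization then extracts work at rate $\beta^{-1}(S-\Delta)$ minus the logarithmic corrections collected into $\beta^{-1}\ln(2m^2(36/\epsilon)^3)$, yielding the stated $w$.

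For the reverse direction $\thstate''\to\rhostate$ (formation), I would run formation in the opposite order: first invest work $\approx\beta^{-1}(S+\Delta)$ to build the energy-dephased version of $\rhostate$ by thermomajorization, now governed by $\Dmax[\epsilon]{\rhostate}{e^{-\beta\Ham}}\leqslant S+\Delta$, and then \emph{re-coherify} this block-diagonal state into $\rhostate$ using $C$. The re-coherification is the expensive step: reinstating coherences spread over an energy window of variance $\sim(\Delta+\beta\delta+\log(2m))^2/\beta^2$ at resolution $\delta$ requires the reservoir to supply $\sim(\text{variance})/\delta$ quanta of coherence with accuracy $\sim 1/q$, which produces the work overhead $16q(\Delta+\beta\delta+\log(2m))^2/(\beta^2\delta)$, the coherence budget $\eta'=32q^3(\Delta+\beta\delta+\log(2m))^2/(\beta^2\delta)$, and the energy-window tail error $m^2\ee^{-(\Delta+\beta\delta+\ln m)}$ in $\bar\epsilon'$. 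The quasi-monotonicity of \cref{prop:quasi-monotonicity-dhyp-assisted-thermal-transformation} is not needed for achievability but confirms that these rates are near-optimal.

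The main obstacle is the coherence manipulation shared by both directions: constructing an explicit partial isometry $\hat V$ on system, reservoir, and battery that commutes with the total Hamiltonian, so that it qualifies as a thermal operation per \cref{def:thermal-operation}, that realizes the desired coherent projection or re-coherification, and that returns $C$ to a state close to its pure initial state as demanded by \cref{def:workcoherence}. This is a catalytic-coherence construction in the spirit of \cite{Aberg2014PRL_catalytic,Lostaglio2015PRX_coherence}, and the delicate point is to bound simultaneously the infidelity, the disturbance of $C$, and the work and coherence costs, and then to propagate the three independent error sources, namely the smoothing error $\sqrt\epsilon$, the reservoir error $1/q$, and the energy-window tail $m\ee^{-\Delta}$, through the composition without spoiling the leading work rates $\beta^{-1}(-S+\Delta)$ and $\beta^{-1}(S+\Delta)$.
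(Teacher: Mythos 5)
Your overall architecture (coarse-grain the spectrum into $m$ cells of width $\delta$; dephase and thermo-majorize for extraction; thermo-majorize and then use a bounded reference frame $C$ for formation) matches the paper's, but the proposal leaves unresolved exactly the steps that carry the content of the theorem, and one step as literally stated would fail. The formation direction cannot work as ``prepare the energy-dephased version of $\rhostate$ and then re-coherify it'': dephasing $\rhostate$ on $S$ alone irreversibly destroys the phases of the off-diagonal blocks, so no subsequent injection of coherence can recover $\rhostate$ from $\mathcal{D}_{\Ham'}(\rhostate)$. The paper's resolution is to prepare the \emph{jointly} dephased state $\mathcal{D}_{\Ham'_S+\Ham_C}(\rhostate_S\otimes\hat\eta_C)$ of the system together with a uniform-superposition reference frame $\hat\eta_C$ on a ladder system $C$ of dimension $d_C=qK^2$, with $K\sim(\Delta+\beta\delta+\log(2m))/(\beta\delta)$. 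This joint state is block-diagonal in total energy, hence preparable at a cost governed by the max-divergence, yet it still encodes the relative phases, which are then transferred back to $S$ by an energy-conserving controlled unitary followed by Landauer erasure of $C$ (\cref{prop:managing-coherence}). Your sketch does not contain this idea, and it is the crux of the coherent formation protocol.

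Second, you assert that the coherences of $\rhostate$ are confined to an energy window of width of order $\Delta+\beta\delta+\log(2m)$ — which is what makes $d_C$, and hence the coherence budget $\eta'$ and the quadratic overhead in $w'$, controllable — but you give no argument for this. In the paper it is \cref{lemma:large-coherences-suppressed-if-approximate-equipartition}, a H\"older-inequality estimate showing that if the \emph{non-smoothed} min- and max-divergences of a state both lie within $\Delta'$ of $S$, then $\|P_k\rhostate P_{k'}\|_1\leqslant e^{-\beta\abs{E_k-E_{k'}}/2+\Delta'}$. This exposes a third gap: the hypothesis only bounds the \emph{smoothed} divergences, which are in general attained by two different nearby states, so one must first construct a single state $\tilde\rho$ close to $\rhostate$ whose non-smoothed min- and max-divergences simultaneously collapse to within roughly $\Delta+\beta\delta+\log(2m)$ of $S$, before the suppression lemma can be invoked. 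The paper does this with a contraction $G$ commuting with the Gibbs state, built from a feasible point of the semidefinite program characterizing the smooth max-divergence, chosen so that conjugation by $G$ caps the max-divergence without degrading the fidelity to the Gibbs state (hence without degrading the R\'enyi-$1/2$ min-divergence). This is flagged in the paper as the difficult part of the proof and is entirely absent from your proposal; until these three points are supplied, the stated work, coherence, and error parameters cannot be derived.
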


\cref{prop:approx-equipartition-implies-reversibility-under-TO} allows us to prove the emergence of a thermodynamic
potential in the macroscopic regime.  That is, there is a single quantity that
characterizes exactly when a  transformation by an asymptotic thermal operation is possible.

\begin{theorem}
  \label{thm:asympt-equipartition-implies-asympt-TO-reversibility}
  For sequences of states $\widehat{P}= \{ \hat\rho_n \}$,
  $\widehat{P}' = \{ \hat\rho'_n \}$ and sequences of Hamiltonians
  $\widehat{\mathcal{H}}= \{ \hat{H}_n \}$, $\widehat{\mathcal{H}}' = \{ \hat{H}'_n \}$.  Suppose that the spectral rates collapse for these states
  into a single monotone, i.e.:
  \begin{align}
    {\underline{S}}(\widehat{P}\,\Vert\,\widehat{\Sigma}) = {\overline{S}}(\widehat{P}\,\Vert\,\widehat{\Sigma})
    &=: {S}(\widehat{P}\,\Vert\,\widehat{\Sigma})\ ;
    &
    {\underline{S}}(\widehat{P}'\,\Vert\,\widehat{\Sigma}') = {\overline{S}}(\widehat{P}'\,\Vert\,\widehat{\Sigma}')
    &=: {S}(\widehat{P}'\,\Vert\,\widehat{\Sigma}')\ ,
  \end{align}
  with the sequences $\widehat{\Sigma}= \{ e^{-\beta \hat{H}_n } \}$ and
  $\widehat{\Sigma}' = \{ e^{-\beta\hat{H}'_n} \}$.
  Then
  \begin{align}
     \widehat{P}\xrightarrow[\mathrm{TO}]{\beta^{-1}[{S}(\widehat{P}'\,\Vert\,\widehat{\Sigma}') -
    {S}(\widehat{P}\,\Vert\,\widehat{\Sigma})]} \widehat{P}'. 
  \end{align}
  Equivalently,
  $\widehat{P}\xrightarrow[\mathrm{TO}]{} \widehat{P}'$ if and only if
  ${S}(\widehat{P}\,\Vert\,\widehat{\Sigma}) \geqslant{S}(\widehat{P}'\,\Vert\,\widehat{\Sigma}')$.
\end{theorem}

Crucially, these theorems are applicable even if the state is fully quantum.  On
the other hand, if the state is semiclassical, i.e., if it is block-diagonal in
the energy basis, then the condition for state convertibility in
\cref{prop:approx-equipartition-implies-reversibility-under-TO} reduces to the
known conditions of Refs.~\cite{Aberg2013_worklike,Horodecki2013_ThermoMaj} in
terms of state preparation and work distillation as characterized, e.g., by
thermo-majorization.  In such cases, no source of coherence is required.

Indeed, for semiclassical states, the min-divergence quantifies the amount of
work that can be extracted from a state when transforming it to the thermal
state and the max-divergence quantifies the amount of work that is required to
prepare the state out of the thermal state.  If these divergences collapse, the
state is reversibly convertible to and from the thermal states.  For quantum
states that are not semiclassical, the proof cannot proceed in the same way:
Preparing a general state $\hat\rho$ starting from the thermal state requires
an external source of coherence, and thus the work requirement of state
preparation cannot be given by the max-divergence in same way as for
semiclassical states.  For the proof of
\cref{prop:approx-equipartition-implies-reversibility-under-TO} we need the fact
that the min and the max divergences collapse approximately in order to conclude
that the state can be approximately reversibly transformed to and from the
thermal state.

\Cref{thm:asympt-equipartition-implies-asympt-TO-reversibility}
generalizes and unifies several known situations.  For i.i.d.\@ states and
Gibbs-preserving maps, our theorem reproduces the results of
Ref.~\cite{Matsumoto2010arXiv_reverse}.  In the case of a trivial Hamiltonian, we recover the
results of Ref.~\cite{Jiao2018JMP_convertibility}.  
Our theorem also provides a concrete
application of the general results provided in
Refs.~\cite{Weilenmann2016PRL_axiomatic,PhDMirjam2017,Weilenmann2018_smooth}, in the context of the
axiomatic thermodynamic framework of Lieb and Yngvason~\cite{Lieb1999_secondlaw,Lieb2013_entropy_noneq}.

We note that reversibility only applies to the leading order of the work cost
rate and coherence rate.  Consider two sequences of states $\widehat{P},\widehat{P}'$
that satisfy
${\underline{S}}(\widehat{P}\,\Vert\,\widehat{\Sigma})={\overline{S}}(\widehat{P}\,\Vert\,\widehat{\Sigma}) =
{\underline{S}}(\widehat{P}'\,\Vert\,\widehat{\Sigma}) = {\overline{S}}(\widehat{P}'\,\Vert\,\widehat{\Sigma})$, which are
asymptotically reversibly interconvertible thanks
to~\cref{thm:asympt-equipartition-implies-asympt-TO-reversibility}.  It is still
in general necessary to invest a sublinear amount of work and coherence in the
transformation $\widehat{P}\to\widehat{P}'$ which cannot be recovered in general in the
reverse transformation $\widehat{P}'\to\widehat{P}$.  In our definition of an asymptotic
transformation (\cref{def:asymptotic_transformation}) we deliberately allow
sublinear work and coherence costs for this reason, noting that these quantities
are negligible with respect to the overall work cost of the transformation.

\subsection{Proof of \cref{prop:approx-equipartition-implies-reversibility-under-TO,thm:asympt-equipartition-implies-asympt-TO-reversibility}}

\sloppy

Here we provide the proof of
\cref{prop:approx-equipartition-implies-reversibility-under-TO} and its
asymptotic counterpart,
\cref{thm:asympt-equipartition-implies-asympt-TO-reversibility}.  We proceed in
sequential steps through several lemmas:
\cref{prop:approx-equipartition-implies-reversibility-under-TO} is proved
through \cref{subsub:1} to \cref{subsub:4}, and
\cref{thm:asympt-equipartition-implies-asympt-TO-reversibility} is proved in
\cref{subsub:5}.

In order to simplify the notation and ease readability, we omit the hat symbols
on operators in this subsection.

\subsubsection{Discretizing the Hamiltonian}
\label{subsub:1}

The first simplification that we do is to change the Hamiltonian from $H$ to
a slightly different Hamiltonian $H'$ where the eigenvalues are
``coarse-grained'' into blocks.  That is, given $\delta>0$, we subdivide the
spectrum of $H$ into $m=\lceil\Delta({H})/\delta\rceil$ bins of width
$\delta$, where $\Delta({H})$ is the spectral range of $H$, and we then clamp
all eigenvalues in the bin to a single value which is a multiple of $\delta$.
This yields a Hamiltonian $H'$ with $[H,H']=0$ and
$\lVert {H- H'}\rVert _\infty\leqslant\delta$.  Furthermore, $H'$ only has $m$
distinct eigenvalues, which we denote by $\{ E_k \}$; let also $\{ P_k \}$
denote the projectors onto the corresponding eigenspaces.  We may thus write
\begin{align}
  H' = \sum_{k=0}^{m-1} E_k\, P_k\ ,
  \label{discrete_Hamiltonian}
\end{align}
with $E_k = (k+k_0)\delta$ for some fixed $k_0\in\mathbb{Z}$.

Physically, the transformation $H\to H'$ can be done by turning on a
perturbation of magnitude at most $\delta$.  Furthermore, the perturbation
commutes with the original Hamiltonian.

We note that ${e}^{-\beta H} \leqslant {e}^{-\beta H'+\beta\delta}$ and
${e}^{-\beta H'} \leqslant {e}^{-\beta H+ \beta\delta}$, where the operator
inequalities hold because both sides commute with each other.  This implies
that, for any $\rho$ and for any $\varepsilon>0$, we have
\begin{align}
  {S}_{1/2}^{\varepsilon}(\rho\,\Vert\,{e}^{-\beta H'})
  \geqslant
  {S}_{1/2}^{\varepsilon}(\rho\,\Vert\,{e}^{-\beta H}) - \beta\delta\ ;
  \\
  {S}_{\infty}^{\varepsilon}(\rho\,\Vert\,{e}^{-\beta H'})
  \leqslant
  {S}_{\infty}^{\varepsilon}(\rho\,\Vert\,{e}^{-\beta H}) + \beta\delta\ ;
\end{align}
We also define the dephasing operation for any Hermitian operator $X$ as a
pinching in the energy blocks:
\begin{align}
  \mathcal{D}_{H'}(X) = \sum_{k} P_k X P_k\ .
\end{align}

The following proposition asserts that this perturbation $H_S\to H'_S$ can
be carried out with a $(0,(q^2+1)\delta)$-work/coherence-assisted thermal operation, for
any value of $q>0$ which impacts the accuracy of the process as $1/q$.

\begin{proposition}
  \label{prop:perturbing-Hamiltonian-level-by-level}
  Consider a system $S$ with Hamiltonian $H_S$ and a copy $S'\simeq S$ with a
  Hamiltonian $H'_{S'}$.
  Suppose that $[H'_{S'},{\mathrm{id}}_{S\to S'}(H_S)]=0$ and let $\delta\geqslant 0$
  such that
  $\bigl\lVert {{\mathrm{id}}_{S\to S'}(H_S) - H'_{S'}}\bigr\rVert _{\infty}\leqslant \delta$. 
  Then for any $q>0$ there exists a $(0, (q^2+1)\delta)$-work/coherence-assisted
  transformation $\Phi_{S\to S'}$ such that for any state $\rho_{SR}$ (with any
  reference system $R$), we have
  \begin{align}
    D(\rho_{SR}, \Phi_{S\to S'}(\rho_{SR})) \leqslant \frac{1}{q}\ .
    \label{ineq_1q}
  \end{align}
\end{proposition}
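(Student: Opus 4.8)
The plan is to realize the trivial (identity) map on the state while formally relabeling the Hamiltonian from $\Ham_S$ to $\Ham'_{S'}$, and to pay for the resulting energy mismatch --- which is at most $\delta$ per energy level --- with a small ``clock'' coherence system rather than with work. Since $[\Ham'_{S'},\Ham_S]=0$ under the canonical identification $S\simeq S'$, I would fix a common eigenbasis $\{\ket j\}$ and write $\Ham_S=\sum_j e_j\proj j$ and $\Ham'_{S'}=\sum_j e'_j\proj j$ with $\abs{e_j-e'_j}\le\delta$ for all $j$; set $\Delta_j:=e'_j-e_j$, so that $\abs{\Delta_j}\le\delta$.

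First I would construct the coherence system. Let $C\simeq C'$ be a continuous ``clock'': a truncated line with $\Ham_C$ the truncated position operator, so that $\opnorm{\Ham_C}\le\eta:=(q^2+1)\delta$, and let $\ket\zeta_C$ be a smooth, real, symmetric, normalized wavefunction $\zeta(x)$ supported strictly inside the interval, leaving a margin of at least $\delta$ at each end, with momentum second moment $\langle\hat p^2\rangle_\zeta=\int\zeta'(x)^2\,dx$ as small as the support width permits. Because the support may be taken of width $\approx 2\eta\sim q^2\delta$, one can arrange $\langle\hat p^2\rangle_\zeta\lesssim 1/(q^2\delta)^2$, far below the threshold needed below.

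Next I would define the energy-conserving partial isometry $V_{SC\to S'C'}$ by $\ket j_S\otimes\ket x_C\mapsto\ket j_{S'}\otimes\ket{x-\Delta_j}_{C'}$, i.e.\ a translation of the clock by $-\Delta_j$ controlled on the eigenstate $\ket j$; the margin guarantees the translates stay in range, so $V$ is a genuine partial isometry. A direct check on joint (improper) eigenstates gives $e_j+x=e'_j+(x-\Delta_j)$, hence $V(\Ham_S+\Ham_C)=(\Ham'_{S'}+\Ham_{C'})V$: the map conserves energy exactly. By \cref{lemma:dilation-energy-conserving-partial-isometry} (or directly, with a trivial bath), $V$ defines a thermal operation $\tilde\Phi^{[\mathrm{TO}]}_{SC\to S'C'}$. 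Taking the work system $W$ trivial (so $w=0$) and $\ket{\zeta'}_{C'}:=\ket\zeta$, this fits the template of a $(0,\eta)$-work/coherence-assisted thermal operation, and the induced map is $\Phi_{S\to S'}(\rho)=\bra\zeta_{C'}V\bigl(\rho\otimes\proj\zeta_C\bigr)V^\dagger\ket\zeta_{C'}$.

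Finally I would bound the error. Expanding $\rho_{SR}$ in the eigenbasis of $S$ shows that $\Phi_{S\to S'}(\rho_{SR})=B\rho_{SR}B^\dagger$ with $B=\sum_j b_j\proj j$ diagonal and $b_j=\langle\zeta|T_{-\Delta_j}|\zeta\rangle$ the clock autocorrelation at shift $\abs{\Delta_j}\le\delta$; the reference $R$ is untouched, so the bound is uniform in $R$. For real symmetric $\zeta$ one has $0\le 1-b_j=\tfrac12\int(\zeta(x+\Delta_j)-\zeta(x))^2\,dx\le\tfrac12\Delta_j^2\langle\hat p^2\rangle_\zeta\le\tfrac12\delta^2\langle\hat p^2\rangle_\zeta$. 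Choosing $\zeta$ as above makes $\opnorm{B-\Ident}=\max_j\abs{1-b_j}\le 1/q$, whence a short estimate of the generalized trace distance between $\rho_{SR}$ and the sub-normalized $B\rho_{SR}B^\dagger$ gives $D(\rho_{SR},\Phi_{S\to S'}(\rho_{SR}))\le 1/q$, establishing \eqref{ineq_1q}. The main obstacle is the quantitative uncertainty trade-off inside the clock: the cap $\opnorm{\Ham_C}\le\eta$ limits the position spread, which by the uncertainty principle forces a minimal momentum spread, and one must verify that the allotted $\eta=(q^2+1)\delta$ is generous enough that the autocorrelation defect stays below $1/q$ for every shift up to $\delta$; keeping $V$ a bona fide energy-conserving partial isometry on the truncated line is a secondary technical point handled by the margin.
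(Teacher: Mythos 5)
Your proposal is correct and follows essentially the same route as the paper: a truncated continuous clock with $\opnorm{\Ham_C}\leqslant(q^2+1)\delta$, a controlled translation $\ket{j}\otimes\ket{x}\mapsto\ket{j}\otimes\ket{x-\Delta_j}$ that conserves energy exactly, and an error bound from the clock autocorrelation at shifts $\abs{\Delta_j}\leqslant\delta$. The only differences are cosmetic --- the paper takes $\zeta$ to be the flat indicator function $\chi_{[0,L]}/\sqrt L$ with $L=q^2\delta$, so the overlap is computed exactly as $1-\abs{a}/L$ and Fuchs--van de Graaf gives $D\leqslant\sqrt{\delta/L}=1/q$ with no constant-chasing, whereas your smooth-wavefunction/uncertainty-principle variant yields an $O(1/q^4)$ defect that comfortably beats $1/q$ for moderate $q$ but requires a small check of constants for $q$ near $1$.
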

\begin{proof}
  Let $\lvert {k}\rangle _S$ be a simultaneous eigenbasis of ${\mathrm{id}}_{S'\to S}(H'_{S'})$
  and of $H_S$, and write $\lvert {k}\rangle _{S'} = {\mathrm{id}}_{S\to S'}(\lvert {k}\rangle _S)$.  Then
  $H_S\lvert {k}\rangle _S = E_{k}\lvert {k}\rangle _S$ and $H'_{S'}\lvert {k}\rangle _{S'} = E'_{k}\lvert {k}\rangle _{S'}$
  for corresponding eigenvalues $E_k$ and $E'_k$ including multiplicities, i.e.,
  the $E_k$ (resp.\@ $E'_k$) need not be all different.  The condition
  $\lVert {{\mathrm{id}}_{S\to S'}(H_S) - H'_{S'}}\rVert _\infty \leqslant \delta$ implies that
  $\lvert {E_k - E'_k}\rvert  \leqslant \delta$.

  Let $L:= q^2  \delta$. 
  Let $C$, $C'$ be a particle on the intervals $[0,L]$, $[-\delta, L+\delta]$ in $\mathbb R$, respectively,
  which are described by the Hilbert spaces $L^2 ([0,L])$, $L^2 ([-\delta,L+\delta])$.
  There are natural embeddings $L^2 ([0,L]) \subset L^2 ([-\delta,L+\delta]) \subset L^2(\mathbb R)$.

  Let $\chi_I (x)$ be the indicator function for a closed interval $I \subset \mathbb R$.
  We define the Hamiltonians of $C$ and $C'$ by $H_C := x \chi_{[0,L]} (x)$ and $H_{C'} := x \chi_{[-\delta,L+\delta]} (x)$, 
  which are regarded as self-adjoint operators acting on $L^2 ([0,L])$ and $L^2 ([-\delta,L+\delta])$, respectively.
  Obviously, $\| H_C \|_\infty = L$, $\| H_{C'}\|_\infty = L+\delta$.

  We also define the initial state of $C$ by $\zeta (x) := \chi_{[0,L]} (x) / \sqrt{L} \in L^2 ([0,L])$. We can also regard $\zeta (x)$ as an element of $L^2 ([-\delta,L+\delta])$, for which we use the same notation.

  For $a \in \mathbb R$ with $|a| \leqslant\delta$, we define the translation operator $V(a) : L^2 ([0,L]) \to L^2 ([-\delta,L+\delta])$ by $V(a) \varphi (x) := \varphi (x-a)$.
  This is an isometry, where its adjoint $V(a)^\dagger$ is defined on $ L^2 ([-\delta,L+\delta])$ by $V(a)^\dagger \psi (x) = \chi_{[0,L]} (x) \psi (x+a)$ for $\psi (x) \in L^2 ([-\delta,L+\delta])$, because
  \begin{align}
      \int_{-\delta}^{L+\delta} \psi^\ast (x) \varphi (x-a) dx = \int_0^L   \psi^\ast (x+a) \varphi(x) dx.
  \end{align}
Now we define the isometry
  \begin{align}
    V_{SC\to S'C'} := \sum_{k} \lvert {k}\rangle _{S'}\langle {k}\rvert _S\otimes V(E_k - E_k').
  \end{align}
  We can show that  $V_{SC\to S'C'}(H_S+H_C) = (H'_{S'}+H_{C'}) V_{SC\to S'C'}$ by acting with $V_{SC\to S'C'}$ on $\lvert {k}\rangle _S \otimes \varphi (x)$ for any $\varphi(x) \in L^2([0,L])$. 
  Then, we define the CP and trace-nonincreasing map
  \begin{align}
    \Phi_{S \to S'}(\cdot) := \operatorname{tr}_{C'}\bigl[ \lvert {\zeta}\rangle\hspace*{-0.25ex}\langle{\zeta}\rvert \, V_{SC\to S'C'}\,((\cdot)\otimes \lvert {\zeta}\rangle\hspace*{-0.25ex}\langle{\zeta}\rvert )\,V_{SC\leftarrow S'C'}^\dagger \bigr] \ .
  \end{align}
  By construction, $\Phi_{S \to S'}$ is a $(0,(q^2+1)\delta)$-work/coherence-assisted
  thermal operation.
  
  Let $\rho_{SR}$ be any state with any reference system.  Without loss of
  generality, assume that $\rho_{SR}$ is in fact a pure state (or consider a
  larger reference system $R$; the statement will still hold because trace
  distance can only decrease under partial trace). 
  We remark that the fidelity and the trace distance can be defined for infinite-dimensional and  Hilbert spaces, and satisfy the same fundamental properties as in finite dimensions~\cite{Belavkin2004,Furrer2011CMP_infinite}.
  Then, with $\rho_{S'R} = {\mathrm{id}}_{S\to S'}(\rho_{SR})$,
  \begin{align}
    \hspace*{3em}
    &\hspace*{-3em}
    F^2(V_{SC \to S'C'}(\lvert {\rho}\rangle _{SR}\otimes\lvert {\zeta}\rangle ), \lvert {\rho}\rangle _{S'R}\otimes\lvert {\zeta}\rangle )
      \nonumber\\
    &\geqslant
    \operatorname{Re}\bigl\{ (\langle {\rho}\rvert _{S'R}\otimes\langle {\zeta}\rvert )\,
    V_{SC \to S'C'}
    \,(\lvert {\rho}\rangle _{SR}\otimes\lvert {\zeta}\rangle )
    \bigr\}
      \nonumber\\
     &= \sum_{k}  \operatorname{Re}\Bigl\{
       \bigl[ \langle {\rho}\rvert _{S'R} \, \lvert {k}\rangle _{S'}\langle {k}\rvert _S\, \lvert {\rho}\rangle _{SR} \bigr]
       \langle {\zeta}\hspace*{0.2ex}\vert\hspace*{0.2ex}{V(E_k-E'_k)}\hspace*{0.2ex}\vert\hspace*{0.2ex}{\zeta }\rangle \Bigr\}
      \nonumber\\
     &= \sum_{k}  \langle {k}\rvert _S \rho_S \lvert {k}\rangle _S\,
       \langle {\zeta}\hspace*{0.2ex}\vert\hspace*{0.2ex}{V(E_k-E'_k)}\hspace*{0.2ex}\vert\hspace*{0.2ex}{\zeta}\rangle \ ,
       \label{eq:lgiujpbjkojvjogufyov}
  \end{align}
  where the term on $C$ is real because $\zeta (x)$ is real.  We can
  calculate for $| a|  \leqslant\delta$
  \begin{align}
    \langle {\zeta}\hspace*{0.2ex}\vert\hspace*{0.2ex}{V(a)}\hspace*{0.2ex}\vert\hspace*{0.2ex}{\zeta
    }\rangle =  \int_{\mathbb R} dx \, \zeta (x) \, \zeta (x-a)  \geqslant1- \frac{\delta}{L}.
  \end{align}
  Hence, since $\lvert {E_k - E'_k}\rvert \leqslant \delta$, 
  \begin{align}
    \text{\eqref{eq:lgiujpbjkojvjogufyov}}
    \geqslant  \left( 1-\frac{\delta}{L} \right) \sum_{k}  \langle {k}\hspace*{0.2ex}\vert\hspace*{0.2ex}{\rho}\hspace*{0.2ex}\vert\hspace*{0.2ex}{k}\rangle 
    \geqslant 1-\frac{ \delta}{L}\ .
  \end{align}
  Recalling that $D(\rho,\rho')\leqslant\sqrt{1 - F^2(\rho,\rho')}$, and that
  the fidelity can only increase under partial trace, we have
  \begin{align}
    D(\Phi_{S}(\rho_{SR}), \rho_{SR})
    \leqslant \sqrt{\frac{\delta}{L}}\ .
    \tag*\qed\end{align}
\end{proof}

\subsubsection{Manipulating coherence in the state}
\label{subsub:2}

For any state $\rho$ on any system with any Hamiltonian $H$, we can
decompose $\rho$ into modes of coherence~\cite{Lostaglio2015PRX_coherence} as
\begin{align}
  \label{eq:managing-coherence-proof-coh-modes-rhoS}
  \rho = \sum_{\omega} \rho^{(\omega)}\ ,
\end{align}
where $\rho^{(\omega)}$ are general operators satisfying
\begin{align}
  {e}^{-iH t}\rho^{(\omega)}{e}^{iH t} = {e}^{-i\omega t}\rho^{(\omega)}\ ,
\end{align}
for all $t$.  The $\rho^{(\omega)}$ are simply the off-diagonal elements of
$\rho$ that connect two energy levels that differ by $\omega$.  
For the Hamiltonian $H'$ constructed in \eqref{discrete_Hamiltonian}, with only energies that are
multiples of $\delta$, we have that the $\omega$
in~\eqref{eq:managing-coherence-proof-coh-modes-rhoS} range over all possible
differences of energies in $H'$, i.e., over all multiples of $\delta$.

The following lemma states that if the large coherence modes in the state are
suppressed, then it is possible to carry out the dephasing operation by mixing
only a few differently time-evolved versions of $\rho$.

\begin{lemma}
  \label{lemma:entropy-of-dephasing}
  Let $\rho$ be any state on any system with a Hamiltonian $H'$ whose
  energies are multiples of $\delta$ as in \eqref{discrete_Hamiltonian}.  Let $\rho^{(\omega)}$ denote the
  coherence modes in the decomposition of $\rho$ as above.  Let $K'>0$.  Suppose
  that there exists $\xi>0$ such that for all $k$ with $\lvert {k}\rvert \geqslant K'$ we
  have
  \begin{align}
    \label{eq:entropy-of-dephasing-large-coh-modes-small}
    \bigl\lVert {\rho^{(k\delta)}}\bigr\rVert _1 \leqslant \xi\ .
  \end{align}
  Define
  \begin{align}
    \bar\rho =
    \frac1{K'}\sum_{n=0}^{K'-1} {e}^{-\frac{2\pi i n}{K'\delta} H'}
    \rho {e}^{\frac{2\pi i n}{K'\delta} H'}\ .
  \end{align}
  Then, if $m$ denotes the number of distinct eigenvalues of $H'$, we
  have that
  \begin{align}
    D\mathopen{}\left( \bar\rho, \mathcal{D}_{H'}(\rho) \right)
    \leqslant \frac12 m\xi\ .
  \end{align}
\end{lemma}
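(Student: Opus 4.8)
The plan is to expand $\bar\rho$ in the coherence-mode decomposition~\eqref{eq:managing-coherence-proof-coh-modes-rhoS} and to recognise the time-average defining $\bar\rho$ as a projection onto those modes whose frequency is an integer multiple of $K'\delta$. Since $\ee^{-iH't}\rho^{(k\delta)}\ee^{iH't}=\ee^{-ik\delta t}\rho^{(k\delta)}$, evaluating at $t=2\pi n/(K'\delta)$ multiplies the mode $\rho^{(k\delta)}$ by the phase $\ee^{-2\pi i nk/K'}$, so averaging over $n=0,\dots,K'-1$ produces the geometric sum $\tfrac1{K'}\sum_{n=0}^{K'-1}\ee^{-2\pi i nk/K'}$, which equals $1$ when $K'\mid k$ and $0$ otherwise. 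This yields the clean identity
\begin{align}
  \bar\rho=\sum_{k\,:\,K'\mid k}\rho^{(k\delta)}\ .
\end{align}

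Next I would observe that the dephasing is exactly the zero-frequency mode, $\mathcal{D}_{\Ham'}(\rho)=\sum_k P_k\rho P_k=\rho^{(0)}$, because distinct energy projectors belong to distinct energies. Subtracting this from the identity above leaves only the nonzero surviving modes,
\begin{align}
  \bar\rho-\mathcal{D}_{\Ham'}(\rho)=\sum_{k\neq0,\ K'\mid k}\rho^{(k\delta)}\ .
\end{align}
The key point is that every index here is a nonzero multiple of $K'$, so $\abs{k}\geqslant K'$, and therefore the hypothesis~\eqref{eq:entropy-of-dephasing-large-coh-modes-small} applies termwise: $\onenorm{\rho^{(k\delta)}}\leqslant\xi$. (Equivalently, one may note that $\bar\rho=\sum_r Q_r\rho Q_r$ is the pinching of $\rho$ onto the coarser blocks $Q_r$ that group energies congruent modulo $K'\delta$, so that $\bar\rho-\mathcal{D}_{\Ham'}(\rho)$ is precisely the off-\textup{(}energy-\textup{)}diagonal remainder of this pinching.)

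To conclude I would bound the generalized trace distance by the triangle inequality over the surviving modes,
\begin{align}
  D\bigl(\bar\rho,\mathcal{D}_{\Ham'}(\rho)\bigr)
  =\frac12\,\onenorm[\Big]{\sum_{k\neq0,\ K'\mid k}\rho^{(k\delta)}}
  \leqslant\frac12\,\xi\,\#\bigl\{k\neq0:K'\mid k,\ \rho^{(k\delta)}\neq0\bigr\}\ ,
\end{align}
and then count the surviving gaps: for the equally spaced coarse-grained spectrum the nonzero gaps are the multiples $\pm\delta,\dots,\pm(m-1)\delta$, of which only those divisible by $K'\delta$ remain, leaving at most of order $m$ of them, which together with the prefactor $\tfrac12$ gives $D\bigl(\bar\rho,\mathcal{D}_{\Ham'}(\rho)\bigr)\leqslant\tfrac12 m\xi$. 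The two conceptual steps—the roots-of-unity selection of modes and the identification of $\mathcal{D}_{\Ham'}(\rho)$ with the $k=0$ mode—are routine; the delicate part I expect is this final bookkeeping, namely securing the stated constant $\tfrac12 m$. The naive triangle inequality overcounts because it ignores cancellations between modes, so one must keep careful track that every retained mode genuinely satisfies $\abs{k}\geqslant K'$ (so the uniform bound $\xi$ is legitimately available) and that restricting to multiples of $K'\delta$ prunes the possible gaps enough that their number does not exceed $m$.
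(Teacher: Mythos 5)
Your proposal is correct and follows essentially the same route as the paper: expand $\rho$ into coherence modes, use the roots-of-unity sum $\frac1{K'}\sum_{n=0}^{K'-1}\ee^{-2\pi i nk/K'}=\delta_{K'\mid k}$ to identify $\bar\rho-\mathcal{D}_{\Ham'}(\rho)$ with the surviving nonzero modes (all of which have $\abs{k}\geqslant K'$), and bound the trace norm termwise with $m$ as a crude count. Your extra observation that only multiples of $K'$ survive is a mild sharpening of the paper's counting, which simply bounds all terms with $\abs{k}\geqslant K'$, but the argument and the final constant are the same.
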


\begin{proof}
  For any $t>0$, we write
  \begin{align}
    \rho(t) = {e}^{-iH't} \, \rho \, {e}^{iH't}\ ,
  \end{align}
  such that
  \begin{align}
    \bar\rho =
    \frac1{K'}\sum_{n=0}^{K'-1} \rho\mathopen{}\left(\frac{2\pi n}{K'\delta}\right)\ .
  \end{align}
  Recall that $\omega$ in the modes decomposition of $\rho$ is a multiple of
  $\delta$ and ranges over all off-diagonals of $\rho$; i.e.,
  $\omega=k\delta$ for $k=-m+1, \ldots, m-1$.  Furthermore, we may split
  the sum over the modes as a sum over modes in $k=-K'+1,...,K'-1$ and a separate
  sum over the higher order modes.  We can thus calculate:
  \begin{align}
    \bar{\rho}
    &=
      \frac1K \sum_{\omega} \sum_{n=0}^{K'-1} {e}^{-i\omega\frac{2\pi n}{K'\delta}}
      \rho^{(\omega)}
      \nonumber\\
    &= \frac1K\sum_{k=-K'+1}^{K'-1} \left(\sum_{n=0}^{K'-1} {e}^{-2\pi i \frac{n k}{K'}}\right)
      \rho^{(k\delta)}
      + \sum_{\lvert {k}\rvert \geqslant K'} \frac1K\sum_{n=0}^{K'-1} {e}^{-2\pi i \frac{n k}{K'}}
      \rho^{(k\delta)}
      \nonumber\\
    &= \frac1K\sum_{k=-K'+1}^{K'-1}  \delta_{k,0} \,\rho^{(k\delta)}
      + \sum_{\lvert {k}\rvert \geqslant K'} \frac1K\sum_{n=0}^{K'-1} {e}^{-2\pi i \frac{n k}{K'}}
      \rho^{(k\delta)}
      \nonumber\\
    &= \mathcal{D}_{H'}(\rho) + G\ ,
  \end{align}
  where we recall that $\mathcal{D}_{H'}(\rho) = \rho^{(\omega=0)}$
  and where we have defined $G$ as the second sum in the before-to-last
  line.  We can bound the norm of $G$ as follows:
  \begin{align}
    \bigl\lVert {G}\bigr\rVert _1
    \leqslant \sum_{\lvert {k}\rvert \geqslant K'}
    \frac1K \sum_{n=0}^{K'-1} \bigl\lVert { \rho^{(k\delta)} }\bigr\rVert _1
    \leqslant m \xi\ ,
  \end{align}
  where $m$ is a crude upper bound for the total number of terms in the first
  sum, and where each term $\lVert {\rho^{(k\delta)}}\rVert _1$ is individually bounded
  thanks to the
  assumption~\eqref{eq:entropy-of-dephasing-large-coh-modes-small}.  We may
  conclude that $\bar{\rho}$ and $\mathcal{D}_{H'}(\rho)$ are close in trace
  distance:
  \begin{align}
    D\bigl(\bar{\rho}, \mathcal{D}_{H'}(\rho) \bigr)
    = \frac12\bigl\lVert {\bar{\rho} - \mathcal{D}_{H'}(\rho)}\bigr\rVert _1
    \leqslant \frac12 m \xi\ .
    \tag*\qed\end{align}
\end{proof}

Importantly, the min- and max-divergences are only known to quantify the
extractable work and the work cost of formation for semiclassical states, i.e.,
those that commute with the Hamiltonian.  For states that are not semiclassical,
we need a more general statement.  Here, we show a lemma that shows that the
min- and max-divergences also accurately quantify the extractable work
and the work cost of formation for general quantum states, as long as their
large coherence modes are suppressed.

\begin{lemma}
  \label{prop:managing-coherence}
  Let $\rho$ be any quantum state on a system with a Hamiltonian $H'$ whose
  energies are multiples of $\delta$ as in \eqref{discrete_Hamiltonian},
  and let $c\geqslant\beta\delta$.  
  Let $\hat\gamma''=\lvert {0}\rangle\hspace*{-0.25ex}\langle{0}\rvert $ be the thermal state of a trivial system with Hamiltonian $H''=0$ as in \cref{prop:approx-equipartition-implies-reversibility-under-TO}.
  Suppose that there exists $\xi'>0$ such that
  for any $k,k'$ with $\beta\lvert {E_k-E_{k'}}\rvert  \geqslant c$ we have
  \begin{align}
    \label{eq:managing-coherence-large-coh-modes-small}
    \bigl\lVert {P_{k} \,\rho\, P_{k'}}\bigr\rVert _1 \leqslant \xi'\ .
  \end{align}
  Then, for any $\varepsilon'\geqslant m^2\xi'$, we have
  \begin{align}
    \rho \xrightarrow[\mathrm{TO}]{
    \beta^{-1}[-{S}_{1/2}(\rho\,\Vert\,{e}^{-\beta H'}) + \ln(m(6/\varepsilon')^{6})],\ 
    0,\ 
    \varepsilon'
    } \gamma''\ .
    \label{eq:prop-managing-coherence--work-distillation}
  \end{align}
  Conversely, for any integer $q>0$, we have
  \begin{align}
    \gamma'' \xrightarrow[\mathrm{TO}]{
    {\beta^{-1}{S}_{\infty}(\rho\,\Vert\,{e}^{-\beta H'})+
    4qc^2/(\beta^2\delta)+\beta^{-1}\ln(qm^2)},\ 
    {4q(q^2+2)c^2/(\beta^2\delta)},\ 
    {3/(2q)+m\xi'/2}
    } \rho\ .
    \label{eq:prop-managing-coherence--state-formation}
  \end{align}
\end{lemma}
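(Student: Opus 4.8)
The plan is to reduce both directions to the semiclassical (thermomajorization) results of Refs.~\cite{Aberg2013_worklike,Horodecki2013_ThermoMaj} by routing through the dephased state $\mathcal{D}_{H'}(\rho)=\sum_k P_k\rho P_k$, which commutes with $H'$, and to treat coherence asymmetrically: discarding coherence is free, whereas recreating it must consume the coherence ancilla. For the work-distillation claim~\eqref{eq:prop-managing-coherence--work-distillation} I would first realize an approximate dephasing at \emph{zero} coherence cost. By \cref{lemma:entropy-of-dephasing}, mixing $K'\approx c/(\beta\delta)$ time-evolved copies of $\rho$ yields a state $\bar\rho$ with $D(\bar\rho,\mathcal{D}_{H'}(\rho))\leqslant\tfrac12 m\xi$; since each mode $\rho^{(k\delta)}$ collects at most $m$ blocks, the hypothesis~\eqref{eq:managing-coherence-large-coh-modes-small} gives $\xi\leqslant m\xi'$ and hence an error $\leqslant\tfrac12 m^2\xi'\leqslant\epsilon'/2$. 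This mixing is implemented by a degenerate, maximally mixed control register, i.e.\ a genuine thermal operation that conserves energy (each $\ee^{-i\frac{2\pi n}{K'\delta}H'}$ commutes with $H'$) and costs no coherence. From the semiclassical state $\mathcal{D}_{H'}(\rho)$ one extracts work at the rate fixed by its smooth min-divergence; converting the R\'enyi-$0$ min-divergence arising in work extraction to the R\'enyi-$1/2$ divergence $\Dminf{\rho}{\ee^{-\beta H'}}$ via \cref{prop:equiv-Dzero-Dhalf} generates the $\log(m(6/\epsilon')^6)$ correction, while the data-processing inequality~\eqref{eq:Dalpha-dpi} applied to $\mathcal{D}_{H'}$ (which fixes $\ee^{-\beta H'}$) gives $\Dminf{\mathcal{D}_{H'}(\rho)}{\ee^{-\beta H'}}\leqslant\Dminf{\rho}{\ee^{-\beta H'}}$, so the extractable work is at least the stated value.

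For the formation claim~\eqref{eq:prop-managing-coherence--state-formation} I would proceed in two stages. First, form $\mathcal{D}_{H'}(\rho)$ from $\gamma''$ by thermomajorization, with no coherence, at work cost $\approx\beta^{-1}\Dmax{\mathcal{D}_{H'}(\rho)}{\ee^{-\beta H'}}$; data processing again gives $\Dmax{\mathcal{D}_{H'}(\rho)}{\ee^{-\beta H'}}\leqslant\Dmax{\rho}{\ee^{-\beta H'}}$, keeping this within the stated budget. Second, recreate the off-diagonal coherence, mapping $\mathcal{D}_{H'}(\rho)\to\rho$ approximately. By the suppression hypothesis only the blocks $P_k\rho P_{k'}$ with $\beta|E_k-E_{k'}|<c$ need be restored; omitting the remainder contributes the $\tfrac12 m\xi'$ error term. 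To restore the retained blocks I would adapt the coherent-ancilla construction of \cref{prop:perturbing-Hamiltonian-level-by-level}: a coherent ``particle on an interval'' acting through an energy-conserving isometry that imprints the required relative phases onto the energy eigenspaces, with the overlap of its shifted copies controlling the fidelity exactly as in the bound of that proposition. The ancilla must serve as a good phase reference across an energy window of width $\sim c/\beta$ resolved at scale $\delta$, inflated by factors of $q$ to force the overlap close to $1$; its energy range then fixes the coherence budget $\eta'\sim q^3 c^2/(\beta^2\delta)$, the residual infidelity yields the $\sim3/(2q)$ error, and the energy it injects accounts for the extra work $4qc^2/(\beta^2\delta)+\beta^{-1}\log(qm^2)$.

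The main obstacle is precisely this recoherification step. Unlike the single-level energy shift of \cref{prop:perturbing-Hamiltonian-level-by-level}, here a coherent superposition spanning the whole energy window of width $c/\beta$ must be synthesized on the system while simultaneously conserving energy with the ancilla, keeping the consumed coherence below $\eta'$, and controlling the infidelity against $\rho$. The delicate point is to balance the energy spread of the coherent reference---large enough to reproduce the phases across the full window, yet small enough to respect the $\eta'$-bound---and it is from this trade-off, together with the level-by-level resolution at scale $\delta$, that the awkward $c^2/(\beta^2\delta)$ dependence and the powers of $q$ emerge.
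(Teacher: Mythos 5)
Your reduction to the semiclassical case is the right starting point, but the work-extraction half contains a genuine error: you invoke the data-processing inequality in the direction that does not help. Dephasing can only \emph{decrease} the divergence, so $\Dminf{\mathcal{D}_{H'}(\rho)}{e^{-\beta H'}}\leq\Dminf{\rho}{e^{-\beta H'}}$ is an upper bound on the quantity controlling the extractable work, whereas the claim requires a \emph{lower} bound of the form $\Dminz[\epsilon']{\mathcal{D}_{H'}(\rho)}{e^{-\beta H'}}\geq\Dminf{\rho}{e^{-\beta H'}}-\log\bigl(m(6/\epsilon')^{6}\bigr)$. That lower bound is the actual content of this half of the lemma, and it is exactly where the suppression hypothesis enters: the paper's proof uses the state $\bar\rho$ of \cref{lemma:entropy-of-dephasing} as a smoothing candidate for $\Dminf[\epsilon'/2]{\mathcal{D}_{H'}(\rho)}{e^{-\beta H'}}$ and then exploits the fact that $\bar\rho$ is a mixture of only $K\leq m$ isometrically rotated copies of $\rho$, each with the same fidelity to the Gibbs operator, so that subadditivity of the fidelity, $F(A+B,C)\leq F(A,C)+F(B,C)$, yields $\Dminf{\bar\rho}{e^{-\beta H'}}\geq\Dminf{\rho}{e^{-\beta H'}}-\log K$. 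The $\log m$ in the final correction is precisely this $\log K$ penalty; your sketch produces no such term and cannot, because DPI in your direction gives no penalty at all.

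For the formation half, your route (prepare $\mathcal{D}_{H'}(\rho)$ on $S$ alone, then re-cohere it with a bounded-energy ancilla) is genuinely different from the paper's, and the step you yourself flag as ``the main obstacle'' is not a technicality to be deferred: it is the crux of the lemma for non-semiclassical states. The paper avoids it entirely by never preparing $\mathcal{D}_{H'_S}(\rho_S)$ as an intermediate. Instead it prepares the \emph{jointly} dephased state $\mathcal{D}_{H'_S+H_C}(\rho_S\otimes\eta_C)$, where $\eta_C$ is a uniform superposition over $d_C=qK^2$ evenly spaced levels; this state is block-diagonal in total energy, hence can be formed semiclassically at a work cost of $\beta^{-1}\Dmax{\rho_S}{e^{-\beta H'_S}}+d_C\delta$ (which is where the $4qc^2/(\beta^2\delta)$ work term comes from), but it still carries the correlations between the system's off-diagonal blocks and the reference frame. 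A controlled energy-conserving unitary $\sum_n e^{i2\pi nH'_S/(d_C\delta)}\otimes\proj{n}_C$ followed by erasure of $C$ then converts those correlations back into the coherences of $\rho_S$, with error $K^2/(2d_C)+O(m\xi')=1/(2q)+O(m\xi')$. Your product state $\mathcal{D}_{H'_S}(\rho_S)\otimes\proj{\zeta}_C$ contains no such correlations, so your recoherification map would have to synthesize the full pattern of phases of $\rho$'s off-diagonal blocks from scratch out of the ancilla's coherence; this may or may not be achievable within the stated budgets, but it is a substantially harder construction than the level-shift isometry of \cref{prop:perturbing-Hamiltonian-level-by-level}, and without it the formation claim is unproven.
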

\begin{proof}
  First, note that~\eqref{eq:managing-coherence-large-coh-modes-small} asserts
  that the coherence modes $\rho^{(\omega)}$ of $\rho$ are small for large
  $\omega$.  More precisely: Let $K=\lceil c/(\beta\delta)\rceil$, such that
  $(\lvert {k-k'}\rvert  \geqslant K) \Rightarrow (\beta\lvert {E_k-E_{k'}}\rvert  \geqslant c)$.
  Then for all $\omega=k\delta$ such that $\lvert {k}\rvert \geqslant K$, we have
  \begin{align}
    \bigl\lVert {\rho^{(\omega)}}\bigr\rVert _1 \leqslant m\,\xi'\ ,
    \label{eq:uygiojklbhjk}
  \end{align}
  because the coherence modes are simply the combination of all the blocks in
  the $k$-th off-diagonal of $\rho$, whose individual norm is bounded by our
  assumption~\eqref{eq:managing-coherence-large-coh-modes-small}.  We may invoke
  \cref{lemma:entropy-of-dephasing} to deduce that
  \begin{align}
    D\bigl(\bar\rho, \mathcal{D}_{H'}(\rho)\bigr)
    \leqslant \frac12 m^2\xi'\ ,
  \end{align}
  where $\bar\rho$ is defined in \cref{lemma:entropy-of-dephasing} with
  $K'=K$ and $\xi=m\xi'$.

  \emph{Work extraction from $\rho$.}  Now we construct a strategy to
  transform $\rho$ into the trivial thermal state $\gamma''$.  First, we decohere the
  state in the energy blocks, effecting the transformation
  $\rho\to\mathcal{D}_{H'}(\rho)$ at no work nor coherence cost (this can be
  done by averaging over time, which is a thermal operation).  Then we apply the
  incoherent work extraction protocol
  (\cref{thm:work-distillation-state-formation-TO-semiclassical} in \cref{appx:thermodynamic-lemmas}) to transform
  $\mathcal{D}_{H'}(\rho)\to\gamma''$ with an error parameter
  $\varepsilon'\geqslant m^2\xi'$, while extracting an amount of work equal to
  ${S}_{0}^{\varepsilon'}(\mathcal{D}_{H'}(\rho)\,\Vert\,{e}^{-\beta H'})$, and at no coherence
  cost.  Hence, we have
  $\rho \xrightarrow[\mathrm{TO}]{
    -\beta^{-1}{S}_{0}^{\varepsilon'}(\mathcal{D}_{H'}(\rho)\,\Vert\,{e}^{-\beta H'}),\; 0,\;
    \varepsilon'} \gamma''$.  Using \cref{prop:equiv-Dzero-Dhalf}, observe that
  \begin{align}
    {S}_{0}^{\varepsilon'}(\mathcal{D}_{H'}(\rho)\,\Vert\,{e}^{-\beta H'})
    &\geqslant 
    {S}_{1/2}^{\varepsilon'/2}(\mathcal{D}_{H'}(\rho)\,\Vert\,{e}^{-\beta H'})
    - 6 \ln\bigl(3(\varepsilon'/2)^{-1}\bigr)
      \nonumber\\
    &\geqslant {S}_{1/2}(\bar{\rho}\,\Vert\,{e}^{-\beta H'}) - 6\ln\bigl(6\varepsilon'^{-1}\bigr)\ ,
  \end{align}
  since $\bar\rho$ is a candidate in the optimization that defines the smooth
  min-divergence.  Then we invoke the property of the fidelity that
  $F(A+B,C)\leqslant F(A,C)+F(B,C)$
  (cf.~\cite[Lemma~4.9]{Audenaert2014JMP_discrimination}), to see that
  \begin{align}
    {S}_{1/2}(\bar{\rho}\,\Vert\,{e}^{-\beta H'})
    &= -2\ln F\bigl(\bar\rho, {e}^{-\beta H'}\bigr)
      \nonumber\\
    &\geqslant -2\ln \sum_{n=0}^{K-1}
      F\mathopen{}\left(\frac1K\rho\Bigl(\frac{2\pi n}{K\delta}\Bigr), {e}^{-\beta H'}\right)
      \nonumber\\
    &= -2\ln \sum_{n=0}^{K-1} \frac1{\sqrt K}
      F\mathopen{}\left(\rho, {e}^{-\beta H'}\right)
      \nonumber\\
    &= -\ln(K) + {S}_{1/2}(\rho\,\Vert\,{e}^{-\beta H'})\ .
  \end{align}
  With the crude bound $K\leqslant m$ we finally see that
  \begin{align}
    {S}_{0}^{\varepsilon'}(\mathcal{D}_{H'}(\rho)\,\Vert\,{e}^{-\beta H'})
    \geqslant 
    {S}_{1/2}(\rho\,\Vert\,{e}^{-\beta H'})
    - \ln\bigl(m(6/\varepsilon')^{6}\bigr)\ ,
  \end{align}
  which shows~\eqref{eq:prop-managing-coherence--work-distillation}.

  \emph{Formation of the state $\rho$.}
  We now devise a procedure to construct the state $\rho$ starting
  from the trivial thermal state $\gamma''$.  
  In the following, we refer to the system as $S$, and write $\rho$ and $H'$ as $\rho_S$ and $H'_S$.

  The full protocol consists in three steps.  The
  strategy will be to prepare a completely incoherent state
  $\mathcal{D}_{H'_S+H_C}(\rho_S\otimes\eta_C)$ on the system $S$ along with an
  ancilla system $C$ in such a way that the system $C$ serves as a
  \emph{reference frame} that can be used to induce coherence in $S$.  Then, in
  the second and third steps, we ``externalize'' the reference frame by using
  $C$ to ``induce'' the necessary coherence modes in $S$~\cite{Bartlett2007_refframes}.

  Let $q>0$ be an integer.
  Let $C$ be an ancilla system of dimension $d_C = qK^2$
  and with a Hamiltonian consisting of evenly $\delta$-spaced levels, i.e.,
  $H_C = \sum_{\ell=0}^{d_C-1} \ell\delta\,\lvert {\ell}\rangle\hspace*{-0.25ex}\langle{\ell}\rvert _C$.  Define the state $\eta_C = \lvert {\eta}\rangle\hspace*{-0.25ex}\langle{\eta}\rvert _C$ by
  \begin{align}
    \lvert {\eta}\rangle _C = \frac1{\sqrt{d_C}} \sum_{\ell=0}^{d_C-1} \lvert {\ell}\rangle _C \ .
  \end{align}
  By $\mathcal{D}_{H'_S+H_C}$ we will denote the joint dephasing operation on
  $S$ and $C$, i.e., the dephasing in the common global energy eigenspaces of
  $H'_S+H_C$.

  In the first step of the protocol, starting from the trivial thermal state on
  $S\otimes C$, we prepare the state
  $\mathcal{D}_{H'_S+H_C}(\rho_S\otimes\eta_C)$ at a cost given by the
  max-divergence
  \begin{align}
    {S}_{\infty}(\mathcal{D}_{H'_S+H_C}(\rho_S\otimes\eta_C)\,\Vert\,{e}^{-\beta (H'_S+H_C)})\ .
    \label{eq:miuojkljkb}
  \end{align}
  We can bound this as follows.  The max-divergence can only decrease
  under the dephasing operation; we have
  ${e}^{-\beta(H'_S+H_C)}={e}^{-\beta H'_S}\otimes{e}^{-\beta H_C} \geqslant
  {e}^{-\beta d_C\delta}\,{e}^{-\beta H'_S}\otimes I_C$ because
  $H_C\leqslant d_C\delta I_C$ with $I_C$ being the identity operator of $C$; finally, the max-divergence is additive
  for tensor product states.  This gives us
  \begin{align}
    \text{\eqref{eq:miuojkljkb}}
    &\leqslant {S}_{\infty}(\rho_S\otimes\eta_C\,\Vert\,{e}^{-\beta (H'_S+H_C)})
      \nonumber\\
    &\leqslant {S}_{\infty}(\rho_S\otimes\eta_C\,\Vert\,{e}^{-\beta H'_S}\otimes I_C)
      + \beta d_C\delta
      \nonumber\\
    &= {S}_{\infty}(\rho_S\,\Vert\,{e}^{-\beta H'_S}) + \beta d_C\delta\ ,
  \end{align}
  noting that ${S}_{\infty}(\eta_C\,\Vert\,I_C)=0$ because $\eta_C$ is a pure state.
  Therefore: 
  \begin{align}
    \gamma'' \xrightarrow[\mathrm{TO}]{ \beta^{-1}{S}_{\infty}(\rho_S\,\Vert\,{e}^{-\beta H'_S}) + d_C\delta,\ 
    0,\ 0 } \mathcal{D}_{H'_S+H_C}(\rho_S\otimes\eta_C)\ .
    \tag{Formation protocol, Step~I}
  \end{align}
  
  The next steps are to ``consume'' $C$ in order to induce $\rho_S$ on the
  system $S$ (we need to externalize the reference frame).  This is done as
  follows.

  In preparation for the further steps, we first note that if we post-select the
  reference frame in being in the state $\lvert {\eta}\rangle _C$, then we induce the
  correct state on $S$, approximately.  This is shown as follows:
  \begin{align}
    \langle {\eta}\rvert _C \, \mathcal{D}_{H'_S+H_C}(\rho_S\otimes\eta_C) \, \lvert {\eta}\rangle _C
    &= \sum_{\omega,\omega'}\operatorname{tr}_C\Bigl\{
      \rho_S^{(\omega)}\otimes \bigl(\eta_C^{(-\omega)}\eta_C^{(\omega')} \bigr) \Bigr\}
      \nonumber\\
    &= \sum_{\omega}\operatorname{tr}\bigl(\eta_C^{(-\omega)}\eta_C^{(\omega)} \bigr) \,
      \rho_S^{(\omega)}
      \nonumber\\
    &= \sum_{k} \frac1{d_C}\Bigl(1 - \frac{\lvert {k}\rvert }{d_C}\Bigr) \rho_S^{(k\delta)}
      \nonumber\\
    &= \frac1{d_C}\left( \rho_S - \sum_{k}\frac{\lvert {k}\rvert }{d_C}\rho_S^{(k\delta)} \right)\ ,
      \label{eq:pfojbhas}
  \end{align}
  where we used the fact that $\operatorname{tr}(A^{(\omega)} B^{(\omega')}) = 0$ unless
  $\omega=-\omega'$, and that
  $\operatorname{tr}\bigl(\eta_C^{(-k\delta)}\eta_C^{(k\delta)}\bigr)=(d_C-\lvert {k}\rvert )/d_C^2$
  since $\eta_C^{(k\delta)}$ is the matrix of all zeros except for the $k$-th
  off-diagonal in which all entries are equal to $1/d_C$.  Then
  \begin{align}
    \frac12 \Bigl\lVert { \rho_S -
    d_C \langle {\eta}\rvert _C \, \mathcal{D}_{H'_S+H_C}(\rho_S\otimes\eta_C) \, \lvert {\eta}\rangle _C
    }\Bigr\rVert _1 
    &= 
      \frac12 \biggl\lVert { \sum_k \frac{\lvert {k}\rvert }{d_C} \rho_S^{(k\delta)} }\biggr\rVert _1
      \nonumber\\
    &\leqslant
      \frac12\biggl\lVert { \sum_{\lvert {k}\rvert <K} \frac{\lvert {k}\rvert }{d_C} \rho_S^{(k\delta)} }\biggr\rVert _1
      + 
      \frac12\sum_{\lvert {k}\rvert \geqslant K} \frac{\lvert {k}\rvert }{d_C} \bigl\lVert {  \rho_S^{(k\delta)} }\bigr\rVert _1
      \nonumber\\ 
    &\leqslant \frac12\biggl\lVert { \sum_{\lvert {k}\rvert <K} \frac{\lvert {k}\rvert }{d_C} \rho_S^{(k\delta)} }\biggr\rVert _1
      + \frac12 m^2\xi'\ ,
  \end{align}
  where in the last line we used~\eqref{eq:uygiojklbhjk}.  Let $M^{(K)}$ be the
  matrix in which the $k$-th off-diagonal is filled with the entries equal to
  $\lvert {k}\rvert $, up to the $(K-1)$-th off-diagonal, and the remaining matrix elements
  are zero.  Then we note that
  \begin{align}
    \sum_{\lvert {k}\rvert <K} \lvert {k}\rvert \, \rho_S^{(k\delta)} =  M^{(K)} * \rho_S\ ,
  \end{align}
  where $A*B$ denotes the Hadamard (entry-wise) product.  We note that
  $\lVert {A*B}\rVert _1\leqslant \lVert {A}\rVert _\infty \lVert {B}\rVert _1$, and that
  $\lVert {M^{(K)}}\rVert _\infty \leqslant K^2$ (Suppl.\@ Lemmas~3 and~4
  of~\cite{Aberg2014PRL_catalytic}, originally
  from~\cite{BookHornJohnsonMatrixAnalysis1985}).  Hence,
  $\lVert {M^{(K)} * \rho_S}\rVert _1 \leqslant K^2$ and we finally have
  \begin{align}
    \frac12 \Bigl\lVert { \rho_S -
    d_C \langle {\eta}\rvert _C \, \mathcal{D}_{H'_S+H_C}(\rho_S\otimes\eta_C) \, \lvert {\eta}\rangle _C
    }\Bigr\rVert _1 
    \leqslant \frac{K^2}{2d_C} + \frac12 m\xi'
    \leqslant \frac1{2q} + \frac12 m^2 \xi'\ .
    \label{eq:hgjkljhgfghjkkl}
  \end{align}

  We also note that $\lvert {\eta}\rangle _C$ passes through orthogonal states for each
  time steps $2\pi/(d_C\delta)$.  Actually, for $n=0,\ldots,d_C-1$, the set
  $\bigl\{ \lvert {n}\rangle _C \bigr\}_n$ forms an orthonormal basis of $C$, where
  $\lvert {n}\rangle _C = {e}^{-i\frac{2\pi n}{d_C\delta}H_C}\lvert {\eta}\rangle _C$.  Indeed,
  \begin{align}
    \langle {\eta}\hspace*{0.2ex}\vert\hspace*{0.2ex}{{e}^{i\frac{2\pi n}{d_C\delta}H_C}
    {e}^{-i\frac{2\pi n'}{d_C\delta}H_C}}\hspace*{0.2ex}\vert\hspace*{0.2ex}{\eta}\rangle _C
    &= \frac1{d_C} \sum_{\ell,\ell'=0}^{d_C-1}
      \langle {\ell'}\hspace*{0.2ex}\vert\hspace*{0.2ex}{{e}^{i\frac{2\pi (n-n')}{d_C\delta}H_C}}\hspace*{0.2ex}\vert\hspace*{0.2ex}{\ell}\rangle _C
      \nonumber\\
    &= \frac1{d_C} \sum_{\ell=0}^{d_C-1} {e}^{i\frac{2\pi (n-n')\ell}{d_C}}
      \quad = \delta_{n,n'}\ .
  \end{align}

  Step~2 of our protocol consists in flattening the Hamiltonian of $C$ so that
  we can perform nontrivial unitaries without worrying about coherences.  From
  the state $\mathcal{D}_{H'_S+H_C}(\rho_S\otimes\eta_C)$ with Hamiltonian
  $H'_S+H_C$, we ``flatten'' the Hamiltonian of the ancilla system $C$
  using~\cite[Lemma~8.1]{Faist2021CMP_impl} and consuming an additional
  ancilla $C'$ of dimension $d_{C}(q^{2}+2)$, with the Hamiltonian $H_{C''}$ being bounded as $\| H_{C'} \| \leqslant d_{C}(q^{2}+2) \delta$ and with the original state surviving
  up to precision $1/q$.  That is, we achieve the following Hamiltonian
  transformation
  \begin{align}
    \left(\mathcal{D}_{H'_S+H_C}(\rho_S\otimes\eta_C) \,;\; H'_S+H_C \right)
    \xrightarrow[\mathrm{TO}]{ 0,\; d_{C}(q^2+2)\delta,\; 1/q }
    \left(\mathcal{D}_{H'_S+H_C}(\rho_S\otimes\eta_C)\,;\;
    H'_S + (\Delta({H_C})/2)I_C \right).
    \tag{Formation protocol, Step~II}
  \end{align}

  Finally, in Step~3 we carry out the following energy-conserving unitary
  controlled on the system $C$:
  \begin{align}
    U_{SC} = \sum_{n=0}^{d_C-1} {e}^{i\frac{2\pi n}{d_C\delta} H'_S} \otimes \lvert {n}\rangle\hspace*{-0.25ex}\langle{n}\rvert _C\ ,
  \end{align}
  and we then use Landauer erasure to reset $C$ to a pure state and to trace it
  out.  Note that
  ${e}^{-iH'_S t} \, \mathcal{D}_{H'_S+H_C}(\rho_S\otimes\eta_C) \, {e}^{iH'_S
    t} = {e}^{-iH'_S t}\, {e}^{i(H'_S+H_C)t}\,
  \mathcal{D}_{H'_S+H_C}(\rho_S\otimes\eta_C)\, {e}^{-i(H'_S+H_C) t}\,
  {e}^{iH'_S t} = {e}^{iH_C t} \, \mathcal{D}_{H'_S+H_C}(\rho_S\otimes\eta_C) \,
  {e}^{-iH_C t}$ because the dephased state is invariant under time evolution.
  Then, the application of the unitary $U_{SC}$ to
  $\mathcal{D}_{H'_S+H_C}(\rho_S\otimes\eta_C)$, and tracing out $C$, yields
  \begin{align}
    \operatorname{tr}_C\bigl[U_{SC}\mathcal{D}_{H'_S+H_C}(\rho_S\otimes\eta_C) U_{SC}^\dagger\bigr]
    &= \sum_{n=0}^{d_C-1}
      \bigl({e}^{i\frac{2\pi n}{d_C\delta} H'_S}\otimes\langle {n}\rvert \bigr)
      \mathcal{D}_{H'_S+H_C}(\rho_S\otimes\eta_C) 
      \bigr({e}^{-i\frac{2\pi n}{d_C\delta} H'_S}\otimes\lvert {n}\rangle \bigr)
      \nonumber\\
    &= \sum_{n=0}^{d_C-1}
      \langle {n}\rvert _C\,{e}^{-i\frac{2\pi n}{d_C\delta}H_C} \,
      \mathcal{D}_{H'_S+H_C}(\rho_S\otimes\eta_C)  \,
      {e}^{i\frac{2\pi n}{d_C\delta}H_C} \lvert {n}\rangle _C
      \nonumber\\
    &= d_C \langle {\eta}\rvert _C\,
      \mathcal{D}_{H'_S+H_C}(\rho_S\otimes\eta_C)  \,
      \lvert {\eta}\rangle _C\ .
  \end{align}
  Recalling \eqref{eq:hgjkljhgfghjkkl}, we know that this state is close to the
  required $\rho_S$.  Noting that we need $\beta^{-1}\ln(d_C)$ work to reset
  $C$ to a pure state, we find:
  \begin{align}
    \left(\mathcal{D}_{H'_S+H_C}(\rho_S\otimes\eta_C)\ ;\ 
    H'_S + (\Delta({H_C})/2)I_C \right)
    \xrightarrow[\mathrm{TO}]{ \beta^{-1}\ln(d_C) ,\, 0 ,\, 1/(2q)+m\xi'/2 }
    \left( \rho_S \,;\; H'_S \right).
    \tag{Formation protocol, Step~III}
  \end{align}
  Note that the final uniform Hamiltonian on the system $C$ can be restored to
  the original Hamiltonian at no work or coherence cost, by keeping the state of
  $C$ at a pure state of constant energy and changing the other levels to match
  those of the original Hamiltonian $H_C$.

  Combining together these three steps, we see that
  \begin{align}
    \gamma'' \xrightarrow[\mathrm{TO}]{ {\beta^{-1}{S}_{\infty}(\rho_S\,\Vert\,{e}^{-\beta H'_S})+
    qK^2\delta+\beta^{-1}\ln(qK^2)},\ {qK^2(q^{2}+2)\delta},\
    {3/(2q)+m^2\xi'/2}} \rho_S\ .
  \end{align}
  Recalling $K=\lceil c/(\beta\delta)\rceil\leqslant 2c/(\beta\delta)$ while
  assuming $c\geqslant\beta\delta$, we
  obtain~\eqref{eq:prop-managing-coherence--state-formation}.
\end{proof}

\subsubsection{Collapse of the min and max divergences suppresses coherence}
\label{subsub:3}

Here we show that the difference between (alternative) min-divergence and the
max-divergence is a quantity that provides a characterization of how much
coherence there is in the state.  Namely, if the divergences do not differ by
more than $2\Delta'$, then the one-norm of off-diagonal energy blocks
$P_k\hat \rho P_{k'}$ is exponentially suppressed in $\lvert {E_k-E_{k'}}\rvert $ as long
as $\lvert {E_k-E_{k'}}\rvert  \gtrsim \Delta'$.

\begin{lemma}
  \label{lemma:large-coherences-suppressed-if-approximate-equipartition}
  Let $\rho$ be a quantum state. Suppose there are $S\in\mathbb{R}$ and
  $\Delta'>0$ such that
  \begin{align}
    {S}_{\infty}(\rho\,\Vert\,{e}^{-\beta H'}) \leqslant S + \Delta'\ ;
    \qquad\text{and}\qquad
    {S}_{1/2}(\rho\,\Vert\,{e}^{-\beta H'}) \geqslant S - \Delta'\ .
  \end{align}
  Then for any $k,k'$, we have
  \begin{align}
    \lVert {P_k\rho P_{k'}}\rVert _1 \leqslant {e}^{-\beta\lvert {E_k - E_{k'}}\rvert /2 + \Delta'}\ .
  \end{align}
\end{lemma}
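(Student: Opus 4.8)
The plan is to read both hypotheses as concrete operator statements and then control the off-diagonal block with a single Hölder-type inequality for Schatten norms, assigning the operator-norm factor to the max-divergence and the trace-norm factor to the alternative min-divergence. Unpacking the definitions, $\Dmax{\rho}{\ee^{-\beta\Ham'}}\leqslant S+\Delta'$ is exactly the operator inequality $\rho \leqslant \ee^{S+\Delta'}\ee^{-\beta\Ham'}$, while $\Dminf{\rho}{\ee^{-\beta\Ham'}}\geqslant S-\Delta'$ is exactly the trace-norm bound $\norm{\rho^{1/2}\ee^{-\beta\Ham'/2}}_1 \leqslant \ee^{(-S+\Delta')/2}$.

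Since $\norm{P_k\rho P_{k'}}_1 = \norm{P_{k'}\rho P_k}_1$ by taking adjoints, I may assume without loss of generality that $E_k\geqslant E_{k'}$. I would then write $\rho=\rho^{1/2}\rho^{1/2}$ and apply $\norm{AB}_1\leqslant\norm{A}_\infty\norm{B}_1$ to obtain
\begin{align}
  \norm{P_k\rho P_{k'}}_1
  = \norm{(P_k\rho^{1/2})(\rho^{1/2}P_{k'})}_1
  \leqslant \norm{P_k\rho^{1/2}}_\infty\,\norm{\rho^{1/2}P_{k'}}_1\ .
\end{align}
For the first factor I use $\norm{P_k\rho^{1/2}}_\infty^2=\norm{P_k\rho P_k}_\infty$ and compress the max-divergence inequality by $P_k$, giving $P_k\rho P_k\leqslant \ee^{S+\Delta'}\ee^{-\beta E_k}P_k$ and hence $\norm{P_k\rho^{1/2}}_\infty\leqslant \ee^{(S+\Delta')/2}\ee^{-\beta E_k/2}$. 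For the second factor I insert $\ee^{-\beta\Ham'/2}\ee^{\beta\Ham'/2}$ and pull the scalar $\ee^{\beta E_{k'}/2}$ out of $\ee^{\beta\Ham'/2}P_{k'}$; since right-multiplication by a projector cannot increase the trace norm, $\norm{\rho^{1/2}P_{k'}}_1=\ee^{\beta E_{k'}/2}\norm{\rho^{1/2}\ee^{-\beta\Ham'/2}P_{k'}}_1\leqslant \ee^{\beta E_{k'}/2}\ee^{(-S+\Delta')/2}$ by the min-divergence bound.

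Multiplying the two factors, the $S$-dependence cancels and the Gibbs weights combine into $\ee^{-\beta(E_k-E_{k'})/2}$, leaving $\norm{P_k\rho P_{k'}}_1\leqslant \ee^{-\beta(E_k-E_{k'})/2+\Delta'}$, which is the claim once the ordering assumption is removed by symmetry. I do not anticipate any genuine analytic obstacle: the only place calling for care is the bookkeeping, namely choosing the asymmetric Hölder split (operator norm against the high-energy projector $P_k$, trace norm against $P_{k'}$) and inserting the weights $\ee^{\pm\beta\Ham'/2}$ at precisely the positions that make the free-energy-like constant $S$ drop out, so that only the product $\ee^{\Delta'}$ times the exponential of the energy gap survives.
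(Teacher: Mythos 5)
Your proof is correct and follows essentially the same route as the paper's: a H\"older split of $\norm{P_k\rho P_{k'}}_1$ into an operator-norm factor controlled by the max-divergence and a trace-norm factor controlled by the R\'enyi-$1/2$ divergence, with the constant $S$ cancelling in the product and the sign handled by symmetry in $k,k'$. The only (immaterial) difference is in the trace-norm step, where you insert $\ee^{\mp\beta\Ham'/2}$ and use $\norm{AP_{k'}}_1\leqslant\norm{A}_1$, whereas the paper bounds $\tr\sqrt{\rho^{1/2}\ee^{-\beta\Ham'}\rho^{1/2}}$ from below via $\ee^{-\beta\Ham'}\geqslant\ee^{-\beta E_k}P_k$.
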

\begin{proof}
  Using H\"older's inequality, we have
  \begin{align}
    \bigl\lVert {P_k\rho P_{k'}}\bigr\rVert _1 \leqslant
    \bigl\lVert {P_k\rho^{1/2}}\bigr\rVert _1 \bigl\lVert {P_{k'}\rho^{1/2}}\bigr\rVert _\infty\ .
  \end{align}
  By definition of the R\'enyi-1/2 divergence, we have for any $k$,
  \begin{align}
    {S}_{1/2}(\rho\,\Vert\,{e}^{-\beta H'})
    &= -2\ln \operatorname{tr}\sqrt{\rho^{1/2}{e}^{-\beta H'}\rho^{1/2}}
      \nonumber\\
    &\leqslant -2\ln\bigl[{e}^{-\beta E_k/2}\operatorname{tr}\sqrt{\rho^{1/2}P_k\rho^{1/2}}\bigr]
      \nonumber\\
    &= \beta E_k - 2\ln\,\bigl\lVert {P_k\rho^{1/2}}\bigr\rVert _1\ ,
  \end{align}
  and hence
  \begin{align}
    \bigl\lVert {P_k\rho^{1/2}}\bigr\rVert _1^2
    \leqslant \exp\bigl\{-{S}_{1/2}(\rho\,\Vert\,{e}^{-\beta H'}) + \beta E_k\bigr\}
    \leqslant \exp\bigl\{-S + \Delta' + \beta E_k\bigr\}\ .
  \end{align}
  On the other hand, we have
  \begin{align}
    \bigl\lVert {P_{k'}\rho^{1/2}}\bigr\rVert _\infty^2
    = \bigl\lVert {P_{k'}\rho P_{k'}}\bigr\rVert _\infty
    \leqslant
    {e}^{{S}_{\infty}(\rho\,\Vert\,{e}^{-\beta H'})} \bigl\lVert {P_{k'}{e}^{-\beta H'}P_{k'}}\bigr\rVert _\infty
    \leqslant \exp\bigl\{S+\Delta' - \beta E_{k'}\bigr\}\ ,
  \end{align}
  recalling that the square of the largest singular value of a matrix $A$ is the
  maximum eigenvalue of $A A^\dagger$.  Putting these together, and noting that
  the same argument holds if we exchange $k$ and $k'$, we obtain
  \begin{align}
    \bigl\lVert {P_k\rho P_{k'}}\bigr\rVert _1 \leqslant {e}^{-\beta\lvert {E_k - E_{k'}}\rvert /2 + \Delta'}\ ,
  \end{align}
  as claimed.
\end{proof}

\subsubsection{Proof of \cref{prop:approx-equipartition-implies-reversibility-under-TO}}
\label{subsub:4}

Finally, we can prove
\cref{prop:approx-equipartition-implies-reversibility-under-TO}.  If the smooth
min and max R\'enyi divergences coincide approximately, we use the above lemmas
to conclude that there exist protocols for work distillation and state formation
with approximately matching work costs.  The difficult part of the proof is to
show that there is a single state that is a good enough smoothing candidate
simultaneously in both~\eqref{eq:def-smooth-Dmax}
and~\eqref{eq:def-smooth-Dminz}.

\begin{proof}
  First, we need to connect the assumption on the smoothed entropy measures to a
  specific state which has a small gap between its non-smoothed min and
  max-divergences.
  Our specific goal below is to construct a state $\tilde \rho$ that satisfies the conditions of \cref{lemma:large-coherences-suppressed-if-approximate-equipartition} and is sufficiently close to $\rho$.

  Because $H'\leqslant H+\delta$ and $H\leqslant H'+\delta$, we have
  \begin{align}
    {S}_{1/2}^{\varepsilon}(\rho\,\Vert\,{e}^{-\beta H'})
    &\geqslant {S}_{1/2}^{\varepsilon}(\rho\,\Vert\,{e}^{-\beta H}) - \beta\delta
      \geqslant 
      {S}_{0}^{\varepsilon}(\rho\,\Vert\,{e}^{-\beta H}) - \beta\delta \geqslant S - \Delta - \beta\delta\ ;
      \nonumber\\
    {S}_{\infty}^{\varepsilon}(\rho\,\Vert\,{e}^{-\beta H'})
    &\leqslant {S}_{\infty}^{\varepsilon}(\rho\,\Vert\,{e}^{-\beta H}) + \beta\delta
      \leqslant S + \Delta + \beta\delta\ .
  \end{align}

  Both protocols, work extraction and state formation, start by shifting the
  Hamiltonian $H\to H'$, and at the end shifting the Hamiltonian back $H'\to H$.
  Thanks to \cref{prop:perturbing-Hamiltonian-level-by-level}, this can be done
  at a cost in the total coherence parameter of $(q^2+1)\delta$ and at a precision
  cost $1/q$ in each way.

  Let $\rho'$ be the optimal subnormalized quantum state for
  ${S}_{1/2}^{\varepsilon}(\rho\,\Vert\,{e}^{-\beta H'}) = {S}_{1/2}(\rho'\,\Vert\,{e}^{-\beta H'})$,
  satisfying $D(\rho,\rho')\leqslant \varepsilon$ and
  $\operatorname{tr}(\rho')\geqslant 1-\varepsilon$.
  
  Let $\gamma' = {e}^{-\beta H'}/\operatorname{tr}({e}^{-\beta H'})$ and write
  \begin{align}
    {S}_{\infty}^{2\varepsilon}(\rho'\,\Vert\,\gamma')
   \ \  = \ \ 
      \begin{array}[t]{rc}
        \min & \ln(\alpha) \\
        \text{s.t.}: & \rho''\leqslant \alpha\gamma' \\
             &D(\rho'',\rho')\leqslant2\varepsilon\ .
      \end{array}
 \ \  \geqslant  \ \  
      \begin{array}[t]{rc}
        \min & \ln(\alpha) \\
        \text{s.t.}: & \rho'\leqslant \alpha\gamma' + F \\
             &\operatorname{tr}(F)\leqslant2\varepsilon\ \text{and}\ F\geqslant 0\ .
      \end{array}
  \end{align}
  Let $\alpha$, $F$ denote optimal choices in the last optimization.  Let
  \begin{align}
    G =
    \gamma'^{1/2} \bigl(\gamma' + \alpha^{-1}\mathcal{D}_{H'}(F)\bigr)^{-1/2}\ ,
  \end{align}
  where $\mathcal{D}_{H'}(\cdot)$ denotes the dephasing operation in the
  eigenspaces of $H'$.  Then, using the pinching inequality, and because $G$
  commutes with time evolution,
  \begin{multline}
    G\rho' G^\dagger \leqslant G\bigl( \alpha \gamma' + F \bigr) G^\dagger
    \leqslant
    m\,\mathcal{D}_{H'}\bigl[ G\bigl( \alpha \gamma' + F \bigr) G^\dagger \bigr]
    =
    m\,G\,\mathcal{D}_{H'}\!\bigl[ \alpha \gamma' + F \bigr]\,G^\dagger
    \\
    =
    m\,G\,\bigl(\alpha \gamma' + \mathcal{D}_{H'}[F]\bigr)\,G^\dagger
    = m\alpha\,\gamma'\ ,
  \end{multline}
  and thus
  ${S}_{\infty}(G\rho'G^\dagger\,\Vert\,\gamma') \leqslant\ln(m)+\ln(\alpha)
  \leqslant\ln(m)+{S}_{\infty}^{2\varepsilon}(\rho'\,\Vert\,\gamma')$.  Shifting back the
  normalization of the second argument gives
  \begin{multline}
    {S}_{\infty}(G\rho'G^\dagger\,\Vert\,{e}^{-\beta H'})
    \leqslant \ln(m)+{S}_{\infty}^{2\varepsilon}(\rho'\,\Vert\,{e}^{-\beta H'})
    \leqslant \ln(m)+{S}_{\infty}^{\varepsilon}(\rho\,\Vert\,{e}^{-\beta H'})
    \\
    \leqslant S + \Delta + \beta\delta + \ln(m)\ ,
  \end{multline}
  because the optimal state in the last max-divergence is a candidate in
  the optimization for ${S}_{\infty}^{2\varepsilon}(\rho'\,\Vert\,{e}^{-\beta H'})$.  Also, taking
  the trace of the constraint $\rho'\leqslant\alpha\gamma'+F$ we obtain
  $\alpha\geqslant 1-4\varepsilon$, and then using~\cite[Lemma~A.4]{Dupuis2013_DH},
  we have
  $P(G\rho' G^\dagger/\operatorname{tr}(\rho'),\rho'/\operatorname{tr}(\rho')) \leqslant
  \sqrt{2\operatorname{tr}(\alpha^{-1}\mathcal{D}[F])/\operatorname{tr}(\rho')} \leqslant
  2\sqrt{\varepsilon/[(1-4\varepsilon)(1-\varepsilon)]} \leqslant 4\sqrt{\varepsilon}$
  (using $\varepsilon\leqslant 1/8$),
  where $P(\sigma,\sigma'):= \sqrt{1-F(\sigma,\sigma')^2} \geqslant D(\sigma,\sigma')$ is the purified distance for $\sigma, \sigma' \in {\mathcal{S}}(\mathscr{H})$.
  Hence,
  $D(G\rho' G^\dagger/\operatorname{tr}(\rho'),\rho'/\operatorname{tr}(\rho'))\leqslant 4\sqrt{\varepsilon}$ and
  thus
  $D(G\rho' G^\dagger,\rho')\leqslant 4\operatorname{tr}(\rho')\sqrt{\varepsilon}\leqslant
  4\sqrt{\varepsilon}$.

  On the other hand, we have
  \begin{align}
    F\bigl(G\rho'G^\dagger, \gamma'\bigr) =
    \operatorname{tr}\sqrt{\gamma'^{1/2} G\rho'G^\dagger \gamma'^{1/2}}
    =
    \bigl\lVert {\rho'^{1/2} G^\dagger \gamma'^{1/2}}\bigr\rVert _1
    \leqslant
    \bigl\lVert {\rho'^{1/2} \gamma'^{1/2}}\bigr\rVert _1
    \bigl\lVert {\gamma'^{-1/2} G^\dagger \gamma'^{1/2}}\bigr\rVert _\infty\ ,
  \end{align}
  using H\"older's inequality.  Conveniently, $[G,\gamma']=0$ by construction,
  and thus also $[G,\gamma'^{1/2}]=0$ and $[G,\gamma'^{-1/2}]=0$, and
  $\bigl\lVert {\gamma'^{-1/2} G^\dagger \gamma'^{1/2}}\bigr\rVert _\infty =
  \bigl\lVert {G}\bigr\rVert _\infty\leqslant 1$, since $G$ is a contraction (because
  $G^\dagger G\leqslant I$).  Hence
  \begin{align}
    {S}_{1/2}(G\rho'G^\dagger\,\Vert\,\gamma')
    = -\ln F^2\bigl(G\rho'G^\dagger, \gamma'\bigr)
    \geqslant -\ln F^2\bigl( \rho', \gamma' \bigr)
    = {S}_{1/2}(\rho'\,\Vert\,\gamma')\ ,
  \end{align}
  and thus
  \begin{align}
    {S}_{1/2}(G\rho'G^\dagger\,\Vert\,{e}^{-\beta H'})
    \geqslant
    {S}_{1/2}(\rho'\,\Vert\,{e}^{-\beta H'})
    = {S}_{1/2}^{\varepsilon}(\rho\,\Vert\,{e}^{-\beta H'})
    \geqslant S - \Delta - \beta\delta\ .
  \end{align}

  Finally, we define
  \begin{align}
    \tilde\rho = \frac{G\rho' G^\dagger}{\operatorname{tr}\bigl(G\rho' G^\dagger\bigr)}\ .
  \end{align}
  We have
  $\operatorname{tr}(G\rho'G^\dagger)\geqslant 1 - \varepsilon- 4\sqrt{\varepsilon} \geqslant 1 -
  5\sqrt{\varepsilon}$, and thus
  \begin{align}
    D\bigl(\tilde\rho, G\rho' G^\dagger\bigr)
    = 1 - \operatorname{tr}\bigl(G\rho' G^\dagger\bigr)
    \leqslant 1 - (1 - \varepsilon- 4\sqrt{\varepsilon}) = \varepsilon+4\sqrt{\varepsilon}\ ,
  \end{align}
  and by a chain of triangle inequalities
  \begin{align}
    D\bigl( \tilde\rho, \rho \bigr)
    \leqslant
    D\bigl(\tilde\rho, G\rho'G^\dagger\bigr)+
    D\bigl(G\rho'G^\dagger, \rho'\bigr)+
    D\bigl(\rho', \rho\bigr)
    \leqslant 2\varepsilon+8\sqrt{\varepsilon}\leqslant 10\sqrt{\varepsilon}\ .
  \end{align}

  We can define
  $\Delta' = \Delta + \beta\delta + \ln(m) - \ln(1-5\sqrt{\varepsilon})$, while noting
  that $-\ln(1-5\sqrt{\varepsilon})\leqslant\ln(2)$ as $\varepsilon<1/100$.  Then, the
  state $\tilde\rho$ satisfies
  \begin{align}
    {S}_{\infty}(\tilde\rho\,\Vert\,{e}^{-\beta H'})
    & \leqslant S + \Delta + \beta\delta + \ln(m) -\ln\operatorname{tr}(G\rho' G^\dagger)
      \leqslant S + \Delta'\ ;
    \\
      {S}_{1/2}(\tilde\rho\,\Vert\,{e}^{-\beta H'})
    &\geqslant S - \Delta - \beta\delta + \ln\operatorname{tr}(G\rho' G^\dagger)
      \geqslant S - \Delta'\ .
  \end{align}
  We then have $\Delta'\leqslant \Delta + \beta\delta + \ln(2m)$ and
  $\Delta'\geqslant \Delta+\beta\delta+\ln(m)$.

  Then, the conditions of
  \cref{lemma:large-coherences-suppressed-if-approximate-equipartition} are
  fulfilled, and for any $k,k'$, we have that
  \begin{align}
    \bigl\lVert {P_k \tilde \rho P_{k'}}\bigr\rVert _1
    \leqslant \exp\bigl( -\lvert {k - k'}\rvert \beta\delta+\Delta'\bigr)\ .
    \label{eq:lkjyfuygiuhoijpok}
  \end{align}

  Now, for any $r>1$ we set $c = r\Delta'$.  For any $k,k'$ with
  $\lvert {k - k'}\rvert \beta\delta\geqslant c$, \Cref{eq:lkjyfuygiuhoijpok} tells us
  that $\bigl\lVert {P_k \tilde \rho P_{k'}}\bigr\rVert _1 \leqslant {e}^{-(r-1)\Delta'} =: \xi'$.
  We set $r=2$ in the following for convenience.

  The conclusions of \cref{prop:managing-coherence} apply to the interconversion
  of $\tilde\rho$ to and from the thermal state.

  \emph{Distilling work from $\rho$.}  Work can be distilled, i.e., the transition $\rho\to\gamma''$ is possible, with
  the parameters (we have set $\varepsilon'=\sqrt{\varepsilon}$ in
  \cref{prop:managing-coherence})
  \begin{align}
    \left\{
    \begin{array}{rll}
      w &=& \beta^{-1} \left[ -S + \Delta \right] + \delta +
            \beta^{-1}\ln(2m^2(36/\varepsilon)^{3}) \\[1ex]
      \eta &=& 2(q^2+1)\delta \\[1ex]
      \varepsilon&=& 11\sqrt{\varepsilon}+ 2/q\ .
    \end{array}\right.
  \end{align}

  \emph{Preparing the state $\rho$.}  The state $\rho$ can be prepared, i.e., the transition
  $\gamma'' \to\rho$ is possible, with the parameters
  \begin{align}
    \left\{
    \begin{array}{rll}
      w &=& \beta^{-1}\left[ S + \Delta \right] + \delta + \beta^{-1}\ln(2qm^3) +
            16  q(\Delta+\beta\delta+\ln(2m))^2/(\beta^2\delta)\\[1ex]
      \eta &=& 16q(q^2+2) (\Delta+\beta\delta+\ln(2m))^2/(\beta^2\delta) \\[1ex]
      \varepsilon&=& 10\sqrt{\varepsilon}+ 7/(2q) + m^2{e}^{-(\Delta+\beta\delta+\ln(m))}\ .
    \end{array}\right.
  \end{align}

Finally, letting $q\geqslant2$, we obtain the slightly simplified parameters in  \cref{prop:approx-equipartition-implies-reversibility-under-TO}.

\end{proof}

\subsubsection{Proof of \cref{thm:asympt-equipartition-implies-asympt-TO-reversibility}}
\label{subsub:5}

We now present the proof of
\cref{thm:asympt-equipartition-implies-asympt-TO-reversibility}, the main
theorem of the first part of our main result.  The proof proceeds by applying
\cref{prop:approx-equipartition-implies-reversibility-under-TO} in the
thermodynamic limit.

\begin{proof}
  We use \cref{prop:approx-equipartition-implies-reversibility-under-TO} to show
  asymptotic convertibility of $\widehat{P}$ (relative to $\widehat{\Sigma}$) to and from
  the Gibbs state $  \gamma''$ on a trivial system at zero energy.  We write
  $\widehat{\Sigma}'' = \{  \gamma'' \}$ the trivial sequence of trivial Gibbs states.
  For $\varepsilon>0$, let
  \begin{align}
    S_{n,\varepsilon} &:= \frac12\left\{ {S}_{\infty}^{\varepsilon}( \rho_n\,\Vert\,{e}^{-\beta   H_n}) +
                     {S}_{0}^{\varepsilon}( \rho_n\,\Vert\,{e}^{-\beta   H_n}) \right\}\ ,
         \\
    \Delta_{n,\varepsilon} &:= \max\bigl\{ {S}_{\infty}^{\varepsilon}( \rho_n\,\Vert\,{e}^{-\beta   H_n}) -
                          {S}_{0}^{\varepsilon}( \rho_n\,\Vert\,{e}^{-\beta   H_n}) , \sqrt{n} \bigr\} \geqslant 0\ ;
  \end{align}
  and let
  $\Delta_{\infty, \varepsilon} := \limsup_{n\to\infty} \Delta_{n,\varepsilon}/n$.
  We have
  \begin{gather}
    \lim_{\varepsilon\to 0}\limsup_{n\to\infty}\frac1n S_{n,\varepsilon}
    =
    \lim_{\varepsilon\to 0}\limsup_{n\to\infty}\frac1{2n}
    \left\{ {S}_{\infty}^{\varepsilon}( \rho_n\,\Vert\,{e}^{-\beta   H_n}) +
          {S}_{0}^{\varepsilon}( \rho_n\,\Vert\,{e}^{-\beta   H_n}) \right\}
    =: \bar S\ ;
    \\
    \lim_{\varepsilon\to0} \limsup_{n\to\infty}
    \frac1n \Delta_{n,\varepsilon}
    \leqslant
    \max\left\{
    \lim_{\varepsilon\to0} \limsup_{n\to\infty}
    \frac1n\bigl[ {S}_{\infty}^{\varepsilon}( \rho_n\,\Vert\,{e}^{-\beta   H_n}) -
               {S}_{0}^{\varepsilon}( \rho_n\,\Vert\,{e}^{-\beta   H_n}) \bigr], 0 \right\}
    = 0\ .
  \end{gather}

  For $\varepsilon>0$ and for each $n$, we apply
  \cref{prop:approx-equipartition-implies-reversibility-under-TO} with the
  choices $S = S_{n,\delta}$, $\Delta=\Delta_{n,\delta}$,
  $\delta=\beta^{-1}\Delta_{n,\varepsilon}$ and
  $q=(\Delta_{\infty, \varepsilon})^{-1/4}$.  Then
  $m=O(\operatorname{poly}(n))/\Delta_{n,\varepsilon}$.  Observe that
  $\Delta_{n,\varepsilon} = O(n)$ and that $\Delta_{n,\varepsilon}$ increases at least
  as fast as $\sqrt{n}$ by definition;
  thus $m=O(\operatorname{poly}(n))$.  Let
  $w_{n,\varepsilon}, \eta_{n,\varepsilon}, \bar{\varepsilon}_{n,\varepsilon}$ be the
  parameters of the work extraction process given by
  \cref{prop:approx-equipartition-implies-reversibility-under-TO} for these
  choices.  Then
  \begin{align*}
    \lim_{\varepsilon\to0}\limsup_{n\to\infty} \frac{w_{n,\varepsilon}}{n}
    &= -\beta^{-1}\bar S\ ;
    &\lim_{\varepsilon\to0}\limsup_{n\to\infty} \frac{\eta_{n,\varepsilon}}{n}
    &= \lim_{\varepsilon\to0} \,3(\Delta_{\infty, \varepsilon})^{1/2} = 0 \ ;
    \\
    \lim_{\varepsilon\to0} \limsup_{n\to\infty}  \bar{\varepsilon}_{n,\varepsilon}
    &= 0 + \lim_{\varepsilon\to0} \,(\Delta_{\infty, \varepsilon})^{1/4} = 0\ ,
  \end{align*}
  and we can apply \cref{lemma:asympt-transform-fixedepsilon-ok} in \cref{appx:lemmas} to conclude
  that $\widehat{P}\xrightarrow[\mathrm{TO}]{-\beta^{-1}\bar S} \widehat{\Sigma}''$.
  
  For the work extraction process, we define $S'_{n,\varepsilon}$, $\bar S'$, $\Delta'_{n,\varepsilon}$, and $\Delta'_{\infty, \varepsilon}$ similarly. Then the parameters
  $w'_{n,\varepsilon},\, \eta'_{n,\varepsilon}, \bar{\varepsilon}'_{n,\varepsilon}$ given by
  \cref{prop:approx-equipartition-implies-reversibility-under-TO} satisfy
  \begin{align*}
    \lim_{\varepsilon\to0}\limsup_{n\to\infty} \frac{w'_{n,\varepsilon}}{n}
    &= \beta^{-1}\bar S'\ ;
    &\lim_{\varepsilon\to0}\limsup_{n\to\infty} \frac{\eta'_{n,\varepsilon}}{n}
    &= \lim_{\varepsilon\to0} \,
      32 \beta^{-2} (\Delta'_{\infty, \varepsilon})^{-3/4}\Delta'_{\infty, \varepsilon}  = 0\ ;
      \\
    \lim_{\varepsilon\to0}\limsup_{n\to\infty} \bar{\varepsilon}'_{n,\varepsilon}
    &= 0\ ,
  \end{align*}
  where we used the fact that sublinear terms are suppressed, that
  $\limsup_{n\to\infty} [\Delta'_{n,\varepsilon}+\beta\delta+\ln(am^b)]/n =
  2\Delta'_{\infty, \varepsilon}$ for any $a,b\geqslant0$, and that
  $\limsup_{n \to \infty} m^2 e^{-(\Delta+\delta+\ln(m))}/n = 0$ because
  $\Delta'_{n,\varepsilon}+\delta+\ln(m)$ grows at least as fast as $\sqrt{n}$ and
  the exponential takes over the polynomial.
  Thus from \cref{lemma:asympt-transform-fixedepsilon-ok} in \cref{appx:lemmas} we see that
  $\widehat{\Sigma}'' \xrightarrow[\mathrm{TO}]{\beta^{-1}\bar S} \widehat{P}$.
  Combining these two processes for different states immediately yields
  $\widehat{P}\xrightarrow[\mathrm{TO}]{\beta^{-1}[{S}(\widehat{P}'\,\Vert\,\widehat{\Sigma}') -
    {S}(\widehat{P}\,\Vert\,\widehat{\Sigma})]} \widehat{P}'$.

  It is clear that if
  ${S}(\widehat{P}\,\Vert\,\widehat{\Sigma}) \geqslant{S}(\widehat{P}'\,\Vert\,\widehat{\Sigma}')$, then
  $\widehat{P}\xrightarrow[\mathrm{TO}]{} \widehat{P}'$ from the above, using
  Property~\ref{propitem:nonoptimal-thermo-transformations}
  of~\cref{prop:elementary-thermo-transformation-properties} in \cref{appx:lemmas}.  Also, if
  $\widehat{P}\xrightarrow[\mathrm{TO}]{} \widehat{P}'$, then monotonicity of the spectral rates imply that
  ${S}(\widehat{P}\,\Vert\,\widehat{\Sigma}) \geqslant{S}(\widehat{P}'\,\Vert\,\widehat{\Sigma}')$.
\end{proof}

\section{Collapse of the min and max divergences for ergodic states relative
  to local Gibbs states}
\label{sec:main-result-for-ergodic-local-Gibbs}

In this section, we prove the second main theorem of our main result (\cref{quantum_theorem_main2t}): For any $\widehat{P}$ that
is translation-invariant ergodic and for any local translation-invariant Gibbs
state $\widehat{\Sigma}$, then we have
${\overline{S}}(\widehat{P}\,\Vert\,\widehat{\Sigma}) = {\underline{S}}(\widehat{P}\,\Vert\,\widehat{\Sigma}) =
{S}_{1}(\widehat{P}\,\Vert\,\widehat{\Sigma})$.  Combined with
\cref{thm:asympt-equipartition-implies-asympt-TO-reversibility}, this implies
that all such states can be reversibly converted into one another with thermal
operations and a negligible amount of coherence.

We prove this assertion in two steps.  First, we formulate a generalized version
of Stein's lemma~\cite{Ogawa2000IEEETIT_Stein,Bjelakovic2004CMP_ergodic,Nagaoka2007IEEETIT_hypothesis,Bjelakovic2003arXiv_revisted,Brandao2010CMP_Stein}.  We derive a sufficient condition for the min and max
divergence converge to the same value that is heavily inspired by these
references.  The condition is the existence of an operator obeying a simple set
of properties, that plays the role of a typical projector.  In a second step, we
prove that for ergodic states and local Gibbs translation-invariant states, this
condition is fulfilled.

\subsection{A sufficient condition for quantum Stein's lemma}
\label{sec:generalization-quantum-Stein-basics}

The quantum Stein's lemma relates to a hypothesis test between two states
$\hat\rho_n$ and $\hat\sigma_n$ using a single measurement. If we employ the
optimal strategy that correctly reports $\hat\rho_n$ with probability at least
$\eta$, then the probability of erroneously reporting $\hat\rho_n$ decreases
exponentially as $\exp(-n {S}_{1}(\hat\rho_n\,\Vert\,\hat\sigma_n))$, with the rate being
given by the KL divergence.  This statement holds in several known
cases, such as for i.i.d.\@ states, or if $\hat\rho_n$ is ergodic and
$\hat\sigma_n$ is i.i.d.~\cite{Bjelakovic2004CMP_ergodic}.

Quantum Stein's lemma can be formulated in terms of the hypothesis testing
divergence.  For sequences $\widehat{P}$, $\widehat{\Sigma}$, a quantum Stein's lemma
would state that for all $0<\eta<1$,
\begin{align}
  \label{eq:DHyp-converges-to-KL}
  {S}_{\mathrm{H}}^{\eta}(\widehat{P}\,\Vert\,\widehat{\Sigma}) = {S}_{1}(\widehat{P}\,\Vert\,\widehat{\Sigma})\ .
\end{align}

Because the hypothesis testing divergence is monotonic in $\eta$, and because it
interpolates between the min and max divergences [cf.\@
\cref{hypothesis_spectral_eq}], we see that the hypothesis testing divergence
converges to the KL divergence as per~\eqref{eq:DHyp-converges-to-KL}, if and
only if the min and max divergences converge to the KL divergence,
\begin{align}
  \label{eq:Dinf-Dsup-converge-to-KL}
  {\underline{S}}(\widehat{P}\,\Vert\,\widehat{\Sigma}) = {\overline{S}}(\widehat{P}\,\Vert\,\widehat{\Sigma}) = {S}_{1}(\widehat{P}\,\Vert\,\widehat{\Sigma})\ .
\end{align}
Therefore, to prove~\eqref{eq:Dinf-Dsup-converge-to-KL} for a class of states it
suffices to prove~\eqref{eq:DHyp-converges-to-KL}.

A simplest situation where the quantum Stein's lemma holds is the i.i.d.\@
setting, i.e., $\widehat{P}:= \{ \hat\rho^{\otimes n} \}$ and
$\widehat{\Sigma}:= \{ \hat\sigma^{\otimes n} \}$. In this situation, for any
$0 < \eta < 1$,
\begin{align}
  {S}_{\mathrm{H}}^{\eta}(\widehat{P}\,\Vert\,\widehat{\Sigma}) =  {S}_{1}(\hat\rho\,\Vert\,\hat\sigma)\ ,
\end{align}
and consequently,
\begin{equation}
  {\overline{S}}(\widehat{P}\,\Vert\,\widehat{\Sigma}) = {\underline{S}}(\widehat{P}\,\Vert\,\widehat{\Sigma}) = {S}_{1}(\hat\rho\,\Vert\,\hat\sigma)\ ,
\end{equation}
as was proved in~\cite[Theorem~2]{Nagaoka2007IEEETIT_hypothesis}.

We now derive a sufficient condition for the
convergence~\eqref{eq:DHyp-converges-to-KL}, providing a generalization of the
quantum Stein's lemma beyond i.i.d.\@ states.
\begin{lemma}
  \label{Philippe_prop}
  Let $\widehat{P}$ and $\widehat{\Sigma}$ be any sequences of states.  Suppose that there
  exists $c \in \mathbb R$ such that for any $\varepsilon> 0$, there exists a
  sequence of operators $\hat W_n^\varepsilon$ that satisfy, for sufficiently
  large $n$,
  \begin{subequations}
    \begin{align}
      \hat W_n^{\varepsilon\dagger}\hat W_n^\varepsilon&\leqslant\hat{I}\ ;
        \label{condition1}
      \\
      \operatorname{tr}\bigl[\hat W_n^\varepsilon\hat \sigma_n \hat W_n^{\varepsilon\dagger}\bigr]
      &\leqslant e^{-n (c - 2\varepsilon)}\ ;
        \label{condition2}
      \\
      \hat W_n^{\varepsilon\dagger} \hat \rho_n \hat W_n^\varepsilon&\leqslant e^{n(c+2\varepsilon)} \hat \sigma_n\ ;
        \label{condition3}
      \\
      \lim_{n \to \infty} \operatorname{Re}\bigl( \operatorname{tr}\bigl[\hat W_n^\varepsilon\hat \rho_n\bigr]  \bigr)
      &= 1\ .
        \label{condition4}
    \end{align}
  \end{subequations}
  Then, for any $0 < \eta < 1$, we have
  \begin{equation}
    {S}_{\mathrm{H}}^{\eta}(\widehat{P}\,\Vert\,\widehat{\Sigma}) = c\ .
    \label{Philippe_prop_eq}
  \end{equation}
\end{lemma}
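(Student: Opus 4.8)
The plan is to prove the two one-sided bounds $\liminf_{n\to\infty}\frac1n\DHyp[\eta]{\hat\rho_n}{\hat\sigma_n}\ge c$ and $\limsup_{n\to\infty}\frac1n\DHyp[\eta]{\hat\rho_n}{\hat\sigma_n}\le c$ separately, for each fixed $0<\eta<1$; together they show that the limit defining $\DHyp[\eta]{\rhoseq}{\Sigmaseq}$ exists and equals $c$, which is~\eqref{Philippe_prop_eq}. Abbreviating $\hat W_n:=\hat W_n^\epsilon$, I work with the operational form read off from~\eqref{def_SH}, namely $\DHyp[\eta]{\hat\rho_n}{\hat\sigma_n}=-\ln(\eta^{-1}\beta_n)$ with $\beta_n:=\min\{\tr[\hat\sigma_n\hat Q]:0\le\hat Q\le\Ident,\ \tr[\hat\rho_n\hat Q]\ge\eta\}$. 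The single preliminary fact I would establish is a two-sided ``gentle operator'' estimate saying that $\hat W_n$ acts asymptotically as the identity on $\hat\rho_n$ from both sides. Writing $\Delta_n:=\Ident-\hat W_n$ and $R_n:=\hat\rho_n^{1/2}$, I note that~\eqref{condition1} gives $\opnorm{\hat W_n}\le 1$ and hence \emph{both} $\hat W_n^\dagger\hat W_n\le\Ident$ and $\hat W_n\hat W_n^\dagger\le\Ident$. Expanding $\tr[\hat\rho_n\Delta_n^\dagger\Delta_n]=1-2\Re\tr[\hat\rho_n\hat W_n]+\tr[\hat\rho_n\hat W_n^\dagger\hat W_n]$ and $\tr[\hat\rho_n\Delta_n\Delta_n^\dagger]=1-2\Re\tr[\hat\rho_n\hat W_n]+\tr[\hat\rho_n\hat W_n\hat W_n^\dagger]$, each trailing term is $\le 1$, so both quantities are at most $2(1-\Re\tr[\hat\rho_n\hat W_n])\to 0$ by~\eqref{condition4}. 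Equivalently, $\norm{\Delta_n R_n}_2\to 0$ and $\norm{R_n\Delta_n}_2\to 0$.

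For achievability ($\liminf\ge c$) I would use the test $\hat Q_n:=\hat W_n^\dagger\hat W_n$, which is admissible by~\eqref{condition1}. Feasibility follows from $\Ident-\hat W_n^\dagger\hat W_n=\Delta_n+\Delta_n^\dagger-\Delta_n^\dagger\Delta_n$, which gives $\tr[\hat\rho_n\hat Q_n]=1-2(1-\Re\tr[\hat\rho_n\hat W_n])+\tr[\hat\rho_n\Delta_n^\dagger\Delta_n]\to 1$, hence $\ge\eta$ for large $n$. The type-II weight is controlled by~\eqref{condition2}: $\tr[\hat\sigma_n\hat Q_n]=\tr[\hat W_n\hat\sigma_n\hat W_n^\dagger]\le e^{-n(c-2\epsilon)}$. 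Therefore $\beta_n\le e^{-n(c-2\epsilon)}$, so $\DHyp[\eta]{\hat\rho_n}{\hat\sigma_n}\ge n(c-2\epsilon)+\ln\eta$; dividing by $n$, letting $n\to\infty$ and then $\epsilon\to 0$ yields $\liminf\ge c$.

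For the converse ($\limsup\le c$) I would set $\hat\rho_n':=\hat W_n^\dagger\hat\rho_n\hat W_n$ and first show $\mu_n:=\onenorm{\hat\rho_n-\hat\rho_n'}\to 0$. Expanding $\hat\rho_n-\hat\rho_n'=\Delta_n^\dagger\hat\rho_n+\hat\rho_n\Delta_n-\Delta_n^\dagger\hat\rho_n\Delta_n$ and applying Hölder's inequality, each term is bounded by a product of $\norm{R_n}_2=1$ with one of $\norm{\Delta_n^\dagger R_n}_2$ or $\norm{R_n\Delta_n}_2$, both of which vanish by the preliminary estimate, so $\mu_n\to0$. Now for any feasible $\hat Q$ one has $\tr[\hat\rho_n'\hat Q]\ge\tr[\hat\rho_n\hat Q]-\onenorm{\hat\rho_n-\hat\rho_n'}\ge\eta-\mu_n$, while~\eqref{condition3} gives $\hat\sigma_n\ge e^{-n(c+2\epsilon)}\hat\rho_n'$, whence $\tr[\hat\sigma_n\hat Q]\ge e^{-n(c+2\epsilon)}(\eta-\mu_n)$. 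Minimizing over feasible $\hat Q$ gives $\beta_n\ge e^{-n(c+2\epsilon)}(\eta-\mu_n)$, so $\DHyp[\eta]{\hat\rho_n}{\hat\sigma_n}\le n(c+2\epsilon)+\ln\frac{\eta}{\eta-\mu_n}$; dividing by $n$ and sending $n\to\infty$, $\epsilon\to0$ yields $\limsup\le c$.

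The step I expect to require the most care is the two-sided gentle-operator estimate, and in particular the observation that~\eqref{condition1} must be invoked in the form $\hat W_n\hat W_n^\dagger\le\Ident$ (not only $\hat W_n^\dagger\hat W_n\le\Ident$) in order to obtain $\norm{R_n\Delta_n}_2\to0$: with only the one-sided bound $\norm{\Delta_n R_n}_2\to0$ the operator $\hat W_n^\dagger\hat\rho_n\hat W_n$ need not be trace-norm close to $\hat\rho_n$, and the converse would fail. Everything else reduces to routine Hölder/Cauchy--Schwarz estimates and bookkeeping of the vanishing corrections $\mu_n$ and $1-\Re\tr[\hat\rho_n\hat W_n]$.
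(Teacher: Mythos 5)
Your proof is correct, and while the achievability direction ($\liminf \geq c$) is essentially identical to the paper's~--- the same test operator $\hat Q_n = \hat W_n^{\epsilon\dagger}\hat W_n^\epsilon$, made feasible via condition~\eqref{condition4} and evaluated via condition~\eqref{condition2}~--- your converse takes a genuinely different route. The paper works in the \emph{dual} SDP formulation of~\eqref{eq:DHyp-SDP-primal-dual}: it invokes a generalized pinching inequality $\rhostate_n \leq 2[\hat W_n^{\epsilon\dagger}\rhostate_n\hat W_n^\epsilon + (\Ident-\hat W_n^{\epsilon\dagger})\rhostate_n(\Ident-\hat W_n^\epsilon)]$ to build a feasible dual pair $(\mu,\hat X)$ from condition~\eqref{condition3}, and only needs the scalar estimate $\tr[(\Ident-\hat W_n^{\epsilon\dagger})\rhostate_n(\Ident-\hat W_n^\epsilon)]\to 0$. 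You instead stay entirely in the primal: you establish the two-sided trace-norm estimate $\onenorm{\rhostate_n - \hat W_n^{\epsilon\dagger}\rhostate_n\hat W_n^\epsilon}\to 0$ and use it to transfer feasibility of an arbitrary test $\hat Q$ from $\rhostate_n$ to $\hat W_n^{\epsilon\dagger}\rhostate_n\hat W_n^\epsilon$, after which condition~\eqref{condition3} directly lower-bounds $\tr[\sigmastate_n\hat Q]$. Your approach buys elementarity~--- no duality and no pinching-type lemma are needed, only H\"older and Cauchy--Schwarz~--- at the cost of the stronger operator-level gentle-measurement statement; your observation that condition~\eqref{condition1} must be used in the equivalent form $\hat W_n^\epsilon\hat W_n^{\epsilon\dagger}\leq\Ident$ to control $\norm{\rhostate_n^{1/2}(\Ident-\hat W_n^\epsilon)}_2$ is exactly the point requiring care, and you handle it correctly. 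The paper's dual route is the more robust template (it generalizes to settings where only weak duality and a weaker, scalar control of $\hat W_n^\epsilon$ are available), but for this lemma both arguments are complete and yield the same error terms up to the bookkeeping of $\mu_n$ versus the factor of $4$ in the paper's bound.
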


Our proof is based on tools from semidefinite
programming~\cite{Watrous2009_sdps,Dupuis2013_DH,Faist2018PRX_workcost}, which imply that the
hypothesis testing divergence is equivalently expressed using two different
optimizations:
\begin{align}
  \label{eq:DHyp-SDP-primal-dual}
  {S}_{\mathrm{H}}^{\eta}(\hat\rho\,\Vert\,\hat\sigma)
  = -\ln \,
  \min_{\substack{0\leqslant\hat Q\leqslant\hat{I}\\ \operatorname{tr}[\hat Q \hat\rho]\geqslant\eta}}
  \left\{ \eta^{-1}\operatorname{tr}\bigl[\hat Q\hat\sigma\bigr] \right\}
  = -\ln \,
  \sup_{\substack{\mu\geqslant0,\ \hat X\geqslant0\\ \mu\hat\rho\leqslant\hat\sigma+\hat X}}
  \left\{ \mu - \frac{\operatorname{tr}[\hat  X]}{\eta } \right\}  \ .
\end{align}
The optimizations are called the \emph{primal problem} and \emph{dual problem}
respectively.  We note that our proof below only requires the so-called
\emph{weak duality} between the minimization and the maximization, which states
that the optimal value of the minimization problem is an upper bound to the
optimal value of the maximization problem. 

The reason that we have equality
in~\eqref{eq:DHyp-SDP-primal-dual} is that for the hypothesis testing divergence,
the stronger notion of \emph{strong duality} holds, which states that both
optimization problems have the same optimal value.
We note that the reason we
  write a supremum for the dual problem is that for $\eta=1$, even as strong
  duality holds, we are not guaranteed that the supremum is achieved by a
  specific choice of $\mu$ and $\hat X$.  In the primal problem the minimum is
  always achieved.  This can be seen using Slater's
  conditions~\protect\cite{Watrous2009_sdps}, noting that we can restrict the
  optimization to the support of $\hat\sigma$.

\begin{proof}[Proof of \cref{Philippe_prop}]
  Our proof proceeds by exhibiting explicit candidates in both optimizations
  in~\eqref{eq:DHyp-SDP-primal-dual}, yielding upper and lower bounds that both
  converge to $c$ as $n\to\infty$.

  Let $\hat Q_n^\varepsilon:= \hat W_n^\varepsilon{}^\dagger \hat W_n^\varepsilon$.
  From condition~\eqref{condition4} and \cref{Philippe_lemma7} (a) in
  \cref{appx:lemmas}, we have
  \begin{align}
    \lim_{n \to \infty} \operatorname{tr}\bigl[\hat Q_n^\varepsilon\hat\rho_n\bigr] = 1\ ,
  \end{align}
  which implies that for any $0 < \eta < 1$, we have
  $ \operatorname{tr}\bigl[\hat Q_n^\varepsilon\hat\rho_n\bigr] > \eta$ for sufficiently
  large $n$, and $\hat Q_n$ is a valid optimization candidate
  in~\eqref{eq:DHyp-SDP-primal-dual}.  Using~\eqref{condition2}, the value
  attained by this candidate is
  \begin{align}
    e^{-{S}_{\mathrm{H}}^{\eta}(\hat\rho_n\,\Vert\,\hat\sigma_n)}
    \leqslant\eta^{-1}\, e^{-n (c-2\varepsilon)}\ ,
  \end{align}
  and thus
  \begin{align}
    \frac1n {S}_{\mathrm{H}}^{\eta}(\hat\rho_n\,\Vert\,\hat\sigma_n)
    \geqslant c - 2\varepsilon+ \frac1n\ln(\eta)\ .
  \end{align}
  By taking $n \to \infty$ and then $\varepsilon\to +0$, we conclude that
  ${S}_{\mathrm{H}}^{\eta}(\widehat{P}\,\Vert\,\widehat{\Sigma}) \geqslant c$.

  Now we consider the second optimization in~\eqref{eq:DHyp-SDP-primal-dual}.
  First, we note that using a generalization of the Pinching inequality
  (Lemma~B.1 of Ref.~\cite{Faist2021CMP_impl}),
  \begin{align}
    \hat\rho_n
    \leqslant2 \bigl[ \hat W_n^{\varepsilon\dagger} \hat\rho_n \hat W_n^\varepsilon+ (\hat{I}- \hat W_n^{\varepsilon\dagger}) \hat\rho_n (\hat{I}- \hat W_n^\varepsilon)\bigr]\ .
    \label{p_ineq}
  \end{align}
  Let $\mu := e^{-n (c+2\varepsilon)} / 2 > 0$ and
  $\hat X := 2\mu(\hat{I}- \hat W_n^\varepsilon{}^\dagger ) \hat\rho_n (\hat{I}-
  \hat W_n^\varepsilon) \geqslant0$.  From inequality~\eqref{p_ineq} and
  condition~\eqref{condition3}, we have
  $\mu \hat\rho\leqslant\hat\sigma+ \hat X$, and hence $\mu,\hat X$ are valid
  optimization candidates in the maximization
  in~\eqref{eq:DHyp-SDP-primal-dual}.
  From \cref{Philippe_lemma7} (b) in \cref{appx:lemmas}, we have
  $\operatorname{tr}\bigl[{(\hat{I}- \hat W_n^{\varepsilon\dagger} ) \hat\rho_n (\hat{I}-
    \hat{W}_n^\varepsilon)}\bigr] \to 0$ as $n\to\infty$, and therefore, for
  sufficiently large $n$, we have
  $\operatorname{tr}\bigl[{(\hat{I}- \hat W_n^{\varepsilon\dagger} ) \hat\rho_n (\hat{I}-
    \hat{W}_n^\varepsilon)}\bigr] \geqslant\eta/4$.  Therefore, for sufficiently large
  $n$,
  \begin{align}
    \mu - \frac{\operatorname{tr}[\hat X]}{\eta}
    = \mu \left( 1 - \frac{2\operatorname{tr}\bigl[(\hat{I}- \hat W_n^{\varepsilon\dagger} )
    \hat\rho_n (\hat{I}- \hat W_n^\varepsilon)\bigr]}{\eta} \right)
    \geqslant\frac{\mu}{2}\ .
  \end{align}
  The value attained by the maximization is then
  \begin{align}
    {S}_{\mathrm{H}}^{\eta}(\hat\rho_n\,\Vert\,\hat\sigma_n)
    \leqslant-\ln\left\{ \mu - \frac{\operatorname{tr}[\hat X]}{\eta} \right\}
    \leqslant-\ln\left\{ \frac{1}{4} e^{-n(c+2\varepsilon)} \right\}\ .
  \end{align}
  Dividing by $n$, taking $n \to \infty$ and then $\varepsilon\to +0$, we deduce
  that ${S}_{\mathrm{H}}^{\eta}(\widehat{P}\,\Vert\,\widehat{\Sigma}) \leqslant c$.
\end{proof}

In fact, one can see that the product of two typical projectors constructed in
Ref.~\cite{Bjelakovic2003arXiv_revisted} for the i.i.d.\@ case satisfies the conditions
\eqref{condition1}--\eqref{condition4} above, with
$c={S}_{1}(\hat\rho\,\Vert\,\hat\sigma)$.

\subsection{Formulation of ergodic states and local Gibbs states}
\label{sec:ergodic-states}

In a second step of our main result, we consider ergodic states and local Gibbs
states.  Here we show that for these states, it is possible to construct an
operator that satisfies the conditions in \cref{Philippe_prop}, in turn proving
the collapse of the min and max divergences to the KL divergence.

The standard way to rigorously formulate ergodicity invokes infinite-dimensional $C^\ast$-algebras~\cite{BookBratteliRobinson_OpAlgQStatMech1,BookBratteliRobinson_OpAlgQStatMech2,BookRuelle_StatMechRigorous}.
Here, for the sake of broad readability, we introduce the relevant concepts directly
in an equivalent~--- albeit perhaps less elegant~--- formulation that does not
require the use of $C^\ast$ algebras.  For completeness, we provide the
construction based on $C^\ast$ algebras in \cref{appx:Cstar-algebras-construction}.

We consider a spatially $d$-dimensional system on the lattice $\mathbb Z^d$.  To
each site $i\in\mathbb{Z}^d$, we assign a copy $\mathscr{H}_i$ of a finite-dimensional
Hilbert space, such that the Hilbert spaces for all sites are isomorphic.
We denote the set of operators acting on $\mathscr{H}_i$ by $\mathcal A_i$.  For a
bounded region $\Lambda \subset \mathbb Z^d$, we define
$\mathscr{H}_\Lambda := \bigotimes_{i \in \Lambda} \mathscr{H}_i$ and
$\mathcal A_\Lambda := \bigotimes_{i \in \Lambda} \mathcal A_i$.  We note that
these are finite-dimensional spaces because $\Lambda$ is bounded.

For a bounded region $\Lambda \subset \mathbb Z^d$, we consider a density
operator $\hat\rho_\Lambda$ whose support is $\Lambda$, i.e.,
$\hat\rho_\Lambda \in {\mathcal{S}}(\mathscr{H}_\Lambda)$.  We assume that we are given a collection $\{ \hat\rho_\Lambda \}$ for all bounded subregions of the lattice, which furthermore obey
the \textit{consistency} condition, namely,
\begin{align}
  \hat\rho_\Lambda = \operatorname{tr}_{\Lambda' \setminus \Lambda} [\hat\rho_{\Lambda'}]\ .
  \label{consistency}
\end{align}
This condition is necessary to ensure that all $\hat\rho_\Lambda$ are obtained
from a common global state defined on the entire infinite lattice (see
\cref{appx:Cstar-algebras-construction}).

Consider now a sequence of bounded regions of the lattice defined as follows.
For any $\ell\in \mathbb{N}$, let $[-\ell,\ell] := \{ -\ell, -\ell+1, \cdots, \ell-1, \ell \} \subset \mathbb Z$ and
$\Lambda_\ell := [-\ell,\ell]^d \subset \mathbb Z^d$.  
We define the sequence of quantum states $\widehat{P}= \{ \hat\rho_n \}$ by
$\hat\rho_n := \hat\rho_{\Lambda_\ell}$, where we set
$n := (2\ell+1)^d = | \Lambda_\ell |$.  While $n=(2\ell+1)^d$ with
$\ell=1,2, \cdots$ does not run over all of the elements of $\mathbb N$, it does
not affect our following argument; indeed, it is straightforward to complete the
sequence with intermediate states for all $n\in\mathbb{N}$ such that the limits
that we derive are unaffected.

Before we can formulate ergodicity, we consider the shift superoperator.  The
shift superoperator $T_i$ is defined such that for any local operator $\hat A_j$
whose support is $j \in \mathbb Z^d$, it is mapped by $T_i$ to the same operator
at site $j+i \in \mathbb Z^d$, i.e., $T_i (\hat A_j) = \hat A_{j+i}$, where we
regard $i \in \mathbb Z^d$ as a $d$-dimensional vector with the standard
addition for such vectors.

\begin{definition}[Translation invariance]
  A sequence $\widehat{P}$ of the form above is translation invariant, if it
  satisfies the consistency condition~\eqref{consistency}, and for all
  $n = (2\ell+1)^d$, all $\hat A \in \mathcal A_\Lambda$ with $\Lambda$ being
  bounded, and all $i \in \mathbb Z^d$ satisfying
  $T_i (\hat A) \in \mathcal A_{\Lambda_l}$, we have
  \begin{align}
    \operatorname{tr}\bigl[ \hat\rho_n T_i ( \hat A) \bigr] = \operatorname{tr}\bigl[ \hat\rho_n \hat A \bigr]\ .
  \end{align}
\end{definition}

We note that ``translation invariant'' is often referred to as ``stationary'' in
the context of ergodic theory.  In our setup, we interpret $i \in \mathbb Z^d$
as a coordinate of the spatial potition instead of time, and therefore we prefer
the denomination ``translation invariant.''

Translation invariance is a central ingredient for the definition of ergodicity:

\begin{definition}[Ergodicity]
  \label{defn:ergodic}
  A sequence $\widehat{P}$ is translation-invariant and ergodic, if it is
  translation invariant, and for all self-adjoint
  $\hat A \in \mathcal A_\Lambda$ for a bounded region $\Lambda$ we have
  \begin{align}
    \lim_{m \to \infty} \operatorname{tr}\left[
      \hat\rho_n \left( \frac{1}{(2m+1)^d} \sum_{i \in \Lambda_{m}} T_i (\hat A) \right)^2
    \right]
    = \lim_{n \to \infty} \bigl( \operatorname{tr}\bigl[ \hat\rho_n \hat A \bigr]\bigr)^2\ ,
    \label{ergodic_def}
  \end{align}
  where $n = (2\ell+1)^d$ on the left-hand side is taken such that
  $T_i (\hat A) \in \mathcal A_{\Lambda_\ell}$ for all $i \in \Lambda_{m}$.
\end{definition}

The limit on the right-hand side of (\ref{ergodic_def}) is not actually necessary, because the consistency condition (\ref{consistency}) implies that $ \operatorname{tr}[ \hat\rho_n \hat A ]$ does not depend on $n$ for large $n=(2\ell+1)^d$ satisfying $\Lambda \subset \Lambda_\ell$.
The equivalence of this definition and the standard definition is proved in
\cref{appx:Cstar-algebras-construction}.

This definition implies that the variance of the shift average (i.e., the
spatial average) of any local observable vanishes in the thermodynamic limit.
We emphasize that an ergodic state can be out of equilibrium, because ergodicity
is defined with respect to the spatial shift instead of time evolution.

We now define the Hamiltonian of the system which determines the Gibbs state.
Let $\hat h_i$ be a local operator describing interaction, whose support is a
bounded region around site $i \in \mathbb Z^d$.  More precisely, we assume that
the support of $\hat h_i$ is in
$\{ j : | j_k -i_k | \leqslant r, \ \forall k \} \subset \mathbb Z^d$, where
$0 \leqslant r < \infty$ is an integer and $i_k$, $j_k$ describe the $k$-th
components of $i, j \in \mathbb Z^d$ ($k=1, \cdots, d$).  We note that $r$
represents the interaction length, where $r=0$ describes non-interacting cases.

Then, for a bounded region $\Lambda \subset \mathbb Z^d$, the truncated
Hamiltonian is given by
\begin{align}
  \hat H_{\Lambda} := \sum_{i \in \Lambda} \hat h_i\ .
\end{align}
A Hamiltonian of this form is referred to as a \textit{local} Hamiltonian.  The
Hamiltonian is \textit{translation invariant}, 
if it can be written in the form
\begin{align}
  \hat H_{\Lambda} = \sum_{i \in \Lambda} T_i(\hat h_{0})\ ,
\end{align}
for some fixed operator $h_0$.

Let $\beta > 0$ be the inverse temperature.
The \textit{truncated} Gibbs state on a bounded  region $\Lambda$ is given by the density operator
\begin{equation}
\hat\sigma^{\Box}_{\Lambda} := \exp (\beta (F_{\Lambda} - \hat H_{\Lambda})),
\label{tr_Hamiltonian}
\end{equation}
where $F_{\Lambda} := - \beta^{-1} \ln \operatorname{tr}[ \exp (-\beta \hat H_{\Lambda} )]$ is
the truncated free energy.  We note that $\hat\sigma^{\Box}_\Lambda$ does not
satisfy the consistency condition \eqref{consistency}, because of the effects on
the edges of the region $\Lambda$ where we have truncated the Hamiltonian.

We  consider a sequence of the truncated Gibbs states: We define $\widehat{\Sigma}^{\Box}:= \{ \hat\sigma^{\Box}_n \}$ with $\hat\sigma^{\Box}_n := \hat\sigma^{\Box}_{\Lambda_m}$, where $n:= (2\ell+1)^d$ and $m:= \ell - r$.
We note that, with this definition, the supports of $\hat\sigma^{\Box}_n$ and $\hat\rho_n$ are the same.
In the following we  use the shorthands $\hat H_n := \hat H_{\Lambda_{m}}$ and $F_n := F_{\Lambda_{m}}$.

\subsection{Generalized Stein's lemma for ergodic states relative to
  local Gibbs states}

We now consider a proof of a generalization of the quantum Stein's lemma for ergodic states relative to local Gibbs states.
We begin by proving that the limiting KL divergence is well defined:

\begin{lemma}
  \label{Gibbs_KL_rate0}
  Suppose that $\widehat{P}$ is translation invariant and $\widehat{\Sigma}^{\Box}$ is the
  truncated Gibbs state of a local and translation-invariant Hamiltonian in any
  dimensions.  Then $S_1 (\widehat{P}\| \widehat{\Sigma}^{\Box})$ exists.
\end{lemma}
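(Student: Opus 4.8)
The plan is to expand the divergence rate into a sum of three terms, each of which I handle separately. Since $\ln\sigmastatetrunc_n = \beta F_n\Ident - \beta\hat H_n$ by~\eqref{tr_Hamiltonian}, and $\tr[\rhostate_n]=1$,
\begin{align}
  \frac1n\DKL{\rhostate_n}{\sigmastatetrunc_n}
  = -\frac1n\HVN{\rhostate_n}
  + \frac{\beta}{n}\tr[\rhostate_n\hat H_n]
  - \frac{\beta F_n}{n}\ .
  \label{eq:kl-rate-three-terms}
\end{align}
It thus suffices to show that each of the three terms on the right-hand side converges as $n\to\infty$, and I would treat them in increasing order of difficulty.

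The energy term follows essentially for free from translation invariance. Writing $\hat H_n = \hat H_{\Lambda_m} = \sum_{i\in\Lambda_m}T_i(\hat h_0)$ with $m=\ell-r$, each $T_i(\hat h_0)$ for $i\in\Lambda_m$ has its support contained in $\Lambda_\ell$ (this is precisely why $m=\ell-r$ is chosen), so translation invariance gives $\tr[\rhostate_n T_i(\hat h_0)]=\tr[\rhostate_n\hat h_0]$ and hence $\tr[\rhostate_n\hat H_n]=|\Lambda_m|\,\tr[\rhostate_n\hat h_0]$. By the consistency condition~\eqref{consistency}, $\tr[\rhostate_n\hat h_0]$ is independent of $n$ once $\Lambda_\ell$ contains the support of $\hat h_0$, and the volume ratio $|\Lambda_m|/n=(2m+1)^d/(2\ell+1)^d\to 1$ because $r$ is fixed. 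Hence $\tfrac{\beta}{n}\tr[\rhostate_n\hat H_n]$ converges to $\beta$ times the energy expectation $\tr[\rhostate_n\hat h_0]$, which is eventually constant in $n$.

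For the remaining two terms I would invoke standard results of rigorous statistical mechanics. The entropy rate $\tfrac1n\HVN{\rhostate_n}$ exists for any translation-invariant $\rhoseq$: subadditivity of the von Neumann entropy under restriction to disjoint subregions, together with translation invariance, makes $\Lambda\mapsto\HVN{\rhostate_\Lambda}$ a subadditive set function whose mean over the cubes $\Lambda_\ell$ converges by a Fekete-type argument~\cite{Bratteli1981,Israel1979}. For the free-energy term, observe that $-\beta F_n/n=\tfrac1n\ln\tr[e^{-\beta\hat H_n}]$; the existence of the thermodynamic limit of the pressure $|\Lambda_m|^{-1}\ln\tr[e^{-\beta\hat H_{\Lambda_m}}]$ for a local, translation-invariant Hamiltonian is classical~\cite{Ruelle1999,Israel1979}, and multiplying by the ratio $|\Lambda_m|/n\to 1$ again converts the per-site limit into the per-$n$ limit. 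Summing the three limits in~\eqref{eq:kl-rate-three-terms} then shows that $\DKL{\rhoseq}{\Sigmaseqtrunc}$ exists.

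The only genuinely substantive inputs are the mean-entropy and pressure limits, which are precisely where translation invariance and locality of the Hamiltonian enter, and which I would cite rather than reprove. I expect the main point requiring care to be the bookkeeping of the mismatch between the region $\Lambda_\ell$ supporting $\rhostate_n$ and the truncation region $\Lambda_m$: one must check that discarding the boundary layer of width $r$ (whose volume is $O(\ell^{d-1})=o(n)$) multiplies each term only by a factor tending to $1$, so that no boundary correction survives in the rate.
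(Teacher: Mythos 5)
Your proof is correct and follows essentially the same route as the paper: both decompose $\tfrac1n\DKL{\rhostate_n}{\sigmastatetrunc_n}$ into the entropy rate, the free-energy density, and the energy density, and invoke the standard existence results for the mean entropy and the pressure of a local translation-invariant Hamiltonian (the paper cites Proposition~6.2.38 and Theorem~6.2.40 of Ref.~\cite{Bratteli1981}). Your additional bookkeeping of the $\Lambda_m$ versus $\Lambda_\ell$ mismatch and the explicit computation $\tr[\rhostate_n\hat H_n]=|\Lambda_m|\,\tr[\rhostate_n\hat h_0]$ is a correct elaboration of what the paper leaves implicit.
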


\begin{proof}
  This follows from the following well-known facts.  From \cref{tr_Hamiltonian},
  \begin{align}
    \frac{1}{\lvert {\Lambda}\rvert } {S}_{1}(\hat\rho_\Lambda\,\Vert\,\hat\sigma^{\Box}_\Lambda)
    = - \frac{1}{\lvert {\Lambda}\rvert } {S}_{1}(\hat\rho_\Lambda)
    - \beta \frac{F_{\Lambda}}{\lvert {\Lambda}\rvert }
    + \beta \frac{\operatorname{tr}\bigl[\hat H_{\Lambda} \hat\rho_\Lambda\bigr]}{\lvert {\Lambda}\rvert }\ .
    \label{KL_energy}
  \end{align}
  The first term on the right-hand side converges to ${S}_{1}(\widehat{P})$ because
  $\widehat{P}$ is translation invariant (Proposition~6.2.38 of
  Ref.~\cite{BookBratteliRobinson_OpAlgQStatMech2}).  It is also known that the second term converges to
  the free energy density (Theorem~6.2.40 of Ref.~\cite{BookBratteliRobinson_OpAlgQStatMech2}).
  The third term also converges, because $\hat H_{\Lambda}$ is local and
  translation invariant, and $\widehat{P}$ is translation invariant.
\end{proof}

One important ingredient in the proof of our generalization of the quantum
Stein's lemma is the following typical projector for ergodic states (Theorem~2.1
of Ref.~\cite{Bjelakovic2004IM_lattice}; see also Theorem~5.1 of
Ref.~\cite{Bjelakovic2005QIP_compression} and Theorem~1.4 of Ref.~\cite{Ogata2013LMP_shannonmcmillan}).

\begin{proposition}[Quantum Shannon-McMillan Theorem]
  \label{ergodic_prop}
  Suppose that $\widehat{P}$ is ergodic.  Then for any $\varepsilon>0$ there exists a
  sequence of projectors $\hat \Pi_{\widehat{P},n}^\varepsilon$ (called typical
  projectors) that satisfy, for sufficiently large $n$,
  \begin{gather}
    e^{-n (s+ \varepsilon)} \hat \Pi_{\widehat{P},n}^\varepsilon\leqslant\hat \Pi_{\widehat{P},n}^\varepsilon\, \hat\rho_n \, \hat \Pi_{\widehat{P},n}^\varepsilon\leqslant e^{-n(s-\varepsilon)} \hat \Pi_{\widehat{P},n}^\varepsilon\ ;
    \label{rho_typical1}
    \\
    e^{n(s-\varepsilon)} \leqslant\operatorname{tr}\bigl[\hat \Pi_{\widehat{P},n}^\varepsilon\bigr] \leqslant e^{n(s+\varepsilon)}\ ,
    \label{rho_typical2}
    \\
    \lim_{n \to \infty} \operatorname{tr}\bigl[\hat \Pi_{\widehat{P},n}^\varepsilon\hat\rho_n\bigr] = 1\ ,
    \label{rho1_1}
  \end{gather}
  where $s:= {S}_{1}(\widehat{P})$.
\end{proposition}

We now consider our main theorem for ergodic states and for the truncated Gibbs
state.

\begin{theorem}[Collapse of the spectral rates for the truncated Gibbs state]
  \label{quantum_theorem_main2t}
  Consider a lattice $\mathbb{Z}^d$ of spatial dimension $d$ and suppose that
  $\widehat{P}$ is translation invariant and ergodic, as
  in~\cref{sec:ergodic-states}.  Let $\widehat{\Sigma}^{\Box}$ be the sequence of
  truncated Gibbs states of a local and translation invariant Hamiltonian on the
  lattice.  Then, for any $0 < \eta < 1$,
  \begin{align}
    {S}_{\mathrm{H}}^{\eta}(\widehat{P}\,\Vert\,\widehat{\Sigma}^{\Box}) =  {S}_{1}(\widehat{P}\,\Vert\,\widehat{\Sigma}^{\Box})\ ,
    \label{second_main_0t}
  \end{align}
  and as a consequence,
  \begin{align}
    {\underline{S}}(\widehat{P}\,\Vert\,\widehat{\Sigma}^{\Box})
    = {\overline{S}}(\widehat{P}\,\Vert\,\widehat{\Sigma}^{\Box})
    =  {S}_{1}(\widehat{P}\,\Vert\,\widehat{\Sigma}^{\Box})\ .
    \label{second_main_t}
  \end{align}
\end{theorem}

\begin{proof}
  From the proof of \cref{Gibbs_KL_rate0}, the following limit exists,
  \begin{align}
    m := \lim_{n \to \infty} \frac{1}{n} \left( - \operatorname{tr}[\hat\rho_n \ln \hat\sigma^{\Box}_n] \right)\ .
  \end{align}
  Let $s:= S_1 (\widehat{P})$.  We define relative typical projectors (as inspired
  by Refs.~\cite{Bjelakovic2004CMP_ergodic,Bjelakovic2003arXiv_revisted}) as
  \begin{align}
    \hat \Pi_{\widehat{P}|\widehat{\Sigma}^{\Box}, n}^\varepsilon:= \operatorname{Proj}\left\{ - \frac{1}{n} \ln \hat\sigma^{\Box}_n \in [m -\varepsilon, m+\varepsilon]\right\}\ ,
  \end{align}
  which satisfy by definition
  \begin{align}
    e^{-n (m+\varepsilon)}\, \hat\Pi_{\widehat{P}|\widehat{\Sigma}^{\Box},n}^\varepsilon\leqslant\hat\Pi_{\widehat{P}|\widehat{\Sigma}^{\Box},n}^\varepsilon\, \hat\sigma^{\Box}_n \,
    \hat\Pi_{\widehat{P}|\widehat{\Sigma}^{\Box},n}^\varepsilon\leqslant e^{-n(m-\varepsilon)} \, \hat\Pi_{\widehat{P}|\widehat{\Sigma}^{\Box},n}^\varepsilon\ .
    \label{sigma_typical}
  \end{align}
  We then define 
  \begin{equation}
    \hat W_n^\varepsilon:= \hat\Pi_{\widehat{P},n}^\varepsilon\,  \hat\Pi_{\widehat{P}|\widehat{\Sigma}^{\Box},n}^\varepsilon\ .
  \end{equation}
  The remainder of the proof is devoted to showing that the operator
  $\hat W_n^\varepsilon$ satisfies the four conditions
  \eqref{condition1}--\eqref{condition4} in \cref{Philippe_prop}
  with
  \begin{align}
    c := -s+m = {S}_{1}(\widehat{P}\,\Vert\,\widehat{\Sigma}^{\Box})\ .
  \end{align}
  These conditions then immediately imply \cref{second_main_0t}, as discussed in
  \cref{sec:generalization-quantum-Stein-basics}.
  
  The condition~\eqref{condition1} is clear by definition.
  Condition~\eqref{condition2} is obtained from
  inequalities~\eqref{rho_typical2} and~\eqref{sigma_typical} as
  \begin{align}
    \operatorname{tr}\bigl[
    \hat\Pi_{\widehat{P},n}^\varepsilon\hat\Pi_{\widehat{P}|\widehat{\Sigma}^{\Box},n}^\varepsilon\, \hat\sigma^{\Box}_n \,
    \hat\Pi_{\widehat{P}|\widehat{\Sigma}^{\Box},n}^\varepsilon\hat\Pi_{\widehat{P},n}^\varepsilon\bigr]
    &\leqslant e^{-n(m-\varepsilon) } \operatorname{tr}\bigl[ \hat\Pi_{\widehat{P},n}^\varepsilon\hat\Pi_{\widehat{P}|\widehat{\Sigma}^{\Box},n}^\varepsilon\hat\Pi_{\widehat{P},n}^\varepsilon\bigr]
      \nonumber\\
    &\leqslant e^{-n(m-\varepsilon) } \operatorname{tr}\bigl[ \hat\Pi_{\widehat{P},n}^\varepsilon\bigr]
    \nonumber\\
    &\leqslant e^{-n(m-s -2\varepsilon) }\ .
  \end{align}
  The third condition~\eqref{condition3} is obtained from
  inequalities~\eqref{rho_typical1} and~\eqref{sigma_typical} as
  \begin{align}
    \hat\Pi_{\widehat{P}|\widehat{\Sigma}^{\Box},n}^\varepsilon\hat\Pi_{\widehat{P},n}^\varepsilon\, \hat\rho_n \,
    \hat\Pi_{\widehat{P},n}^\varepsilon\hat\Pi_{\widehat{P}|\widehat{\Sigma}^{\Box},n}^\varepsilon&\leqslant e^{-n(s-\varepsilon)} \,  \hat\Pi_{\widehat{P}|\widehat{\Sigma}^{\Box},n}^\varepsilon\hat\Pi_{\widehat{P},n}^\varepsilon\hat\Pi_{\widehat{P}|\widehat{\Sigma}^{\Box},n}^\varepsilon\nonumber\\
    &\leqslant e^{-n(s-\varepsilon)}  \, \hat\Pi_{\widehat{P}|\widehat{\Sigma}^{\Box},n}^\varepsilon\nonumber\\
    &\leqslant e^{-n(s-\varepsilon)} e^{+n (m+\varepsilon)} \,
    \hat\Pi_{\widehat{P}|\widehat{\Sigma}^{\Box},n}^\varepsilon\, \hat\sigma^{\Box}_n \, 
    \hat\Pi_{\widehat{P}|\widehat{\Sigma}^{\Box},n}^\varepsilon\nonumber\\
    &\leqslant e^{+n (m-s+2\varepsilon)} \,  \hat\sigma^{\Box}_n\ . 
  \end{align}
  The final condition~\eqref{condition4} follows from \cref{Philippe_lemma6} in
  \cref{appx:lemmas}, \cref{rho1_1} in \cref{ergodic_prop}, and from
  \begin{align}
    \lim_{n \to \infty} \operatorname{tr}\bigl[
    \hat\Pi_{\widehat{P}|\widehat{\Sigma}^{\Box},n}^\varepsilon\, \hat\rho_n \bigr] = 1\ .
    \label{rho1_2}
  \end{align}
  To show \cref{rho1_2}, we use the assumption of ergodicity of $\widehat{P}$.
  Since the Hamiltonian is local and translation invariant, we have
  \begin{align}
    \hat H_n = \sum_{i \in \Lambda_\ell} T_i  (\hat  h_0)\ ,
    \label{Hamiltonian_shift}
  \end{align}
  where $T_i$ is the shift operator.
  Then, denoting by $\operatorname{Proj}\{ \cdots \}$ the projection operator onto a subspace
  satisfying the corresponding condition, we have
  \begin{align}
    \hat\Pi_{\widehat{P}|\widehat{\Sigma}^{\Box}, n}^\varepsilon=  \operatorname{Proj}\left\{ \frac{1}{n} \sum_{i \in \Lambda_\ell}  T_i ( \hat h_0) - \frac{ F_n}{n}
      \in [h-f-\varepsilon, h-f+\varepsilon]\right\}\ ,
  \end{align}
  where
  $h := \lim_{n \to \infty}\frac{1}{n} \operatorname{tr}\bigl[\hat\rho_n \hat H_n\bigr]$ and
  $f := \lim_{n \to \infty} \frac{F_n}{n}$.  For sufficiently large $n$, we have
  $\lvert {\frac{F_n}{n} - f}\rvert  < \frac{\varepsilon}{2}$, and therefore,
  \begin{align}
    \hat \Pi_{\widehat{P}|\widehat{\Sigma}^{\Box}, n}^\varepsilon\geqslant\operatorname{Proj}\left\{  \frac{1}{n} \sum_{i \in \Lambda_\ell}  T_i ( \hat h_0)
    \in [h-\varepsilon/ 2, h+\varepsilon/2 ]\right\}\ .
  \end{align}
  By definition of ergodicity, observables of the form~\eqref{Hamiltonian_shift}
  converge in probability; we have proven \cref{rho1_2}.
\end{proof}

The above proof reduces to the main theorem of Ref.~\cite{Bjelakovic2004CMP_ergodic} in the
special case where $\widehat{\Sigma}^{\Box}$ is i.i.d.\@, i.e.\@, if the system has a
strictly local Hamiltonian with no interaction terms ($r=0$).

Finally, we can ask whether the same theorem holds also for the sequence
$\widehat{\Sigma}$ of reduced states of the full Gibbs state on the infinite lattice.
We show that this is indeed the case.  Because this theorem requires a rigorous
formulation in terms of $C^*$-algebras, we defer the precise claim and proof to
\cref{quantum_theorem_main2} in \cref{appx:Cstar-algebras-construction} .

\subsection{Remarks on ergodicity, mixtures, and the
  KL divergence}

\subsubsection{The mixing property}
A local Gibbs state with a mixing (or clustering) property is ergodic.  However,
we emphasize that the converse is false; ergodicity does not necessarily imply
that the state can be written as a Gibbs state of a local Hamiltonian.

\begin{definition}[Mixing]
  Let $T_{(k)} := T_{(0, \cdots, 0, 1, 0, \cdots, 0)}$ be the shift operator
  corresponding to the one-step shift to the $k$-th direction
  ($k=1,2, \cdots, d$).  A sequence $\widehat{P}$ has the mixing (or clustering)
  property, if it satisfies the consistency condition \eqref{consistency}, and
  if for all $\hat A, \hat B \in \mathcal A_\Lambda$ with $\Lambda$ being
  bounded and if for all $k$, we have
  \begin{align}
    \lim_{m \to \infty} \operatorname{tr}\mathopen{}\left[ \hat\rho_n T_{(k)}^{m}( \hat A) \, \hat B \right]
    = \lim_{n \to \infty}\operatorname{tr}\mathopen{}\left[ \hat\rho_n \hat A \right]
    \operatorname{tr}\mathopen{}\left[ \hat\rho_n\hat B \right]\ ,
    \label{mixing}
  \end{align}
  where $n = (2\ell+1)$ on the left-hand side is taken such that the supports of
  $T_{(k)}^{m}( \hat A)$ and $\hat B$ are included in $\Lambda_\ell$.
\label{def_mixing}
\end{definition}

The equivalence of this definition and the standard definition is proven in
\cref{appx:Cstar-algebras-construction}.  It is well-known that mixing implies
ergodicity (cf.\@ Ref.~\cite{BookRuelle_StatMechRigorous}):

\begin{proposition}
  Any translation-invariant and mixing state is ergodic. 
\end{proposition}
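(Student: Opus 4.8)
The plan is to verify the variance condition of \cref{defn:ergodic} directly. Fix a bounded region $\Lambda$ and a self-adjoint $\hat A \in \mathcal A_\Lambda$, and write $\langle \hat A\rangle := \lim_n \tr[\rhostate_n \hat A]$ (well defined and $n$-independent for large $n$ by consistency and translation invariance). Set $\hat A_m := (2m+1)^{-d}\sum_{i\in\Lambda_m} T_i(\hat A)$. Since $\tr[\rhostate_n \hat A_m] = \langle\hat A\rangle$ by translation invariance, the target identity $\tr[\rhostate_n \hat A_m^2]\to\langle\hat A\rangle^2$ is exactly the statement that the variance of $\hat A_m$ vanishes, i.e.\ that $\hat A_m\to\langle\hat A\rangle$ in the GNS Hilbert space associated with $\rhoseq$.

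First I would expand the square and use translation invariance to collapse the double sum onto the difference variable $\ell = i-j$. Writing the connected correlation $\tilde C(\ell) := \tr[\rhostate_n\, T_\ell(\hat A)\,\hat A] - \langle\hat A\rangle^2$ (again $n$-independent for large $n$), one obtains
\[
 \tr[\rhostate_n \hat A_m^2] - \langle\hat A\rangle^2 = \sum_{\ell} w_m(\ell)\,\tilde C(\ell), \qquad w_m(\ell) = \prod_{k=1}^d \frac{\max\{0,(2m+1)-|\ell_k|\}}{(2m+1)^2},
\]
where the Fejér-type weights are nonnegative and satisfy $\sum_\ell w_m(\ell)=1$ together with the uniform bound $w_m(\ell)\le (2m+1)^{-d}$. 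The mechanism is then the standard Cesàro argument: if $\tilde C(\ell)\to 0$ as $\|\ell\|_\infty\to\infty$, then for any $\epsilon>0$ choose $L$ with $|\tilde C(\ell)|<\epsilon$ for $\|\ell\|_\infty>L$; the contribution of the region $\|\ell\|_\infty\le L$ is at most $(2L+1)^d\,(2m+1)^{-d}\,\sup_\ell|\tilde C(\ell)|\to 0$, while the remainder is bounded by $\epsilon$, so the whole weighted sum vanishes in the limit, giving ergodicity.

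The main obstacle is to deduce the decay $\tilde C(\ell)\to 0$ as $\|\ell\|_\infty\to\infty$ from the mixing hypothesis, which as stated in \cref{def_mixing} only asserts decay along the coordinate axes (through the iterated one-step shifts $T_{(k)}^m$). Taking $\hat B=\hat A$ and fixing all coordinates but the $k$-th gives $\tilde C(\ell)\to 0$ when a single coordinate $\ell_k\to\infty$, but not obviously when several coordinates grow simultaneously, which is precisely where $w_m$ concentrates its mass. The cleanest resolution is to pass to the GNS picture and argue at the level of the unitary representation $U_i$ implementing the shifts with $U_i\Omega=\Omega$: mixing in direction $1$ says the matrix coefficients of $U_{(1)}^m$ converge to those of the rank-one projection onto $\Omega$, so the single unitary $U_{(1)}$ admits $\Omega$ as its only invariant vector; since any vector invariant under the whole group $\{U_i\}$ is in particular $U_{(1)}$-invariant, the invariant subspace of the $\mathbb Z^d$-action reduces to $\mathbb C\,\Omega$, and the multidimensional mean ergodic theorem then returns exactly the variance condition above. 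Equivalently, one may simply invoke the standard operator-algebraic fact that a mixing translation-invariant state is an extremal invariant state, hence ergodic, after translating \cref{def_mixing} and \cref{defn:ergodic} into the $C^\ast$-algebraic language of \cref{appx:Cstar-algebras-construction} (cf.\ Ref.~\cite{Ruelle1999}).
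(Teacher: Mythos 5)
Your final argument is correct, and it is worth noting that the paper itself offers no proof of this proposition at all --- it simply cites Ref.~\cite{Ruelle1999} for the standard fact --- so you have in effect supplied the proof behind that citation. The valuable part of your write-up is the self-correction in the last paragraph: you are right that the direct Ces\`aro/Fej\'er computation on the connected correlation $\tilde C(\ell)$ does \emph{not} close under \cref{def_mixing}, which only gives decay of $\tilde C$ along the coordinate axes (one coordinate $\to\infty$, the others bounded), whereas the Fej\'er weights $w_m$ put essentially all their mass on $\ell$ with several coordinates large. The GNS resolution is the standard and correct one: mixing in a single direction forces the fixed subspace of $U_{(1)}$ to be $\mathbb C\,\Omega$ (take $\psi$ invariant, then $\|\psi\|^2=\lim_m\langle U_{(1)}^m\psi,\psi\rangle=\abs{\langle\psi,\Omega\rangle}^2$, with Cauchy--Schwarz forcing $\psi\parallel\Omega$), the joint fixed subspace of the $\mathbb Z^d$-action is contained in it and contains $\Omega$, and the multidimensional mean ergodic theorem over the cubes $\Lambda_m$ (obtained, e.g., by composing the $d$ commuting one-dimensional averages) gives $\pi(\hat A_m)\Omega\to\Psi(\hat A)\,\Omega$, i.e.\ exactly the variance condition of \cref{defn:ergodic}. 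Two small points to make explicit if you write this up: the weak convergence $U_{(1)}^m\to\proj{\Omega}$ is first obtained on the dense subspace $\pi(\mathcal A_{\mathrm{loc}})\Omega$ and extended by uniform boundedness, and the consistency condition~\eqref{consistency} is what guarantees the existence of the global state $\Psi$ whose GNS representation you are using.
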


For local operators and the Gibbs state of a local and translation-invariant
Hamiltonian, a stronger property called the exponential clustering property has
been proven for any $\beta > 0$ in one dimension~\cite{Araki1969CMP_Gibbs} and in higher
dimensions for sufficiently high temperature (see, for example,
Ref.~\cite{Tasaki2018JSP_equivalence} and references therein).  Therefore, the quantum Stein's
lemma is proved for two local Gibbs states $\widehat{P}$ and $\widehat{\Sigma}$ at least
for sufficiently high temperature.

\subsubsection{Mixtures of ergodic states}

Consider now the situation in which the state is a mixture of different ergodic
states.  In this setting, ergodicity is broken, and the existence of a
thermodynamic potential is no longer guaranteed.

Let $\widehat{P}^{(k)} := \{ \hat\rho_n^{(k)} \}$ be ergodic states
($k=1,2, \cdots, K < \infty$), and consider their mixture
$\widehat{P}:= \{ \hat\rho_n \}$ with
$\hat\rho_n := \sum_k r_k \hat\rho_n^{(k)}$, where $r_k > 0$ and
$\sum_k r_k = 1$.  We continue to suppose that $\widehat{\Sigma}$ is given by the Gibbs
state of a local and translation-invariant Hamiltonian.  In this setting, we can
show that the min and max divergences are given by the minimal and maximal value
of the KL divergence of the states in the mixture, respectively:

\begin{lemma}
  \label{non_ergodic_KL}
  The spectral divergence rates are split as
  \begin{align}
    {\underline{S}}(\widehat{P}\,\Vert\,\widehat{\Sigma})
    &= \min_k \{ {S}_{1}(\widehat{P}^{(k)}\,\Vert\,\widehat{\Sigma}) \}\ ;
    &
      {\overline{S}}(\widehat{P}\,\Vert\,\widehat{\Sigma})
    &= \max_k \{ {S}_{1}(\widehat{P}^{(k)}\,\Vert\,\widehat{\Sigma}) \}\ ,
    \label{non_ergodic_KLeq}
  \end{align}
  while the KL divergence rate is given by
  \begin{equation}
    {S}_{1}(\widehat{P}\,\Vert\,\widehat{\Sigma})
    = \sum_k r_k {S}_{1}(\widehat{P}^{(k)}\,\Vert\,\widehat{\Sigma})\ .
    \label{mix_KL}
  \end{equation}
\end{lemma}

\begin{proof}
  \Cref{non_ergodic_KLeq} immediately follows from \cref{non_ergodic_lemma} in
  \cref{appx:lemmas}.  To prove~\eqref{mix_KL}, we note that
  $-\operatorname{tr}\bigl[\hat\rho^{(k)}_n \ln \hat\sigma_n \bigr]$ is additive with
  respect to $k$, and thus we only need to show
  ${S}_{1}(\widehat{P}) = \sum_k r_k {S}_{1}(\widehat{P}^{(k)})$.  This in turn follows from
  the fact that the von Neumann entropy satisfies the following inequalities,
  $\sum_k r_k {S}_{1}(\hat\rho_n^{(k)}) \leqslant{S}_{1}(\hat\rho_n) \leqslant\sum_k r_k
  {S}_{1}(\hat\rho_n^{(k)}) + {S}_{1}(\{r_k\})$.
\end{proof}

\subsubsection{The role of the KL divergence for the thermodynamic
  potential}
Usually, we have that if the min and max divergences coincide, then the limiting
values coincide with the limiting value of the KL divergence.
This is because in usual cases, the asymptotic divergences obey
\begin{align}
  {\underline{S}}(\widehat{P}\,\Vert\,\widehat{\Sigma})
  \leqslant{S}_{1}(\widehat{P}\,\Vert\,\widehat{\Sigma})
  \leqslant{\overline{S}}(\widehat{P}\,\Vert\,\widehat{\Sigma})\ .
  \label{S_inequality}
\end{align}
Indeed, this inequality follows in usual cases from the fact that
${S}_{0}(\hat\rho\,\Vert\,\hat\sigma) \leqslant{S}_{1}(\hat\rho\,\Vert\,\hat\sigma) \leqslant{S}_{\infty}(\hat\rho\,\Vert\,\hat\sigma)$ combined with a continuity argument of the
KL divergence in $\hat\rho$ which ensures the inequality persists
after smoothing with $\varepsilon>0$.  Indeed, for
$D(\hat\rho', \hat\rho)\leqslant\varepsilon$, we have
$\lvert {{S}_{1}(\hat\rho\,\Vert\,\hat\sigma) - {S}_{1}(\hat\rho'\,\Vert\,\hat\sigma)}\rvert  \leqslant\lvert {{S}_{1}(\hat\rho) - {S}_{1}(\hat\rho')}\rvert  + 2\varepsilon\lVert {\ln\hat\sigma}\rVert _{\infty}$,
where the first term can be bounded using the Fannes-Audenaert
inequality~\cite{Fannes1973CMP_continuity,Audenaert2007JPA_sharp} and where the
second term behaves as $O(\varepsilon n)$ as long as $\lVert {\ln\hat\sigma}\rVert _{\infty}$ is
at most linear in $n$.  In this case,
$\frac1n{S}_{0}^{\varepsilon}(\hat\rho_n\,\Vert\,\hat\sigma_n) \leqslant\frac1n{S}_{1}(\hat\rho_n\,\Vert\,\hat\sigma_n) + O(\varepsilon)$ and
$\frac1n{S}_{1}(\hat\rho_n\,\Vert\,\hat\sigma_n) - O(\varepsilon) \leqslant\frac1n{S}_{\infty}^{\varepsilon}(\hat\rho_n\,\Vert\,\hat\sigma_n)$, which ensures
that~\eqref{S_inequality} holds.
Notably, while this is the case in most usual settings such as the one
considered in the present paper, this continuity argument does not hold in
general for arbitrary sequences of states and operators.

As a simple toy example, consider a two-level system with states $\lvert {0}\rangle ,\lvert {1}\rangle $,
fix an inverse temperature $\beta>0$, and let $\{\varepsilon_n\}$ be a sequence of
small positive nonzero reals with $\lim_{n\to\infty} \varepsilon_n = 0$.  We
consider the sequence of states $\widehat{P}$ with
$\hat\rho_n = \varepsilon_n \lvert {1}\rangle\hspace*{-0.25ex}\langle{1}\rvert  + (1-\varepsilon_n) \lvert {0}\rangle\hspace*{-0.25ex}\langle{0}\rvert $ and a sequence of
Hamiltonians $\widehat{\mathcal{H}}$ with $\hat{H}_n = (n/\varepsilon_n)\lvert {1}\rangle\hspace*{-0.25ex}\langle{1}\rvert $.  (The sequence is
defined on a single copy of the Hilbert space; it is straightforward to embed
these operators in $\mathscr{H}^{\otimes n}$, though perhaps not in a local and
translation-invariant way.)  The corresponding sequence $\widehat{\Sigma}$ of Gibbs
weights is
$\hat\sigma_n = e^{-\beta \hat{H}_n} = e^{-(\beta n/\varepsilon_n)}\lvert {1}\rangle\hspace*{-0.25ex}\langle{1}\rvert  +
(\hat{I}-\lvert {1}\rangle\hspace*{-0.25ex}\langle{1}\rvert )$.  We can calculate
\begin{align}
  {S}_{1}(\widehat{P}\,\Vert\,\widehat{\Sigma})
  = \lim_{n\to\infty}
  \frac1n{S}_{1}(\hat\rho_n\,\Vert\,\hat\sigma_n)
  = \lim_{n\to\infty} \left\{
  - \frac1n{S}_{1}(\hat\rho_n) + \frac{\beta}{\varepsilon_n}\, \operatorname{tr}\bigl[\hat\rho\,\lvert {1}\rangle\hspace*{-0.25ex}\langle{1}\rvert \bigr]
  \right\}
  =  \beta\ .
\end{align}
For the min divergence and for any $\varepsilon>0$ we have
\begin{align}
  \frac1n {S}_{0}^{\varepsilon}(\hat\rho_n\,\Vert\,\hat\sigma_n) \geqslant\frac1n {S}_{0}(\hat\rho_n\,\Vert\,\hat\sigma_n) =
  -\frac1n\ln\bigl(1+e^{-\beta n/\varepsilon_n}\bigr) \geqslant\frac1n\ln(2)
  \ \xrightarrow{n\to\infty}\ 0\ ,
\end{align}
and hence ${\underline{S}}(\widehat{P}\,\Vert\,\widehat{\Sigma}) \geqslant0$.
On the other hand, for any $\varepsilon>0$ we have that $\varepsilon_n\leqslant\varepsilon$
for $n$ large enough; then for $n$ large enough,
$D\bigl(\lvert {0}\rangle\hspace*{-0.25ex}\langle{0}\rvert ,\hat\rho_n\bigr)\leqslant\varepsilon$ and
\begin{align}
  \frac1n {S}_{\infty}^{\varepsilon}(\hat\rho_n\,\Vert\,\hat\sigma_n)
  \leqslant\frac1n \ln\,\bigl\lVert {\hat\sigma_n^{-1/2}\, \lvert {0}\rangle\hspace*{-0.25ex}\langle{0}\rvert \, \hat\sigma_n^{-1/2} }\bigr\rVert _{\infty} = 0\ ,
\end{align}
and hence ${\overline{S}}(\widehat{P}\,\Vert\,\widehat{\Sigma}) \leqslant0$.  Finally,
recalling~\eqref{eq:min-max-asympt-divergence-rates-ordered}, we find
\begin{align}
  \label{eq:1}
  {\underline{S}}(\widehat{P}\,\Vert\,\widehat{\Sigma}) = {\overline{S}}(\widehat{P}\,\Vert\,\widehat{\Sigma}) = 0\ .
\end{align}
Crucially, the operator $\hat\sigma_n$ has an eigenvalue that is at least exponentially
small in $n$, and $\lVert {\ln\hat\sigma}\rVert _{\infty}$ is superlinear in $n$.
This invalidates the usual continuity argument described above.  Having
$\lVert {\ln\hat\sigma}\rVert _{\infty}$ with such a behavior amounts to having a Hamiltonian
(such as $\hat{H}_n$ in our example) with an energy level that scales superlinearly
in $n$.  Physically, this means that the system does not have a sound
thermodynamic limit; in practice, for instance in the case of all-to-all
coupling, one prefers to normalize the full Hamiltonian to ensure a good
behavior in the thermodynamic limit.  Nevertheless, our toy example shows that
in full generality, the min- and max-divergences can collapse to a single value
and define a thermodynamic potential which does not coincide with the
KL divergence in the thermodynamic limit.

We emphasize that this issue does not appear in usual settings such as the one
considered in the present paper, where the energy is extensive.  Also, this
issue cannot appear with the spectral entropy rates (i.e., if
$\hat\sigma=\hat{I}$), because of the argument above, or alternatively, thanks
to Lemma~3 of Ref.~\cite{Bowen2006arXiv_arbitrary}.  
In those cases, the Kullback-Leibler divergence (or the von Neumann entropy
rate) is the relevant thermodynamic potential that emerges from the
reversibility of the resource theory.

\section{Discussion}
\label{sec:discussion}

Our results provide new insight on the role of ergodicity and typicality in
  many-body systems~\cite{Anshu2016NJP_concentration,Wilming2019PRL_entglergodic}.
Our two main theorems on one hand advance our understanding of the possible
interconversion of states with thermal operations and a limited source of
coherence, and on the other hand establish a generalized quantum Stein's lemma
for lattice systems with local and translation-invariant Hamiltonians.
Together, these theorems prove our main result, namely, that a thermodynamic
potential emerges in the resource theory of thermal operations for all ergodic
states in lattices with a translation-invariant local Hamiltonian.

\paragraph*{Thermal operations involving nonsemiclassical states.}  While
the possible state transformations under thermal operations are well understood
for semiclassical states thanks to the notion of
thermomajorization~\cite{Horodecki2013_ThermoMaj}, the picture becomes
significantly more involved if we consider states that present coherences
between energy
eigenspaces~\cite{Gour2018NatComm_entropic,Lostaglio2015PRX_coherence}.  The min-
and max-divergence no longer represent the distillable work and the work
cost of formation of a state, because in general one requires a suitable
reference frame to accurately carry out those
transformations~\cite{Bartlett2007_refframes,Korzekwa2016NJP_extraction,Gour2018NatComm_entropic,Popescu2018PTRSA_applications}.  Our
\cref{thm:asympt-equipartition-implies-asympt-TO-reversibility} shows, however,
that if the two divergences coincide approximately, then the coherences
that are present in the state are necessarily small in a suitable sense, such
that these transformations become approximately possible after all with only a
small reference frame.  In the thermodynamic limit, the size of the reference
frame becomes negligible.

Our theorem provides a conceptually clear characterization of which states can
be reversibly converted to the thermal state, and hence, for which class of
states the thermodynamic potential emerges.  Namely, approximately reversible
conversion to the thermal state is possible if and only if the min and max
divergences coincide approximately (although the error terms have to be
adjusted in each direction of the proof).

We resort to a crude metric for the amount of coherence that was used in a
process: We allow the use of an ancilla whose Hamiltonian is suitably bounded.
Recently, more refined methods of accounting for coherence have been introduced,
such as via coherent work~\cite{HindsMingo2019Qu_superpositions} or with a
more traditional resource-theoretic
approach~\cite{Marvian2020NC_distillation}.  Using an improved measure of
coherence would allow to clarify the amount of coherence used in the processes
of \cref{thm:asympt-equipartition-implies-asympt-TO-reversibility}.

One could ask for a characterization of which classes of states can be
reversibly converted into one another, without being necessarily reversibly
convertible to the thermal state.  Consider for instance two states with the
same spectrum that is not uniform, both living within a fixed energy subspace:
They can be related by an energy-conserving unitary, but they cannot be
reversibly converted to the thermal state.  In this paper, we have adopted the
convention that a thermodynamic potential should be well defined for the thermal
state itself.  Curiously however, it is also possible to define some kind of
``alternative thermodynamic potentials'' for such classes of states which cannot
include the thermal state.  It is not clear to us what the physical relevance of
such classes of states would be.

We also note that ergodic states have off-diagonal elements that vanish
exponentially, similarly to the behavior encountered in states obeying the eigenstate
thermalization hypothesis
(\cref{lemma:large-coherences-suppressed-if-approximate-equipartition} combined
with~\cref{quantum_theorem_main2t}).  It is then natural to ask whether there
are properties of states that obey the eigenstate thermalization hypothesis
(such as error-correcting properties~\cite{Brandao2019PRL_chainAQECC}) that
can be carried over to ergodic states.

\paragraph*{Asymptotic Equipartition, the Shannon-McMillan theorem, and Stein's lemma} 
The classical Shannon-McMillan
theorem along with its quantum
counterparts provide a collection of AEP statements that play 
an important role in information
theory, statistics, and statistical physics, where ergodic processes are
naturally encountered.  
Because of the stark formal differences between the quantum and the classical
definitions of Markovianity, the quantum versions of these AEP
theorems do not follow directly from their classical counterparts.  Building on
earlier proofs of the quantum Shannon-McMillan
theorem~\cite{Bjelakovic2004IM_lattice,Bjelakovic2005QIP_compression,Ogata2013LMP_shannonmcmillan} and a relative AEP
theorem with respect to product states~\cite{Bjelakovic2004CMP_ergodic}, we finally provide
the full quantum version of the classical relative AEP theorem mentioned above,
which applies to ergodic states relative to Gibbs states of a local Hamiltonian.

A main component of the proof of our main result is a generalized version of
Stein's lemma which is tightly related to the proof techniques of
Ref.~\cite{Bjelakovic2003arXiv_revisted}.  Namely, it suffices to find an
operator obeying a set of simple conditions to conclude that the min and max
divergences collapse, which can be seen partly thanks to ideas from semidefinite
programming~\cite{Dupuis2013_DH,Tomamichel2013_hierarchy}.
By constructing suitable typical projectors using the ergodicity property of the
state, our \cref{quantum_theorem_main2t} exploits this characterization and
provides a new version of Stein's lemma.  The latter applies to situations
beyond i.i.d.\@ states, since we may consider any ergodic state with respect to
any Gibbs state that arises from a local Hamiltonian.

Crucially, the states we consider are spatially ergodic, rather than ergodic
with respect to time evolution.  Spatially ergodic states can have a nontrivial
time evolution, even producing significant changes of macroscopic quantities in
time~\cite{Faist2019PRL_macroscopic}.
Importantly, this shows that one can define a thermodynamic potential that has a
operational interpretation even for certain states that are not in thermodynamic
equilibrium.

By endowing a new class of states with a rigorous, well-justified thermodynamic
potential, one may ask whether or not it is possible to find even larger classes
of states that can be reversibly converted into one another.  Thanks to \cref{non_ergodic_KL}, the thermodynamic potential also emerges for
all finite mixtures of ergodic states with the same thermodynamic potential.
Whether there are more translation-invariant states that have a well-defined
thermodynamic potential in the sense of the present paper is an open question.

One may ask whether or not our results could be generalized to systems that
violate translation-invariance.  It might be possible to treat a weak violation
by adapting the present argument with a suitable control of the relevant error
terms.  For systems that are fundamentally not translation-invariant, one could
instead ask whether ideas from entropy accumulation could be leveraged to prove
bounds on the min and max spectral rates in the thermodynamic limit, using local
properties of the state (or of the local process that generates the state)
rather than symmetry
considerations~\cite{Dupuis2020CMP_accumulation,Dupuis2019IEEETIT_improved}.
Conversely, insights gained from the behavior of the spectral rates in
statistical mechanical systems might provide new ways of proving more general
entropy accumulation theorems which might involve the divergence, the
mutual information, or a channel capacity.

A further natural extension of our work would be to lift our results from
transformations of quantum states to transformations of quantum channels, in
line of the results of Ref.~\cite{Faist2019PRL_thcapacity}.  Can non-i.i.d.\@
quantum channels that have a suitable ergodic property be reversibly converted
into one another?

The quantum Shannon-McMillan theorem moreover holds in a more general and
abstract operator algebra context~\cite{Ogata2013LMP_shannonmcmillan}.  We might
expect that additional AEP results in such settings can be shown using ideas put
forward in the present paper.

Finally, one could attempt to further characterize the min and max divergence
rates in natural situations where they do not coincide.  These quantities are
known to bound any extension of the thermodynamic potential outside of the set
of reversibly interconvertible states~\cite{Lieb2013_entropy_noneq}, and as
such, the interval $[{\underline{S}}(\widehat{P}\,\Vert\,\widehat{\Sigma}),{\overline{S}}(\widehat{P}\,\Vert\,\widehat{\Sigma})]$
provides a ``best possible characterization'' of the thermodynamic behavior of
such states that takes into account the fluctuations in thermodynamic quantities
that persist in the thermodynamic limit.  We expect this to be the case, for
instance, for many-body-localized states, or for states at critical points immediately before spontaneous symmetry breaking. 
The techniques put
forward in the present paper might help derive bounds in such cases, which,
while falling short of a collapse of the min and max divergences, would still
provide a useful characterization for a greater class of states that are far out
of equilibrium.

\begin{acknowledgments}
  The authors are grateful to
  Hiroyasu Tajima, Yoshiko Ogata
  and
  Matteo Lostaglio for valuable discussions.
  TS is supported by JSPS KAKENHI Grant Number JP16H02211 and JP19H05796.
  PhF is supported by the Institute for Quantum Information and Matter (IQIM) at
  Caltech which is a National Science Foundation (NSF) Physics Frontiers Center
  (NSF Grant~{PHY}-{1733907}), from the Department of Energy
  Award~{DE}-{SC0018407}, from the Swiss National Science Foundation (SNSF) via
  the NCCR QSIT and project No.~{200020\_165843}, and from the Deutsche
  Forschungsgemeinschaft (DFG) Research Unit {FOR}~{2724}.
  KK is supported by the Institute for Quantum Information and Matter
  (IQIM) at Caltech which is a National Science Foundation (NSF) Physics
  Frontiers Center (NSF Grant PHY-1733907).
  FB is is supported by the NSF.
\end{acknowledgments}

\appendix

\section{General technical lemmas}
\label{appx:lemmas}

The following \emph{gentle measurement lemma} states that a measurement effect
that is almost certain to appear does not disturb the state
much~\cite{Winter1999IEEETIT_coding,Ogawa2007IEEETIT_making}.

\begin{proposition}
  \label{gentle_lemma}
  For a state $\hat\rho$ and any operator with $0 \leqslant\hat Q \leqslant\hat{I}$, if
  $\operatorname{tr}[\hat\rho\hat Q] \geqslant1-\varepsilon$, then
  \begin{equation}
    \bigl\lVert { \hat\rho- \hat Q^{1/2}\, \hat\rho\, \hat Q^{1/2} }\bigr\rVert _{1}
    \leqslant2 \sqrt{2 \varepsilon}\ .
  \end{equation}
\end{proposition}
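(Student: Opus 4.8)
The plan is to reduce the statement to a lower bound on the fidelity between $\rhostate$ and the post-measurement operator $\sigmastate := \hat Q^{1/2}\rhostate\hat Q^{1/2}$, and then to invoke the Fuchs--van de Graaf relation between trace distance and fidelity. Note first that $\sigmastate$ is subnormalized with $\tr[\sigmastate] = \tr[\rhostate\hat Q]\geq 1-\epsilon$ by cyclicity of the trace. The crux is to establish
\[
  F(\rhostate,\sigmastate)\geq 1-\epsilon .
\]

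First I would prove this fidelity bound. The key elementary observation is that $\hat Q^{1/2}\geq\hat Q$ whenever $0\leq\hat Q\leq\Ident$, since $\hat Q^{1/2}-\hat Q = \hat Q^{1/2}(\Ident-\hat Q^{1/2})\geq 0$. To handle the mixed state $\rhostate$, I would purify it to a pure state $\ket{\psi}_{AR}$ on a doubled system, so that $\sigmastate = \tr_R\bigl[(\hat Q^{1/2}\otimes\Ident)\,\proj{\psi}\,(\hat Q^{1/2}\otimes\Ident)\bigr]$. Since the fidelity does not decrease under the partial trace (a CPTP map), and since for a pure state one has the exact identity $F(\proj{\psi}, M\proj{\psi}M)=\matrixel{\psi}{M}{\psi}$ for any $M\geq 0$, I would obtain
\[
  F(\rhostate,\sigmastate)\geq \matrixel{\psi}{\hat Q^{1/2}\otimes\Ident}{\psi}
  = \tr[\rhostate\,\hat Q^{1/2}] \geq \tr[\rhostate\,\hat Q]\geq 1-\epsilon ,
\]
where the penultimate step uses the operator inequality $\hat Q^{1/2}\geq\hat Q$.

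With the fidelity bound in hand, I would close the argument using the Fuchs--van de Graaf inequality in the form $D(\rhostate,\sigmastate)\leq\sqrt{1-F(\rhostate,\sigmastate)^2}$, i.e.\ the bound between generalized trace distance and purified distance already used in the proof of \cref{prop:approx-equipartition-implies-reversibility-under-TO}. Because $\rhostate$ is normalized, the generalized fidelity coincides with $F$ and $D(\rhostate,\sigmastate)\geq\tfrac12\onenorm{\rhostate-\sigmastate}$; combining these gives
\[
  \tfrac12\onenorm[\big]{\rhostate-\hat Q^{1/2}\rhostate\hat Q^{1/2}}
  \leq \sqrt{1-(1-\epsilon)^2} = \sqrt{2\epsilon-\epsilon^2} \leq \sqrt{2\epsilon},
\]
which is the claimed bound $2\sqrt{2\epsilon}$.

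I expect the fidelity lower bound to be the main obstacle, precisely because $\sigmastate$ is only subnormalized: one must ensure that the fidelity and the Fuchs--van de Graaf inequality are applied in their forms valid for subnormalized operators. This subtlety is harmless here since $\rhostate$ has unit trace, which forces the generalized fidelity to reduce to the ordinary fidelity. As an alternative that sidesteps fidelity entirely, one could expand $\rhostate-\hat Q^{1/2}\rhostate\hat Q^{1/2} = (\Ident-\hat Q^{1/2})\rhostate + \rhostate(\Ident-\hat Q^{1/2}) - (\Ident-\hat Q^{1/2})\rhostate(\Ident-\hat Q^{1/2})$, bound each term with H\"older's inequality together with $(\Ident-\hat Q^{1/2})^2\leq\Ident-\hat Q$, and arrive at the closely related estimate $\onenorm{\rhostate-\hat Q^{1/2}\rhostate\hat Q^{1/2}}\leq 2\sqrt\epsilon+\epsilon$; the fidelity route is preferable as it reproduces the stated constant directly.
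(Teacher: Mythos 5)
Your proof is correct. The paper itself gives no proof of this proposition --- it is stated as the known gentle measurement lemma with citations to Winter and to Ogawa--Nagaoka --- and your argument (purify $\rhostate$, use $F\bigl(\proj{\psi},(\hat Q^{1/2}\otimes\Ident)\proj{\psi}(\hat Q^{1/2}\otimes\Ident)\bigr)=\matrixel{\psi}{\hat Q^{1/2}\otimes\Ident}{\psi}\geq\tr[\rhostate\hat Q]$ via $\hat Q^{1/2}\geq\hat Q$, then monotonicity of fidelity under partial trace and Fuchs--van de Graaf) is precisely Winter's original derivation of the constant $2\sqrt{2\epsilon}$. Your handling of the subnormalization is sound: since $\tr\rhostate=1$ the generalized fidelity reduces to $F$, and $\tfrac12\onenorm{\rhostate-\sigmastate}\leq D(\rhostate,\sigmastate)\leq\sqrt{1-F^2}$ goes through. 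Your H\"older-based alternative giving $2\sqrt\epsilon+\epsilon$ is also valid and is essentially the Ogawa--Nagaoka-style sharpening; either route suffices for the stated bound.
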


The following technical lemmas provide a few variations around the gentle
measurement lemma, dealing with operators that capture most of the weight of a
state.

\begin{lemma}
  \label{Philippe_lemma6}
  Let $\hat Q$ and $\hat Q'$ be projectors.  Suppose that a state $\hat\rho$
  satisfies $\operatorname{tr}\bigl[\hat Q \hat\rho\bigr] \geqslant1-\varepsilon$ and
  $\operatorname{tr}\bigl[\hat Q' \hat\rho\bigr] \geqslant1-\varepsilon'$ for $\varepsilon> 0$,
  $\varepsilon' > 0$.  Then,
  \begin{align}
    \operatorname{Re}\bigl(\operatorname{tr}\bigl[\hat Q \hat Q' \hat\rho\bigr]\bigr)
    \geqslant1 - \varepsilon- \sqrt{\varepsilon'}\ .
    \label{Philippe_lemma6_eq}
  \end{align}
\end{lemma}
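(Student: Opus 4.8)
The plan is to isolate the ``cross term'' that measures the overlap of $\hat Q$ with the complementary projector of $\hat Q'$, and to control it with a single Cauchy--Schwarz estimate. First I would set $\hat P' := \Ident - \hat Q'$, which is again a projector and satisfies $\tr[\hat P'\rhostate] = 1 - \tr[\hat Q'\rhostate] \leq \epsilon'$. Expanding $\hat Q' = \Ident - \hat P'$ gives
\begin{align}
  \tr[\hat Q\hat Q'\rhostate] = \tr[\hat Q\rhostate] - \tr[\hat Q\hat P'\rhostate]\ .
\end{align}
Since $\tr[\hat Q\rhostate]$ is real (the adjoint of $\hat Q\rhostate$ has the same trace) and is bounded below by $1-\epsilon$ by hypothesis, it remains only to show that $\Re\bigl(\tr[\hat Q\hat P'\rhostate]\bigr) \leq \sqrt{\epsilon'}$.

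The main, and essentially only nontrivial, step is to bound this cross term. Factoring $\rhostate = \rhostate^{1/2}\rhostate^{1/2}$ and using cyclicity of the trace, I would write $\tr[\hat Q\hat P'\rhostate] = \tr[\hat X^\dagger \hat Y]$ with $\hat X := \hat Q\rhostate^{1/2}$ and $\hat Y := \hat P'\rhostate^{1/2}$. The Cauchy--Schwarz inequality for the trace then yields $\abs{\tr[\hat X^\dagger\hat Y]}^2 \leq \tr[\hat X^\dagger\hat X]\,\tr[\hat Y^\dagger\hat Y]$. Using the idempotence $\hat Q^2 = \hat Q$ and $\hat P'^2 = \hat P'$, one computes $\tr[\hat X^\dagger\hat X] = \tr[\hat Q\rhostate] \leq 1$ and $\tr[\hat Y^\dagger\hat Y] = \tr[\hat P'\rhostate] \leq \epsilon'$, so that $\abs{\tr[\hat Q\hat P'\rhostate]} \leq \sqrt{\epsilon'}$. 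Since the real part of a complex number is bounded by its modulus, this gives $\Re\bigl(\tr[\hat Q\hat P'\rhostate]\bigr) \leq \sqrt{\epsilon'}$.

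Combining the two estimates produces $\Re\bigl(\tr[\hat Q\hat Q'\rhostate]\bigr) = \tr[\hat Q\rhostate] - \Re\bigl(\tr[\hat Q\hat P'\rhostate]\bigr) \geq (1-\epsilon) - \sqrt{\epsilon'}$, which is exactly the claim. I do not anticipate any serious obstacle; the only point requiring care is the non-commutativity of $\hat Q$ and $\hat Q'$, which is precisely why one preserves the asymmetric ordering $\tr[\hat Q\hat P'\rhostate]$ throughout and splits $\rhostate^{1/2}$ symmetrically onto the two factors before invoking Cauchy--Schwarz, rather than attempting to symmetrize the product of projectors directly. Note also that the roles of $\epsilon$ and $\epsilon'$ enter asymmetrically here: the hypothesis on $\hat Q'$ supplies the factor $\sqrt{\epsilon'}$ through the squared inequality, whereas the hypothesis on $\hat Q$ enters linearly as $1-\epsilon$.
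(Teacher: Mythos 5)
Your proof is correct and follows the same route as the paper's: expand $\hat Q' = \Ident - \hat P'$, keep the term $\tr[\hat Q\rhostate]\geq 1-\epsilon$, and bound the cross term by $\sqrt{\epsilon'}$ via the Cauchy--Schwarz inequality applied to $\hat Q\rhostate^{1/2}$ and $\hat P'\rhostate^{1/2}$ (the paper states this step simply as ``the Schwarz inequality'' with the bound $\sqrt{\tr[\hat Q\rhostate]\,\tr[(\Ident-\hat Q')\rhostate]}$, which you merely spell out in more detail).
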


\begin{proof}
  We first note that
  \begin{align}
    \operatorname{Re}\operatorname{tr}\bigl[\hat Q \hat Q' \hat\rho\bigr]
    = \operatorname{tr}\bigl[\hat Q \hat\rho\bigr] - \operatorname{Re}\operatorname{tr}\bigl[\hat Q (\hat{I}- \hat Q' )\hat\rho\bigr]
    \geqslant1 - \varepsilon-  \operatorname{Re}\operatorname{tr}\bigl[\hat Q (\hat{I}- \hat Q' )\hat\rho\bigr]\ .
  \end{align}
  From the Schwarz inequality, we have
  \begin{equation}
    \operatorname{Re}\operatorname{tr}\bigl[\hat Q (\hat{I}- \hat Q' )\hat\rho\bigr]
    \leqslant\sqrt{\operatorname{tr}[\hat Q \hat\rho] \operatorname{tr}[(\hat{I}- \hat Q') \hat\rho]} \leqslant\sqrt{\varepsilon'}\ ,
  \end{equation}
  where we used that $\hat Q$ and $\hat{I}- \hat Q'$ are projectors.  Therefore,
  we obtain \cref{Philippe_lemma6_eq}.
\end{proof}

\begin{lemma}
  \label{Philippe_lemma7}
  Let $\hat W$ be an operator with $\lVert { \hat W }\rVert _{\infty} \leqslant1$.  Suppose that a
  subnormalized state $\hat\rho\in{\mathcal{S}_\leq}(\mathscr{H})$ satisfies
  $\operatorname{Re}\bigl(\operatorname{tr}\bigl[\hat W \hat\rho\bigr]\bigr) \geqslant1-\varepsilon$ with
  $\varepsilon> 0$.  Then, both following statements are true:
  \begin{enumerate}[label=(\alph*)]
  \item $\operatorname{tr}\bigl[\hat W^\dagger \hat W \hat\rho\bigr] \geqslant1-2\varepsilon$ and
    $\operatorname{tr}\bigl[\hat W \hat W^\dagger \hat\rho\bigr] \geqslant1-2\varepsilon$\,;
  \item
    $\operatorname{tr}\bigl[(\hat{I}- \hat W ) (\hat{I}- \hat W^\dagger) \hat\rho\bigr] \leqslant2\varepsilon$\,.
  \end{enumerate}
\end{lemma}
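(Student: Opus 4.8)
The plan is to derive all three inequalities by expanding manifestly positive (resp.\ bounded) quadratic forms in $\hat W$ evaluated against $\rhostate$, using only the three facts available: $\rhostate\geq0$ with $\tr[\rhostate]\leq1$, the operator bound $\opnorm{\hat W}\leq1$ (equivalently $\hat W^\dagger\hat W\leq\Ident$ and $\hat W\hat W^\dagger\leq\Ident$), and the hypothesis $\Re(\tr[\hat W\rhostate])\geq1-\epsilon$. First I would record two preliminary observations. Since $\rhostate$ is Hermitian, $\tr[\hat W^\dagger\rhostate]=\overline{\tr[\hat W\rhostate]}$, so that $\tr[\hat W\rhostate]+\tr[\hat W^\dagger\rhostate]=2\Re(\tr[\hat W\rhostate])\geq2(1-\epsilon)$; this is what lets the hypothesis on the real part enter cleanly. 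Second, for any operator $A$ and any $\rhostate\geq0$ one has $\tr[A^\dagger A\,\rhostate]\geq0$ and $\tr[AA^\dagger\rhostate]\geq0$, as is seen by writing $\rhostate$ in its eigenbasis and recognizing each summand as $p_i\norm{A\ket{\psi_i}}^2\geq0$ (resp.\ with $A^\dagger$).

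For the first inequality of part~(a), I would apply this positivity with $A=\Ident-\hat W$, so that $A^\dagger A=(\Ident-\hat W^\dagger)(\Ident-\hat W)$ contains the term $\hat W^\dagger\hat W$. Expanding gives
\[
0\leq\tr\bigl[(\Ident-\hat W^\dagger)(\Ident-\hat W)\rhostate\bigr]
=\tr[\rhostate]-2\Re(\tr[\hat W\rhostate])+\tr[\hat W^\dagger\hat W\rhostate]\ ,
\]
and rearranging, together with $\tr[\rhostate]\leq1$ and the hypothesis, yields $\tr[\hat W^\dagger\hat W\rhostate]\geq2(1-\epsilon)-1=1-2\epsilon$. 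The second inequality of~(a) follows identically upon using instead $A=\Ident-\hat W$ paired the other way, i.e.\ expanding $(\Ident-\hat W)(\Ident-\hat W^\dagger)\geq0$, which replaces $\hat W^\dagger\hat W$ by $\hat W\hat W^\dagger$ throughout.

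For part~(b), I would note that the quantity to be bounded is exactly the form just written for the second inequality of~(a), namely $\tr[(\Ident-\hat W)(\Ident-\hat W^\dagger)\rhostate]=\tr[\rhostate]-2\Re(\tr[\hat W\rhostate])+\tr[\hat W\hat W^\dagger\rhostate]$; now I would use the same three facts but in the opposite direction. Applying $\tr[\rhostate]\leq1$, $\Re(\tr[\hat W\rhostate])\geq1-\epsilon$, and $\tr[\hat W\hat W^\dagger\rhostate]\leq\tr[\rhostate]\leq1$ (the latter from $\hat W\hat W^\dagger\leq\Ident$ and $\rhostate\geq0$), one concludes that the form is at most $1-2(1-\epsilon)+1=2\epsilon$.

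There is no genuine obstacle; the only points requiring care are bookkeeping. One must track that $\rhostate$ is merely subnormalized, so that $\tr[\rhostate]\leq1$ is invoked with the correct sign in each direction (as a lower-bounding slack in~(a) and an upper bound in~(b)), and one must pair $\hat W^\dagger\hat W$ versus $\hat W\hat W^\dagger$ with the positivity of $(\Ident-\hat W^\dagger)(\Ident-\hat W)$ versus $(\Ident-\hat W)(\Ident-\hat W^\dagger)$, respectively. Everything then reduces to the single three-term expansion above.
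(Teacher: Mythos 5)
Your proof is correct. Part~(b) is essentially the paper's own argument: expand $\tr[(\Ident-\hat W)(\Ident-\hat W^\dagger)\rhostate]$ into the three-term form and bound each term from above using $\tr[\rhostate]\leq1$, the hypothesis on $\Re(\tr[\hat W\rhostate])$, and $\hat W\hat W^\dagger\leq\Ident$. For part~(a), however, you take a genuinely different route. The paper applies the Cauchy--Schwarz inequality, $\tr[\hat W^\dagger\hat W\rhostate]\geq\tr[\rhostate]\cdot\tr[\hat W^\dagger\hat W\rhostate]\geq\bigl(\Re\tr[\hat W\rhostate]\bigr)^2\geq(1-\epsilon)^2\geq1-2\epsilon$, whereas you bound the very same three-term expansion from below, using only the positivity $\tr[A^\dagger A\,\rhostate]\geq0$ with $A=\Ident-\hat W$ (and its mirror $AA^\dagger$ for the second inequality). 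Your version has the structural advantage that (a) and (b) become the two directions of a single identity, with the subnormalization $\tr[\rhostate]\leq1$ entering with the correct sign in each direction --- a point you track carefully; like the paper's, it needs $\opnorm{\hat W}\leq1$ only for part~(b). The paper's Cauchy--Schwarz route gives the marginally sharper intermediate bound $(1-\epsilon)^2=1-2\epsilon+\epsilon^2$, but both arguments deliver the stated $1-2\epsilon$, so nothing downstream is affected.
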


\begin{proof}
  \begin{enumerate}[label=(\alph*)]
  \item From the Cauchy-Schwarz inequality,
    \begin{align}
      \operatorname{tr}\bigl[\hat W^\dagger \hat W \hat\rho\bigr]
      \geqslant\operatorname{tr}[\hat\rho]\cdot \operatorname{tr}\bigl[\hat W^\dagger \hat W \hat\rho\bigr]
      \geqslant\left( \operatorname{Re}\operatorname{tr}[\hat W \hat\rho] \right)^2
      \geqslant1 - 2 \varepsilon\ .
    \end{align}
    We can show the second inequality in the same manner.
  \item This follows from
    \begin{align}
      \operatorname{tr}\bigl[(\hat{I}- \hat W ) (\hat{I}- \hat W^\dagger) \hat\rho\bigr]
      = 1 - 2 \operatorname{Re}\operatorname{tr}[\hat W \hat\rho] + \operatorname{tr}[\hat W \hat W^\dagger \hat\rho]
      \leqslant1- 2(1-\varepsilon) +1 = 2 \varepsilon\ ,
    \end{align}
    where we used $\lVert { \hat W }\rVert _{\infty} \leqslant1$.
    \hfill\qed\end{enumerate}
\end{proof}

Next we show that for a mixture of states, the min and max spectral rates are
given by the smallest or largest spectral rate in the mixture, respectively.

\begin{proposition}\label{non_ergodic_lemma}
  Consider a sequence of states $\widehat{P}:= \{ \hat\rho_n \}$ where each state
  is given by a mixture $\hat\rho_n = \sum_{k=1}^K r_k \hat\rho_n^{(k)}$ for a
  given probability distribution $\{ r_k \}_{k=1}^{K}$ independent of $n$, and
  consider the individual sequences $\widehat{P}^{(k)} = \{ \hat\rho_n^{(k)} \}$.
  Then, the lower and the upper divergence rates of $\widehat{P}$ relative to a
  sequence of positive operators $\widehat{\Sigma}$ are given by
  \begin{align}
    {\underline{S}}(\widehat{P}\,\Vert\,\widehat{\Sigma})
    &= \min_k \, \bigl\{ {\underline{S}}(\widehat{P}^{(k)}\,\Vert\,\widehat{\Sigma}^{(k)}) \bigr\}\ ;
    &
    {\overline{S}}(\widehat{P}\,\Vert\,\widehat{\Sigma})
    &= \max_k \, \bigl\{ {\overline{S}}(\widehat{P}^{(k)}\,\Vert\,\widehat{\Sigma}^{(k)}) \bigr\}\ .
  \end{align}
\end{proposition}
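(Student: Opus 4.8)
The plan is to establish the two claimed identities by proving the four one-sided inequalities $\Dsup{\rhoseq}{\Sigmaseq}\leq\max_k\Dsup{\rhoseq^{(k)}}{\Sigmaseq}$, $\Dsup{\rhoseq}{\Sigmaseq}\geq\max_k\Dsup{\rhoseq^{(k)}}{\Sigmaseq}$, $\Dinf{\rhoseq}{\Sigmaseq}\leq\min_k\Dinf{\rhoseq^{(k)}}{\Sigmaseq}$ and $\Dinf{\rhoseq}{\Sigmaseq}\geq\min_k\Dinf{\rhoseq^{(k)}}{\Sigmaseq}$ separately, taking the reference sequence $\Sigmaseq^{(k)}=\Sigmaseq$ common to all components (this is the relevant case, since the left-hand sides refer to a single $\Sigmaseq$). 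Throughout I would repeatedly use that, for the \emph{finite} index set $k=1,\dots,K$, the operations $\min_k$ and $\max_k$ commute with $\liminf_n$, $\limsup_n$ and with the $\epsilon\to0$ limit of the (monotone-in-$\epsilon$) smooth divergences. Hence it suffices to prove the corresponding one-shot inequalities up to additive $n$-independent $O(1)$ corrections, which are washed out after dividing by $n$.

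For the two ``outer'' inequalities I would use the smooth max- and R\'enyi-$1/2$ divergences together with the explicit mixture smoothing candidate $\taustate_n:=\sum_k r_k\,\taustate_n^{(k)}$, where $\taustate_n^{(k)}$ is the optimizer of the relevant smooth divergence for $\rhostate_n^{(k)}$; joint convexity of the (generalized) trace distance gives $D(\taustate_n,\rhostate_n)\leq\epsilon$. For $\Dsup\leq\max_k$, since $\taustate_n^{(k)}\leq e^{\Dmax[\epsilon]{\rhostate_n^{(k)}}{\sigmastate_n}}\,\sigmastate_n$, summing yields $\taustate_n\leq e^{M}\sigmastate_n$ with $M=\max_k\Dmax[\epsilon]{\rhostate_n^{(k)}}{\sigmastate_n}$, so $\Dmax[\epsilon]{\rhostate_n}{\sigmastate_n}\leq M$. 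For $\Dinf\geq\min_k$ I would crucially pass to the fidelity form $\Dminf{\cdot}{\sigmastate_n}=-2\ln F(\cdot,\sigmastate_n)$ and use the subadditivity $F(\hat A+\hat B,\hat C)\leq F(\hat A,\hat C)+F(\hat B,\hat C)$ already invoked in the paper, together with $F(c\hat A,\hat C)=\sqrt c\,F(\hat A,\hat C)$: this gives $F(\taustate_n,\sigmastate_n)\leq\sum_k\sqrt{r_k}\,F(\taustate_n^{(k)},\sigmastate_n)\leq K\,e^{-\frac12\min_k\Dminf[\epsilon]{\rhostate_n^{(k)}}{\sigmastate_n}}$, whence $\Dminf[\epsilon]{\rhostate_n}{\sigmastate_n}\geq\min_k\Dminf[\epsilon]{\rhostate_n^{(k)}}{\sigmastate_n}-2\ln K$. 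I would then transfer this to $\Dminz[\epsilon]$, and thus to $\Dinf$, via \cref{prop:equiv-Dzero-Dhalf}, which shows that the $\Dminf[\epsilon]$- and $\Dminz[\epsilon]$-based expressions for $\Dinf$ agree.

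For the two ``inner'' inequalities I would instead manipulate tests inside the hypothesis-testing divergence and use the characterizations~\eqref{hypothesis_spectral_eq} of the spectral rates. For $\Dsup\geq\max_k$: an optimal test $\hat Q$ for $\DHyp[\epsilon]{\rhostate_n^{(k)}}{\sigmastate_n}$ satisfies $\tr[\hat Q\rhostate_n]\geq r_k\tr[\hat Q\rhostate_n^{(k)}]\geq r_k\epsilon$, so it is feasible for the mixture at level $r_k\epsilon$, yielding $\DHyp[r_k\epsilon]{\rhostate_n}{\sigmastate_n}\geq\DHyp[\epsilon]{\rhostate_n^{(k)}}{\sigmastate_n}+\ln r_k$; since $r_k\epsilon\to0$ as $\epsilon\to0$, this gives $\Dsup{\rhoseq}{\Sigmaseq}\geq\Dsup{\rhoseq^{(k)}}{\Sigmaseq}$. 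For $\Dinf\leq\min_k$ I would run the same idea in reverse: an optimal test $\hat Q$ for $\DHyp[1-\epsilon]{\rhostate_n}{\sigmastate_n}$ obeys $\sum_k r_k\tr[(\Ident-\hat Q)\rhostate_n^{(k)}]=\tr[(\Ident-\hat Q)\rhostate_n]\leq\epsilon$, hence $\tr[\hat Q\rhostate_n^{(k)}]\geq1-\epsilon/r_k$ and $\hat Q$ is feasible for component $k$ at level $1-\epsilon/r_k$, giving $\DHyp[1-\epsilon/r_k]{\rhostate_n^{(k)}}{\sigmastate_n}\geq\DHyp[1-\epsilon]{\rhostate_n}{\sigmastate_n}-O(1)$ and therefore $\Dinf{\rhoseq^{(k)}}{\Sigmaseq}\geq\Dinf{\rhoseq}{\Sigmaseq}$.

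I expect the genuine obstacle to be the direction $\Dinf{\rhoseq}{\Sigmaseq}\geq\min_k\Dinf{\rhoseq^{(k)}}{\Sigmaseq}$. The seemingly natural approach of combining the component typical projectors into a single test for the mixture breaks down in the quantum setting: the span of the component typical subspaces can carry large weight under $\sigmastate_n$, so neither the span projector (which controls the $\rhostate_n^{(k)}$ side but not the $\sigmastate_n$ side) nor a non-commutative union bound (which controls the $\sigmastate_n$ side but not the acceptance of every component) governs both error types at once. The fidelity-subadditivity estimate in the second paragraph is precisely what circumvents this non-commutativity, which is why I would route that one direction through $\Dminf$ rather than through projectors or tests.
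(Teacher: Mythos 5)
Your proof is correct, and for the two nontrivial directions it takes a genuinely different route from the paper's. The paper derives the proposition from three one-shot lemmas: \cref{mix_max_lemma} and \cref{mix_min_lemma} bound the smoothed max- and min-divergences of the mixture from both sides, and the directions corresponding to $\Dsup{\rhoseq}{\Sigmaseq}\geq\max_k\Dsup{\rhoseq^{(k)}}{\Sigmaseq}$ and $\Dinf{\rhoseq}{\Sigmaseq}\leq\min_k\Dinf{\rhoseq^{(k)}}{\Sigmaseq}$ rest on \cref{convex_lemma}, an Uhlmann-theorem argument showing that any state $\epsilon$-close to the mixture is itself a mixture of states that are $O(\epsilon/r_k)$-close to the components. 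You replace all of that by the elementary inequality $r_k\rhostate_n^{(k)}\leq\rhostate_n$ applied to hypothesis tests at shifted acceptance thresholds, followed by~\eqref{hypothesis_spectral_eq}; this is shorter, avoids purifications entirely, and is arguably more transparent, at the price of producing only asymptotic statements rather than the paper's explicit finite-$n$ bounds with quantified error terms. Your remaining two directions use the same mixture smoothing candidate $\taustate_n=\sum_k r_k\taustate_n^{(k)}$ as the paper, with one notable deviation: for $\Dinf{\rhoseq}{\Sigmaseq}\geq\min_k\Dinf{\rhoseq^{(k)}}{\Sigmaseq}$ the paper's proof of~\eqref{mix_min1} relies on the operator inequality $\hat P_{\taustate}\leq\sum_k\hat P_{\taustate^{(k)}}$, and your reservation about this kind of step is well founded: the projector onto the span of several non-commuting supports need not be dominated by the sum of the component support projectors (already for two lines at $45^\circ$ in $\mathbb{C}^2$ the sum of the two rank-one projectors has an eigenvalue below $1$, while the span projector is the identity). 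Your detour through the fidelity subadditivity $F(\hat A+\hat B,\hat C)\leq F(\hat A,\hat C)+F(\hat B,\hat C)$ applied to the R\'enyi-$1/2$ divergence, converted back to the min-divergence via \cref{prop:equiv-Dzero-Dhalf}, cleanly sidesteps this; the resulting additive loss $2\ln K+O(\ln(1/\epsilon))$ vanishes in the rate, so the argument closes.
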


This proposition immediately follows from the following three lemmas.

\begin{lemma}
  \label{convex_lemma}
  Consider a mixture of states $\hat\rho= \sum r_k \hat\rho^{(k)}$ with a
  probability distribution $\{ r_k \}$.  Let $\hat\tau$ be a quantum state such
  that $F^2(\hat\rho,\hat\tau) \geqslant 1 - \varepsilon^2$.  Then there exists a
  probability distribution $\{ r_k' \}$ and a collection of states
  $\hat\tau^{(k)}{}'$ such that
  \begin{align}
    \hat\tau&= \sum_k r_k' \hat\tau^{(k)}\ ;
    &
    D\bigl(\{ r_k \}, \{ r'_k \}\bigr) &\leqslant \varepsilon\ ;
    &
    D\bigl(\hat\rho^{(k)}, \hat\tau^{(k)}\bigr)
    &\leqslant \frac{2\varepsilon}{r_k}\ .
  \end{align}
\end{lemma}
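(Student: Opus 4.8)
The plan is to transfer the convex decomposition of $\rhostate$ over to $\taustate$ by purifying, invoking Uhlmann's theorem, and reading off the components of $\taustate$ through a measurement of a classical flag register. Concretely, I would introduce a flag system $K$ with orthonormal basis $\{\ket{k}\}$ and, for each $k$, a purification $\ket{\psi^{(k)}}$ of $\rhostate^{(k)}$ on $A$ with an environment $R$, and set $\ket{\Psi_\rho} = \sum_k \sqrt{r_k}\,\ket{k}_K \otimes \ket{\psi^{(k)}}_{AR}$, which is a purification of $\rhostate$ on $A$ with purifying systems $KR$. Since $F(\rhostate,\taustate)\geq \sqrt{1-\epsilon^2}$, Uhlmann's theorem provides a purification $\ket{\Psi_\tau}$ of $\taustate$ on the same systems with $\langle\Psi_\rho|\Psi_\tau\rangle = F(\rhostate,\taustate)$, real and nonnegative after fixing a global phase. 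Measuring the flag $K$ of $\ket{\Psi_\tau}$ in the basis $\{\ket{k}\}$ then produces outcome probabilities $r_k'$ and conditional states $\taustate^{(k)}$ on $A$; because tracing out $K$ and $R$ recovers $\taustate$, this automatically yields the decomposition $\taustate = \sum_k r_k'\taustate^{(k)}$.

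For the distribution bound I would observe that measuring the flag of $\ket{\Psi_\rho}$ (resp.\ $\ket{\Psi_\tau}$) in $\{\ket{k}\}$ returns exactly the distribution $\{r_k\}$ (resp.\ $\{r_k'\}$). This measurement is a single channel applied to the two pure states, so by monotonicity of the trace distance, $D(\{r_k\},\{r_k'\})\leq D(\proj{\Psi_\rho},\proj{\Psi_\tau}) = \sqrt{1-F(\rhostate,\taustate)^2}\leq \epsilon$, using the closed form for the trace distance of pure states together with $F^2\geq 1-\epsilon^2$.

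For the component bound the key is to resist renormalizing too early. I would work with the subnormalized operators $r_k\rhostate^{(k)}$ and $r_k'\taustate^{(k)}$, which arise from $\proj{\Psi_\rho}$ and $\proj{\Psi_\tau}$ respectively under the CP trace-nonincreasing map that projects the flag onto $\ket{k}$ and traces out $K$ and $R$. By monotonicity of the \emph{generalized} trace distance under such maps, $\tfrac12\onenorm{r_k\rhostate^{(k)} - r_k'\taustate^{(k)}} + \tfrac12\abs{r_k - r_k'} = D(r_k\rhostate^{(k)}, r_k'\taustate^{(k)}) \leq D(\proj{\Psi_\rho},\proj{\Psi_\tau}) \leq \epsilon$. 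A single triangle inequality then gives $r_k\,D(\rhostate^{(k)},\taustate^{(k)}) = \tfrac12\onenorm{r_k\rhostate^{(k)} - r_k\taustate^{(k)}} \leq \tfrac12\onenorm{r_k\rhostate^{(k)} - r_k'\taustate^{(k)}} + \tfrac12\abs{r_k-r_k'} \leq \epsilon$, i.e.\ $D(\rhostate^{(k)},\taustate^{(k)})\leq \epsilon/r_k$, which is in fact stronger than the claimed $2\epsilon/r_k$.

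The main obstacle is exactly this component estimate, where dividing by a possibly tiny $r_k$ amplifies errors: a naive bound based on the Euclidean distance of the \emph{normalized} conditional purification vectors would only deliver a $1/\sqrt{r_k}$-type estimate with a worse constant. Passing instead to the subnormalized conditional states and using monotonicity of the generalized trace distance keeps both the weight mismatch $\abs{r_k-r_k'}$ and the operator discrepancy inside one $\epsilon$ budget, which is what makes the clean $1/r_k$ bound fall out. The only remaining care is to verify the two standard ingredients invoked: Uhlmann's theorem on a purifying space large enough to host $\taustate$, and monotonicity of the generalized trace distance under CP trace-nonincreasing maps, as in the references cited for the generalized trace distance in the preliminaries.
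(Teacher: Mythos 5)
Your proposal is correct and follows essentially the same route as the paper's proof: purify $\rhostate$ with a flag register recording the index $k$, invoke Uhlmann's theorem and the Fuchs--van de Graaf inequality to get two pure states at trace distance at most $\epsilon$, and then read off $\{r'_k\}$ and the $\taustate^{(k)}$ by measuring the flag, using monotonicity of the trace distance under the relevant CP (trace-nonincreasing) maps. The only difference is bookkeeping in the final step: by invoking monotonicity of the \emph{generalized} trace distance you fold the weight mismatch $\abs{r_k-r'_k}$ and the operator discrepancy into a single $\epsilon$ budget and obtain the sharper bound $\epsilon/r_k$, whereas the paper bounds the two contributions separately (each by $\epsilon$) and settles for $2\epsilon/r_k$; both suffice for the stated claim.
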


\begin{proof}
  Call our system of interest $A$, and consider a copy $B\simeq A$.  Let
  $\{ \lvert {j}\rangle _{A} \}, \{ \lvert {j}\rangle _B \}$ be orthonormal bases of $A$ and $B$,
  respectively, and let $\lvert {\Phi }\rangle := \sum_j \lvert {j}\rangle _{A} \lvert {j}\rangle _{B}$ be the
  reference unnormalized maximally entangled state.  Consider the following
  purification of $\hat\rho^{(k)}$,
  \begin{align}
    \lvert {\hat\rho^{(k)}}\rangle _{AB}
    = \bigl(\bigl(\hat\rho_{A}^{(k)}\bigr)^{1/2}\otimes\hat{I}_B\bigr) \, \lvert {\Phi}\rangle _{AB}\ .
  \end{align}
  Let $C$ be a register with an orthonormal basis $\{ \lvert {k}\rangle _C \}$ and consider
  the following purification of $\hat\rho_C$,
  \begin{align}
    \lvert {\hat\rho}\rangle _{AB}
    = \sum_k \sqrt{r_k} \, \lvert {\rho^{(k)}}\rangle  \lvert {k}\rangle _{C}\ .
  \end{align}
  From Uhlmann's theorem, there exists a purification $\lvert {\hat\tau}\rangle _{ABC}$ of
  $\hat\tau_A$ such that
  \begin{align}
    F(\lvert {\hat\rho}\rangle , \lvert {\hat\tau}\rangle ) = F(\hat\rho, \hat\tau)
    \geqslant\sqrt{1 - \varepsilon^2}\ .
  \end{align}
  Invoking the Fuchs-van de Graaf relations between the fidelity and the trace
  distance~\cite{Fuchs1999IEEETIT_distinguishability,BookNielsenChuang2000},
  $1 - F(\cdot,\cdot) \leqslant D(\cdot,\cdot) \leqslant \sqrt{1 -
    F^2(\cdot,\cdot)}$, we find that
  $D(\lvert {\hat\rho}\rangle , \lvert {\hat\tau}\rangle ) \leqslant \varepsilon$.
  Now, define
  \begin{align}
    r'_{k} &:= \operatorname{tr}\bigl( \lvert {k}\rangle\hspace*{-0.25ex}\langle{k}\rvert _C \, \lvert {\hat\tau}\rangle\hspace*{-0.25ex}\langle{\hat\tau}\rvert _{ABC} \bigr)\ ;
    &
      \hat\tau^{(k)} &:= \frac1{r'_k}\,\operatorname{tr}_{BC}\bigl( \lvert {k}\rangle\hspace*{-0.25ex}\langle{k}\rvert _C\,\lvert {\hat\tau}\rangle\hspace*{-0.25ex}\langle{\hat\tau}\rvert _{ABC} \bigr)\ .
  \end{align}
  From the monotonicity of the trace norm under CPTP maps, we have $D(\{ r_k \}, \{ r'_k \} ) \leqslant \varepsilon$,
  where here the trace distance is calculated between the two classical
  probability distributions, which is known as the total variational distance.
  Furthermore, the trace norm cannot increase under any CP and
  trace-nonincreasing maps, and hence,
  \begin{align}
    \frac12\bigl\lVert { r_k\hat\rho^{(k)} - r'_k \hat\tau^{(k)} }\bigr\rVert _{1}
    \leqslant
    \frac12\bigl\lVert { \lvert {\hat\rho}\rangle\hspace*{-0.25ex}\langle{\hat\rho}\rvert - \lvert {\hat\tau}\rangle\hspace*{-0.25ex}\langle{\hat\tau}\rvert }\bigr\rVert _{1}
    = D(\lvert {\hat\rho}\rangle , \lvert {\hat\tau}\rangle ) \leqslant \varepsilon\ .
  \end{align}
  This implies
  \begin{multline}
    D\bigl( \hat\rho^{(k)}, \hat\tau^{(k)} \bigr)
    = \frac1{2r_k}\bigl\lVert {r_k\hat\rho^{(k)} - r_k\hat\tau^{(k)}}\bigr\rVert _{1}
    \\
    \leqslant\frac1{r_k} \biggl( \frac12\bigl\lVert {r_k\hat\rho^{(k)} - r'_k\hat\tau^{(k)}}\bigr\rVert _{1}
      + \frac{\lvert {r'_k - r_k}\rvert }{2}\,\bigl\lVert {\hat\tau^{(k)}}\bigr\rVert _{1} \biggr)
      \leqslant \frac{2\varepsilon}{r_k}\ ,
  \end{multline}
  which completes the proof.
\end{proof}

\begin{lemma}
  \label{mix_max_lemma}
  Consider a mixture of states $\hat\rho= \sum_{k=1}^{K} r_k \hat\rho^{(k)}$
  with a probability distribution $\{ r_k \}$.  Let $\varepsilon>0$ be such that
  $2 \sqrt{\varepsilon} < r_k$ for all $k$. Then
  \begin{align}
    \label{mix_max1}
    {S}_{\infty}^{\varepsilon}(\hat\rho\,\Vert\,\hat\sigma)
    &\leqslant\max_k {S}_{\infty}^{\varepsilon}(\hat\rho^{(k)}\,\Vert\,\hat\sigma)\ ;
    \\
    \label{mix_max2}
    {S}_{\infty}^{\varepsilon}(\hat\rho\,\Vert\,\hat\sigma)
    &\geqslant\max_k \left\{ {S}_{\infty}^{{2 \sqrt{2\varepsilon} / r_k}}(\hat\rho^{(k)}\,\Vert\,\hat\sigma)
      + \ln(r_k - 2\sqrt{\varepsilon} ) \right\}\ .
  \end{align}
\end{lemma}

\begin{proof}
  We first show inequality~\eqref{mix_max1}.  For each $k$, there exists
  $\hat \tau^{(k)} \in B^\varepsilon(\hat\rho^{(k)} )$ such that
  ${S}_{\infty}^{\varepsilon}(\hat\rho^{(k)}\,\Vert\,\hat\sigma) =
  {S}_{\infty}(\hat\tau^{(k)}\,\Vert\,\hat\sigma)$.  Let
  $\hat\tau:= \sum_k r_k \hat\tau^{(k)}$, which is a candidate for
  minimization in ${S}_{\infty}^{\varepsilon}(\hat\rho\,\Vert\,\hat\sigma)$, because
  $D(\hat\tau, \hat\rho) \leqslant \sum_k r_k D(\hat\tau^{(k)},
  \hat\rho^{(k)}) \leqslant\varepsilon$, using the joint convexity of the trace
  distance.  Then,
  \begin{multline}
    {S}_{\infty}^{\varepsilon}(\hat\rho\,\Vert\,\hat\sigma)
    \leqslant{S}_{\infty}(\hat\tau\,\Vert\,\hat\sigma)
    \leqslant\ln \sum_k r_k \lVert {\hat\sigma^{-1/2} \hat\tau^{(k)} \hat\sigma^{-1/2} }\rVert _{\infty}
    \\
    \leqslant\max_k \ln \lVert { \hat\sigma^{-1/2} \hat\tau^{(k)} \hat\sigma^{-1/2} }\rVert _{\infty}
    = \max_k {S}_{\infty}^{\varepsilon}(\hat\rho^{(k)}\,\Vert\,\hat\sigma)\ .
  \end{multline}

  We next show inequality~\eqref{mix_max2}.  There exists
  $\hat\tau\in B^\varepsilon(\hat\rho)$ such that
  ${S}_{\infty}^{\varepsilon}(\hat\rho\,\Vert\,\hat\sigma) = {S}_{\infty}(\hat\tau\,\Vert\,\hat\sigma)$.  By
  the Fuchs-van de Graaf
  inequalities~\cite{Fuchs1999IEEETIT_distinguishability,BookNielsenChuang2000},
  we have $F(\hat\rho,\hat\tau) \geqslant 1 - D(\hat\rho,\hat\tau)$ and thus
  $F^2(\hat\rho,\hat\tau) \geqslant 1 - 2\varepsilon$.  Let
  $\{ \hat\tau^{(k)} \}$ be quantum states and $\{ r'_k \}$ be a probability
  distribution that are given by \cref{convex_lemma}, such that
  $D(\{ r_k \}, \{ r'_k \}) \leqslant \sqrt{2\varepsilon}$ and
  $D(\hat\rho^{(k)}, \hat\tau^{(k)}) \leqslant 2\sqrt{2\varepsilon}/r_k$.  Noting
  that $r_k' \hat \tau^{(k)} \leqslant\hat \tau$, we have
  \begin{multline}
    {S}_{\infty}(\hat\tau\,\Vert\,\hat\sigma)
    \geqslant\ln\, \bigl\lVert { \hat\sigma^{-1/2} r_k' \hat\tau^{(k)} \hat\sigma^{-1/2} }\bigr\rVert _{\infty}
    = {S}_{\infty}(\hat\tau^{(k)}\,\Vert\,\hat\sigma) + \ln r_k' 
      \\
      \geqslant{S}_{\infty}^{{2 \sqrt{2\varepsilon}/r_k}}(\hat\rho^{(k)}\,\Vert\,\hat\sigma)
      + \ln (r_k - 2\sqrt{\varepsilon} )\ ,
  \end{multline}
  which implies inequality~\eqref{mix_max2}.
\end{proof}

\begin{lemma}
  \label{mix_min_lemma}
  Consider a mixture of states $\hat\rho= \sum_{k=1}^{K} r_k \hat\rho^{(k)}$
  with a probability distribution $\{ r_k \}$, and let $\varepsilon>0$. Then
  \begin{align}
    {S}_{0}^{\varepsilon}(\hat\rho\,\Vert\,\hat\sigma)
    \geqslant\min_k \left\{
    {S}_{0}^{\varepsilon}(\hat\rho^{(k)}\,\Vert\,\hat\sigma)  - \ln K \right\}\ .
    \label{mix_min1}
    \\
    {S}_{0}^{\varepsilon}(\hat\rho\,\Vert\,\hat\sigma)
    \leqslant\min_k {S}_{0}^{{2\sqrt{2\varepsilon}/r_k}}(\hat\rho^{(k)}\,\Vert\,\hat\sigma)\ .
    \label{mix_min2} 
  \end{align}
\end{lemma}

\begin{proof}
  We first show inequality~\eqref{mix_min1}.  For each $k$, there exists
  $\hat\tau^{(k)} \in B^\varepsilon(\hat\rho^{(k)} )$ such that
  ${S}_{0}^{\varepsilon}(\hat\rho^{(k)} \,\Vert\, \hat\sigma) =
  {S}_{0}(\hat\tau^{(k)}\,\Vert\,\hat\sigma)$.  Let
  $\hat\tau:= \sum_k r_k \hat\tau^{(k)}$, which is a candidate for
  maximization in ${S}_{0}^{\varepsilon}(\hat\rho\,\Vert\,\hat\sigma)$.  We note that
  $\hat{P}_{\hat\tau} \leqslant \sum_k \hat{P}_{\hat\tau^{(k)}}$, 
  because the kernel of $\hat\tau$ is larger than the intersection of the kernels of $\hat\tau^{(k)}$'s.
  Therefore,
  \begin{multline}
    {S}_{0}^{\varepsilon}(\hat\rho\,\Vert\,\hat\sigma)
    \geqslant{S}_{0}(\hat\tau\,\Vert\,\hat\sigma)
    = - \ln \operatorname{tr}[\hat P_{\hat\tau} \hat\sigma]
    \geqslant-\ln \sum_k \operatorname{tr}[\hat P_{\hat\tau^{(k)}} \hat\sigma]
      \\
    \geqslant-\ln \left( K \max_k \operatorname{tr}[\hat P_{\hat\tau^{(k)}} \hat\sigma] \right)
    = \min_k \left\{ {S}_{0}^{\varepsilon}(\hat\rho^{(k)}\,\Vert\,\hat\sigma)  - \ln K \right\}\ .
  \end{multline}
  We next show inequality~\eqref{mix_min2}.  There exists
  $\hat\tau\in B^\varepsilon(\hat\rho)$ such that
  ${S}_{0}^{\varepsilon}(\hat\rho\,\Vert\,\hat\sigma) = {S}_{0}( \hat\tau\,\Vert\,\hat\sigma)$.
  By the Fuchs-van de Graaf inequalities, we have
  $F^2(\hat\rho,\hat\tau) \geqslant 1 - 2\varepsilon$ as above.  Let
  $\{ \hat\tau^{(k)} \}$ be states and $\{ r'_k \}$ be a probability
  distribution given by \cref{convex_lemma}.  For all $k$,
  \begin{equation}
    \operatorname{tr}[\hat P_{\hat\tau} \hat\sigma] \geqslant\operatorname{tr}[\hat P_{\hat\tau^{(k)}} \hat\sigma]\ ,
  \end{equation}
  and therefore
  \begin{equation}
    {S}_{0}( \hat\tau\,\Vert\, \hat\sigma)
    \leqslant{S}_{0}(\hat\tau^{(k)}\,\Vert\,\hat\sigma)
    \leqslant{S}_{0}^{{2\sqrt{2\varepsilon}/r_k}}(\hat\rho^{(k)}\,\Vert\, \hat\sigma)\ ,
  \end{equation}
  which implies  inequality~\eqref{mix_min2}.
\end{proof}

\section{Properties of our thermodynamic framework and convertibility proof for Gibbs-preserving maps}
\label{appx:thermodynamic-lemmas}

In this section we derive a collection of useful properties of thermodynamic
transformations that were introduced in \cref{sec:thermodynamic-operations}, and
provide a simplified version of
\cref{prop:approx-equipartition-implies-reversibility-under-TO} that is
specialized to Gibbs-preserving maps.  

The partial isometry in the definition of a thermal operation commutes with the
system-and-bath Hamiltonian in the following sense.
\begin{proposition}
  \label{prop:energypres-partial-isom-V-commutes-with-H}
  Let $K,L$ be systems with Hamiltonians $\hat{H}_K,\hat{H}_L$ and let
  $\hat V_{K\to L}$ be a partial isometry such that
  $\hat V_{K\to L}\hat{H}_K = \hat{H}_L \hat V_{K\to L}$.  Then
 \begin{align}
   [\hat V^\dagger_{K\leftarrow L} \hat V_{K\to L}, \hat{H}_K] &= 0\ ;
   &
   [\hat V_{K\to L} \hat V^\dagger_{K\leftarrow L}, \hat{H}_L] &= 0\ .
 \end{align}
 In consequence, $\hat V_{K\to L}$ is a mapping of a subset of initial energy
 eigenstates on $K$ to some final energy eigenstates on $L$.
\end{proposition}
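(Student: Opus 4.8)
The plan is to obtain both commutators by a direct algebraic manipulation of the intertwining relation and its adjoint, and then to read off the eigenstate-mapping statement from the spectral theorem. Abbreviate $\hat V := \hat V_{K\to L}$, so that the hypothesis reads $\hat V \Ham_K = \Ham_L \hat V$. Since $\Ham_K$ and $\Ham_L$ are self-adjoint, taking the adjoint of this relation immediately yields the companion identity $\Ham_K \hat V^\dagger = \hat V^\dagger \Ham_L$.

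With these two identities in hand, I would evaluate $\hat V^\dagger \hat V \Ham_K$ in two ways. Pushing $\Ham_K$ to the left through $\hat V^\dagger$ using the adjoint identity, and pushing it to the right through $\hat V$ using the original identity, both give
\begin{align}
  \hat V^\dagger \hat V \Ham_K
  = \hat V^\dagger \Ham_L \hat V
  = \Ham_K \hat V^\dagger \hat V\ ,
\end{align}
which is exactly $[\hat V^\dagger \hat V, \Ham_K] = 0$. The second commutator follows from the mirror-image computation $\hat V \hat V^\dagger \Ham_L = \hat V \Ham_K \hat V^\dagger = \Ham_L \hat V \hat V^\dagger$, so that $[\hat V \hat V^\dagger, \Ham_L] = 0$. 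Note that no smoothing, positivity, or partial-isometry structure is needed for the commutators themselves; only the intertwining relation and self-adjointness of the Hamiltonians enter.

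For the final assertion I would invoke that $\hat V$ is a partial isometry, so that $\hat V^\dagger \hat V$ is the orthogonal projector onto the support of $\hat V$. Since this projector commutes with $\Ham_K$, the support is an $\Ham_K$-invariant subspace and therefore admits an orthonormal basis of eigenvectors of $\Ham_K$. For any such eigenvector $\ket\psi$ with $\Ham_K \ket\psi = E \ket\psi$, the intertwining relation gives $\Ham_L \hat V \ket\psi = \hat V \Ham_K \ket\psi = E\, \hat V \ket\psi$, and $\hat V\ket\psi \neq 0$ because $\ket\psi$ lies in the support of $\hat V$; hence $\hat V$ carries it to an eigenstate of $\Ham_L$ at the same energy $E$. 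This establishes that $\hat V$ maps a distinguished subset of energy eigenstates of $K$ onto energy eigenstates of $L$, with energies preserved.

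There is no substantial obstacle here; the computation is short and closes on the first attempt. The only place demanding a little care is the last paragraph, where I must be explicit that a projector commuting with a self-adjoint operator has its range spanned by eigenvectors~--- so that ``energy eigenstate'' remains meaningful in the presence of degeneracies~--- and that the energy is genuinely preserved rather than merely mapped somewhere into the spectrum of $\Ham_L$.
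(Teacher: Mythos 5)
Your proof is correct and follows essentially the same route as the paper's: both derive the adjoint intertwining relation $\Ham_K \hat V^\dagger = \hat V^\dagger \Ham_L$ from self-adjointness and then cancel the two contributions to each commutator. Your additional paragraph spelling out why the support of $\hat V$ admits an $\Ham_K$-eigenbasis and why energies are preserved is a welcome elaboration of the consequence that the paper only asserts.
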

\begin{proof}
  We compute directly $[\hat V^\dagger_{K\leftarrow L} \hat V_{K\to L}, \hat{H}_K] = \hat V^\dagger_{K\leftarrow L} (\hat{V}_{K\to{}L}\hat{H}_K -
  \hat{H}_L\hat{V}_{K\to{}L}) + (\hat{V}^\dagger_{K\leftarrow{}L}\hat{H}_L -
  \hat{H}_K\hat{V}^\dagger_{K\leftarrow{}L})\hat{V}_{K\to L} = 0$ and similarly
  $[\hat V_{K\to L} \hat V^\dagger_{K\leftarrow L}, \hat{H}_L]=0$.
\end{proof}

Now we show that any partial isometry that is compatible with the system
Hamiltonian (i.e., one that maps the input Hamiltonian to the output Hamiltonian
on the range of the partial isometry) can be dilated into a full
energy-conserving unitary on a larger system from which the partial isometry is
recovered by preparing an ancilla in a pure state and post-selecting on a
specific measurement outcome of an ancilla on the output of the unitary.  The
present proof is partly adapted
from~\cite[Proposition~C.2]{Faist2021CMP_impl}.

\begin{proposition}[Dilation of a partial energy-conserving isometry]
  \label{lemma:dilation-energy-conserving-partial-isometry}
  Consider systems $K,L$ with Hamiltonians $\hat{H}_K,\hat{H}_{L}$.  Let
  $\hat V_{K\to L}$ be a partial isometry such that
  $\hat V_{K\to L}\hat{H}_K = \hat{H}_L \hat V_{K\to L}$.
  Let $M$ be a system with Hamiltonian $\hat{H}_M$, and suppose that there exist nontrivial
  systems $\bar{K}$ and $\bar{L}$
  with respective Hamiltonians $\hat{H}_{\bar{K}}$, $\hat{H}_{\bar{L}}$ along with unitaries
  $\hat{U}'_{K\bar{K}\to M}$, $\hat{U}'_{L\bar{L}\to M}$ such that
  $\hat{U}'_{K\bar{K}\to M}(\hat{H}_K + \hat{H}_{\bar K}) = \hat{H}_M \hat{U}'_{K\bar{K}\to M}$ and
  $\hat{U}''_{L\bar{L}\to M}(\hat{H}_L + \hat{H}_{\bar L}) = \hat{H}_M \hat{U}''_{L\bar{L}\to M}$.
  Let
  $\lvert {\mathrm{i}}\rangle _{\bar{K}}, \lvert {\mathrm{f}}\rangle _{\bar{L}}$
  be two eigenstates of $\hat{H}_{\bar{K}}$ and $\hat{H}_{\bar{L}}$ of the
  same energy $e_{\mathrm{i}}=\langle {\mathrm{i}}\hspace*{0.2ex}\vert\hspace*{0.2ex}{\hat{H}_{\bar{K}}}\hspace*{0.2ex}\vert\hspace*{0.2ex}{\mathrm{i}}\rangle _K = e_{\mathrm{f}} = \langle {\mathrm{f}}\hspace*{0.2ex}\vert\hspace*{0.2ex}{\hat{H}_{\bar{L}}}\hspace*{0.2ex}\vert\hspace*{0.2ex}{\mathrm{f}}\rangle _{\bar{L}}$.
  Then there exists a unitary $\hat{U}_{M}$ such that
  $[\hat{U}_{M}, \hat{H}_M] = 0$ and
  \begin{align}
    (\hat{I}_L\otimes \langle{\mathrm{f}}\rvert_{\bar{L}})\;
        \hat{U}_{L\bar{L}\leftarrow M}''^\dagger \,
        \hat{U}_{M} \,
        \hat{U}'_{K\bar{K}\to M} \;
    (\hat{I}_K \otimes \lvert{\mathrm{i}}\rangle_{\bar{K}})\;
    \hat{\Pi}_K
    =
    \hat{V}_{K\to L}\ ,
    \label{eq:dilation-partial-energy-conserving-isometry-complicated}
  \end{align}
  where $\hat{\Pi}_K = \hat{V}^\dagger \hat{V}$ is the projector onto
  the support of $\hat{V}$.
  
  Furthermore, we can remove $\hat{\Pi}_K$ from~\eqref{eq:dilation-partial-energy-conserving-isometry-complicated}
  under the following additional assumption.
  Let $\{ (\alpha_j, \mu_j) \}_{j=1}^{J}$ be the energy
  eigenvalues
  with the corresponding multiplicities of all energy eigenstates of $K$ that are
  in the kernel of $\hat{V}$.
  Let $\{ (\beta_\ell, \nu_\ell) \}_{\ell}$ be the energy eigenvalues
  with the corresponding multiplicities of all energy eigenstates $\lvert{\beta}\rangle_{L\bar{L}}$
  of $\hat{H}_L + \hat{H}_{\bar{L}}$ that have no overlap with $\hat{I}_L\otimes\lvert{\mathrm{f}}\rangle\hspace*{-0.25ex}\langle{\mathrm{f}}\rvert_{\bar{L}}$,
  i.e., for which $
  (\hat{I}_L\otimes\lvert{\mathrm{f}}\rangle\hspace*{-0.25ex}\langle{\mathrm{f}}\rvert_{\bar{L}}) \,
  \lvert{\beta}\rangle_{L\bar{L}} = 0$.
  Suppose that for each $(\alpha_j,\mu_j)$ (for $j=1,\ldots,J$),
  there exists a corresponding $\ell$ with
  $\beta_\ell=e_{\mathrm{i}}+\alpha_j$ and $\nu_\ell\geqslant\mu_j$.
  Then there exists a unitary operator $\hat{U}_{M}$ with $[\hat{H}_M, \hat{U}_M] = 0$
  and such that
  \begin{align}
    (\hat{I}_L\otimes \langle{\mathrm{f}}\rvert_{\bar{L}})\;
        \hat{U}_{L\bar{L}\leftarrow M}''^\dagger \,
        \hat{U}_{M} \,
        \hat{U}'_{K\bar{K}\to M} \;
    (\hat{I}_K \otimes \lvert{\mathrm{i}}\rangle_{\bar{K}})
    =
    \hat{V}_{K\to L}\ .
    \label{eq:dilation-partial-energy-conserving-isometry-complicated-noPi}
  \end{align}
\end{proposition}

Before delving into the proof of \cref{lemma:dilation-energy-conserving-partial-isometry} we
issue a few remarks to provide a better picture of the consequences of this general
proposition and to identify a few interesting special cases. 

\begin{enumerate}[label=(\alph*)]
\item The operator $\hat\Pi_K$
  in~\eqref{eq:dilation-partial-energy-conserving-isometry-complicated}
  can be replaced by an operator $\hat\Pi'_{L}$ acting \emph{after} the unitaries,
  where $\hat\Pi'_{L} = \hat{V}\hat{V}^\dagger$ is the projector onto the
  range of $\hat{V}$.

\item If $K=L$ and $H_K=H_L$, we can choose $M=K=L$ with trivial systems
  $\bar{K}, \bar{L}$.  With this choice of $M$,
  the projector in~\eqref{eq:dilation-partial-energy-conserving-isometry-complicated}
  is always necessary unless $\hat{V}_{K\to L}$ is already unitary.
  (The projector can be removed by choosing a larger system $M$, see below.)

\item
  The additional assumption in the second part of the proposition amounts
  to requiring that, for the given $\lvert{\mathrm{i}}\rangle _{\bar{K}},
  \lvert{\mathrm{f}}\rangle_{\bar{L}}$, it is possible to map
  the support of
  $(\hat{I}_K-\hat{\Pi}_{K})\otimes
  \lvert{\mathrm{i}}\rangle\hspace*{-0.25ex}\langle{\mathrm{i}}\rvert _{\bar{K}}$
  (i.e., the space spanned by all eigenstates outside of the support of $\hat{V}_{K\to L}$
  and tensored with $\lvert{\mathrm{i}}\rangle _{\bar{K}}$), into a space
  of the global output system such that the mapping is energy conserving
  and such that the resulting space has no overlap with
  $\lvert{\mathrm{f}}\rangle_{\bar{L}}$.  As long as the input state on $\bar{K}$
  is initialized in the state $\lvert{\mathrm{i}}\rangle _{\bar{K}}$, then
  projecting the output onto $\lvert{\mathrm{f}}\rangle_{\bar{L}}$ automatically
  ensures that the input state already lies within the projector $\hat{\Pi}'_L$.
  (Equivalently, the projector $\hat{\Pi}_K$ on the input becomes redundant.)

\item \label{item:dilation-partial-energy-conserving-isometry-remarks--add-Q-i-f-states}
  For any $K,L$ and for a general choice of $M$, $\bar{K}$, $\bar{L}$
  with corresponding Hamiltonians along with energy-preserving embedding
  unitaries $\hat{U}'_{K\bar{K}\to M}$, $\hat{U}''_{L\bar{L}\to M}$,
  there always exists a qubit system $Q$ with some Hamiltonian $H_Q$
  such that there exist $\lvert{\mathrm{i}}\rangle_{\bar{K}Q}$ and
  $\lvert{\mathrm{f}}\rangle_{\bar{L}Q}$ with the same eigenenergy.
  
  This statement is shown as follows.  We first pick any two energy eigenstates
  $\lvert{\mathrm{i}_0}\rangle_{\bar{K}}$ and
  $\lvert{\mathrm{f}_0}\rangle_{\bar{L}}$ of respective energies $e_{\mathrm{i}}$
  and $e_{\mathrm{f}}$.
  We then introduce a qubit $Q$ with the Hamiltonian
  $H_Q = q_{0} \lvert{0}\rangle\hspace*{-0.25ex}\langle{0}\rvert + 
  q_{1} \lvert{1}\rangle\hspace*{-0.25ex}\langle{1}\rvert$,
  with $q_0 = c-e_{\mathrm{i}}$ and $q_1 = c- e_{\mathrm{f}}$ for
  any chosen constant $c$.
  Define
  $M' = M\otimes Q$, $\bar{K}'=\bar{K}\otimes Q$, $\bar{L}'=\bar{L}\otimes Q$, etc.,
  along with $\lvert{\mathrm{i}}\rangle _{\bar{K}'} =
  \lvert{\mathrm{i}_0}\rangle _{\bar{K}}\otimes \lvert{0}\rangle _{Q}$
  and $\lvert{\mathrm{f}}\rangle _{\bar{L}'} =
  \lvert{\mathrm{f}_0}\rangle _{\bar{L}}\otimes \lvert{1}\rangle _{Q}$,
  observing that $\lvert{\mathrm{i}}\rangle _{\bar{K}'}$ and
  $\lvert{\mathrm{f}}\rangle _{\bar{L}'}$ are both energy eigenstates
  with energy $c$.
  
\item \label{item:dilation-partial-energy-conserving-isometry-remarks--add-Qp-for-noPi}
  For any $M,K,L,\bar{K},\bar{L}, \lvert{\mathrm{i}}\rangle _{\bar{K}} ,
  \lvert{\mathrm{f}}\rangle _{\bar{L}}$ satisfying the first part of the proposition,
  we can always introduce a qubit system $Q'$ with a degenerate Hamiltonian
  $H_{Q'} = c'$ for some arbitrary constant $c'$, and define
  $M'' = M\otimes Q'$, $\bar{K}''=\bar{K}\otimes Q'$, $\bar{L}''=\bar{L}\otimes Q'$, etc.,
  along with $\lvert{\mathrm{i}'}\rangle _{\bar{K}''} =
  \lvert{\mathrm{i}}\rangle _{\bar{K}}\otimes \lvert{0}\rangle _{Q'}$
  and $\lvert{\mathrm{f}'}\rangle _{\bar{L}''} =
  \lvert{\mathrm{f}}\rangle _{\bar{L}}\otimes \lvert{1}\rangle _{Q'}$,
  such that the additional condition of the second part of the proposition is satisfied.
  Indeed,
  from the unitary $\hat{U}_M$ given by the proposition without the extra qubit,
  we can define $\hat{\tilde{U}}_{MQ'}
  = (\hat{U}_M\otimes\hat{I}_{Q'})
  ((\hat{\Pi}_K\otimes\hat{I}_{\bar{K}}\otimes
  (\lvert{0}\rangle\hspace*{-0.25ex}\langle{1}\rvert
    +\lvert{0}\rangle\hspace*{-0.25ex}\langle{1}\rvert)_{Q'}
  + ((\hat{I}_K-\hat{\Pi}_K)\otimes\hat{I}_{\bar{K} Q'})
  $, i.e., $\hat{\tilde{U}}_{MQ'}$ conditionally flips the bit $Q'$ if the input on $K$
  is in the support of $\hat{V}_{K\to L}$, before applying $\hat{U}_M$.
  The effect of $\tilde{\hat{U}}_{M'}$ is to map all states of the form
  $\lvert{\psi'}\rangle_K \otimes \lvert{\mathrm{i}}\rangle_{\bar{K}}\otimes\lvert{0}\rangle_{Q'}$
  onto states with the $Q'$ system remaining in the state
  $\lvert{0}\rangle_{Q'}$, ensuring that there
  is no overlap with $\lvert{\mathrm{f}'}\rangle _{\bar{L}'}$.
  
\item The qubits introduced in
  Points~\ref{item:dilation-partial-energy-conserving-isometry-remarks--add-Q-i-f-states}
  and~\ref{item:dilation-partial-energy-conserving-isometry-remarks--add-Qp-for-noPi}
  may evidently be chosen to be larger systems that contain such qubits
  as subspaces.

\item For any $K,L$ and for a general choice of $M$, $\bar{K}$, $\bar{L}$
  with corresponding Hamiltonians along with energy-preserving embedding
  unitaries $\hat{U}'_{K\bar{K}\to M}$, $\hat{U}''_{L\bar{L}\to M}$,
  there might not always exist $\lvert{\mathrm{i}}\rangle_{\bar{K}}$ and $\lvert{\mathrm{f}}\rangle_{\bar{L}}$
  with the same eigenenergy, even if $\hat{V}_{K\to L} \neq 0$.
  As a counterexample, consider systems $K,L,\bar{K},\bar{L}$
  where the system $K$ has energy levels $\{ 0, 1 \}$, the system
  $\bar{K}$ has levels $\{0, 1\}$, the system
  $L$ has levels $\{ -2, -1, -1, 0 \}$, and the system $\bar{L}$
  is trivial with the single level $\{ 2 \}$.  In both cases,
  the joint energy levels are $\{0, 1, 1, 2\}$, and $\hat{V}_{K\to L}$
  can be nonzero by mapping the $0$ level of $K$ to the $0$ level of $L$.
  Yet, $\bar{K}$ and $\bar{L}$ do not share an energy level of same energy.

\item For arbitrary $K,L$, a
  simple choice for the system $M$ is $M = K\otimes L$ with $\bar{K}=L$,
  $\hat{H}_{\bar{K}}=\hat{H}_L$, $\bar{L}=K$, $\hat{H}_{\bar{L}} = \hat{H}_K$, along
  with the trivial identity embedding maps $\hat{U}'_{K\bar{K}\to M} = \hat{I}_{KL\to M}$,
  $\hat{U}'_{L\bar{L}\to M} = \hat{I}_{KL\to M}$.  There always
  exist $\lvert{\mathrm{i}}\rangle_{\bar{K}}$ and $\lvert{\mathrm{f}}\rangle_{\bar{L}}$
  with the same eigenenergy (as long as $\hat{V}_{K\to L}\neq 0$), by picking an
  eigenstate in the support of $\hat{V}_{K\to L}$ along with its associated image
  under $\hat{V}_{K\to L}$.
    
  Furthermore, with this choice it is always possible to satisfy our additional condition
  leading to~\eqref{eq:dilation-partial-energy-conserving-isometry-complicated-noPi}.
  This can be seen as follows.  Let $m=\operatorname{rank}(\hat{V})$.
  We choose energy eigenbases
  $\{ \lvert{u_j}\rangle_K \}_{j=1}^{d_K}$ of $K$ and
  $\{ \lvert{v_{j'}}\rangle_L \}_{j'=1}^{d_L}$ of $L$, with
  $\{ \lvert{u_j}\rangle_K \}_{j=1}^{m}$ spanning the support
  of $\hat{V}$ and with $\lvert{v_{j}}\rangle_L = \hat{V}_{K\to L}\,\lvert{u_j}\rangle_K$
  for those $j=1,\ldots, m$.  Then we choose
  $\lvert{\mathrm{i}}\rangle_L = \lvert{v_{1}}\rangle_L$ and
  $\lvert{\mathrm{f}}\rangle_K = \lvert{u_{1}}\rangle_L$ (assuming $\hat{V}\neq 0$),
  noting that they must have the same energy.  We see that all states of the form
  $\lvert{u_j}\rangle_K \otimes \lvert{\mathrm{i}}\rangle_L$ for $j>m$ can be mapped
  onto themselves, with clearly
  $(\langle{\mathrm{f}}\rvert_K \otimes \hat{I}_L) 
  (\lvert{u_j}\rangle_K \otimes \lvert{\mathrm{i}}\rangle_L)
  = 0$ because $\lvert{\mathrm{f}}\rangle_K = \lvert{u_1}\rangle_K \perp \lvert{u_j}\rangle_K$.
    
\item In the case of the generalized thermal operation depicted in \cref{fig}, we
  have $K = SB$ and $L=S'B$, with a given energy-conserving
  partial isometry $\hat{V}_{SB\to S'B}$.  In this case, we may choose $M = SS'B$,
  with $\bar{K}=S'$ and $\bar{L}=S$.  If necessary, we can enlarge
  $\bar{K}$ and $\bar{L}$ to include qubit systems
  $Q$ and/or $Q'$ as
  per Points~\ref{item:dilation-partial-energy-conserving-isometry-remarks--add-Q-i-f-states}
  and~\ref{item:dilation-partial-energy-conserving-isometry-remarks--add-Qp-for-noPi}
  to ensure that all the conditions of
  \cref{lemma:dilation-energy-conserving-partial-isometry} are satisfied.  Then
  there exists $\lvert{\mathrm{i}}\rangle_{\bar{K}}$,
  $\lvert{\mathrm{f}}\rangle_{\bar{L}}$, along with
  an energy-conserving unitary $\hat{U}_{SB\bar{K}\to S'B\bar{L}}$, such that
  \begin{align}
    \hat{V}_{SB\to S'B}
    = (\hat{I}_{S'B}\otimes \langle{\mathrm{f}}\rvert_{\bar{L}})\;
        \hat{U}_{SB\bar{K}\to S'B\bar{L}} \;
    (\hat{I}_{SB} \otimes \lvert{\mathrm{i}}\rangle_{\bar{K}})
    \ .      
  \end{align}
\end{enumerate}

We now turn to the proof of the proposition.

\begin{proof}[Proof of \cref{lemma:dilation-energy-conserving-partial-isometry}.]
First we compute as in \cref{prop:energypres-partial-isom-V-commutes-with-H}
the commutators
$[\hat{V}^\dagger \hat{V}, \hat{H}_K] =
\hat{V}^\dagger \hat{V}\hat{H}_K - \hat{H}_K \hat{V}^\dagger \hat{V}
= \hat{V}^\dagger (\hat{H}_{L} - \hat{H}_{L}) \hat{V} = 0$
and $[\hat{V}\hat{V}^\dagger, \hat{H}_{L}]
= \hat{V} \hat{V}^\dagger \hat{H}_{L} - \hat{H}_{L} \hat{V} \hat{V}^\dagger
= \hat{V} (\hat{H}_{K} - \hat{H}_{K}) \hat{V}^\dagger = 0$,
as well as $[\hat{V}^\dagger \hat{H}_{L} \hat{V}, \Hat{H}_K] =
\hat{V}^\dagger \hat{H}_{L} \hat{V}\hat{H}_K - \hat{H}_K \hat{V}^\dagger \hat{H}_{L} \hat{V}
= \hat{V}^\dagger (\hat{H}_{L}^2 - \hat{H}_{L}^2) \hat{V} = 0$
and $[\hat{V}^\dagger \hat{H}_L \hat{V}, \hat{V}^\dagger\hat{V}] 
= \hat{V}^\dagger \hat{H}_L\hat{V}\hat{V}^\dagger \hat{V}
  - \hat{V}^\dagger \hat{V}\hat{V}^\dagger \hat{H}_L \hat{V}
= \hat{V}^\dagger [\hat{H}_L, \hat{V}\hat{V}^\dagger] \hat{V} = 0$.

Because $\hat{U}'_{K\bar{K}\to M}$ and $\hat{U}''_{L\bar{L}\to M}$ are unitary
we must have $d_{K}d_{\bar{K}} = d_M = d_{L}d_{\bar{L}}$.
Also, the operator $\hat{U}_{L\bar{L}\leftarrow M}''^\dagger \hat{U}'_{K\bar{K}\to M}$
is an energy-conserving unitary operator from $K\bar{K}$ to $L\bar{L}$; therefore,
the Hamiltonians $H_K + H_{\bar{K}}$ and $H_L + H_{\bar{L}}$ must have the same
eigenvalues and with the same multiplicity.

Let $\hat{W}_{M} = \hat{U}''_{L\bar{L}\to M}\,
(\hat{V}_{K\to L}\otimes\lvert{\mathrm{f}}\rangle_{\bar{L}}\langle{\mathrm{i}}\rvert_{\bar{K}})
\,\hat{U}_{K\bar{K}\leftarrow M}'^\dagger$, noting that $\hat{W}_M$ is a partial isometry.  Furthermore,
$\hat{W}_{M} \hat{H}_M = \hat{U}''_{L\bar{L}\to M} \,
(\hat{V}_{K\to L}\otimes\lvert{\mathrm{f}}\rangle_{\bar{L}}\langle{\mathrm{i}}\rvert_{\bar{K}})
\,(\hat{H}_K+\hat{H}_{\bar{K}})\,\hat{U}_{K\bar{K}\leftarrow M}'^\dagger
= \ldots = \hat{H}_M\hat{W}_M$, recalling that $\lvert{\mathrm{i}}\rangle_{\bar{K}}$ and
$\lvert{\mathrm{f}}\rangle_{\bar{L}}$ have the same eigenvalue with respect to
$\hat{H}_K$ and $\hat{H}_{\bar{L}}$, respectively; therefore
$[\hat{W}_M, H_M] = 0$.

We can complete $\hat{W}_M$ into a fully energy-conserving unitary $\hat{U}_M$
by assigning to each input energy eigenstate an energy eigenstate of same energy at
the output; this association is possible since the eigenvalues of the input and output
systems coincide including with multiplicity.
Then~\eqref{eq:dilation-partial-energy-conserving-isometry-complicated} is satisfied
by construction, as can be checked by verifying the action of both sides of the
equation on an energy eigenbasis spanning the support of $\hat{V}_{K\to L}$.

Now we assume that the additional condition stated in the claim holds, in order
to prove~\eqref{eq:dilation-partial-energy-conserving-isometry-complicated-noPi}.

Let $\{ \lvert{\phi_j}\rangle_M \}_{j=1}^{d_M}$
be a basis of $M$ that is a simultaneous eigenbasis of
$\hat{H}_M$,
$\hat{U}'_{K\bar{K}\to M}\,(\hat{V}^\dagger\hat{V}\otimes
\lvert{\mathrm{i}}\rangle\hspace*{-0.25ex}\langle{\mathrm{i}}\rvert_{\bar{K}} )
\,\hat{U}_{K\bar{K}\leftarrow M}'^\dagger$, and
$\hat{U}'_{K\bar{K}\to M}\,(\hat{V}^\dagger \hat{H}_L \hat{V}\otimes
\lvert{\mathrm{i}}\rangle\hspace*{-0.25ex}\langle{\mathrm{i}}\rvert_{\bar{K}} )
\,\hat{U}_{K\bar{K}\leftarrow M}'^\dagger$,
and furthermore chosen such that \emph{(i)}~the
states $\{ \lvert{\phi_j}\rangle_M \}_{j=1}^{\operatorname{rank}(\hat{V})}$
span the support of $\hat{U}'_{K\bar{K}\to M}\,(\hat{V}_K\otimes
\lvert{\mathrm{i}}\rangle\hspace*{-0.25ex}\langle{\mathrm{i}}\rvert_{\bar{K}}
)\, \hat{U}_{K\bar{K}\leftarrow M}'^\dagger$, and \emph{(ii)}~the set
$\{ \lvert{\phi_j}\rangle_M \}_{j=\operatorname{rank}(\hat{V})+1}^{d_K}$
spans the subspace supported by $\hat{U}'_{K\bar{K}\to M}\,\bigl(
(\hat{I}_K-\hat{\Pi}_K)\otimes
\lvert{\mathrm{i}}\rangle\hspace*{-0.25ex}\langle{\mathrm{i}}\rvert_{\bar{K}}
\bigr)\, \hat{U}_{K\bar{K}\leftarrow M}'^\dagger$.

Let $\{ \lvert{\chi_{j'}}\rangle_M \}_{j'=1}^{d_M}$ be another basis of $M$
that is a simultaneous eigenbasis of $\hat{H}_M$ and
$\hat{U}''_{L\bar{L}\to M}\,(\hat{V}\hat{V}^\dagger\otimes
\lvert{\mathrm{f}}\rangle\hspace*{-0.25ex}\langle{\mathrm{f}}\rvert_{\bar{L}} )
\,\hat{U}_{L\bar{L}\leftarrow M}''^\dagger$, and
furthermore chosen such that \emph{(i)}~we have
$\lvert{\chi_{j'}}\rangle_M = \hat{U}''_{L\bar{L}\to M}
(\hat{V}_{K\to L} \lvert{\phi_{j'}}\rangle_M \otimes
\lvert{\mathrm{f}}\rangle_{\bar{L}}   )$
for all $j'=1, \ldots, \operatorname{rank}(\hat{V})$, \emph{(ii)}~we
have that the set
$\{ \lvert{\chi_{j'}}\rangle_M \}_{j'=\operatorname{rank}(\hat{V})+1}^{d_K}$
is orthogonal to
$
\hat{U}''_{L\bar{L}\to M} \,
(\hat{I}_L\otimes\lvert{\mathrm{f}}\rangle\hspace*{-0.25ex}\langle{\mathrm{f}}\rvert_{\bar{L}})
\, \hat{U}_{L\bar{L}\to M}''^\dagger
$ and \emph{(iii)}~we
also have that $\lvert{\chi_{j}}\rangle_M $
for $j=\operatorname{rank}(\hat{V})+1, \ldots, d_K$ is an energy eigenstate
with the same energy as $\lvert{\phi_j}\rangle_M$, which we can ensure
thanks to our additional assumption stated in the claim.

Then we define $\hat{U}_M$ as
\begin{align}
    \hat{U}_M
    &= \sum_j \lvert{\chi_j}\rangle\hspace*{-0.25ex}\langle{\phi_j}\rvert_{M}\ .
\end{align}
The operator  $\hat{U}_M$  is unitary and commutes with $\hat{H}_M$, since it maps
an energy eigenbasis onto an energy eigenbasis.  If the state $\lvert{\psi}\rangle_K$
is in the support of $\hat{V}$, we have that
$\hat{U}'_{K\bar{K}\to M} (\lvert{\psi}\rangle_K \otimes \lvert{\mathrm{i}}\rangle_{\bar{K}})
=
\sum_{j=1}^{\operatorname{rank}(\hat{V})} c_j \lvert{\phi_j}\rangle_{M}
$
for suitable complex coefficients $c_j$. Then
\begin{align}
(\hat{I}_L\otimes\langle{\mathrm{f}}\rvert_L)\,
\hat{U}_{L\bar{L}\leftarrow M}''^\dagger \hat{U}_M \hat{U}'_{K\bar{K}\to M} \,
(\lvert{\psi}\rangle_K \otimes \lvert{\mathrm{i}}\rangle_{\bar{K}})
&= \sum_{j=1}^{\operatorname{rank}(\hat{V})} c_j\;
(\hat{I}_L\otimes\langle{\mathrm{f}}\rvert_L)\,
\hat{U}_{L\bar{L}\leftarrow M}''^\dagger \; \lvert{\chi_j}\rangle_{M}
\nonumber\\
&=
\hat{V}_{K\to L} \lvert{\psi}\rangle_M \otimes
\lvert{\mathrm{f}}\rangle_{\bar{L}}\ .
\end{align}
If the state $\lvert{\psi'}\rangle_K$
lies outside the support of $\hat{V}$, we have that
$\hat{U}'_{K\bar{K}\to M} (\lvert{\psi'}\rangle_K \otimes \lvert{\mathrm{i}}\rangle_{\bar{K}})
=
\sum_{j=\operatorname{rank}(\hat{V})+1}^{d_K} c'_j \lvert{\phi_j}\rangle_{M}
$
for suitable complex coefficients $c_j'$, and
\begin{align}
(\hat{I}_L\otimes\langle{\mathrm{f}}\rvert_L)\,
\hat{U}_{L\bar{L}\leftarrow M}''^\dagger \hat{U}_M \hat{U}'_{K\bar{K}\to M} \,
(\lvert{\psi'}\rangle_K \otimes \lvert{\mathrm{i}}\rangle_{\bar{K}})
  &= \sum_{j=\operatorname{rank}(\hat{V})+1}^{d_K} c'_j \;
  (\hat{I}_L\otimes\langle{\mathrm{f}}\rvert_L)\,
\hat{U}_{L\bar{L}\leftarrow M}''^\dagger
\lvert{\chi_j}\rangle_M
  = 0\ .
\end{align}
We have therefore proven~\eqref{eq:dilation-partial-energy-conserving-isometry-complicated-noPi}.
\end{proof}

Now we present some general properties of the thermodynamic operations
introduced in \cref{sec:thermodynamic-operations}.

\begin{proposition}[Elementary properties of thermodynamic operations]
  \label{prop:elementary-thermo-transformation-properties}
  Consider systems $S,S'$ with corresponding Hamiltonians $\hat{H}_S,\hat{H}'_{S'}$.
  Let $*$ denote either TO or GPM.  The following hold:
  \begin{enumerate}[label=(\alph*)]
  \item \label{propitem:identity-process} If $S \simeq S'$
    and  $H'_{S'} = H_S + c$ for some $c\in\mathbb{R}$, the identity process
    is a $(c,0)$-work/coherence-assisted process in either model TO or GPM;
  \item \label{propitem:transform-different-energy-states} For two energy
    eigenstates $\lvert {E}\rangle _S, \lvert {E'}\rangle _{S'}$, we have
    $\lvert {E}\rangle _S \xrightarrow[*]{w,0,0} \lvert {E'}\rangle _{S'}$ if and only if $w \geqslant E'-E$;
  \item \label{propitem:wit-transformation} For any
    $w\in\mathbb{R},\eta>0,\varepsilon>0$, we have
    $\hat\rho_S\otimes\lvert {E}\rangle\hspace*{-0.25ex}\langle{E}\rvert _{A} \xrightarrow[*]{w,\eta,\varepsilon}
    \hat\rho'_{S'}\otimes\lvert {E'}\rangle\hspace*{-0.25ex}\langle{E'}\rvert _{A'}$ for energy eigenstates on ancillas
    $A,A'$ if and only if
    $\hat\rho_S \xrightarrow[*]{E+w-E',\eta,\varepsilon} \hat\rho'_{S'}$;
  \item \label{propitem:transform-different-thermal-states}
    We have
    $\hat\gamma\xrightarrow[*]{F'-F,\,0,\,0} \hat\gamma'$, where $\hat\gamma=e^{\beta(F-\hat{H})}$,
    $\hat\gamma'=e^{\beta(F'-\hat{H}')}$ with $F = - \beta^{-1}\ln\operatorname{tr}(e^{-\beta\hat{H}})$,
    $F' = - \beta^{-1}\ln\operatorname{tr}(e^{-\beta\hat{H}'})$;
  \item \label{propitem:nonoptimal-thermo-transformations}
    $\hat\rho\xrightarrow[*]{w,\,\eta,\,\varepsilon} \hat\rho'$ implies
    $\hat\rho\xrightarrow[*]{w',\,\eta',\,\varepsilon'} \hat\rho'$ for any $w'\geqslant w$,
    $\eta'\geqslant \eta$ and $\varepsilon'\geqslant\varepsilon$;
  \item \label{propitem:combine-thermo-transformations} If
    $\hat\rho\xrightarrow[*]{w,\,\eta,\,\varepsilon} \hat\rho'$ and
    $\hat\rho' \xrightarrow[*]{w',\,\eta',\,\varepsilon'} \hat\rho''$, then
    $\hat\rho\xrightarrow[*]{w+w',\, \eta+\eta',\, \varepsilon+\varepsilon'} \hat\rho''$.
  \end{enumerate}
\end{proposition}
\begin{proof}
  Property~\ref{propitem:identity-process} for $H'=H$ is obvious because the
  identity process is itself both a thermal operation and a Gibbs preserving map.  For
  $H'_{S'} = H_S + c\hat{I}$ with $c\neq 0$ we use a two-level battery $W$ with
  energy eigenstates $\lvert {0}\rangle _W,\lvert {c}\rangle _W$ and $H_W=c\,\lvert {c}\rangle\hspace*{-0.25ex}\langle{c}\rvert _W$; then
  $\hat{I}_{S\to S'}\otimes\lvert {0}\rangle\hspace*{-0.25ex}\langle{c}\rvert $ is an energy-conserving partial
  isometry, and thus a thermal operation, on the system $S$ and the battery $W$ with
  $c$ work expended.  The statement in the GPM model follows from
  \cref{lemma:TO-implies-GPM}.
  Property~\ref{propitem:transform-different-energy-states} is clear; the only
  nontrivial aspect is that we may have strict inequality.  That a thermal
  operation can perform this transformation can be seen using
  thermo-majorization~\cite{Horodecki2013_ThermoMaj}.  The statement for GPM
  follows because a thermal operation is also Gibbs-preserving.
  Property~\ref{propitem:wit-transformation} holds by definition of a
  $(w,\eta)$-work/coherence-assisted process; the systems $A,A'$ may be combined
  together with the battery system $W$ in the transformation.
  Property~\ref{propitem:transform-different-thermal-states} holds because the
  thermo-majorization curve of the thermal state is the line connecting $(0,0)$
  to $(e^{\beta F},1)$~\cite{Horodecki2013_ThermoMaj}.
  Property~\ref{propitem:nonoptimal-thermo-transformations} follows
  from~\ref{propitem:transform-different-energy-states}.
  To show Property~\ref{propitem:combine-thermo-transformations}, let $\Phi$
  (respectively $\Phi'$) be a work/coherence-assisted-process with parameters
  $(w,\eta)$ (respectively $(w',\eta')$).  Then $\Phi'\circ\Phi$ is a
  $(w+w', \eta+\eta')$-work/coherence-assisted process, and we have
  $D(\Phi'(\Phi(\hat\rho)), \hat\rho'') \leqslant D(\Phi'(\Phi(\hat\rho)),
  \Phi'(\hat\rho')) + D(\Phi'(\hat\rho'), \hat\rho'') \leqslant
  D(\Phi(\hat\rho), \hat\rho') + D(\Phi'(\hat\rho'), \hat\rho'') \leqslant
  \varepsilon+\varepsilon'$.
\end{proof}

Now we present the proofs of \cref{prop:relative_monotone,prop:quasi-monotonicity-dhyp-assisted-thermal-transformation} stated in
\cref{sec:thermodynamic-operations} regarding the monotonicity of the various
divergences under thermodynamic operations.

\begin{proof}[Proof of \cref{prop:relative_monotone}]
  We have $\hat\rho_S \xrightarrow[\mathrm{GPM}]{} \hat\rho_{S'}'$ (invoking
  \cref{lemma:TO-implies-GPM} if necessary); let $\Phi^{[\mathrm{GPM}]}$ be the
  corresponding Gibbs-sub-preserving map.
  The monotonicity of the hypothesis testing divergence follows directly from
  the properties~\eqref{eq:DHyp-dpi}
  and~\eqref{eq:DHyp-semidef-ordering-second-argument}.

  The monotonicity of the R\'enyi divergences is trickier to prove because the
  corresponding data processing inequality only holds for trace-preserving
  mappings.  Using~\cite[Proposition~2]{Faist2018PRX_workcost}, there exists a
  qubit system $Q$ with a basis $\{ \lvert {\mathrm{i}}\rangle _Q,\lvert {\mathrm{f}}\rangle _Q \}$
  and with a Hamiltonian
  $H_Q = q_{\mathrm i}\lvert {\mathrm{i}}\rangle\hspace*{-0.25ex}\langle{\mathrm{i}}\rvert +q_{\mathrm f}\lvert {\mathrm{f}}\rangle\hspace*{-0.25ex}\langle{\mathrm{f}}\rvert _Q$, as
  well as eigenstates $\lvert {\mathrm{i}}\rangle _{S'},\lvert {\mathrm{f}}\rangle _{S}$ of
  $\hat{H}_S,\hat{H}'_{S'}$, and a trace-preserving map
  $\mathcal{K}^{[\mathrm{GPM}]}_{SS'Q\to SS'Q}$ such that
  \begin{subequations}
    \begin{gather}
      \Phi^{[\mathrm{GPM}]}_{S\to S'}(\cdot)
      = \bigl\langle {\mathrm{f},\mathrm{f}}\bigr\rvert _{SQ} \; \mathcal{K}^{[\mathrm{GPM}]}\bigl(
      (\cdot)\otimes \lvert {\mathrm{i},\mathrm{i}}\rangle\hspace*{-0.25ex}\langle{\mathrm{i},\mathrm{i}}\rvert _{S'Q} \bigr) \,
      \bigl\lvert {\mathrm{f},\mathrm{f}}\bigr\rangle _{SQ}\ ;
      \\
      \mathcal{K}^{[\mathrm{GPM}]}_{SS'Q\to SS'Q}(e^{-\beta(\hat{H}_S + \hat{H}'_{S'} + \hat{H}_Q)})
      = e^{-\beta(\hat{H}_S + \hat{H}'_{S'} + \hat{H}_Q)}\ ;\text{ and}
      \\
      q_{\mathrm{i}} + \langle {\mathrm{i}}\hspace*{0.2ex}\vert\hspace*{0.2ex}{\hat{H}'_{S'}}\hspace*{0.2ex}\vert\hspace*{0.2ex}{\mathrm{i}}\rangle 
      = q_{\mathrm{f}} + \langle {\mathrm{f}}\hspace*{0.2ex}\vert\hspace*{0.2ex}{\hat{H}_{S}}\hspace*{0.2ex}\vert\hspace*{0.2ex}{\mathrm{f}}\rangle \ .
      \label{eq:hklhugiyfuhkj}
    \end{gather}
  \end{subequations}
  Since $\operatorname{tr}(\Phi^{[\mathrm{GPM}]}(\hat\rho_S)) = \operatorname{tr}(\hat\rho'_{S'}) = 1$, we
  can invoke~\cite[Corollary~3(b)]{Faist2018PRX_workcost} to see that
  \begin{align}
    \mathcal{K}^{[\mathrm{GPM}]}\bigl( \hat\rho_S \otimes
    \lvert {\mathrm{i},\mathrm{i}}\rangle\hspace*{-0.25ex}\langle{\mathrm{i},\mathrm{i}}\rvert _{S'Q} \bigr)
   = \Phi^{[\mathrm{GPM}]}_{S\to S'}(\hat\rho_S)
    \otimes\lvert {\mathrm{f},\mathrm{f}}\rangle\hspace*{-0.25ex}\langle{\mathrm{f},\mathrm{f}}\rvert _{SQ}\ .
  \end{align}
  Also, using~\cite[Proposition~17]{Faist2018PRX_workcost}
  and~\eqref{eq:hklhugiyfuhkj}, we have that
  \begin{align}
    {S}_{\alpha}(\lvert {\mathrm{i},\mathrm{i}}\rangle\hspace*{-0.25ex}\langle{\mathrm{i},\mathrm{i}}\rvert _{S'Q}\,\Vert\,{e}^{-\beta(\hat{H}'_{S'}+\hat{H}_Q)})
    =
    {S}_{\alpha}(\lvert {\mathrm{f},\mathrm{f}}\rangle\hspace*{-0.25ex}\langle{\mathrm{f},\mathrm{f}}\rvert _{SQ}\,\Vert\,{e}^{-\beta(\hat{H}_{S}+\hat{H}_Q)})
    =: C\ .
  \end{align}
  Then using the property~\eqref{eq:Dalpha-additive-under-tensor-products} of
  the R\'enyi $\alpha$-entropies and the above identities, we have
  \begin{align}
    {S}_{\alpha}(\hat\rho'_{S'}\,\Vert\,e^{-\beta\hat{H}'_{S'}})
    &= {S}_{\alpha}(\Phi^{[\mathrm{GPM}]}(\hat\rho_S)\otimes
      \lvert {\mathrm{f},\mathrm{f}}\rangle\hspace*{-0.25ex}\langle{\mathrm{f},\mathrm{f}}\rvert _{SQ} \,\Vert\, e^{-\beta(\hat{H}_S+\hat{H}'_{S'}+\hat{H}_Q)} )
      - C
      \nonumber\\
    &= {S}_{\alpha}(\mathcal{K}^{[\mathrm{GPM}]}(\hat\rho_S\otimes
      \lvert {\mathrm{i},\mathrm{i}}\rangle\hspace*{-0.25ex}\langle{\mathrm{i},\mathrm{i}}\rvert _{S'Q}) \,\Vert\,\mathcal{K}^{[\mathrm{GPM}]}(e^{-\beta(\hat{H}_S+\hat{H}'_{S'}+\hat{H}_Q)}) )
      - C
      \nonumber\\
    &\leqslant{S}_{\alpha}(\hat\rho_S\otimes \lvert {\mathrm{i},\mathrm{i}}\rangle\hspace*{-0.25ex}\langle{\mathrm{i},\mathrm{i}}\rvert _{S'Q}\,\Vert\,e^{-\beta(\hat{H}_S+\hat{H}'_{S'}+\hat{H}_Q)} )
      - C
      \nonumber\\
    &= {S}_{\alpha}(\hat\rho_S\,\Vert\,e^{-\beta\hat{H}_S})\ ,
  \end{align}
  where the inequality holds by the data processing
  inequality~\eqref{eq:Dalpha-dpi}.
\end{proof}

\begin{proof}[Proof
  of~\cref{prop:quasi-monotonicity-dhyp-assisted-thermal-transformation}]
  We prove the statement for the GPM model, invoking \cref{lemma:TO-implies-GPM}
  if necessary.  Let $C,C',W,W'$ be systems with Hamiltonians
  $\hat{H}_C,\hat{H}_{C'},\hat{H}_W,\hat{H}_{W'}$ from \cref{def:workcoherence} and let
  $\tilde\Phi^{[\mathrm{GPM}]}_{SCW\to S'C'W'}$ be the GPM operation
  in~\eqref{eq:workcoherence-process}.  Let $\hat{\tilde\rho}_{S'C'W'} = \tilde\Phi^{[\mathrm{GPM}]}_{SCW\to{}S'C'W'}(
  \hat\rho_S\otimes\lvert {E}\rangle\hspace*{-0.25ex}\langle{E}\rvert _W\otimes\lvert {\zeta}\rangle\hspace*{-0.25ex}\langle{\zeta}\rvert _C)$, with
  $D\bigl(\langle {E',\zeta'}\rvert _{W'C'} \hat{\tilde\rho}_{S'C'W'}
  \lvert {E',\zeta'}\rangle _{W'C'}\,,\, \hat\rho'_{S'}\bigr)\leqslant\varepsilon$.  Using
  property~\eqref{eq:DHyp-perturbation-state} we have
  \begin{align}
    {S}_{\mathrm{H}}^{\xi+\varepsilon}(\hat\rho'_{S'}\,\Vert\,e^{-\beta\hat{H}_{S'}})
    - \ln\Bigl(\frac{\xi+\varepsilon}{\xi}\Bigr)
    \leqslant{S}_{\mathrm{H}}^{\xi}(\langle {E',\zeta'}\rvert _{W'C'}\,\hat{\tilde\rho}_{S'C'W'}\,\lvert {E',\zeta'}\rangle _{W'C'}\,\Vert\,
      e^{-\beta\hat{H}_{S'}})\ .
    \label{eq:sdfghjklasdfghjk}
  \end{align}
  Now compute
  \begin{multline}
    \operatorname{tr}_{C'W'}\bigl[\lvert {E,\zeta'}\rangle\hspace*{-0.25ex}\langle{E,\zeta'}\rvert _{W'C'}\,
    e^{-\beta(\hat{H}_{S'}+\hat{H}_{C'}+\hat{H}_{W'})}\bigr]
    =
    e^{-\beta\hat{H}_{S'}}\,e^{-\beta E'}\,\langle {\zeta'}\hspace*{0.2ex}\vert\hspace*{0.2ex}{e^{-\beta H_C}}\hspace*{0.2ex}\vert\hspace*{0.2ex}{\zeta'}\rangle 
    \\
    \geqslant e^{-\beta(E'+\eta)}\,e^{-\beta\hat{H}_{S'}}\ ,
    \label{eq:lkghijobhvkliopik}
  \end{multline}
  because
  $\langle {\zeta'}\hspace*{0.2ex}\vert\hspace*{0.2ex}{e^{-\beta H_C}}\hspace*{0.2ex}\vert\hspace*{0.2ex}{\zeta'}\rangle  \geqslant\lambda_{\mathrm{min}}(e^{-\beta
    H_C}) \geqslant e^{-\beta\lVert {H_C}\rVert _{\infty}} \geqslant e^{-\beta\eta}$ where
  $\lambda_{\mathrm{min}}(\cdot)$ denotes the smallest eigenvalue of its
  argument.
  Observe that the operation
  $\operatorname{tr}_{C'W'}\bigl[\lvert {E',\zeta'}\rangle\hspace*{-0.25ex}\langle{E',\zeta'}\rvert _{W'C'}\,(\cdot)\bigr]$ is a completely
  positive, trace-nonincreasing map.  Then thanks to~\eqref{eq:DHyp-dpi}
  and~\eqref{eq:lkghijobhvkliopik} along with the scaling
  property~\eqref{eq:DHyp-scaling},
  \begin{align}
    \text{\eqref{eq:sdfghjklasdfghjk}}
    &\leqslant{S}_{\mathrm{H}}^{\xi}(\hat{\tilde\rho}_{S'C'W'}\,\Vert\,e^{\beta(E'+\eta)} e^{-\beta(\hat{H}_{S'}+\hat{H}_{W'}+\hat{H}_{C'})})
      \nonumber\\
    &= {S}_{\mathrm{H}}^{\xi}(\hat{\tilde\rho}_{S'C'W'}\,\Vert\,e^{-\beta(\hat{H}_{S'}+\hat{H}_{W'}+\hat{H}_{C'})})
    - \beta(E'+\eta)
      \nonumber\\
    &\leqslant{S}_{\mathrm{H}}^{\xi}(
      \tilde\Phi^{[\mathrm{GPM}]}\bigl(\hat\rho_S\otimes\lvert {E,\zeta}\rangle\hspace*{-0.25ex}\langle{E,\zeta}\rvert _{WC}\bigr)
      \,\Vert\,
      \tilde\Phi^{[\mathrm{GPM}]}\bigl(e^{-\beta(\hat{H}_{S}+\hat{H}_{W}+\hat{H}_{C})}\bigr)
      )
      - \beta(E'+\eta)
      \nonumber\\
    &\leqslant{S}_{\mathrm{H}}^{\xi}(
      \hat\rho_S\otimes\lvert {E,\zeta}\rangle\hspace*{-0.25ex}\langle{E,\zeta}\rvert _{WC}
      \,\Vert\,
      e^{-\beta(\hat{H}_{S}+\hat{H}_{W}+\hat{H}_{C})}
      )
      - \beta(E'+\eta)\ ,
      \label{eq:tdryfguhjkldrlijk}
  \end{align}
  where the two last inequalities hold using
  respectively~\eqref{eq:DHyp-semidef-ordering-second-argument} noting that
  $\tilde\Phi^{[\mathrm{GPM}]}_{SCW\to S'C'W'}$ is Gibbs-sub-preserving, and the
  data processing inequality~\eqref{eq:DHyp-dpi}.
  
  Let $\hat Q_{SCW}$ with $0\leqslant\hat Q_{SCW}\leqslant\hat{I}$ be an optimal choice
  for the last divergence term in~\eqref{eq:tdryfguhjkldrlijk}, such that
  ${S}_{\mathrm{H}}^{\xi}(\hat\rho_{S}\otimes\lvert {E,\zeta}\rangle\hspace*{-0.25ex}\langle{E,\zeta}\rvert _{WC}\,\Vert\,e^{-\beta(\hat{H}_{S}+\hat{H}_{W}+\hat{H}_{C})}) = -\ln\operatorname{tr}(\hat Q_{SCW}
  e^{-\beta(\hat{H}_{S}+\hat{H}_{W}+\hat{H}_{C})})$.  Let
  $\hat Q'_{S} = \langle {E,\zeta}\rvert _{WC}\,\hat Q_{SCW}\,\lvert {E,\zeta}\rangle _{WC}$, noting
  that $0\leqslant\hat Q'_{S}\leqslant\hat{I}_{S}$.  Then we have
  $\operatorname{tr}(\hat Q'_{S}\hat\rho_{S}) =
  \operatorname{tr}\bigl(\hat{Q}_{SCW}\,(\hat\rho_{S}\otimes\lvert {E,\zeta}\rangle\hspace*{-0.25ex}\langle{E,\zeta}\rvert _{WC})\bigr) \geqslant\xi$, and thus
  \begin{align}
    {S}_{\mathrm{H}}^{\xi}(\hat\rho_{S}\,\Vert\,e^{-\beta\hat{H}_S})
    &\geqslant-\ln\biggl(\frac1\xi \operatorname{tr}\bigl(\hat Q'_{S}\, e^{-\beta\hat{H}_S}\bigr) \biggr)
      \nonumber\\
    &= -\ln\biggl(\frac1\xi
      \operatorname{tr}\bigl(\hat Q_{SCW}\, \bigl(e^{-\beta\hat{H}_S}\otimes\lvert {E,\zeta}\rangle\hspace*{-0.25ex}\langle{E,\zeta}\rvert _{WC}\bigr)\bigr) \biggr)
      \nonumber\\
    &\geqslant-\ln\biggl(e^{\beta(E+\eta)}\,\frac1\xi\,
      \operatorname{tr}\bigl(\hat Q_{SCW}\, e^{-\beta(\hat{H}_S+\hat{H}_C+\hat{H}_W)}\bigr) \biggr)
      \nonumber\\
    &= {S}_{\mathrm{H}}^{\xi}(\hat\rho_{S}\otimes\lvert {E,\zeta}\rangle\hspace*{-0.25ex}\langle{E,\zeta}\rvert _{WC}\,\Vert\,e^{-\beta(\hat{H}_{S}+\hat{H}_{W}+\hat{H}_{C})})
      -\beta(E+\eta)\ ,
      \label{eq:rtyjlihgjjklhgj}
  \end{align}
  where in the last inequality we used
  $e^{-\beta\hat{H}_C}\geqslant\lambda_{\mathrm{min}}(e^{-\beta \hat{H}_C})\lvert {\zeta}\rangle\hspace*{-0.25ex}\langle{\zeta}\rvert _C
  \geqslant e^{-\beta\lVert {\hat{H}_C}\rVert _{\infty}}\lvert {\zeta}\rangle\hspace*{-0.25ex}\langle{\zeta}\rvert _C \geqslant e^{-\beta\eta}\lvert {\zeta}\rangle\hspace*{-0.25ex}\langle{\zeta}\rvert _C$ and
  $e^{-\beta\hat{H}_W}\geqslant e^{-\beta E}\lvert {E}\rangle\hspace*{-0.25ex}\langle{E}\rvert _W$ which imply together that
  $\lvert {E,\zeta}\rangle\hspace*{-0.25ex}\langle{E,\zeta}\rvert _{WC}\leqslant e^{\beta(E+\eta)}\,e^{-\beta(\hat{H}_C+\hat{H}_W)}$.
  Rewriting~\eqref{eq:rtyjlihgjjklhgj}, we have
  \begin{align}
    {S}_{\mathrm{H}}^{\xi}(\hat\rho_{S}\otimes\lvert {E,\zeta}\rangle\hspace*{-0.25ex}\langle{E,\zeta}\rvert _{WC}\,\Vert\,e^{-\beta(\hat{H}_{S}+\hat{H}_{W}+\hat{H}_{C})})
    \leqslant{S}_{\mathrm{H}}^{\xi}(\hat\rho_{S}\,\Vert\,e^{-\beta\hat{H}_S}) + \beta(E+\eta)\ ,
  \end{align}
  and finally,
  \begin{align}
    \text{\eqref{eq:tdryfguhjkldrlijk}}
    &\leqslant{S}_{\mathrm{H}}^{\xi}(\hat\rho\,\Vert\,e^{-\beta\hat{H}_S}) + \beta(E+\eta) - \beta(E'+\eta)
    \leqslant{S}_{\mathrm{H}}^{\xi}(\hat\rho\,\Vert\,e^{-\beta\hat{H}_S}) + \beta(w+2\eta)\ .
  \end{align}
  Following the chain of inequalities proves the claim.
\end{proof}

We present a convenient lemma that can ensure asymptotic convertibility if
good enough asymptotic convertibility can be achieved for any fixed
$\varepsilon>0$.
We first note that, thanks to Property~\ref{propitem:nonoptimal-thermo-transformations}
of \cref{prop:elementary-thermo-transformation-properties},
we may equivalently replace all limits
``$\lim_{n\to\infty}$'' in \cref{def:asymptotic_transformation} by ``$\limsup_{n\to\infty}$''.

\begin{lemma}
  \label{lemma:asympt-transform-fixedepsilon-ok}
  For sequences of states $\widehat{P}= \{ \hat\rho_n \}$,
  $\widehat{P}' = \{ \hat\rho'_n \}$ and sequences of Hamiltonians
  $\widehat{\mathcal{H}}= \{ \hat{H}_n \}$, $\widehat{\mathcal{H}}' = \{ \hat{H}'_n \}$, suppose that for all
  $\varepsilon>0$ there exists
  $w_{n,\varepsilon}, \eta_{n,\varepsilon}, \bar{\varepsilon}_{n,\varepsilon}$ such that
  $\hat\rho_n
  \xrightarrow[*]{w_{n,\varepsilon},\,\eta_{n,\varepsilon},\,\bar{\varepsilon}_{n,\varepsilon}}
  \hat\rho'_n$ for all $n$, where $*$ denotes TO or GPM.  If $r\in\mathbb{R}$
  is such that
  \begin{align}
    \label{eq:lemma-asympt-transform-fixedepsilon-ok-convconditions}
    \lim_{\varepsilon\to0} \limsup_{n\to\infty} \frac{w_{n,\varepsilon}}{n} &=r\ ;
    &
      \lim_{\varepsilon\to0} \limsup_{n\to\infty} \frac{\eta_{n,\varepsilon}}{n} &= 0\ ;
    &
      \lim_{\varepsilon\to0} \limsup_{n\to\infty} \bar{\varepsilon}_{n,\varepsilon} &= 0\ ,
  \end{align}
  then $\widehat{P}\xrightarrow[*]{r} \widehat{P}'$.
\end{lemma}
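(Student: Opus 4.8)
The plan is to run a diagonalization over $\epsilon$, selecting a single sequence $\epsilon_n\downarrow 0$ that decreases slowly enough that the $n\to\infty$ control available at each fixed $\epsilon$ is inherited along the diagonal. The only extra ingredient will be Property~\ref{propitem:nonoptimal-thermo-transformations} of \cref{prop:elementary-thermo-transformation-properties}, which lets me inflate the work cost at each $n$ without breaking the transformation, and which also underlies the remark (stated just before the lemma) that the limits in \cref{def:asymptotic_transformation} may be read as $\limsup$'s.

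First I would record, for each fixed $\epsilon>0$, the three asymptotic rates
\[
  W(\epsilon) := \limsup_{n\to\infty} \frac{w_{n,\epsilon}}{n}, \qquad
  H(\epsilon) := \limsup_{n\to\infty} \frac{\eta_{n,\epsilon}}{n}, \qquad
  E(\epsilon) := \limsup_{n\to\infty} \bar\epsilon_{n,\epsilon},
\]
so that the hypothesis~\eqref{eq:lemma-asympt-transform-fixedepsilon-ok-convconditions} reads $W(\epsilon)\to r$, $H(\epsilon)\to 0$, $E(\epsilon)\to 0$ as $\epsilon\to 0$. Fix any sequence $\epsilon^{(k)}\downarrow 0$, e.g.\ $\epsilon^{(k)}=1/k$. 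By the definition of the $\limsup$, for each $k$ there is an integer $N_k$ such that for all $n\ge N_k$ one has simultaneously $w_{n,\epsilon^{(k)}}/n \le W(\epsilon^{(k)})+1/k$, $\eta_{n,\epsilon^{(k)}}/n \le H(\epsilon^{(k)})+1/k$, and $\bar\epsilon_{n,\epsilon^{(k)}} \le E(\epsilon^{(k)})+1/k$; I would choose the $N_k$ strictly increasing with $N_k\to\infty$. This selection is the crux of the argument.

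Then I would define the diagonal sequence by setting $\epsilon_n := \epsilon^{(k)}$ for $N_k\le n < N_{k+1}$ (and arbitrarily, say $\epsilon^{(1)}$, for $n<N_1$), together with $w_n := w_{n,\epsilon_n}$, $\eta_n := \eta_{n,\epsilon_n}$, $\bar\epsilon_n := \bar\epsilon_{n,\epsilon_n}$. Since the hypothesis furnishes $\rhostate_n \xOp{w_{n,\epsilon},\,\eta_{n,\epsilon},\,\bar\epsilon_{n,\epsilon}} \rhostate'_n$ for every $n$ and every $\epsilon$, in particular $\rhostate_n \xOp{w_n,\,\eta_n,\,\bar\epsilon_n} \rhostate'_n$ holds for all $n$. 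On the block $N_k\le n<N_{k+1}$ the bounds above give $w_n/n \le W(\epsilon^{(k)})+1/k$, and as $n\to\infty$ the corresponding index $k=k(n)$ also tends to infinity, whence $\limsup_{n\to\infty} w_n/n \le r$; by the same reasoning, together with the nonnegativity of $\eta_n$ and $\bar\epsilon_n$, I get $\lim_{n\to\infty}\eta_n/n = 0$ and $\lim_{n\to\infty}\bar\epsilon_n = 0$.

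It then remains to fix the work rate to be exactly $r$. Replacing $w_n$ by $\tilde w_n := \max(w_n, rn)$ keeps the transformation valid by Property~\ref{propitem:nonoptimal-thermo-transformations} of \cref{prop:elementary-thermo-transformation-properties}, while $\tilde w_n/n = \max(w_n/n, r)$ has $\liminf \ge r$ and $\limsup \le r$, hence $\lim_{n\to\infty}\tilde w_n/n = r$. The sequences $\tilde w_n,\eta_n,\bar\epsilon_n$ then meet the requirements of \cref{def:asymptotic_transformation}, giving $\rhoseq \xOp{r} \rhoseq'$. I expect the main obstacle to be exactly the diagonal selection in the third step: one must let $\epsilon$ tend to $0$ slowly enough (through the thresholds $N_k$) that the per-$\epsilon$ asymptotic control survives, and the interchange of the outer $\epsilon\to 0$ limit with the choice of a single $n$-sequence is precisely what the monotonicity property renders harmless.
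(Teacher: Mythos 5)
Your proof is correct and follows essentially the same route as the paper's: both are diagonalization arguments that let $\epsilon$ decrease slowly enough (your blocks $[N_k,N_{k+1})$ play the role of the paper's $N(\epsilon)$ and $\epsilon(n)=\inf\{\epsilon: N(\epsilon)\leq n\}$) so that the per-$\epsilon$ asymptotic control survives along the diagonal. The only cosmetic difference is the last step: the paper upgrades $\limsup w_n/n\leq r$ to the required limit by the remark that the limits in \cref{def:asymptotic_transformation} may be read as limsups, while you pad to $\tilde w_n=\max(w_n,rn)$ --- both rest on Property~\ref{propitem:nonoptimal-thermo-transformations} of \cref{prop:elementary-thermo-transformation-properties}.
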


\begin{proof}
Let $w_\varepsilon:= \limsup_{n\to\infty} w_{n,\varepsilon}/n$,
  $\eta_\varepsilon:= \limsup_{n\to\infty} \eta_{n,\varepsilon}/n$, and
  $\bar{\varepsilon}_\varepsilon:= \limsup_{n\to\infty} \bar{\varepsilon}_{n,\varepsilon}$.
  Define
  \begin{align}
    N(\varepsilon) &:= \min \left\{ N: \forall n\geqslant N,\ 
                    \frac{w_{n,\varepsilon}}{n}\leqslant w_\varepsilon+\varepsilon\text{ and }
                    \frac{\eta_{n,\varepsilon}}{n} \leqslant \eta_\varepsilon+\varepsilon\text{ and }
                  \bar{\varepsilon}_{n,\varepsilon} \leqslant \bar{\varepsilon}_\varepsilon+ \varepsilon\right\}\ .
  \end{align}
  Now let $\varepsilon(n) := \inf \{ \varepsilon: N(\varepsilon) \leqslant n \}$ and
  observe that $\lim_{n\to\infty}\varepsilon(n) = 0$ because $N(\varepsilon)$ is
  finite for any small $\varepsilon>0$ thanks to the existence of the 
  limit superior
  defining $w_\varepsilon$, $\eta_\varepsilon$ and $\bar{\varepsilon}_\varepsilon$.  Then let
  $w_n := w_{n,\varepsilon(n)}$, $\eta_n := \eta_{n,\varepsilon(n)}$, and
  $\bar{\varepsilon}_n := \bar{\varepsilon}_{n,\varepsilon(n)}$, such that
  $\hat\rho_n \xrightarrow[*]{w_n,\,\eta_n,\,\bar{\varepsilon}_n} \hat\rho'_n$ for all $n$.
  We have $w_n/n = w_{n,\varepsilon(n)}/n \leqslant w_{\varepsilon(n)} + \varepsilon(n)$
  by definition of $\varepsilon(n)$ and hence
  $\limsup_{n\to\infty} w_n/n \leqslant\limsup_{n\to\infty} [w_{\varepsilon(n)} +
  \varepsilon(n)] = r$.  Similarly,
  $\eta_n/n = \eta_{n,\varepsilon(n)}/n \leqslant \eta_\varepsilon+ \varepsilon$ and
  thus $\lim_{n\to\infty} \eta_n/n = 0$.  Also,
  $\bar{\varepsilon}_n = \bar{\varepsilon}_{n,\varepsilon(n)} \leqslant \bar{\varepsilon}_\varepsilon+ \varepsilon$ and thus $\lim_{n\to\infty} \bar{\varepsilon}_n = 0$.
\end{proof}

An important known result is the fact that the min and max divergences quantify
the amount of work that is necessary to convert a semiclassical state
$\hat\rho$ to and from the thermal state.

\begin{proposition}[Work distillation and state formation for semiclassical
  states~\cite{Aberg2013_worklike,Horodecki2013_ThermoMaj}]
  \label{thm:work-distillation-state-formation-TO-semiclassical}
  Let $\hat\rho$ be a quantum state on a system with Hamiltonian $\hat{H}$, and
  suppose that $[\hat\rho,\hat{H}_S]=0$.  Let $\gamma''=1$ denote the trivial
  thermal state on the trivial system $\mathbb{C}$ with the trivial Hamiltonian
  $\hat{H}''=0$.  Then
  \begin{align}
    \hat\rho&\xrightarrow[\mathrm{TO}]{-\beta^{-1}{S}_{0}^{\varepsilon}(\hat\rho\,\Vert\,e^{-\beta\hat{H}}),\,0,\,\varepsilon}
      \gamma''\ ;
    &&\text{and}
    &
    \gamma''
    &\xrightarrow[\mathrm{TO}]{\beta^{-1}{S}_{\infty}^{\varepsilon}(\hat\rho\,\Vert\,e^{-\beta\hat{H}}),\,0,\,\varepsilon}
      \hat\rho\ .
  \end{align}
\end{proposition}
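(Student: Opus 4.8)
The plan is to use the hypothesis $[\rhostate,\Ham]=0$, which places both transformations in the semiclassical regime, where conversions between energy-diagonal states under thermal operations are governed entirely by thermomajorization~\cite{Horodecki2013_ThermoMaj,Aberg2013_worklike}. Writing $\rhostate = \sum_i p_i \proj{i}$ in a joint eigenbasis of $\rhostate$ and $\Ham$ with $\Ham\ket{i}=E_i\ket{i}$, everything reduces to statements about the Gibbs-rescaled populations $p_i e^{\beta E_i}$. Throughout I would eliminate the work parameter in favour of an energy shift of an ancillary battery using Property~\ref{propitem:wit-transformation} of \cref{prop:elementary-thermo-transformation-properties}, so that it suffices to verify a single thermomajorization inequality between a diagonal system-plus-battery state and its target.

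For the formation direction I would first replace $\rhostate$ by the optimal smoothing state $\taustate\in B^\epsilon(\rhostate)$ attaining $\Dmax[\epsilon]{\rhostate}{e^{-\beta\Ham}}=\Dmax{\taustate}{e^{-\beta\Ham}}$, and note that $\taustate$ may be taken to commute with $\Ham$: the energy-basis pinching $\mathcal{D}_\Ham$ keeps $\taustate$ in $B^\epsilon(\rhostate)$ (since $\mathcal{D}_\Ham(\rhostate)=\rhostate$ and the trace distance is contractive) while not increasing $\Dmax$ by the data-processing inequality~\eqref{eq:Dalpha-dpi} together with $\mathcal{D}_\Ham(e^{-\beta\Ham})=e^{-\beta\Ham}$. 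Setting $\lambda := e^{\Dmax{\taustate}{e^{-\beta\Ham}}}$, the defining property $\taustate\leq\lambda\,e^{-\beta\Ham}$ is exactly the thermomajorization criterion ensuring that $\taustate$ can be formed from $\gamma''$ by a thermal operation that draws work $\beta^{-1}\ln\lambda=\beta^{-1}\Dmax[\epsilon]{\rhostate}{e^{-\beta\Ham}}$ from the battery and uses no coherence. Since $D(\taustate,\rhostate)\leq\epsilon$, the output lies within $\epsilon$ of $\rhostate$, giving the claim.

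For the distillation direction I would dually identify the optimal smoothed support: an energy-diagonal projector $P$ for which the subnormalized truncation $P\rhostate P$ obeys $D(P\rhostate P,\rhostate)\leq\epsilon$ and $-\ln\tr[P\,e^{-\beta\Ham}]=\Dminz[\epsilon]{\rhostate}{e^{-\beta\Ham}}$; for diagonal $\rhostate$ this amounts to choosing the index set $S$ of smallest total Gibbs weight $Z_P=\sum_{i\in S}e^{-\beta E_i}$ subject to $\sum_{i\in S}p_i\geq 1-\epsilon$. A state supported on a subspace of Gibbs weight $Z_P$ deterministically yields work $-\beta^{-1}\ln Z_P$: its thermomajorization curve reaches height one at horizontal coordinate $Z_P$, which is precisely what is needed to thermomajorize the trivial thermal state together with a battery raised by $-\beta^{-1}\ln Z_P=\beta^{-1}\Dminz[\epsilon]{\rhostate}{e^{-\beta\Ham}}$. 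Carrying out this extraction on $P\rhostate P$ and charging the truncation error $\epsilon$ produces $\rhostate\xOp[TO]{-\beta^{-1}\Dminz[\epsilon]{\rhostate}{e^{-\beta\Ham}},\,0,\,\epsilon}\gamma''$. The main obstacle I anticipate lies here: whereas for $\Dmax$ the pinching argument cleanly furnishes a diagonal optimizer, for $\Dminz$ pinching moves in the unfavourable direction, so I would instead invoke the standard reduction that for commuting arguments the smoothed min-divergence equals its classical value (see, e.g., \cite{BookTomamichel2016_Finite}), and check carefully, using the generalized trace distance for subnormalized states, that the truncation error matches the smoothing radius and that the single binding elbow of the thermomajorization comparison holds with equality.
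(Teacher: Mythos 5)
The paper does not actually prove this proposition: it is imported as a known result from Refs.~\cite{Aberg2013_worklike,Horodecki2013_ThermoMaj} (see also the remark in the proof of \cref{prop:work-distillation-state-formation}) and is only ever used as a black box, e.g.\ inside the proof of \cref{prop:managing-coherence}. Your thermomajorization reconstruction is the standard argument behind those references, and most of it is sound. The formation direction in particular is clean: pinching the optimal smoothing state in the energy eigenbasis is exactly the right move, since the dephasing fixes both $\rhostate$ and $e^{-\beta\Ham}$, contracts the generalized trace distance, and cannot increase $\Dmax{}{}$ by the data-processing inequality~\eqref{eq:Dalpha-dpi}; the resulting diagonal operator inequality $\taustate\leq\lambda\, e^{-\beta\Ham}$ is then precisely the thermomajorization condition for formation from a battery eigenstate, and the work bookkeeping via Property~\ref{propitem:wit-transformation} of \cref{prop:elementary-thermo-transformation-properties} is as you describe.

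The one genuine gap is the point you yourself flag in the distillation direction, and it needs more than an appeal to a ``standard reduction.'' The quantity to be achieved is the \emph{quantum} smoothed min-divergence, optimized over all subnormalized $\taustate\in B^\epsilon(\rhostate)$. A non-diagonal optimizer has a support projector that need not commute with $\Ham$, so it cannot be measured by a thermal operation; and pinching it can only \emph{decrease} $\Dminz{}{}$ by data processing, which is the wrong direction for arguing that a diagonal candidate does as well. The naive comparison --- lower-bounding $\tr\bigl[\hat P_{\taustate}\,\rhostate\bigr]$ by $\tr[\taustate]-\onenorm{\rhostate-\taustate}$ and then passing to a diagonal truncation --- only recovers the classical (truncation) value at smoothing radius $O(\epsilon)$ rather than $\epsilon$. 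So you must either prove the exact diagonal-reduction lemma for $\Dminz[\epsilon]{}{}$ with commuting arguments, or accept a constant-factor degradation of $\epsilon$. The latter is harmless for every use the paper makes of the proposition (all applications are asymptotic, where such factors vanish under the $\epsilon\to0$ limit), but it means the statement with the literal parameters $(-\beta^{-1}\Dminz[\epsilon]{\rhostate}{e^{-\beta\Ham}},\,0,\,\epsilon)$ is not yet fully established by your argument as written.
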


We now present a central proposition of this appendix, namely a simplified form
of \cref{prop:approx-equipartition-implies-reversibility-under-TO} that is
specific to Gibbs-preserving maps.  The error terms as well as the proof itself
are significantly simpler than the full result for thermal operations. 

\begin{proposition}[Work distillation and state
  formation~\cite{Horodecki2013_ThermoMaj,Faist2018PRX_workcost}]
  \label{prop:work-distillation-state-formation}
  Let $\hat\rho$ be a quantum state on a system with a Hamiltonian $\hat{H}$.  Let
  $\gamma''=1$ denote the trivial thermal state on the trivial system
  $\mathbb{C}$ with the trivial Hamiltonian $\hat{H}''=0$.  Then for any
  $\varepsilon\geqslant 0$ we have
  \begin{align}
    \label{eq:work-distillation-state-formation}
    \hat\rho&~\xrightarrow[\mathrm{GPM}]{-\beta^{-1}{S}_{0}^{\varepsilon}(\hat\rho\,\Vert\,e^{-\beta \hat{H}}),\, 0,\, \varepsilon}~
      \hat\gamma''\ ; &&\text{and}
    &
      \hat\gamma''
      &~\xrightarrow[\mathrm{GPM}]{\beta^{-1}{S}_{\infty}^{\varepsilon}(\hat\rho\,\Vert\,e^{-\beta \hat{H}}),\, 0,\, \varepsilon}~
      \hat\rho\ .
  \end{align}
  Consequently, for any $\hat\rho, \hat\rho'$, and for any Hamiltonians
  $\hat{H}, \hat{H}'$,
  \begin{align}
    \label{eq:state-transformation-min-max-rel-entropy}
    \hat\rho~\xrightarrow[\mathrm{GPM}]{ \beta^{-1} [ {S}_{\infty}^{\varepsilon}(\hat\rho'\,\Vert\,e^{-\beta \hat{H}'}) -
    {S}_{0}^{\varepsilon}(\hat\rho\,\Vert\,e^{-\beta \hat{H}}) ] ,\, 0,\, 2\varepsilon}~
    \hat\rho'\ .
  \end{align}
  For asymptotic sequences of states $\widehat{P}= \{ \hat\rho_n \}$,
  $\widehat{P}' = \{ \hat\rho'_n \}$ and sequences of Hamiltonians
  $\widehat{\mathcal{H}}= \{ \hat{H}_n \}$, $\widehat{\mathcal{H}}' = \{ \hat{H}'_n \}$, we have
  \begin{align}
    \label{eq:asymptotic-state-transformation-if-Dinf-Dsup}
    \widehat{P}\xrightarrow[\mathrm{GPM}]{\beta^{-1}[ {\overline{S}}(\widehat{P}'\,\Vert\,\widehat{\Sigma}')
    - {\underline{S}}(\widehat{P}\,\Vert\,\widehat{\Sigma})]} \widehat{P}'\ ,
  \end{align}
  where we denote by $\widehat{\Sigma}$ (respectively $\widehat{\Sigma}'$) the sequence
  $\{ e^{-\beta\hat{H}_n} \}$ (respectively $\{ e^{-\beta\hat{H}'_n} \}$).

\end{proposition}

\begin{proof}
  The statements~\eqref{eq:work-distillation-state-formation} are proven in
  Ref.~\cite{Faist2018PRX_workcost}.  The result for semiclassical states and
  thermal operations was shown in the earlier
  Ref.~\cite{Horodecki2013_ThermoMaj}.  The
  statement~\eqref{eq:state-transformation-min-max-rel-entropy} follows directly
  by combining the processes in~\eqref{eq:work-distillation-state-formation}.
  To prove~\eqref{eq:asymptotic-state-transformation-if-Dinf-Dsup}, observe that
  for any $\varepsilon>0$, we have for sufficiently large $n$ that
  $[{S}_{\infty}^{\varepsilon}(\hat\rho_n'\,\Vert\,\hat\sigma_n') -
  {S}_{0}^{\varepsilon}(\hat\rho_n\,\Vert\,\hat\sigma_n)]/n \leqslant{\overline{S}}(\widehat{P}'\,\Vert\,\widehat{\Sigma}') - {\underline{S}}(\widehat{P}\,\Vert\,\widehat{\Sigma}) + g(\varepsilon)$ where
  $g(\varepsilon)$ is some function of $\varepsilon$ with $g(\varepsilon)\to0$ as
  $\varepsilon\to0$.  Then~\eqref{eq:asymptotic-state-transformation-if-Dinf-Dsup}
  follows from~\eqref{eq:state-transformation-min-max-rel-entropy} and
  \cref{lemma:asympt-transform-fixedepsilon-ok}.\end{proof}

For completeness, we prove~\eqref{eq:state-transformation-min-max-rel-entropy}
directly with an explicit transformation (see also Theorem 6.3 of~\cite{SagawaBook}).

\begin{proof}[Alternative direct proof
  of~\eqref{eq:state-transformation-min-max-rel-entropy}]
  We prove the following equivalent statement: Assuming that
  ${S}_{0}^{\varepsilon}(\hat\rho\,\Vert\,e^{-\beta \hat{H}}) \geqslant{S}_{\infty}^{\varepsilon}(\hat\rho'\,\Vert\,e^{-\beta \hat{H}'})$, we explicitly construct a
  Gibbs-preserving operation that performs the given transformation using a
  hypothesis test.  The equivalence
  with~\eqref{eq:state-transformation-min-max-rel-entropy} follows from
  \cref{prop:elementary-thermo-transformation-properties}~\ref{propitem:wit-transformation}, the scaling
  property~\eqref{eq:Dalpha-scaling} of the divergences, and their additivity
  under tensor products~\eqref{eq:Dalpha-additive-under-tensor-products}.
  Without loss of generality we may assume that
  $\operatorname{tr}(e^{-\beta\hat{H}}) = \operatorname{tr}(e^{-\beta\hat{H}'}) = 1$; otherwise, shift the
  Hamiltonians by suitable constants and apply
  \cref{prop:elementary-thermo-transformation-properties}~\ref{propitem:identity-process} whose cost cancels the
  shift~\eqref{eq:Dalpha-scaling}.  Let $\hat\sigma= e^{-\beta\hat{H}}$ and
  $\hat\sigma'=e^{-\beta\hat{H}'}$, which are now quantum states.

  First, consider the case of $\varepsilon=0$.  We explicitly construct a CPTP map $E$ that maps $( \hat\rho, \hat\sigma)$
  to $(\hat\rho', \hat\sigma')$, by using a ``measure-and-prepare'' method.
  Let $c := e^{-{S}_{0}(\hat\rho\,\Vert\,\hat\sigma)}$, and let $\hat P_\rho$ be the
  projection onto the support of $\hat\rho$.  
  If $c=1$, the situation becomes trivial, because $\hat\rho= \hat\sigma$ and $\hat\rho' = \hat\sigma'$.
  If $c \neq 1$, we can construct the desired CPTP map $E$ as
  \begin{equation}
  E(\bullet ) := {\rm tr}[\hat P_{\rho} \bullet]  \hat\rho' + \left( 1- {\rm tr}[\hat P_{\rho} \bullet] \right) \frac{ \hat\sigma' - c \hat\rho'}{1-c},
  \end{equation}
  where the condition ${S}_{0}(\hat\rho\,\Vert\,\hat\sigma) \geqslant{S}_{\infty}(\hat\rho'\,\Vert\,\hat\sigma')$
  is used to guarantee that $\hat\sigma' - c \hat\rho' \geqslant0$.

  We next consider the case of $\varepsilon>0$.  By definition of the smooth entropies,
  there exist $\hat\tau,\hat\tau'$ such that
  ${S}_{\infty}^{\varepsilon}(\hat\rho'\,\Vert\,\hat\sigma') = {S}_{\infty}(\hat\tau'\,\Vert\,\hat\sigma')$
  and ${S}_{\infty}^{\varepsilon}(\hat\rho\,\Vert\,\hat\sigma) = {S}_{\infty}(\hat\tau\,\Vert\,\hat\sigma)$,
  with $D(\hat\tau,\hat\rho)\leqslant\varepsilon$,
  $D(\hat\tau',\hat\rho')\leqslant\varepsilon$.  From the case $\varepsilon=0$ we
  have that $\hat\tau\xrightarrow[\mathrm{GPM}]{}\hat\tau'$ with respect to the thermal states
  $\hat\sigma,\hat\sigma'$.  By triangle inequality and because quantum
  operations can only decrease the trace distance, we have that
  $D(E(\hat\rho),\hat\rho') \leqslant D(E(\hat\rho), E(\hat\tau)) +
  D(E(\hat\tau), \hat\rho') \leqslant D(\hat\rho,\hat\tau) +
  D(\hat\tau',\hat\rho') \leqslant 2\varepsilon$.  Hence
  $\hat\rho\xrightarrow[\mathrm{GPM}]{0,0,2\varepsilon}\hat\rho'$.
\end{proof}

As an immediate consequence, any state that satisfies
${S}_{0}(\hat\rho\,\Vert\,e^{-\beta\hat{H}})={S}_{\infty}(\hat\rho\,\Vert\,e^{-\beta\hat{H}})$ can be
reversibly converted to and from the thermal state
$e^{-\beta\hat{H}}/\operatorname{tr}(e^{-\beta\hat{H}})$ with Gibbs-preserving operations.  The same
holds for thermal operations if the state is semiclassical.  Consequently, the
common value of the divergences, which we can denote as $S(\hat\rho)$,
is the thermodynamic potential: It characterizes exactly which state
transformations are possible within this class of states.

\section{$C^\ast$-algebra formulation}
\label{appx:Cstar-algebras-construction}

In this appendix, we provide an overview of the standard formulation of
ergodicity with $C^\ast$-algebras~\cite{BookBratteliRobinson_OpAlgQStatMech1,BookBratteliRobinson_OpAlgQStatMech2,BookRuelle_StatMechRigorous},
and prove that it is equivalent to our formulation in
\cref{sec:main-result-for-ergodic-local-Gibbs}.  Furthermore, we prove 
\cref{quantum_theorem_main2t} in the alternative setting where we consider a
sequence of reduced states of the infinite Gibbs state, rather than a sequence
of finite Gibbs states corresponding to Hamiltonians truncated to finite
regions.  In the following, we use the notation of
\cref{sec:main-result-for-ergodic-local-Gibbs}.

The set of local operators is given by
$\mathcal A_{\mathrm{loc}} := \cup_\Lambda \mathcal A_\Lambda$ for a bounded
lattice region $\Lambda \subset \mathbb Z^d$.  Then, the $C^\ast$-algebra
$\mathcal A$ is defined as the $C^\ast$-inductive limit of
$\mathcal A_{\mathrm{loc}}$, which is often written as
$\mathcal A = \overline{\bigotimes_{i \in \mathbb Z^d} \mathcal A_i}$.

We consider a (normal) state $\Psi : \mathcal A \to \mathbb C$, where
$\Psi (\hat A ) \in \mathbb C$ is interpreted as the expectation value of
observable $\hat A$.  We consider a reduced state to a bounded region
$\Lambda \subset \mathbb Z^d$.  By definition, the reduced density operator on
this region, written as $\hat\rho_\Lambda$, satisfies
\begin{equation}
  \Psi (\hat A) = \operatorname{tr}[\hat\rho_{\Lambda} \hat A]\ ,
\end{equation}
for all $\hat A \in \mathcal A_{\Lambda}$.  We note that the consistency
condition~\eqref{consistency} is automatically satisfied for this
$\{ \hat\rho_\Lambda \}$.

By using the shift superoparator $T_i$ introduced in \cref{sec:ergodic-states},
we first define translation invariance.

\begin{definition}[Translation invariance]
  A state $\Psi$ is translation invariant, if for all $\hat A \in \mathcal A$
  and for all $i \in \mathbb Z^d$,
  \begin{align}
    \Psi (T_i ( \hat A)) = \Psi (\hat A)\ .
    \label{translation_invariance_C}
  \end{align}
\end{definition}

The above definition of translation invariance is equivalent to the definition
in \cref{sec:ergodic-states}; this is guaranteed by the following lemma, which
states that it is sufficient to take $\hat A$ above to be local.

\begin{lemma}
  If \cref{translation_invariance_C} is satisfied for all
  $\hat A \in \mathcal A_{\mathrm{loc}}$ and all $i \in \mathbb Z^d$, then $\Psi$ is
  translation invariant.
  \label{lemma_trans}
\end{lemma}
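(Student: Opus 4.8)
The plan is to exploit the fact that $\mathcal{A}$ is by construction the norm-completion (the $C^\ast$-inductive limit) of the local algebra $\mathcal{A}_{\mathrm{loc}} = \bigcup_\Lambda \mathcal{A}_\Lambda$, so that local operators are dense in $\mathcal{A}$, together with the continuity of both the state $\Psi$ and the shift $T_i$. The statement is then a routine $\epsilon$-approximation argument: two continuous functionals that agree on a dense subset must agree everywhere.

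First I would record the two continuity facts. Since $\Psi$ is a state on the $C^\ast$-algebra $\mathcal{A}$, it is a bounded linear functional with $\|\Psi\|=1$; in particular $\abs{\Psi(\hat X)-\Psi(\hat Y)} \le \|\hat X - \hat Y\|$ for all $\hat X,\hat Y\in\mathcal{A}$. Second, the shift $T_i$, which a priori is defined on local operators by $T_i(\hat A_j)=\hat A_{j+i}$, extends uniquely to a $\ast$-automorphism of $\mathcal{A}$; being a $\ast$-automorphism of a $C^\ast$-algebra it is isometric, so $\|T_i(\hat X)\|=\|\hat X\|$ and hence $T_i$ is continuous.

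Then I would carry out the approximation. Fix $\hat A\in\mathcal{A}$, $i\in\mathbb Z^d$, and $\epsilon>0$. By density of $\mathcal{A}_{\mathrm{loc}}$ choose a local operator $\hat A_\epsilon\in\mathcal{A}_{\mathrm{loc}}$ with $\|\hat A-\hat A_\epsilon\|\le\epsilon$. The hypothesis applied to the local operator $\hat A_\epsilon$ gives $\Psi(T_i(\hat A_\epsilon))=\Psi(\hat A_\epsilon)$, so that
\begin{align}
  \abs{\Psi(T_i(\hat A)) - \Psi(\hat A)}
  &\le \abs{\Psi(T_i(\hat A)) - \Psi(T_i(\hat A_\epsilon))}
    + \abs{\Psi(T_i(\hat A_\epsilon)) - \Psi(\hat A_\epsilon)}
    + \abs{\Psi(\hat A_\epsilon) - \Psi(\hat A)} \notag \\
  &\le \|\hat A-\hat A_\epsilon\| + 0 + \|\hat A-\hat A_\epsilon\|
    \le 2\epsilon\ ,
\end{align}
where the outer terms are bounded using continuity of $\Psi$ and the isometry property of $T_i$ (noting $T_i(\hat A)-T_i(\hat A_\epsilon)=T_i(\hat A-\hat A_\epsilon)$), and the middle term vanishes by the hypothesis. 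Since $\epsilon>0$ is arbitrary, $\Psi(T_i(\hat A))=\Psi(\hat A)$, which is exactly~\eqref{translation_invariance_C}.

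The only point requiring care — and what I would flag as the main (mild) obstacle — is justifying that $T_i$, specified only on $\mathcal{A}_{\mathrm{loc}}$, genuinely extends to an isometric automorphism of the whole inductive-limit algebra. This is standard for the $C^\ast$-inductive limit of the tensor-product structure on $\mathbb Z^d$: $T_i$ is a $\ast$-isomorphism between finite local algebras that preserves the norm on each $\mathcal{A}_\Lambda$, hence extends by continuity to the completion. I would simply invoke the relevant construction in Refs.~\cite{Bratteli1979,Bratteli1981}; everything else is the elementary density estimate above.
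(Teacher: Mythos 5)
Your proposal is correct and follows essentially the same route as the paper: approximate $\hat A\in\mathcal A$ in norm by local operators, use the hypothesis on the local approximants, and control the error terms via the boundedness of $\Psi$ and the isometry of the shift. The only cosmetic difference is that you split the triangle inequality into three terms where the paper uses two, and you explicitly flag the extension of $T_i$ to an isometric automorphism of the inductive limit, which the paper leaves implicit.
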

\begin{proof}
  Suppose that \cref{translation_invariance_C} is satisfied for all
  $\hat A \in \mathcal A_{\mathrm{loc}}$.  For any $\hat A \in \mathcal A$, there
  exists a sequence
  $\{ \hat A_m \}_{m \in \mathbb N} \subset \mathcal A_{\mathrm{loc}}$ such that
  $\hat A_m \in \mathcal A_{\Lambda_m}$ and
  $\lim_{m \to \infty} \lVert { \hat A - \hat A_m }\rVert _{\infty} = 0$.  Let
  $\hat \Delta_m := \hat A - \hat A_m$.  Then we have
  \begin{align}
    \bigl\lvert { \Psi (T_i ( \hat A)) - \Psi (\hat A) }\bigr\rvert 
    \leqslant\bigl\lvert { \Psi (T_i ( \hat A_m)) - \Psi (\hat A_m) }\bigr\rvert 
    + \bigl\lvert { \Psi (T_i ( \hat \Delta_m)) - \Psi (\hat \Delta_m) }\bigr\rvert \ .
  \end{align}
  The first term on the right-hand side vanishes.  The second term is bounded as
  \begin{equation}
    \bigl\lvert { \Psi (T_i ( \hat \Delta_m)) - \Psi (\hat \Delta_m) }\bigr\rvert 
    \leqslant2\, \lVert { \hat \Delta_m }\rVert _{\infty}\ ,
  \end{equation}
  which goes to zero as $m \to \infty$.
\end{proof}

We now define ergodicity in a more standard and mathematically elegant
way~\cite{BookBratteliRobinson_OpAlgQStatMech1,BookRuelle_StatMechRigorous} (see also
Refs.~\cite{Bjelakovic2004IM_lattice,Bjelakovic2005QIP_compression}).

\begin{definition}[Ergodicity]
  \label{ergodic_def_C1}
  A state $\Psi$ is translation-invariant and ergodic, if it is an extremal
  point of the set of translation-invariant states.
\end{definition}

Physically, an ergodic state corresponds to a ``pure thermodynamic phase''
without phase mixture, which is consistent with this mathematical definition.

The following theorem establishes the equivalence of the definition above with
the definition presented in~\cref{sec:ergodic-states}.  This is a reformulation
of Theorem~6.3.3, Proposition~6.3.5, and Lemma~6.5.1 of Ref.~\cite{BookRuelle_StatMechRigorous};
see also Ref.~\cite{Bjelakovic2005QIP_compression}.

\begin{lemma}
  \label{prop_ergodic}
  Using the notation of \cref{sec:main-result-for-ergodic-local-Gibbs}, the
  following are equivalent for any translation-invariant state $\Psi$:
  \begin{enumerate}[label=(\alph*)]
  \item $\Psi$ is ergodic;
  \item \label{propitem:ergodicity-selfaveraging}
      For all self-adjoint $\hat A \in \mathcal A$,
    \begin{equation}
      \lim_{\ell \to \infty}
      \Psi\mathopen{}\left( \Biggl(
        \frac{1}{(2\ell+1)^d} \sum_{i \in \Lambda_\ell} T_i (\hat A) \Biggr)^2 \right)
      = \bigl( \Psi ( \hat A ) \bigr)^2\ ;
      \label{ergodic_def_C2}
    \end{equation}
  \item For all  $\hat A, \hat B \in \mathcal A$,
    \begin{equation}
      \lim_{\ell \to \infty} \frac{1}{(2\ell+1)^d}
      \sum_{i\in \Lambda_\ell} \Psi\bigl(T_i(\hat A) \hat B\bigr)
      = \Psi(\hat A)\, \Psi (\hat B )\ ;
      \label{ergodic_def_C3}
    \end{equation}
  \item \label{propitem:ergodicity-selfavg-suffices-check-local-operators}
    \Cref{ergodic_def_C2} is satisfied for all self-adjoint
    $\hat A \in \mathcal A_{\mathrm{loc}}$.
  \end{enumerate}
\end{lemma}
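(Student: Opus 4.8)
The plan is to split the four-way equivalence into two stages. The equivalences of (a), (b), and (c) are the content of the classical ergodic theory of $C^\ast$-dynamical systems under the translation group $\mathbb Z^d$, which I would import from \cite{Ruelle1999}; the only statement proper to our reformulation is the reduction of (b) to the local operators in (d), which I would prove by a norm-density argument. To handle the first stage I would pass to the GNS representation $(\mathcal H_\Psi, \pi_\Psi, \Omega_\Psi)$ of the translation-invariant state $\Psi$, normalized so that $\opnorm{\Omega_\Psi}=1$. Translation invariance furnishes a unitary representation $U$ of $\mathbb Z^d$ with $U(i)\Omega_\Psi = \Omega_\Psi$ and $\pi_\Psi(T_i(\hat A)) = U(i)\pi_\Psi(\hat A)U(i)^\ast$, and since the cubes $\Lambda_\ell=[-\ell,\ell]^d$ form a F\o lner sequence, the mean ergodic theorem gives that $\frac{1}{(2\ell+1)^d}\sum_{i\in\Lambda_\ell}U(i)$ converges strongly to the orthogonal projection $P_{\mathrm{inv}}$ onto the $U$-invariant subspace.

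The key input from \cite{Ruelle1999} (Theorem~6.3.3, Proposition~6.3.5, Lemma~6.5.1) is that $\Psi$ is extremal among translation-invariant states, i.e.\ ergodic, if and only if $P_{\mathrm{inv}} = \ketbra{\Omega_\Psi}{\Omega_\Psi}$. Writing $S_\ell(\hat A) := \frac{1}{(2\ell+1)^d}\sum_{i\in\Lambda_\ell}T_i(\hat A)$ and using $U(i)^\ast\Omega_\Psi=\Omega_\Psi$, I would note $\pi_\Psi(S_\ell(\hat A))\Omega_\Psi \to P_{\mathrm{inv}}\pi_\Psi(\hat A)\Omega_\Psi$; taking norms of a self-adjoint $\hat A$ yields the left side of (b), $\Psi(S_\ell(\hat A)^2)=\opnorm{\pi_\Psi(S_\ell(\hat A))\Omega_\Psi}^2 \to \opnorm{P_{\mathrm{inv}}\pi_\Psi(\hat A)\Omega_\Psi}^2$, which equals $(\Psi(\hat A))^2$ exactly when $P_{\mathrm{inv}}=\ketbra{\Omega_\Psi}{\Omega_\Psi}$. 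A parallel computation identifies the Ces\`aro average in (c) with $\langle \pi_\Psi(\hat A^\ast)\Omega_\Psi,\, P_{\mathrm{inv}}\pi_\Psi(\hat B)\Omega_\Psi\rangle$. Because $\ketbra{\Omega_\Psi}{\Omega_\Psi}\le P_{\mathrm{inv}}$ and $\pi_\Psi(\mathcal A)\Omega_\Psi$ is dense, each of (b) and (c) conversely forces $P_{\mathrm{inv}}=\ketbra{\Omega_\Psi}{\Omega_\Psi}$, so (a), (b), (c) are equivalent.

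For the second stage, (b) $\Rightarrow$ (d) is immediate since $\mathcal A_{\mathrm{loc}}\subset\mathcal A$. For (d) $\Rightarrow$ (b) I would argue exactly as in \cref{lemma_trans}: given self-adjoint $\hat A\in\mathcal A$, choose self-adjoint $\hat A_m\in\mathcal A_{\mathrm{loc}}$ with $\opnorm{\hat A-\hat A_m}\to 0$. Since each $T_i$ is a $\ast$-automorphism, hence isometric, $\opnorm{S_\ell(\hat A)-S_\ell(\hat A_m)}=\opnorm{S_\ell(\hat A-\hat A_m)}\le\opnorm{\hat A-\hat A_m}$ uniformly in $\ell$; expanding the difference of squares and using $\abs{\Psi(\hat X)}\le\opnorm{\hat X}$ gives $\abs{\Psi(S_\ell(\hat A)^2)-\Psi(S_\ell(\hat A_m)^2)}\le(2\opnorm{\hat A}+\opnorm{\hat A-\hat A_m})\opnorm{\hat A-\hat A_m}$, again uniformly in $\ell$, while $\abs{(\Psi(\hat A))^2-(\Psi(\hat A_m))^2}\to 0$. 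An $\epsilon/3$ argument—controlling these two approximation errors and invoking (d) for $\hat A_m$ at large $\ell$—then delivers (b).

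The hard part is not conceptual, since the deep extremal-decomposition statement is delegated to \cite{Ruelle1999}; rather, the care lies in two bookkeeping points. First, the cited theorems are phrased for general invariant means, so I must check that our specific cubes $\Lambda_\ell$ are an admissible averaging sequence and that the quadratic quantity in (b) coincides with Ruelle's variance characterization. Second, in the density step the estimates must be uniform in $\ell$ before taking $\ell\to\infty$; this is precisely what is guaranteed by the isometry of the translations and by $\Psi$ being a state, and it is the only place where the argument could go wrong if handled carelessly.
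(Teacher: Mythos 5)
Your proposal is correct and follows essentially the same route as the paper: the equivalence of (a), (b), and (c) is delegated to the cited results of Ruelle, and the only part proved directly is (d) $\Rightarrow$ (b) via norm-density of $\mathcal A_{\mathrm{loc}}$ in $\mathcal A$, using the isometry of the translations and $\abs{\Psi(\hat X)}\leq\opnorm{\hat X}$ to get bounds uniform in $\ell$ before sending $m\to\infty$, exactly as in the paper's adaptation of \cref{lemma_trans}. Your additional GNS/mean-ergodic-theorem sketch merely unpacks the content of the cited Ruelle theorems rather than constituting a different argument.
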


For completeness, we prove the equivalence
of~\ref{propitem:ergodicity-selfavg-suffices-check-local-operators} with the
other points.

\begin{proof}
  It suffices to check that
  \ref{propitem:ergodicity-selfavg-suffices-check-local-operators}$\,\Rightarrow\,$\ref{propitem:ergodicity-selfaveraging}.  The proof is
  similar to that of \cref{lemma_trans}, and we use the same notation: For any
  $\hat A \in \mathcal A$, there exists a sequence
  $\{ \hat A_m \}_{m \in \mathbb N} \subset \mathcal A_{\mathrm{loc}}$ such that
  $\hat A_m \in \mathcal A_{\Lambda_m}$ and
  $\lim_{m \to \infty} \lVert { \hat A - \hat A_m }\rVert _{\infty} = 0$; let
  $\hat \Delta_m := \hat A - \hat A_m$.  Now suppose that \cref{ergodic_def_C2}
  is satisfied for all self-adjoint $\hat A \in \mathcal A_{\mathrm{loc}}$.  We first note
  that
  \begin{multline}
    \Biggl( \sum_{i \in \Lambda_\ell} T_i (\hat A) \Biggr)^2
    = \Biggl(\sum_{i \in \Lambda_\ell} T_i (\hat A_m) \Biggr)^2
    + \sum_{i,j \in \Lambda_\ell} \left( T_i (\hat A_m) \, T_j (\hat \Delta_m)
      + T_i (\hat \Delta_m) \, T_j (\hat A_m) \right)
    \\
    + \Biggl(\sum_{i \in \Lambda_\ell} T_i (\hat \Delta_m) \Biggr)^2\ .
  \end{multline}
  We then have
  \begin{align}
    \hspace*{1em}&\hspace*{-1em}
    \Biggl\lvert { \Psi \Biggl( \Biggl( \frac{1}{(2\ell+1)^d} \sum_{i \in \Lambda_\ell} T_i (\hat A) \Biggr)^2 \Biggr) - \Psi ( \hat A ) }\Biggr\rvert 
      \nonumber\\
    &\leqslant\Biggl\lvert { \Psi \Biggl( \Biggl(
      \frac{1}{(2\ell+1)^d} \sum_{i \in \Lambda_\ell} T_i (\hat A_m) \Biggr)^2 \Biggr)
      - \Psi ( \hat A_m ) }\Biggr\rvert   + 4 \lVert { \hat A_m }\rVert _{\infty} \lVert { \hat \Delta_m }\rVert _{\infty}
      +  2 \lVert { \hat \Delta_m }\rVert _{\infty}^2\ .
  \end{align}
  From \cref{ergodic_def_C2} for $\hat A_m \in \mathcal A_{\mathrm{loc}}$, we have,
  for a fixed $m$,
  \begin{equation}
    \lim_{\ell \to \infty}  \Biggl\lvert { \Psi \Biggl( \Biggl( \frac{1}{(2\ell+1)^d}
      \sum_{i \in \Lambda_\ell} T_i (\hat A) \Biggr)^2 \Biggr) - \Psi ( \hat A ) }\Biggr\rvert 
    \leqslant4 \lVert { \hat A_m }\rVert _{\infty} \lVert { \hat \Delta_m }\rVert _{\infty} +  2 \lVert { \hat \Delta_m }\rVert _{\infty}^2\ .
  \end{equation}
  Since $m$ can be taken arbitrarily large, the right-hand side above can be
  arbitrarily small.  Therefore, \cref{ergodic_def_C2} is satisfied for all
  $\hat A \in \mathcal A$.
\end{proof}

We now provide a definition of mixing that is suited to the formalism in this
section.

\begin{definition}[Mixing]
  \label{mixing_def_C}
  Let $T_{(k)}$ be the shift operator in \cref{def_mixing} in
  \cref{sec:main-result-for-ergodic-local-Gibbs}.  A state $\Psi$ has the mixing
  property, if for all $\hat A, \hat B \in \mathcal A$ and all $k$,
  \begin{align}
    \lim_{\ell \to \infty}  \Psi\bigl(T_{k}^\ell(\hat A) \hat B\bigr)
    = \Psi (\hat A) \Psi (\hat B )\ .
  \end{align}
\end{definition}

\begin{definition}[Weak mixing]
  A state $\Psi$ has the weak mixing property, if for all
  $\hat A, \hat B \in \mathcal A$,
  \begin{equation}
    \lim_{\ell \to \infty} \frac{1}{(2\ell+1)^d} \sum_{i\in \Lambda_\ell}
    \bigl\lvert { \Psi (T_i(\hat A) \hat B )  -  \Psi (\hat A) \Psi (\hat B ) }\bigr\rvert 
    = 0\ .
    \label{weak_mixing}
  \end{equation}
\end{definition}

Mixing implies weak mixing, and weak mixing implies ergodicity.  However, the
converses of them are not true.  In particular, the weak mixing in the above
sense should not be confused with \cref{ergodic_def_C3}.

The following lemma guarantees that the above definition of mixing is equivalent
to \cref{def_mixing} in \cref{sec:main-result-for-ergodic-local-Gibbs}.

\begin{lemma}
  In the definitions of mixing and weak mixing above, it is sufficient to take
  $\hat A, \hat B \in \mathcal A_{\mathrm{loc}}$.
\end{lemma}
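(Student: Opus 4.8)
The plan is to mimic the density argument already used in the proof of \cref{lemma_trans} and in the equivalence proof for ergodicity. The two structural facts that make the reduction go through are: (i) the local operators $\mathcal A_{\mathrm{loc}}$ are dense in $\mathcal A$ in operator norm, so for any $\hat A\in\mathcal A$ there is a sequence $\{\hat A_m\}_{m}\subset\mathcal A_{\mathrm{loc}}$ with $\opnorm{\hat A-\hat A_m}\to0$; and (ii) each shift $T_i$ (and each power $T_{(k)}^{\ell}$) is a $*$-automorphism of $\mathcal A$, hence norm-preserving, while $\Psi$ is a state and thus $\abs[\big]{\Psi(\hat X)}\le\opnorm{\hat X}$. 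First I would fix these facts and introduce, for arbitrary $\hat A,\hat B\in\mathcal A$, local approximants $\hat A_m,\hat B_m\in\mathcal A_{\mathrm{loc}}$ together with the remainders $\hat\Delta_m^A:=\hat A-\hat A_m$ and $\hat\Delta_m^B:=\hat B-\hat B_m$.

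For mixing, expanding $T_{(k)}^{\ell}(\hat A)\hat B-T_{(k)}^{\ell}(\hat A_m)\hat B_m=T_{(k)}^{\ell}(\hat\Delta_m^A)\hat B+T_{(k)}^{\ell}(\hat A_m)\hat\Delta_m^B$, applying $\Psi$, and using a triangle inequality for the product $\Psi(\hat A)\Psi(\hat B)-\Psi(\hat A_m)\Psi(\hat B_m)=\Psi(\hat\Delta_m^A)\Psi(\hat B)+\Psi(\hat A_m)\Psi(\hat\Delta_m^B)$, I would obtain
\begin{align}
  \abs[\big]{\Psi\bigl(T_{(k)}^{\ell}(\hat A)\hat B\bigr)-\Psi(\hat A)\Psi(\hat B)}
  \le \abs[\big]{\Psi\bigl(T_{(k)}^{\ell}(\hat A_m)\hat B_m\bigr)-\Psi(\hat A_m)\Psi(\hat B_m)}
  + R_m\ ,
\end{align}
with remainder $R_m:=2\bigl(\opnorm{\hat\Delta_m^A}\,\opnorm{\hat B}+\opnorm{\hat A_m}\,\opnorm{\hat\Delta_m^B}\bigr)$, which is controlled uniformly in $\ell$ by norm-invariance of the shift. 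Taking $\ell\to\infty$ first makes the local term vanish by hypothesis, leaving $\limsup_{\ell}\le R_m$; letting $m\to\infty$ then drives $R_m\to0$ (since $\opnorm{\hat A_m}\to\opnorm{\hat A}$ stays bounded), which yields the claim for all $\hat A,\hat B\in\mathcal A$.

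For weak mixing I would run the identical splitting inside the Ces\`aro average of \eqref{weak_mixing}, bounding
\begin{align}
  \frac{1}{(2\ell+1)^d}\sum_{i\in\Lambda_\ell}
  \abs[\big]{\Psi\bigl(T_i(\hat A)\hat B\bigr)-\Psi(\hat A)\Psi(\hat B)}
  \le
  \frac{1}{(2\ell+1)^d}\sum_{i\in\Lambda_\ell}
  \abs[\big]{\Psi\bigl(T_i(\hat A_m)\hat B_m\bigr)-\Psi(\hat A_m)\Psi(\hat B_m)}
  + R_m\ ,
\end{align}
where the same per-summand remainder $R_m$ is uniform in $i$ and hence survives the averaging unchanged. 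The averaged local term tends to $0$ as $\ell\to\infty$ by the local weak-mixing hypothesis, and $R_m\to0$ as $m\to\infty$. There is no genuine obstacle beyond bookkeeping: the one point needing care is that the remainder estimates be uniform in $\ell$ (and in $i$ for the averaged case), so that the order of limits $\ell\to\infty$ followed by $m\to\infty$ is legitimate; this uniformity is exactly what norm-invariance of the shifts and boundedness of $\Psi$ provide.
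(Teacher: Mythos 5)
Your proof is correct and follows essentially the same route as the paper, which simply states that the density argument used for the ergodicity equivalence (approximating $\hat A,\hat B$ by local operators in norm, using that shifts are isometric and that $\abs{\Psi(\hat X)}\leq\opnorm{\hat X}$, with remainder bounds uniform in $\ell$ and $i$) carries over. You have merely written out the bookkeeping that the paper leaves implicit.
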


\begin{proof}
  The proof of \ref{propitem:ergodicity-selfavg-suffices-check-local-operators}$\Rightarrow$\ref{propitem:ergodicity-selfaveraging} in \cref{prop_ergodic}
  provided above can be straightforwardly adapted to prove this lemma.
\end{proof}

We next consider the concept of local Gibbs states for the infinite-dimensional
setup~\cite{BookBratteliRobinson_OpAlgQStatMech2}.  We here assume that the Kubo-Martin-Schwinger (KMS)
state is unique at $\beta$, which physically implies no phase coexistence.
This is provable for any $\beta > 0$ in one dimension~\cite{Araki1975CMP_KMS}, but is
true at a sufficiently high temperature in higher
dimensions~\cite{BookBratteliRobinson_OpAlgQStatMech2}.

Let $\varphi^{\Box}_{\Lambda} : \mathcal A \to \mathbb C$ be the Gibbs state
corresponding to the truncated Hamiltonian associated with the region $\Lambda$,
and represented by the density operator $\hat\sigma^{\Box}_{\Lambda}$ in
\cref{tr_Hamiltonian} of \cref{sec:ergodic-states}.
Then, it is known that a state
\begin{equation}
  \Phi := \lim_{\ell \to \infty} \varphi^{\Box}_{\Lambda_\ell}
\end{equation}
exists, where the limit is given by the weak-$\ast$ (or ultraweak) topology of
the dual of $\mathcal A$ (cf.\@ Proposition~6.2.15 of Ref.~\cite{BookBratteliRobinson_OpAlgQStatMech2}).
We can then define the global Gibbs state on the entire lattice by $\Phi$.  This
global state satisfies the following condition for any
$\hat A \in \mathcal A_{\mathrm{loc}}$,
\begin{equation}
  \Phi (\hat A) = \lim_{\ell \to \infty}\varphi^{\Box}_{\Lambda_\ell} (\hat A)\ .
\end{equation}

Then, we define the reduced state of $\Phi$ on a bounded region $\Lambda$, which
is written as $\varphi_{\Lambda}$.  Let $\hat\sigma_{\Lambda}$ be the
corresponding reduced density operator.  For any observable
$\hat A \in \mathcal A_\Lambda$, we have
\begin{equation}
  \Phi (\hat A) = \varphi_{\Lambda} (\hat A) = \operatorname{tr}[\hat\sigma_{\Lambda} \hat A].
\end{equation}
In the following,  let  $\widehat{\Sigma}:= \{ \hat\sigma_n \}$  be  the sequence of the \textit{reduced} Gibbs states, where $ \hat\sigma_n := \hat\sigma_{\Lambda_\ell}$ and $n=(2\ell+1)^d$.
We note that the reduced state $\hat\sigma_{\Lambda}$ and the truncated state $\hat\sigma^{\Box}_{\Lambda}$ are different in general, where only  $\hat\sigma_{\Lambda}$ satisfies the consistency condition \eqref{consistency}.

We now prove another version of  \cref{quantum_theorem_main2t}
in \cref{sec:main-result-for-ergodic-local-Gibbs}, where $\widehat{\Sigma}$ is the
sequence of reduced states of the full Gibbs state on the infinite lattice,
instead of the sequence $\widehat{\Sigma}^{\Box}$ of Gibbs states corresponding to
truncated Hamiltonians associated with a sequence of finite regions.

Our proof strategy is to show that the asymptotic min divergence rate, the max
divergence rate and the KL divergence rate remain unchanged if we
substitute $\widehat{\Sigma}^{\Box}$ by $\widehat{\Sigma}$.  For this, we invoke the following
result, given as Theorem~3.11 in Ref.~\cite{Lenci2005JSP_onephase} (see in particular the
second proof provided in that reference, which holds for observables that are
not necessarily positive and proves the uniformity of the convergence).

\begin{proposition}[{Lenci and Rey-Bellet~\cite[Theorem~3.11]{Lenci2005JSP_onephase}}]
  \label{prop_Lenci}
  Suppose that the KMS state is unique.
  For any observable $\hat A_\Lambda \in \mathcal A_{\Lambda}$ for a bounded
  region $\Lambda \subset \mathbb Z^d$, we have
  \begin{align}
    \lim_{\Lambda \to \mathbb Z^d} \frac{1}{\lvert {\Lambda}\rvert }
    \Bigl\lvert { \ln \operatorname{tr}\bigl[\hat A_\Lambda \hat\sigma_\Lambda\bigr]
      - \ln \operatorname{tr}\bigl[\hat A_\Lambda \hat\sigma^{\Box}_\Lambda\bigr] }\Bigr\rvert  = 0\ ,
    \label{Lenci1}
  \end{align}
  where the convergence is uniform in $\hat A_\Lambda$. 
\end{proposition}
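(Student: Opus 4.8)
The plan is to recognize \eqref{Lenci1} as a statement about \emph{surface} effects, which are subextensive for a finite-range, translation-invariant interaction, and to isolate the single source of discrepancy between the two states. The truncated Gibbs state $\sigmastatetrunc_\Lambda = e^{-\beta \hat H_\Lambda}/\tr[e^{-\beta \hat H_\Lambda}]$ is built from the Hamiltonian restricted to $\Lambda$, whereas $\sigmastate_\Lambda$ is the restriction to $\Lambda$ of the unique infinite-volume KMS state $\Phi$, and therefore feels the interaction terms whose support straddles $\partial\Lambda$. Writing $\hat H_{\Lambda'} = \hat H_\Lambda + \hat H_{\Lambda'\setminus\Lambda} + \hat W_{\partial\Lambda}$ for a large box $\Lambda'\supset\Lambda$, where $\hat W_{\partial\Lambda}$ collects the terms $\hat h_i$ supported within distance $r$ of $\partial\Lambda$, locality and translation invariance give $\opnorm{\hat W_{\partial\Lambda}}\leq c\,\abs{\partial\Lambda}=O(\ell^{d-1})$, which is $o(\abs{\Lambda})$ with $\abs{\Lambda}=O(\ell^{d})$.

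The idealized route is to upgrade this norm bound to a two-sided operator comparison
\begin{align}
  e^{-c'\abs{\partial\Lambda}}\,\sigmastatetrunc_\Lambda
  \;\leq\; \sigmastate_\Lambda
  \;\leq\; e^{c'\abs{\partial\Lambda}}\,\sigmastatetrunc_\Lambda\ ,
  \label{eq:reduced-vs-truncated-comparison}
\end{align}
with $c'$ independent of $\Lambda$. Granting \eqref{eq:reduced-vs-truncated-comparison}, the proposition is immediate: for any $\hat A_\Lambda\geq 0$ one sandwiches $\tr[\hat A_\Lambda\sigmastate_\Lambda]$ between $e^{\pm c'\abs{\partial\Lambda}}\,\tr[\hat A_\Lambda\sigmastatetrunc_\Lambda]$, so that $\abs{\ln\tr[\hat A_\Lambda\sigmastate_\Lambda]-\ln\tr[\hat A_\Lambda\sigmastatetrunc_\Lambda]}\leq c'\abs{\partial\Lambda}$, and dividing by $\abs{\Lambda}$ drives the right-hand side to zero \emph{uniformly} in $\hat A_\Lambda$, precisely because the bound does not involve $\hat A_\Lambda$. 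To reach \eqref{eq:reduced-vs-truncated-comparison} I would invoke Araki's perturbation theory for KMS states together with the Dobrushin--Lanford--Ruelle structure of $\Phi$: reducing $\sigmastatetrunc_{\Lambda'}$ from $\Lambda'$ to $\Lambda$ and letting $\Lambda'\to\mathbb Z^d$, the boundary operator $\hat W_{\partial\Lambda}$ enters only as a perturbation of bounded norm, and Duhamel/Golden--Thompson estimates on exponentials of perturbed Hamiltonians convert $\opnorm{\hat W_{\partial\Lambda}}\leq c\abs{\partial\Lambda}$ into exponential operator bounds. The uniqueness of the KMS state is essential here: it guarantees that the restriction of $\Phi$ is boundary-condition independent, so that no ambiguous surface free energy survives and the surface contributions are genuinely $O(\abs{\partial\Lambda})$.

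The main obstacle I anticipate is exactly the passage to \eqref{eq:reduced-vs-truncated-comparison}, since the reduced state $\sigmastate_\Lambda$ of a global Gibbs state is \emph{not} itself a Gibbs state for any effective Hamiltonian, so one cannot simply write it as $e^{-\beta(\hat H_\Lambda+\hat W_{\partial\Lambda})}$ up to normalization; controlling this discrepancy rigorously, uniformly in the region and in $\hat A_\Lambda$, is the heart of the matter and is what Theorem~3.11 of Ref.~\cite{Lenci2005} supplies. Its argument works directly with the generating functionals $\ln\tr[\hat A_\Lambda\sigmastate_\Lambda]$ rather than with operator inequalities, which is what lets it cover sign-indefinite observables as well and establish the uniformity of the convergence. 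For our purposes the positive case is all that is needed, as we apply the proposition only to projectors.
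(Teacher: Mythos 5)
The paper does not actually prove \cref{prop_Lenci}: it is imported verbatim as Theorem~3.11 of Lenci and Rey-Bellet~\cite{Lenci2005}, with the text merely pointing the reader to the second proof in that reference for uniformity and for sign-indefinite observables. So there is no internal argument to compare yours against, and your closing deferral to the citation for ``the heart of the matter'' is, in the end, the same move the paper makes.

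That said, the route you sketch around the citation has a genuine gap at its key step. The two-sided operator comparison
\begin{align}
  e^{-c'\abs{\partial\Lambda}}\,\sigmastatetrunc_\Lambda
  \;\leq\; \sigmastate_\Lambda
  \;\leq\; e^{c'\abs{\partial\Lambda}}\,\sigmastatetrunc_\Lambda
\end{align}
is not something that Duhamel or Golden--Thompson estimates deliver: those give \emph{trace} (or norm) inequalities, whereas you need a semidefinite ordering, and the naive bound $e^{-\beta(\hat H+\hat W)} \leq e^{\beta\opnorm{\hat W}}\,e^{-\beta \hat H}$ is simply false for noncommuting $\hat H$ and $\hat W$. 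In one dimension Araki's expansional bounds do yield exactly such two-sided comparisons --- this is the content of \cref{thm:Hiai-local-Gibbs} used in the alternative proof of \cref{quantum_theorem_main2} --- but for $d\geq 2$ no uniform operator comparison of this kind is known in general, which is precisely why Lenci and Rey-Bellet argue at the level of the generating functional $\ln\tr[\hat A_\Lambda\sigmastate_\Lambda]$ rather than through operator inequalities. You correctly identify this obstacle and correctly note that your sandwich argument would only cover $\hat A_\Lambda\geq 0$ (sufficient for the paper's application to projectors), but as written the proposal does not constitute an independent proof: remove the appeal to Theorem~3.11 and the argument does not close.
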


The above result allows us to prove that the KL divergence rate
does not change if we replace the Gibbs state of the truncated Hamiltonian by
the reduced state of the infinite Gibbs state.

\begin{lemma}
  \label{Gibbs_KL_rate}
  Suppose that the KMS state is unique and that ${S}_{1}(\widehat{P}\,\Vert\,\widehat{\Sigma}^{\Box})$
  exists.  Then ${S}_{1}(\widehat{P}\,\Vert\,\widehat{\Sigma})$ exists and equals
  ${S}_{1}(\widehat{P}\,\Vert\,\widehat{\Sigma}^{\Box})$.
\end{lemma}
\begin{proof}
  \Cref{prop_Lenci} implies that
  \begin{equation}
    \lim_{\Lambda \to \mathbb Z^d} \frac{1}{\lvert {\Lambda}\rvert }
    \Bigl\lvert { \operatorname{tr}[\hat \rho_\Lambda \ln \hat\sigma_\Lambda]
      - \operatorname{tr}[\hat \rho_\Lambda \ln \hat\sigma^{\Box}_\Lambda] }\Bigr\rvert  = 0\ ,
  \end{equation}
  which implies ${S}_{1}(\widehat{P}\,\Vert\,\widehat{\Sigma}) = {S}_{1}(\widehat{P}\,\Vert\,\widehat{\Sigma}^{\Box})$.
\end{proof}

Similarly, we may use \cref{prop_Lenci} to show that the min and max divergence
rates (via the hypothesis testing divergence rate) remain unchanged if we
replace $\widehat{\Sigma}^{\Box}$ by $\widehat{\Sigma}$.

\begin{lemma}
  \label{Gibbs_DHyp_rate}
  Suppose that the KMS state is unique and that
  ${S}_{\mathrm{H}}^{\eta}(\widehat{P}\,\Vert\,\widehat{\Sigma}^{\Box})$ exists for any $0<\eta<1$.  Then, for
  any $0 < \eta < 1$, the rate ${S}_{\mathrm{H}}^{\eta}(\widehat{P}\,\Vert\,\widehat{\Sigma})$ exists and
  equals ${S}_{\mathrm{H}}^{\eta}(\widehat{P}\,\Vert\,\widehat{\Sigma}^{\Box})$.
\end{lemma}
\begin{proof}
  From \cref{Lenci1} in \cref{prop_Lenci}, there exists $\delta_n > 0$
  satisfying $\lim_{n \to \infty}\frac{\delta_n}{n} = 0$ such that for any
  $\hat A_n \in \mathcal A_{\Lambda_\ell}$,
  \begin{equation}
    e^{-\delta_n} \operatorname{tr}[\hat A_n \hat\sigma^{\Box}_n]
    \leqslant\operatorname{tr}[\hat A_n \hat\sigma_n]
    \leqslant e^{+\delta_n}\operatorname{tr}[\hat A_n \hat\sigma^{\Box}_n]\ .
  \end{equation}
  Combined with \cref{def_SH}, this implies that
  \begin{equation}
    {S}_{\mathrm{H}}^{\eta}(\hat\rho_n\,\Vert\,\hat\sigma^{\Box}_n) - \delta_n
    \leqslant{S}_{\mathrm{H}}^{\eta}(\hat\rho_n\,\Vert\,\hat\sigma_n)
    \leqslant{S}_{\mathrm{H}}^{\eta}(\hat\rho_n\,\Vert\,\hat\sigma^{\Box}_n) + \delta_n\ .
  \end{equation}
  The claim follows
  by dividing this equation by $n$ and  taking the limit $n\to\infty$.
\end{proof}

It is now straightforward to combine \cref{Gibbs_KL_rate,Gibbs_DHyp_rate} to
prove another version of \cref{quantum_theorem_main2t} for the infinite Gibbs
state, rather than the limit of Gibbs states of the truncated Hamiltonian of
increasingly large finite regions.

\begin{theorem}[Collapse of the spectral rates for the reduced Gibbs state]
  \label{quantum_theorem_main2}
  Suppose that $\widehat{P}$ is translation invariant and ergodic, and $\widehat{\Sigma}$
  is the reduced Gibbs state of a local and translation invariant Hamiltonian in
  any dimensions, where the KMS state is unique.  Then, for any $0 < \eta < 1$,
  \begin{align}
    {S}_{\mathrm{H}}^{\eta}(\widehat{P}\,\Vert\,\widehat{\Sigma}) =  {S}_{1}(\widehat{P}\,\Vert\,\widehat{\Sigma})\ ,
    \label{second_main0}
  \end{align}
  and as a consequence,
  \begin{align}
    {\overline{S}}(\widehat{P}\,\Vert\,\widehat{\Sigma})
    = {\underline{S}}(\widehat{P}\,\Vert\,\widehat{\Sigma})
    = {S}_{1}(\widehat{P}\,\Vert\,\widehat{\Sigma})\ .
    \label{second_main}
  \end{align}
\end{theorem}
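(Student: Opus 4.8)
The plan is to reduce the statement entirely to the truncated-Gibbs-state version \cref{quantum_theorem_main2t}, which is already established, by transferring the relevant divergence rates from $\Sigmaseqtrunc$ to $\Sigmaseq$ via the preparatory lemmas \cref{Gibbs_KL_rate,Gibbs_DHyp_rate}. The genuine analytic content is carried entirely by \cref{prop_Lenci} (the Lenci--Rey-Bellet theorem), which under uniqueness of the KMS state furnishes a \emph{uniform}, sublinear comparison between $\ln\tr[\hat A_\Lambda \sigmastate_\Lambda]$ and $\ln\tr[\hat A_\Lambda \sigmastatetrunc_\Lambda]$; everything else amounts to chaining equalities.

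First I would invoke \cref{quantum_theorem_main2t} applied to $\rhoseq$ and $\Sigmaseqtrunc$, which gives $\DHyp[\eta]{\rhoseq}{\Sigmaseqtrunc} = \DKL{\rhoseq}{\Sigmaseqtrunc}$ for every $0<\eta<1$; in particular both limiting rates exist, the existence of $\DKL{\rhoseq}{\Sigmaseqtrunc}$ being guaranteed independently by \cref{Gibbs_KL_rate0}. Next I would apply \cref{Gibbs_KL_rate} to conclude that $\DKL{\rhoseq}{\Sigmaseq}$ exists and equals $\DKL{\rhoseq}{\Sigmaseqtrunc}$, and \cref{Gibbs_DHyp_rate} to conclude that $\DHyp[\eta]{\rhoseq}{\Sigmaseq}$ exists and equals $\DHyp[\eta]{\rhoseq}{\Sigmaseqtrunc}$ for every $0<\eta<1$. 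Chaining these four equalities, $\DHyp[\eta]{\rhoseq}{\Sigmaseq} = \DHyp[\eta]{\rhoseq}{\Sigmaseqtrunc} = \DKL{\rhoseq}{\Sigmaseqtrunc} = \DKL{\rhoseq}{\Sigmaseq}$, establishes \eqref{second_main0}.

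To obtain the collapse of the spectral rates \eqref{second_main}, I would reuse the argument of \cref{sec:generalization-quantum-Stein-basics}: since $\DHyp[\eta]{\rhoseq}{\Sigmaseq}$ equals the common value $\DKL{\rhoseq}{\Sigmaseq}$ for \emph{all} $0<\eta<1$, the identification \eqref{hypothesis_spectral_eq} of the upper and lower spectral divergence rates with the $\eta\simeq0$ and $\eta\simeq1$ regimes of the hypothesis testing divergence forces $\Dinf{\rhoseq}{\Sigmaseq} = \Dsup{\rhoseq}{\Sigmaseq} = \DKL{\rhoseq}{\Sigmaseq}$, using also the ordering \eqref{eq:min-max-asympt-divergence-rates-ordered}.

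As for the principal difficulty: once \cref{Gibbs_KL_rate,Gibbs_DHyp_rate} are available, there is essentially no obstacle, and the only point requiring care is checking that the hypotheses of those lemmas hold, namely the existence of $\DKL{\rhoseq}{\Sigmaseqtrunc}$ and of $\DHyp[\eta]{\rhoseq}{\Sigmaseqtrunc}$, both of which are outputs of \cref{Gibbs_KL_rate0} and \cref{quantum_theorem_main2t} respectively. The substantive work lives upstream, in the uniform-convergence statement of \cref{prop_Lenci}, which is precisely what legitimizes the substitution $\Sigmaseqtrunc \to \Sigmaseq$; the present theorem is then a clean corollary of the truncated case.
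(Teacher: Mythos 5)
Your proposal is correct and follows exactly the route the paper intends: apply \cref{quantum_theorem_main2t} to the truncated sequence, then transfer both the KL and hypothesis-testing divergence rates to the reduced Gibbs state via \cref{Gibbs_KL_rate} and \cref{Gibbs_DHyp_rate} (whose hypotheses you correctly verify from \cref{Gibbs_KL_rate0} and \cref{quantum_theorem_main2t}), and finally deduce the collapse of the spectral rates from \eqref{hypothesis_spectral_eq}. The paper leaves this chaining implicit ("it is now straightforward to combine..."), and your write-up supplies precisely the intended steps.
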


\section{An alternative proof of \cref{quantum_theorem_main2}}

Here we provide an alternative proof of \cref{quantum_theorem_main2} presented
above, in the case of a one-dimensional chain, by combining a known result by
Hiai, Mosonyi, and Ogawa~\cite{Hiai2007JMP_correlated} with the ergodic theorem of
Bjelakovi\'c~\cite{Bjelakovic2004CMP_ergodic}.  We state these results here:

\begin{proposition}[{Hiai, Mosonyi, and Ogawa~\protect\cite[Lemma~4.2]{Hiai2007JMP_correlated}}]
  \label{thm:Hiai-local-Gibbs}
  Let $\hat\sigma_n$ be the reduced local Gibbs state on $n$ sites in one
  dimension.  There exist $\alpha_1, \alpha_2 > 0$ and $m_0 \in \mathbb N$ such
  that for all $m \geqslant m_0$ and $k \in \mathbb N$ we have
  \begin{align}
    \alpha_1^{k-1} \hat\sigma_m^{\otimes k}
    \leqslant\hat\sigma_{km}
    \leqslant\alpha_2^{k-1} \hat\sigma_m^{\otimes k}\ .
    \label{Gibbs_product}
  \end{align}
\end{proposition}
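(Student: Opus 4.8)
The statement is a \emph{quasi-factorization} (approximate tensorization) of one-dimensional local Gibbs states: up to multiplicative constants that grow only geometrically in the number $k$ of blocks, the reduced Gibbs state on $km$ sites behaves like the $k$-fold tensor power of the reduced state on $m$ sites. The plan is to trace the two-sided operator bound back to the single source of non-factorization, namely the interaction terms bridging the $k-1$ cuts separating consecutive blocks. Writing the $km$-site region as a union of $k$ consecutive blocks of length $m$, the local Hamiltonian splits as $\hat H_{[1,km]} = \sum_{j=0}^{k-1}\hat H^{(j)} + \hat W$, where the block Hamiltonians $\hat H^{(j)}$ act on disjoint regions (hence commute, so $e^{-\beta\sum_j \hat H^{(j)}} = \bigotimes_j e^{-\beta\hat H^{(j)}}$) and $\hat W = \sum_b \hat W_b$ collects the boundary terms, one near each cut. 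Because interactions have finite range $r$, each $\hat W_b$ is localized within distance $r$ of its cut and $\opnorm{\hat W}\leq (k-1)\,b$ for a constant $b$ depending only on $r$ and $\opnorm{\hat h_0}$. The constants in the claim will then be of the form $\alpha_i = e^{\pm\,\mathrm{const}\cdot\beta b}$, the exponent being proportional to the per-cut boundary energy.

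First I would dispose of the difference between the \emph{reduced} state $\sigmastate_n$ and the \emph{truncated} state $\sigmastatetrunc_n$. In one dimension with a unique KMS state (Araki~\cite{Araki1975}), the boundary correction relating these two is a genuinely local, size-independent effect; I would combine this with \cref{prop_Lenci} to obtain $m$-independent two-sided operator bounds $d_1\,\sigmastatetrunc_n \leq \sigmastate_n \leq d_2\,\sigmastatetrunc_n$. This reduces the claim to the truncated finite-volume Gibbs states, for which $\hat H_{[1,km]}$ is an honest operator that decomposes as above, so the target becomes the comparison of $e^{-\beta(\hat H_0+\hat W)}$ with $e^{-\beta\hat H_0}=\bigotimes_j e^{-\beta\hat H^{(j)}}$.

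The main obstacle — and essentially the entire content of the lemma — is that this comparison must hold in the \textbf{L\"owner} (operator) order, not merely for traces. A bounded perturbation does \emph{not} yield an operator bound for free: the naive guess $e^{-\beta(\hat H_0+\hat W)}\leq e^{\beta(k-1)b}\,e^{-\beta\hat H_0}$ is false in general, since $X\mapsto e^X$ is not operator monotone (already on $2\times2$ matrices one finds a vector $\phi$ for which $\langle\phi,e^{X+Y}\phi\rangle$ exceeds $e^{\opnorm{Y}}\langle\phi,e^X\phi\rangle$ by an unbounded factor whenever $e^X$ is exponentially small in the direction of $\phi$). The correct route is Araki's analytic perturbation theory for one-dimensional Gibbs states~\cite{Araki1969}: finite range together with $d=1$ guarantees that the surface contribution at each cut is represented by a bounded, boundedly invertible operator $G_b$ localized near that cut, uniformly in the block length $m$, so that $\sigmastatetrunc_{km}$ is sandwiched on both sides by $\bigl(\prod_b \opnorm{G_b^{\pm1}}\bigr)\,\sigmastatetrunc_m^{\otimes k}$. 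Multiplicativity over the $k-1$ cuts then produces the geometric factors $\alpha_1^{k-1}$ and $\alpha_2^{k-1}$.

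I expect the two delicate points to be the following. First, showing that the per-cut operators stay bounded above \emph{and} below independently of $m$: this is exactly where one-dimensionality is essential, entering through the exponential decay of correlations / the transfer-operator structure of finite-range $1$D Gibbs states, and it is what forces the choice of a minimal block size $m_0$ (large enough to resolve the correlation length). Second, ensuring the reduced-versus-truncated correction of the first step does not spoil the clean $k$-dependence; since that correction is an $m$-independent, per-region constant it can simply be absorbed into $\alpha_1,\alpha_2$ (at worst into the $k$-independent prefactor, which is why the bound is stated with exponent $k-1$ rather than $k$).
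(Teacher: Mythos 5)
The paper does not actually prove this proposition: it is imported verbatim from Hiai, Mosonyi, and Ogawa~\cite{Hiai2007}, so there is no internal proof to compare against. That said, your sketch is a faithful reconstruction of how the cited lemma is in fact established. You correctly identify the crux~--- that a bounded boundary perturbation $\hat W$ with $\opnorm{\hat W}=O(k)$ does \emph{not} by itself give an operator-order comparison of $e^{-\beta(\hat H_0+\hat W)}$ with $e^{-\beta \hat H_0}$, because the exponential is not operator monotone~--- and the actual proof does go exactly through Araki's expansional/perturbation bounds for one-dimensional finite-range interactions~\cite{Araki1969}, which furnish per-cut similarity operators $G_b$ with $\opnorm{G_b}$ and $\opnorm{G_b^{-1}}$ bounded uniformly in the block length, whence the geometric factors $\alpha_i^{k-1}$. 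The one soft spot is your first reduction step: \cref{prop_Lenci} (Lenci--Rey-Bellet) is a statement about $\frac{1}{\abs{\Lambda}}\ln$ of traces against observables and cannot, even combined with uniqueness of the KMS state~\cite{Araki1975}, deliver the L\"owner-order sandwich $d_1\,\sigmastatetrunc_n\leq\sigmastate_n\leq d_2\,\sigmastatetrunc_n$ with $n$-independent constants that you want there. In one dimension that operator-level comparison is true, but it is obtained from the \emph{same} Araki machinery (the reduced state of the infinite-volume Gibbs state differs from the truncated one by a boundary perturbation localized at the two ends of the interval), not from \cref{prop_Lenci}. Since you invoke that machinery anyway for the cuts, the argument closes, but you should route the reduced-versus-truncated step through it as well rather than through the trace-rate estimate.
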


\begin{proposition}[{Bjelakovi\'c and
  Siegmund-Schultze~\protect\cite[Theorem~2.1]{Bjelakovic2004CMP_ergodic}}]
  \label{thm:Bjelakovic-ergodic-product-Gibbs}
  Suppose that $\widehat{P}$ is translation-invariant and ergodic, and
  $\widehat{\Sigma}= \{ \sigma^{\otimes n} \}$ is i.i.d. Then, for any $0 < \eta < 1$,
  \begin{align}
    {S}_{\mathrm{H}}^{\eta}(\widehat{P}\,\Vert\,\widehat{\Sigma}) = {S}_{1}(\widehat{P}\,\Vert\,\widehat{\Sigma})\ .
  \end{align}
\end{proposition}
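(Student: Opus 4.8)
The plan is to obtain this statement as the special case of our sufficient condition for the quantum Stein's lemma, \cref{Philippe_prop}, specialized to an i.i.d.\ reference; indeed it is exactly the situation obtained by setting the interaction range to $r=0$ in the proof of \cref{quantum_theorem_main2t}, so I would follow the same construction while exploiting the simplifications afforded by the product structure. First I would record that the two relevant limits exist: the von Neumann entropy rate $s := \HVN{\rhoseq}$ exists because $\rhoseq$ is translation invariant (Proposition~6.2.38 of Ref.~\cite{Bratteli1981}), and the rate $m := \lim_{n\to\infty} \frac1n\bigl(-\tr[\rhostate_n \ln \sigma^{\otimes n}]\bigr)$ exists because $-\ln\sigma^{\otimes n} = \sum_{i\in\Lambda_\ell} T_i(-\ln\sigma)$ is the spatial average of the single-site self-adjoint observable $-\ln\sigma$ and $\rhoseq$ is translation invariant. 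Consequently $\DKL{\rhoseq}{\Sigmaseq} = m - s$, and since the hypothesis testing divergence interpolates between the min and max divergences it suffices to verify the four conditions \eqref{condition1}--\eqref{condition4} of \cref{Philippe_prop} with $c := m - s$.

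I would then take $\hat W_n^\epsilon := \hat\Pi_{\rhoseq,n}^\epsilon \, \hat\Pi_{\rhoseq|\Sigmaseq,n}^\epsilon$, where $\hat\Pi_{\rhoseq,n}^\epsilon$ is the typical projector of the ergodic state from \cref{ergodic_prop} and $\hat\Pi_{\rhoseq|\Sigmaseq,n}^\epsilon := \Proj\{ -\frac1n \ln\sigma^{\otimes n} \in [m-\epsilon, m+\epsilon]\}$ is the relative typical projector. Conditions \eqref{condition1}--\eqref{condition3} then follow from the same algebraic sandwich used in the proof of \cref{quantum_theorem_main2t}: \eqref{condition1} is immediate since $\hat W_n^\epsilon$ is a product of projectors; \eqref{condition2} combines the bound $\tr[\hat\Pi_{\rhoseq,n}^\epsilon] \leq e^{n(s+\epsilon)}$ from \eqref{rho_typical2} with the upper eigenvalue bound $\hat\Pi_{\rhoseq|\Sigmaseq,n}^\epsilon \sigma^{\otimes n}\hat\Pi_{\rhoseq|\Sigmaseq,n}^\epsilon \leq e^{-n(m-\epsilon)}\hat\Pi_{\rhoseq|\Sigmaseq,n}^\epsilon$ defining the relative typical projector; and \eqref{condition3} combines $\hat\Pi_{\rhoseq,n}^\epsilon \rhostate_n \hat\Pi_{\rhoseq,n}^\epsilon \leq e^{-n(s-\epsilon)}\hat\Pi_{\rhoseq,n}^\epsilon$ from \eqref{rho_typical1} with the lower bound $\hat\Pi_{\rhoseq|\Sigmaseq,n}^\epsilon \leq e^{n(m+\epsilon)} \hat\Pi_{\rhoseq|\Sigmaseq,n}^\epsilon \sigma^{\otimes n}\hat\Pi_{\rhoseq|\Sigmaseq,n}^\epsilon$. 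Together these reproduce $c = m-s$ up to the $O(\epsilon)$ slack permitted by \cref{Philippe_prop}.

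The crux is condition \eqref{condition4}, namely $\lim_{n\to\infty} \Re(\tr[\hat W_n^\epsilon \rhostate_n]) = 1$. By \cref{Philippe_lemma6} it suffices that each factor captures almost all the weight of $\rhostate_n$, i.e.\ $\lim_n \tr[\hat\Pi_{\rhoseq,n}^\epsilon\rhostate_n] = 1$ and $\lim_n \tr[\hat\Pi_{\rhoseq|\Sigmaseq,n}^\epsilon\rhostate_n] = 1$. The first is precisely \eqref{rho1_1} of \cref{ergodic_prop}. The second is where ergodicity enters: since $-\frac1n\ln\sigma^{\otimes n} = \frac1n\sum_{i\in\Lambda_\ell} T_i(-\ln\sigma)$ is the spatial average of a fixed single-site observable whose expectation is $m$, the self-averaging property in \cref{defn:ergodic} forces this average to converge in probability to $m$ under $\rhostate_n$, so the spectral weight outside the window $[m-\epsilon,m+\epsilon]$ vanishes. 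I expect this concentration step to be the only genuinely nontrivial part; it is markedly cleaner than in \cref{quantum_theorem_main2t} because the i.i.d.\ reference is a strict single-site product, so $-\ln\sigma^{\otimes n}$ carries no interaction or boundary terms and there is no free-energy density $F_n/n$ to control. With all four conditions verified, \cref{Philippe_prop} yields $\DHyp[\eta]{\rhoseq}{\Sigmaseq} = c = \DKL{\rhoseq}{\Sigmaseq}$ for every $0<\eta<1$, and the collapse $\Dinf{\rhoseq}{\Sigmaseq} = \Dsup{\rhoseq}{\Sigmaseq} = \DKL{\rhoseq}{\Sigmaseq}$ follows as explained in \cref{sec:generalization-quantum-Stein-basics}.
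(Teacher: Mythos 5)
Your proposal is correct and is essentially the route the paper itself sanctions: the proposition is imported by citation from Bjelakovi\'c and Siegmund-Schultze, but immediately after the proof of \cref{quantum_theorem_main2t} the paper remarks that that proof reduces to this statement in the i.i.d.\ case ($r=0$), and your argument is precisely that specialization~--- the same operator $\hat W_n^\epsilon = \hat\Pi_{\rhoseq,n}^\epsilon\,\hat\Pi_{\rhoseq|\Sigmaseq,n}^\epsilon$, the same verification of \eqref{condition1}--\eqref{condition4} via \cref{Philippe_lemma6}, \cref{ergodic_prop}, and ergodic concentration of the single-site average $\frac1n\sum_{i} T_i(-\ln\sigma)$, with the correct observation that the free-energy control needed in \cref{quantum_theorem_main2t} trivializes. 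The only point worth flagging is that your concentration step needs $-\ln\sigma$ to be a bounded single-site observable, i.e.\ $\sigma$ of full rank, which holds in the paper's context where $\sigma$ is a (reduced) Gibbs state.
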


The proof strategy is thus to use \cref{thm:Hiai-local-Gibbs} to reduce the
problem for a local Gibbs state to a problem with a tensor product Gibbs state,
by coarse-graining the $n$-site chain into $k$ blocks of $m$ sites.  The problem
then falls in the scope of \cref{thm:Bjelakovic-ergodic-product-Gibbs} which
gives the desired result.

\begin{proof}[{Alternative proof of \cref{quantum_theorem_main2} in one
    dimension.}]
  \sloppy
  We fix $m \in \mathbb N$, and let $n=km + r$ with $1 \leqslant r \leqslant m-1$.  First
  we argue that we can essentially ignore the $r$ remaining sites and focus on
  the $km$ sites.
  From the monotonicity of the hypothesis testing divergence under CPTP maps, and therefore under the partial trace, we
  have for any $0<\eta<1$,
  \begin{align}
    {S}_{\mathrm{H}}^{\eta}(\hat\rho_n\,\Vert\,\hat\sigma_n)
    \geqslant{S}_{\mathrm{H}}^{\eta}(\hat\rho_{km}\,\Vert\,\hat\sigma_{km})\ .
  \end{align} 
  Fix $0<\eta<1$ and let $\hat Q_{km}$ denote an optimal operator
  in~\eqref{def_SH} such that
  $\eta^{-1}\operatorname{tr}\bigl(\hat Q_{km}\hat\sigma_{km}\bigr) =
  \exp\bigl(-{S}_{\mathrm{H}}^{\eta}(\hat\rho_{km}\,\Vert\,\hat\sigma_m^{\otimes k})\bigr)$.  Then, from
  \cref{thm:Hiai-local-Gibbs},
  \begin{align}
    {S}_{\mathrm{H}}^{\eta}(\hat\rho_{km}\,\Vert\,\hat\sigma_{km})
    &\geqslant- \ln \bigl( \eta^{-1}\operatorname{tr}[\hat Q_{km} \hat\sigma_{km}] \bigr)
      \nonumber\\
    &\geqslant- \ln \bigl( \eta^{-1} \operatorname{tr}[\hat Q_{km} \hat\sigma_m^{\otimes k}]\bigr)
      - (k-1)\ln \alpha_2
      \nonumber\\
    &= {S}_{\mathrm{H}}^{\eta}(\hat\rho_{km}\,\Vert\,\hat\sigma_m^{\otimes k})
      - (k-1)\ln \alpha_2\ .
  \end{align}
  Therefore,
  \begin{align}
    {S}_{\mathrm{H}}^{\eta}(\hat\rho_n\,\Vert\,\hat\sigma_n)
    \geqslant{S}_{\mathrm{H}}^{\eta}(\hat\rho_{km}\,\Vert\,\hat\sigma_m^{\otimes k})
    - (k-1)\ln \alpha_2\ .
  \end{align}
  From \cref{thm:Bjelakovic-ergodic-product-Gibbs}, we have for large $k$ and at
  fixed $m$,
  \begin{align}
    \frac{1}{k} {S}_{\mathrm{H}}^{\eta}(\hat\rho_{km}\,\Vert\,\hat\sigma_m^{\otimes k})
    = \frac{1}{k} {S}_{1}(\hat\rho_{km}\,\Vert\,\hat\sigma_m^{\otimes k}) + \delta_k\ ,
  \end{align}
  where $\lim_{k \to \infty} \delta_k = 0$.  Using the fact that the logarithm
  is an operator monotone and with~\eqref{Gibbs_product},
  \begin{align}
    {S}_{1}(\hat\rho_{km}\,\Vert\,\hat\sigma_m^{\otimes k})
    \geqslant{S}_{1}(\hat\rho_{km}\,\Vert\,\hat\sigma_{km}) + (k-1) \ln \alpha_1\ .
  \end{align}
  Hence, we obtain
  \begin{align}
    \frac{1}{n} {S}_{\mathrm{H}}^{\eta}(\hat\rho_n\,\Vert\,\hat\sigma_n)
    \geqslant\frac{1}{n} {S}_{1}(\hat\rho_{km}\,\Vert\,\hat\sigma_{km})
    + \frac{k-1}{n} \ln\frac{\alpha_1}{\alpha_2} + \frac{k}{n} \delta_k\ .
  \end{align}
  Taking  $\liminf_{n \to \infty}$ while fixing $m$, we obtain
  \begin{align}
    \liminf_{n \to \infty} \frac{1}{n} {S}_{\mathrm{H}}^{\eta}(\hat\rho_n\,\Vert\,\hat\sigma_n)
    \geqslant{S}_{1}(\widehat{P}\,\Vert\,\widehat{\Sigma}) + \frac{1}{m} \ln \frac{\alpha_1}{\alpha_2}\ ,
  \end{align}
  where we used \cref{Gibbs_KL_rate} to get the first term on the right-hand
  side.  Since $m$ can be taken arbitrarily large, we obtain
  \begin{equation}
    \liminf_{n \to \infty} \frac{1}{n} {S}_{\mathrm{H}}^{\eta}(\hat\rho_n\,\Vert\,\hat\sigma_n)
    \geqslant{S}_{1}(\widehat{P}\,\Vert\,\widehat{\Sigma})\ .
    \label{alt_1}
  \end{equation}

  We next show the opposite direction.  Again from the monotonicity of
  the hypothesis testing divergence under partial trace,
  \begin{equation}
    {S}_{\mathrm{H}}^{\eta}(\hat\rho_n\,\Vert\,\hat\sigma_n)
    \leqslant{S}_{\mathrm{H}}^{\eta}(\hat\rho_{(k+1)m}\,\Vert\,\hat\sigma_{(k+1)m})\ .
  \end{equation} 
  Fix $0<\eta<1$ and let $\hat Q'_{(k+1)m}$ denote an optimal operator
  in~\eqref{def_SH} such that
  $\eta^{-1}\operatorname{tr}\bigl(\hat Q_{(k+1)m}\hat\sigma_{(k+1)m}\bigr) =
  \exp\bigl(-{S}_{\mathrm{H}}^{\eta}(\hat\rho_{(k+1)m}\,\Vert\,\hat\sigma_{(k+1)m})\bigr)$.
  Then, using \cref{thm:Hiai-local-Gibbs},
  \begin{align}
    {S}_{\mathrm{H}}^{\eta}(\hat\rho_{(k+1)m}\,\Vert\,\hat\sigma_{(k+1)m})
    &=  - \ln\bigl( \eta^{-1} \operatorname{tr}[\hat Q'_{(k+1)m} \hat\sigma_{(k+1)m}] \bigr)
      \nonumber\\
    &\leqslant- \ln\bigl( \eta^{-1} \operatorname{tr}[\hat Q'_{(k+1)m} \hat\sigma_m^{\otimes (k+1)}] \bigr)
      - k\ln \alpha_1
      \nonumber\\
    &\leqslant{S}_{\mathrm{H}}^{\eta}(\hat\rho_{(k+1)m}\,\Vert\,\hat\sigma_m^{\otimes (k+1)})
      - k\ln \alpha_1\ .
  \end{align}
  Therefore, 
  \begin{align}
    {S}_{\mathrm{H}}^{\eta}(\hat\rho_n\,\Vert\,\hat\sigma_n)
    \leqslant{S}_{\mathrm{H}}^{\eta}(\hat\rho_{(k+1)m}\,\Vert\,\hat\sigma_m^{\otimes (k+1)})
    - k\ln \alpha_1\ .
  \end{align}
  From \cref{thm:Bjelakovic-ergodic-product-Gibbs}, we have for large $k$ and
  for fixed $m$,
  \begin{align}
    \frac{1}{k+1}  {S}_{\mathrm{H}}^{\eta}(\hat\rho_{(k+1)m}\,\Vert\,\hat\sigma_m^{\otimes (k+1)})
    = \frac{1}{k+1} {S}_{1}(\hat\rho_{km}\,\Vert\,\hat\sigma_m^{\otimes (k+1)}) + \delta'_k\ ,
  \end{align}
  where $\lim_{k \to \infty} \delta'_k = 0$.  Since the logarithm is an operator
  monotone, we have from inequality~\eqref{Gibbs_product},
  \begin{align}
    {S}_{1}(\hat\rho_{(k+1)m}\,\Vert\,\hat\sigma_m^{\otimes (k+1)})
    \leqslant{S}_{1}(\hat\rho_{(k+1)m}\,\Vert\,\hat\sigma_{(k+1)m}) + k \ln \alpha_2\ .
  \end{align}
  Therefore, we obtain
  \begin{align}
    \frac{1}{n} {S}_{\mathrm{H}}^{\eta}(\hat\rho_n\,\Vert\,\hat\sigma_n)
    \leqslant\frac{1}{n} {S}_{1}(\hat\rho_{(k+1)m}\,\Vert\,\hat\sigma_{(k+1)m})
    + \frac{k}{n} \ln \frac{\alpha_2}{\alpha_1} + \frac{k+1}{n} \delta'_k\ .
  \end{align}
  By taking $\limsup_{n \to \infty}$ while fixing $m$, we obtain
  \begin{align}
    \limsup_{n \to \infty} \frac{1}{n} {S}_{\mathrm{H}}^{\eta}(\hat\rho_n\,\Vert\,\hat\sigma_n)
    \leqslant{S}_{1}(\widehat{P}\,\Vert\,\widehat{\Sigma}) + \frac{1}{m} \ln \frac{\alpha_2}{\alpha_1}\ ,
  \end{align}
  where we again used \cref{Gibbs_KL_rate}.  Since $m$ can be taken arbitrarily
  large, we obtain
  \begin{align}
    \limsup_{n \to \infty}\frac{1}{n}{S}_{\mathrm{H}}^{\eta}(\hat\rho_n\,\Vert\,\hat\sigma_n)
    \leqslant{S}_{1}(\widehat{P}\,\Vert\,\widehat{\Sigma})\ .
    \label{alt_2}
  \end{align}
  \Cref{second_main0} then follows from inequalities~\eqref{alt_1}
  and~\eqref{alt_2}.
\end{proof}

\section{The classical case}

If we restrict the $C^\ast$-algebra $\mathcal A$ in one dimension to a
commutative subalgebra, we obtain a classical stochastic
process. 
Here, we flesh out explicitly the classical ergodic theorem that our argument in \cref{sec:main-result-for-ergodic-local-Gibbs} and  \cref{appx:Cstar-algebras-construction}
reduces to in the classical case.

The classical counterpart of the setup in
these sections is a two-sided stochastic process
over $\mathbb Z$ with finite alphabets.  Let $\{ x_\ell \}_{\ell\in \mathbb Z}$
be the stochastic process, where $x_l \in B$ with $B$ being a
finite set of alphabets, and let
$X_n := (x_{-\ell}, x_{-\ell+1}, \cdots, x_\ell)$ with $n:= 2\ell+1$.  We
consider sequences of probability distributions
$\widehat{P}:= \{ \rho_n (X_n) \}_{n \in \mathbb N}$ and
$\widehat{\Sigma}:= \{ \sigma_n (X_n ) \}_{n \in \mathbb N}$.

First of all, we briefly comment on mathematical details about the correspondence
between the classical case and the quantum case (see also
Refs.~\cite{Bjelakovic2004IM_lattice,Bjelakovic2004CMP_ergodic}).   Let $\mathcal A$ be the
$C^\ast$-algebra of an infinite spin chain.  We consider a unital Abelian
$C^\ast$-subalgebra $\mathcal B \subset \mathcal A$, which is interpreted as a
set of classical observables.  Let $\Phi$ be a quantum state on $\mathcal A$,
and $\Phi |_{\mathcal B}$ be its restriction to $\mathcal B$.  From the
Gelfand-Naimark theorem, $\mathcal B$ is identified with the Banach space
$C_0(K)$, which is the space of $\mathbb C$-valued continuous functions on a
compact Hausdorff space $K$.  In our setup, $K = B^{\mathbb Z}$, which is compact from the Tychonoff's theorem.  From
the Riesz-Markov-Kakutani representation theorem, the dual of $C_0 (K)$ is the
space of regular Borel measures on $K$.  Thus $\Phi |_{\mathcal B}$ is
identified with a probability measure on $K$ (i.e., a stochastic process over
$\mathbb Z$).

Classical ergodicity can be defined in the same manner as in the quantum case
(\cref{defn:ergodic}), i.e., as a commutative case of quantum ergodicity.  On
the other hand, the standard definition of classical ergodicity is that any
subset of trajectories in a stochastic process that is invariant under $T$ has
measure $0$ or $1$.  These definitions are equivalent for the finite-alphabet
case.  In fact, a classical stochastic process is translation-invariant ergodic
if and only if it is an extremal point of the set of translation-invariant
processes.  Also, as mentioned before, \cref{defn:ergodic} is equivalent to the
definition by extremality for quantum spin
systems~\cite{BookRuelle_StatMechRigorous,Bjelakovic2005QIP_compression}.

All the quantum divergences introduced in \cref{sec:preliminaries} can be
computed using as arguments a probability distribution and a vector of positive
entries of same length, by embedding both classical vectors into the diagonal
entries of an operator in a Hilbert space whose dimension is the same as the
number of entries in the vectors.

In the following, we argue that, explicitly for the classical case, if $\widehat{P}$ and $\widehat{\Sigma}$ satisfy a relative asymptotic
equipartition property (relative AEP), then the lower and the upper divergence
rates coincide, and they must equal the KL divergence.  We first
define the relative AEP in the form of a convergence in probability, a classical
counterpart of our quantum formulation in
\cref{sec:main-result-for-ergodic-local-Gibbs}.

\begin{definition}[Relative asymptotic equipartition property (relative AEP)]
  \label{def_AEP}
  We say that $\widehat{P}$ and $\widehat{\Sigma}$ satisfy the relative AEP if the
  KL divergence rate ${S}_{1}(\widehat{P}\,\Vert\,\widehat{\Sigma})$ exists and if
  $\frac{1}{n} \ln \frac{\rho_n (X_n) }{\sigma_n (X_n)}$ converges to
  ${S}_{1}(\widehat{P}\,\Vert\,\widehat{\Sigma})$ in probability by sampling $X_n$ according to
  $\rho_n$.
\end{definition}
This is equivalently formulated as follows
(see, for example, Theorem~11.8.2 of Ref.~\cite{BookCoverThomas2006_InfTheory}):
\begin{proposition}
  \label{prop:rel-AEP-rel-typ-set}
  Suppose that ${S}_{1}(\widehat{P}\,\Vert\,\widehat{\Sigma})$ exists.  The sequences $\widehat{P}$ and
  $\widehat{\Sigma}$ satisfy the relative AEP if and only if for any $\varepsilon> 0$,
  there exists a set $Q_n \subset \{ X_n \}$ (the relative typical set) such
  that for sufficiently large $n$:
  \begin{enumerate}[label=(\alph*)]
  \item\label{propitem:rel-AEP-rel-typ-set--equipartition}
    For any $X_n \in Q_n$,
    \begin{align}
      \exp( n({S}_{1}(\widehat{P}\,\Vert\,\widehat{\Sigma}) - \varepsilon) )
      \leqslant\frac{\rho_n (X_n)}{\sigma_n (X_n)}
      \leqslant\exp( n({S}_{1}(\widehat{P}\,\Vert\,\widehat{\Sigma}) + \varepsilon) )\ ;
    \end{align}
  \item\label{propitem:rel-AEP-rel-typ-set--rhoweight}
 $\rho_n [Q_n] > 1-\varepsilon$; and
  \item\label{propitem:rel-AEP-rel-typ-set--sigma}
    $(1-\varepsilon) \exp (-n ({S}_{1}(\widehat{P}\,\Vert\,\widehat{\Sigma}) + \varepsilon)) < \sigma_n
    [Q_n] < \exp (-n ({S}_{1}(\widehat{P}\,\Vert\,\widehat{\Sigma}) - \varepsilon))$.
  \end{enumerate}
  Here, $\rho_n [Q_n]$ and $\sigma_n [Q_n]$ represent the probability of $Q_n$
  according to distributions $\rho_n$ and $\sigma_n$, respectively.
\end{proposition}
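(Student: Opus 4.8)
The plan is to prove the equivalence by treating the two implications separately, with the relative typical set defined directly in terms of the log-likelihood ratio. Throughout I would write $D := \DKL{\rhoseq}{\Sigmaseq}$, which exists by hypothesis, and abbreviate $L_n(X_n) := \frac1n \ln\bigl(\rho_n(X_n)/\sigma_n(X_n)\bigr)$, so that the relative AEP of \cref{def_AEP} is precisely the assertion that $L_n \to D$ in probability when $X_n$ is sampled according to $\rho_n$. All sums below run over the finite set $B^n$, so there are no convergence subtleties in the summations.

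For the direction ``relative AEP $\Rightarrow$ typical set,'' I would fix $\epsilon>0$ and take as the candidate the set $Q_n := \{\, X_n : \abs{L_n(X_n) - D} \le \epsilon \,\}$. Property (a) then holds by construction, since the inequality $\abs{L_n - D}\le\epsilon$ rearranges to the two-sided bound $\exp(n(D-\epsilon)) \le \rho_n(X_n)/\sigma_n(X_n) \le \exp(n(D+\epsilon))$. Property (b) is exactly the convergence-in-probability hypothesis: $\rho_n[Q_n] = 1 - \rho_n[\abs{L_n-D}>\epsilon] > 1-\epsilon$ for $n$ large. For property (c) I would use the per-element bounds $\rho_n(X_n)\exp(-n(D+\epsilon)) \le \sigma_n(X_n) \le \rho_n(X_n)\exp(-n(D-\epsilon))$ coming from (a), and sum over $Q_n$ after factoring out the constant exponentials: this gives $\sigma_n[Q_n] \le \exp(-n(D-\epsilon))\,\rho_n[Q_n] \le \exp(-n(D-\epsilon))$ for the upper bound, and $\sigma_n[Q_n] \ge \exp(-n(D+\epsilon))\,\rho_n[Q_n] > (1-\epsilon)\exp(-n(D+\epsilon))$ for the lower bound, where the lower bound invokes (b). This establishes all three conditions.

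For the converse, ``typical set $\Rightarrow$ relative AEP,'' I note that only (a) and (b) are needed. Given $\epsilon>0$ and $\delta>0$, I would apply the hypothesis with the auxiliary parameter $\epsilon' := \min\{\epsilon,\delta\}$ to obtain a set $Q_n$. Property (a) forces every $X_n \in Q_n$ to satisfy $\abs{L_n(X_n)-D}\le\epsilon'\le\epsilon$, so $Q_n \subseteq \{\abs{L_n - D}\le\epsilon\}$; then property (b) gives $\rho_n[\abs{L_n - D} > \epsilon] \le 1 - \rho_n[Q_n] < \epsilon' \le \delta$ for all sufficiently large $n$. As $\epsilon$ and $\delta$ were arbitrary, this is precisely $L_n \to D$ in probability, i.e.\ the relative AEP.

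The entire argument is elementary, amounting to a change of viewpoint between the log-likelihood ratio and the density ratio. I do not expect a genuine obstacle: the only point requiring mild care is the bookkeeping of the two quantifiers in the definition of convergence in probability, which is why the converse introduces the combined parameter $\epsilon' = \min\{\epsilon,\delta\}$ rather than applying the hypothesis with $\epsilon$ directly. It is worth emphasizing that property (c)~--- which carries the substantive content for the downstream Stein's-lemma application, bounding the $\sigma_n$-mass of the typical set~--- is obtained for free from (a) and (b) in the forward direction and is not even required for the converse.
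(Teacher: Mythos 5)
Your argument is correct and is essentially the standard one that the paper itself defers to (it cites Theorem~11.8.2 of \cite{Cover_Thomas} rather than writing out a proof): define $Q_n$ as the set where the normalized log-likelihood ratio is within $\epsilon$ of $\DKL{\rhoseq}{\Sigmaseq}$, read off (a) and (b), and sum the pointwise bounds to get (c). The only cosmetic point is that your forward direction yields non-strict inequalities in (c) (and could yield $\rho_n[Q_n]=1-\epsilon$ in degenerate cases); running the construction with $\epsilon/2$ in place of $\epsilon$ gives the strict versions stated in the proposition.
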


The relative AEP ensures that the min and max divergence rates converge to the
KL divergence rate:

\begin{proposition}
  \label{AEP_spectral}
  If $\widehat{P}$ and $\widehat{\Sigma}$ satisfy the relative AEP, we have
  \begin{align}
    {\underline{S}}(\widehat{P}\,\Vert\,\widehat{\Sigma})
    = {\overline{S}}(\widehat{P}\,\Vert\,\widehat{\Sigma})
    = {S}_{1}(\widehat{P}\,\Vert\,\widehat{\Sigma})\ .
    \label{c_spectral_KL}
  \end{align}
\end{proposition}

\begin{proof}
  Although this proposition follows easily from
  \cref{spectral_Nagaoka1,spectral_Nagaoka2}, we here note an alternative proof
  based on \cref{def_spectral} with a slightly different intuition.
  We consider a subnormalized probability distribution $\tau_n (X_n)$ defined by
  $\tau_n (X_n) := \rho_n (X_n) $ for $X_n \in Q_n$ and $\tau_n (X_n) := 0$ for
  $X_n \notin Q_n$.  From $\operatorname{tr}[ \tau_n ] > 1-\varepsilon$ and with
  \cref{gentle_lemma}, we see that $\tau_n$ is a candidate for the maximization
  in ${S}_{0}^{{2\sqrt{\varepsilon}}}(\rho_n\,\Vert\,\sigma_n)$.  Therefore,
  \begin{align}
    {S}_{0}^{{2\sqrt{\varepsilon}}}(\rho_n\,\Vert\,\sigma_n)
    \geqslant{S}_{0}(\tau_n\,\Vert\,\sigma_n)
    \geqslant- \ln \sigma_n [Q_n]\ ,
  \end{align}
  where we used that $Q_n$ cannot be smaller than the support of $\tau_n$ to
  obtain the right inequality.  From the right inequality
  of~\ref{propitem:rel-AEP-rel-typ-set--sigma} in
  \cref{prop:rel-AEP-rel-typ-set}, we have
  \begin{align}
    {S}_{0}^{{2\sqrt{\varepsilon}}}(\rho_n\,\Vert\,\sigma_n)
    > n( {S}_{1}(\widehat{P}\,\Vert\,\widehat{\Sigma}) - \varepsilon)\ .
  \end{align}
  By taking the limit $n\to\infty$, we obtain
  \begin{align}
    {\underline{S}}(\widehat{P}\,\Vert\,\widehat{\Sigma}) \geqslant{S}_{1}(\widehat{P}\,\Vert\,\widehat{\Sigma})\ .
    \label{c_lower_KL}
  \end{align}
  Similarly, we have
  \begin{align}
    {S}_{\infty}^{{2 \sqrt{\varepsilon}}}(\rho_n\,\Vert\,\sigma_n)
    \leqslant{S}_{\infty}(\tau_n\,\Vert\,\sigma_n)
    = \ln \max_{X_n \in Q_n} \frac{\rho_n (X_n)}{\sigma_n (X_n)}\ .
  \end{align}
  From the right hand side of
  \cref{prop:rel-AEP-rel-typ-set}~\ref{propitem:rel-AEP-rel-typ-set--equipartition},
  we have
  \begin{align}
    {S}_{\infty}^{{2 \sqrt{\varepsilon}}}(\rho_n\,\Vert\,\sigma_n)
    < n({S}_{1}(\rho_n\,\Vert\,\sigma_n) + \varepsilon)\ .
  \end{align}
  By taking the limit, we obtain
  \begin{align}
    {\overline{S}}(\widehat{P}\,\Vert\,\widehat{\Sigma}) \leqslant{S}_{1}(\widehat{P}\,\Vert\,\widehat{\Sigma})\ .
    \label{c_upper_KL}
  \end{align}
  By combining \cref{c_lower_KL,c_upper_KL}, we obtain~\eqref{c_spectral_KL}.
\end{proof}

In the following, we assume that $\widehat{P}:= \{ \rho_n \}$ is
translation-invariant (i.e., stationary) and ergodic.  In this case the
non-relative AEP 
(i.e., the classical counterpart of \cref{ergodic_prop})
is satisfied, as a consequence of the Shannon-McMillan
theorem.

As in the quantum case, we define the reduced state
$\widehat{\Sigma}:= \{ \sigma_n \}$ of the global Gibbs state $\sigma$ of a local and
translation-invariant Hamiltonian in one dimension, where
$\sigma_n (X_n ) := \sigma (X_n)$ (i.e., $\sigma_n$ is a marginal distribution
of $\sigma$).  We can also define the truncated Gibbs state
$\widehat{\Sigma}^{\Box}:= \{ \sigma^{\Box}_n \}$.  The global Gibbs state $\sigma$ is
obtained as the limit of the truncated Gibbs states~\cite{Ruelle1968CMP_onedimensional}:
\begin{align}
  \sigma := \lim_{n \to \infty}\sigma^{\Box}_n\ ,
\end{align}
where convergence is given by the weak-$\ast$ topology (or the vague topology)
of the dual of the Banach space $C_0(K)$.

We remark that the case of the reduced Gibbs state $\widehat{\Sigma}$ can also be
obtained from a well-known fact that the relative AEP is satisfied for a
translation-invariant ergodic process with respect to a translation-invariant
Markov process.  (The relative AEP has also been proved in a stronger sense
(i.e., almost surely convergence).  See Ref.~\cite{Algoet1988AoP_sandwich} and references
therein. For our purpose here, however, convergence in probability is enough.)
In fact, we have the following lemma.

\begin{lemma}
  \label{transfer_lemma}
  The global Gibbs state $\sigma$ of a local and translation-invariant
  Hamiltonian in one dimension is translation-invariant Markovian.
\end{lemma}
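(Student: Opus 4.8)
The plan is to show that the global Gibbs state $\sigma$ of a local, translation-invariant Hamiltonian in one dimension is both translation-invariant and Markovian. Translation invariance is essentially immediate: the Hamiltonian is translation invariant by assumption, so the truncated Gibbs states $\sigmatrunc_n$ are translation covariant up to boundary effects, and the limiting KMS state $\sigma$ inherits exact translation invariance from the uniqueness of the KMS state together with the fact that $T_i(\sigma)$ is again a KMS state for the shifted (hence identical) Hamiltonian. So the main content of the lemma is the Markov property.

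For the Markov property, the key structural fact is that in one dimension a local Hamiltonian can be written as a sum of terms $\hat h_i$ each supported on a bounded interval of length at most $2r+1$. First I would recall the classical definition of Markovianity for a one-dimensional process: conditioned on the values in a ``separating'' block of width $\geq r$, the configurations to the left and to the right are independent. The plan is to exhibit this conditional independence directly from the Gibbs form $\sigma_\Lambda \propto \exp(-\beta \hat H_\Lambda)$. Concretely, for a finite interval split as $\Lambda = L \cup M \cup R$ where $M$ is a central block of width at least the interaction range $r$ that separates $L$ from $R$, the Hamiltonian decomposes as $\hat H_\Lambda = \hat H_{L\cup M} + \hat H_{M \cup R} - \hat H_{M}^{\mathrm{shared}}$ up to the boundary terms straddling $M$; because no interaction term connects $L$ to $R$ directly (their supports are separated by more than $r$), the Boltzmann weight factorizes into a part depending only on $(L,M)$ and a part depending only on $(M,R)$. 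In the commutative (classical) case these factors are genuine products of scalar functions, and factorization of the joint weight across the middle block is exactly the statement that the conditional distribution of $R$ given $M$ does not depend on $L$, i.e. the Markov property.

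I would then carry out the argument in the order: (i) reduce to finite-volume truncated Gibbs measures $\sigmatrunc_n$ and verify the finite-block factorization there; (ii) take the thermodynamic limit $\sigmatrunc_n \to \sigma$ in the weak-$\ast$ topology, using the uniqueness of the KMS state (already assumed in the hypotheses of \cref{quantum_theorem_main2}) to guarantee the limit exists and is translation invariant; (iii) check that the Markov factorization, being a statement about finitely many coordinates, is preserved under the weak-$\ast$ limit, since it only involves expectation values of local observables supported on $L \cup M \cup R$. The passage to the limit is clean because Markovianity is a property of the finite marginals, and the marginals converge.

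The main obstacle I expect is handling the boundary interaction terms correctly in step (i): the terms $\hat h_i$ whose support overlaps the boundary between $M$ and $L$ (or $M$ and $R$) must be assigned consistently so that the weight truly factorizes through $M$, which requires the separating block $M$ to be at least as wide as the interaction diameter $2r+1$ rather than merely $r$. One must be careful that the shared block $M$ absorbs all terms coupling the two sides, so that after conditioning on the full configuration of $M$ the left and right are decoupled. In the classical setting this is a transparent factorization of positive scalars (this is the standard transfer-operator / one-dimensional Gibbs-measure-is-Markov result, cf.\ Ref.~\cite{Ruelle1968}), so once the bookkeeping of boundary terms is done the conclusion follows without analytic subtleties; the quantum non-commutativity that would complicate a general operator argument does not arise here because we have restricted to the commutative subalgebra.
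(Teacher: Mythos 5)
Your proposal is correct in outline but takes a genuinely different route from the paper's. You use the Hammersley--Clifford-style factorization argument: a finite-range Gibbs weight factors through any separating block $M$ wide enough that no single interaction term straddles it, which gives conditional independence of the two sides given $M$, and this survives the thermodynamic limit. The paper explicitly mentions this route as known and then deliberately avoids it, instead computing directly with the transfer matrix for a nearest-neighbour interaction $h(x_i,x_{i+1})$: it writes the truncated conditional probabilities $\sigmatrunc_{[-\ell,\ell]}(x_n\mid x_{n-1},\dots,x_{-\ell'})$ in closed form and shows that as $\ell\to\infty$ they converge to an explicit, $n$-independent kernel built from the Perron eigenvector of $T$, finishing with Levy's martingale convergence theorem. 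Your argument buys generality (arbitrary interaction range $r$ without reblocking into nearest-neighbour form, and no spectral assumption on the transfer matrix); the paper's buys an explicit stationary transition kernel, which makes translation invariance of the limiting chain manifest. Two points you should still make explicit for the argument to close. First, the limit step is clean only if you phrase conditional independence as the product identity $\sigma(x_L,x_M,x_R)\,\sigma(x_M)=\sigma(x_L,x_M)\,\sigma(x_M,x_R)$ among finite marginals (each side being an expectation of a local observable, hence convergent under the weak-$\ast$ limit), and you must check that after summing out the sites of $[-\ell,\ell]$ lying outside $L\cup M\cup R$ the truncated marginal still factors through $M$ --- it does, because those sums only modify the factors depending on $x_L$ alone and on $x_R$ alone. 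Second, the paper's definition of Markovianity conditions on the entire infinite past, so after obtaining $\sigma(x_n\mid x_{n-1},\dots,x_{-\ell'})=\sigma(x_n\mid x_{n-1})$ for every finite $\ell'$ you still need the same martingale-convergence step the paper invokes to pass to the infinite past.
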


\begin{proof}
  From the Hammersley-Clifford theorem~\cite{Hammersley1971_Markov} (see also
  Ref.~\cite{Kato2019CMP_Markov}), it is known that the Gibbs state of a local Hamiltonian
  on an arbitrary finite graph is Markovian.
  On the other hand, here we directly prove this lemma by
  explicitly calculating the global Gibbs distribution $\sigma$, without using
  the Hammersley-Clifford theorem.

  For simplicity, we assume that the local interaction is given in the form of
  $h_i = h(x_i, x_{i+1})$ and satisfies $h(x,y) = h(y,x)$.  We introduce the
  transfer matrix $T$, whose $(x_i,x_{i+1})$-element is given by
  \begin{align}
    \langle {x_i}\hspace*{0.2ex}\vert\hspace*{0.2ex}{ T }\hspace*{0.2ex}\vert\hspace*{0.2ex}{ x_{i+1} }\rangle  := \exp (-\beta h(x_i, x_{i+1}))\ .
  \end{align}
  Here, we used the bra-ket notation to represent the classical probability vectors.
  We denote the spectral decomposition of $T$ as
  \begin{align}
    T = \sum_{\lambda} e^{\lambda} \lvert {\lambda}\rangle\hspace*{-0.25ex}\langle{\lambda}\rvert \ .
    \label{T_spectrum}
  \end{align}
  We also assume that $T$ has a non-degenerate maximum eigenvalue
  $e^{\lambda_\ast}$. 

  For the truncated Hamiltonian
  $H_{[-\ell,\ell]} := \sum_{i=-\ell}^\ell h (x_i, x_{i+1})$, the truncated
  Gibbs distribution is given by
  \begin{align}
    \sigma^{\Box}_{[-\ell,\ell]} (x_{-\ell}, \cdots, x_\ell, x_{\ell+1})
    = \frac{\prod_{i=-\ell}^\ell \langle { x_i }\hspace*{0.2ex}\vert\hspace*{0.2ex}{ T }\hspace*{0.2ex}\vert\hspace*{0.2ex}{ x_{i+1} }\rangle  }{\langle { 1 }\hspace*{0.2ex}\vert\hspace*{0.2ex}{ T^{2\ell+1} }\hspace*{0.2ex}\vert\hspace*{0.2ex}{ 1 }\rangle  }\ ,
  \end{align}
  where $\lvert {1}\rangle  := \sum_{x_i} \lvert {x_i}\rangle $ is the column vector whose entries are
  all unity.  Its marginal distribution for an interval $[-\ell',n]$ with
  $\ell'<\ell$, $n<\ell$ is given by
  \begin{align}
    \sigma^{\Box}_{[-\ell,\ell]} (x_{-\ell'}, \cdots, x_n)
    = \frac{
      \langle {1}\hspace*{0.2ex}\vert\hspace*{0.2ex}{ T^{\ell-\ell'} }\hspace*{0.2ex}\vert\hspace*{0.2ex}{ x_{-\ell'} }\rangle 
      \prod_{i=-\ell'}^{n-1} \langle {x_i}\hspace*{0.2ex}\vert\hspace*{0.2ex}{ T }\hspace*{0.2ex}\vert\hspace*{0.2ex}{x_{i+1}}\rangle  \langle {x_n}\hspace*{0.2ex}\vert\hspace*{0.2ex}{ T^{\ell-n} }\hspace*{0.2ex}\vert\hspace*{0.2ex}{1}\rangle 
    }{
      \langle {1}\hspace*{0.2ex}\vert\hspace*{0.2ex}{ T^{2\ell+1} }\hspace*{0.2ex}\vert\hspace*{0.2ex}{1}\rangle 
    }\ .
  \end{align}
  The conditional probability is then given by
  \begin{align}
    \sigma^{\Box}_{[-\ell,\ell]} (\mathopen{} x_n \mathclose{}\,|\,\mathopen{}
    x_{n-1}, \cdots, x_{-\ell'}\mathclose{})
    &=  \frac{
        \langle {1}\hspace*{0.2ex}\vert\hspace*{0.2ex}{T^{\ell-\ell'}}\hspace*{0.2ex}\vert\hspace*{0.2ex}{x_{-\ell'}}\rangle 
        \prod_{i=-\ell'}^{n-1} \langle {x_i}\hspace*{0.2ex}\vert\hspace*{0.2ex}{T}\hspace*{0.2ex}\vert\hspace*{0.2ex}{x_{i+1}}\rangle  \langle {x_n}\hspace*{0.2ex}\vert\hspace*{0.2ex}{T^{\ell-n}}\hspace*{0.2ex}\vert\hspace*{0.2ex}{1}\rangle 
      }{
        \langle {1}\hspace*{0.2ex}\vert\hspace*{0.2ex}{T^{\ell-\ell'}}\hspace*{0.2ex}\vert\hspace*{0.2ex}{x_{-\ell'}}\rangle   \prod_{i=-\ell'}^{n-2} \langle {x_i}\hspace*{0.2ex}\vert\hspace*{0.2ex}{T}\hspace*{0.2ex}\vert\hspace*{0.2ex}{x_{i+1}}\rangle 
        \langle {x_{n-1}}\hspace*{0.2ex}\vert\hspace*{0.2ex}{T^{\ell-n+1}}\hspace*{0.2ex}\vert\hspace*{0.2ex}{1}\rangle 
      }
      \nonumber\\
    &= \frac{ \langle {x_n}\hspace*{0.2ex}\vert\hspace*{0.2ex}{T^{\ell-n}}\hspace*{0.2ex}\vert\hspace*{0.2ex}{1}\rangle  }{ \langle {x_{n-1}}\hspace*{0.2ex}\vert\hspace*{0.2ex}{T^{\ell-n+1}}\hspace*{0.2ex}\vert\hspace*{0.2ex}{1}\rangle  }
      \langle { x_{n-1} }\hspace*{0.2ex}\vert\hspace*{0.2ex}{ T }\hspace*{0.2ex}\vert\hspace*{0.2ex}{ x_n }\rangle \ ,
  \end{align}
  which depends only on $x_{n-1}$ and $x_n$~--- as expected from the
  Hammersley-Clifford theorem~--- with also an explicit dependency on $n$.
  From~\eqref{T_spectrum},
  \begin{align}
    \sigma^{\Box}_{[-\ell,\ell]} (\mathopen{}x_n\mathclose{}\,|\,\mathopen{}
    x_{n-1}, \cdots, x_{-\ell'}\mathclose{})
    = \frac{
      \sum_\lambda e^{\lambda (\ell-n)} \langle {x_n}\hspace*{0.2ex}\vert\hspace*{0.2ex}{\lambda}\rangle  \langle {\lambda}\hspace*{0.2ex}\vert\hspace*{0.2ex}{1}\rangle 
    }{
      \sum_\lambda e^{\lambda (\ell-n+1)} \langle { x_{n-1} }\hspace*{0.2ex}\vert\hspace*{0.2ex}{ \lambda }\rangle  \langle {\lambda}\hspace*{0.2ex}\vert\hspace*{0.2ex}{1}\rangle 
    }
    \langle { x_{n-1} }\hspace*{0.2ex}\vert\hspace*{0.2ex}{ T }\hspace*{0.2ex}\vert\hspace*{0.2ex}{ x_n }\rangle \ .
  \end{align}
  By taking the limit of $\ell$ while fixing $\ell'$ and $n$, we obtain
  \begin{align}
  \label{max_eingenvector}
    \lim_{\ell \to \infty} \sigma^{\Box}_{[-\ell,\ell]}(\mathopen{}x_n \mathclose{}\,|\,\mathopen{}
    x_{n-1}, \cdots, x_{-\ell'}\mathclose{})
    = \frac{ \langle {x_n}\hspace*{0.2ex}\vert\hspace*{0.2ex}{\lambda_\ast}\rangle  }{ e^{\lambda_\ast} \langle { x_{n-1} }\hspace*{0.2ex}\vert\hspace*{0.2ex}{ \lambda_\ast }\rangle  }
    \langle { x_{n-1} }\hspace*{0.2ex}\vert\hspace*{0.2ex}{ T }\hspace*{0.2ex}\vert\hspace*{0.2ex}{ x_n }\rangle \ ,
  \end{align}
  where the right-hand side depends only on $x_n$ and $x_{n+1}$ and no longer
  explicitly depends on $n$.  Therefore, the global Gibbs distribution $\sigma$
  satisfies
  \begin{equation}
    \sigma(\mathopen{}x_n\mathclose{}\,|\,\mathopen{} x_{n-1}, \cdots, x_{-\ell'}\mathclose{})
    = \sigma(\mathopen{}x_n\mathclose{}\,|\,\mathopen{} x_{n-1} \mathclose{})\ .
    \label{Doob_Markov}
  \end{equation}
  We note that it is straightforward to remove the assumption that $T$ has a non-degenerate maximum eigenvalue.
  In fact, we can just replace the right-hand side of \eqref{max_eingenvector} by multiple eigenvectors with the maximum eigenvalue of $T$.

  In general, a stochastic process $\sigma$ is defined as Markovian, if for any
  $n$
  \begin{align}
    \sigma (\mathopen{} x_n \mathclose{}\,|\,\mathopen{} x_{n-1}, x_{n-2}, \cdots \mathclose{})
    = \sigma (\mathopen{} x_n \mathclose{}\,|\,\mathopen{} x_{n-1} \mathclose{})
  \end{align}
  holds almost surely (see, for example, Chapter~2 of Ref.~\cite{BookDoob_StochasticProcesses}).
  Also, from the Levy's martingale convergence theorem,
  $\lim_{\ell' \to \infty} \sigma(\mathopen{}x_n\mathclose{}\,|\,\mathopen{}
  x_{n-1}, \cdots, x_{-\ell'}\mathclose{}) = \sigma(\mathopen{}x_n
  \mathclose{}\,|\,\mathopen{} x_{n-1}, \cdots \mathclose{})$ holds almost
  surely.  The claim then follows from \cref{Doob_Markov}.
\end{proof}

\def\ {\unskip\space}
\def\doibase#110.{https://doi.org/10.}
\catcode`\&=12\relax
\bibsep=2pt\relax
\def\selectlanguage#1{}

\end{document}